\newcommand{\be}{\begin{equation}}
\newcommand{\ee}{\end{equation}}
\newcommand{\bi}{\begin{itemize}}
\newcommand{\ei}{\end{itemize}}
\newcommand{\area}[1]{|#1|}
\newcommand{\HH}{\mathcal{H}}
\newcommand{\R}{\mathbf{R}}
\newcommand{\HRT}{\gamma_{\rm HRT}}
\newcommand{\RT}{\gamma_{\rm RT}}
\newcommand{\GN}{G_{\rm N}}
\newcommand{\Mone}{{M_{\rm I}}}
\newcommand{\Mtwo}{{M_{\rm II}}}
\newcommand{\Mthree}{{M_{\rm III}}}
\newcommand{\Mfour}{{M_{\rm IV}}}
\newcommand{\vr}{\vec{r}}
\newcommand{\vs}{\vec{s}}
\newtheorem{conjecture}{Conjecture}
\newtheorem{lemma}{Lemma}
\title{Entangled universes}
\author[a,b]{Divij Gupta,}
\author[b]{Matthew Headrick,}
\author[b]{and Martin Sasieta}
\affiliation[a]{Department of Physics, University of Illinois, Urbana IL 61801, USA}
\affiliation[b]{Martin Fisher School of Physics, Brandeis University, Waltham MA 02453, USA}
\emailAdd{divijg3@illinois.edu}
\emailAdd{headrick@brandeis.edu}
\emailAdd{martinsasieta@brandeis.edu}
\preprint{BRX-TH6725}
\abstract{We propose a generalization of the RT and HRT holographic entanglement entropy formulas to spacetimes with asymptotically Minkowski as well as asymptotically AdS regions. We postulate that such spacetimes represent entangled states in a tensor product of Hilbert spaces, each corresponding to one asymptotic region. We show that our conjectured formula has the same general properties and passes the same general tests as the standard HRT formula. We provide further evidence for it by showing that in many cases the Minkowski asymptotic regions can be replaced by AdS ones using a domain wall. We illustrate the use of our formula by calculating entanglement entropies between asymptotic regions in Brill-Lindquist spacetimes, finding phase transitions similar to those known to occur in AdS. We construct networks of universes by gluing together Brill-Lindquist spaces along minimal surfaces. Finally, we discuss a variety of possible extensions and generalizations, including to universes with asymptotically de Sitter regions; in the latter case, we identify an ambiguity in the homology condition, leading to two different versions of the HRT formula which we call ``orthodox'' and ``heterodox'', with different physical interpretations.
\newline
\newline
A video abstract is available at \url{https://youtu.be/2g0pyr6ugrc}.
}
\begin{document}

\maketitle

\section{Introduction}

The Ryu-Takayanagi (RT) formula \cite{Ryu:2006bv} is one of the foundational results of holography in AdS spaces. It formalizes the idea that the classical connected geometry of the AdS bulk emerges collectively from quantum correlations of a large number of microscopic degrees of freedom of the CFT, in the form of quantum entanglement \cite{VanRaamsdonk:2010pw}. The formula reads
\be\label{eq:RT}
S(A) = \frac{\area{\RT(A)}}{4G_{\rm N}}\,.
\ee
On the left-hand side, $S(A) = -\text{Tr}(\rho_A \log \rho_A)$ is the entanglement entropy of the CFT state $\rho_A$ restricted to the spatial subregion $A$. On the right-hand side, $\RT(A)$ is the RT surface, defined as the minimal-area surface homologous to the boundary subregion $A$, and $|\cdot|$ denotes its area.

The RT formula \eqref{eq:RT} has many important consequences. Notably, it makes more transparent the mechanism for the emergence of space in holography. The observation is that versions of \eqref{eq:RT} hold in tensor network models parametrizing the entanglement structure of critical many-body ground states \cite{Swingle:2009bg}. Despite being very crude models of gravity, the ``emergent AdS space'' is simply the tensor network in these models.

Additionally, the RT surface or its generalizations \cite{Hubeny:2007xt,Engelhardt:2014gca} delimit the \textit{entanglement wedge} of $A$: the bulk region whose quantum information is fully determined by $A$ \cite{Czech:2012bh,Almheiri:2014lwa,Jafferis:2015del,Dong:2016eik,Cotler:2017erl}. In this sense, RT surfaces characterize the holographic encoding of the local structure of the bulk. The properties of RT surfaces in AdS space ensure that such an encoding is robust against partial erasures of the physical CFT space, a defining property of a quantum error-correcting code \cite{Almheiri:2014lwa,Pastawski:2015qua,Harlow:2016vwg}. Entanglement wedge reconstruction has also provided a new perspective on the unitary evaporation of black holes \cite{Penington:2019npb,Almheiri:2019psf}.

In many respects, the RT formula can be viewed as a particular manifestation of ER = EPR \cite{Maldacena:2013xja}. On the other hand, ER = EPR is expected to be far more general, since it only relies on quantum entanglement, a defining property of quantum systems. Therefore, within our limited understanding of holography beyond AdS, it is reasonable to expect that some lessons from the RT formula can be extrapolated to other spacetimes, assuming that quantum mechanics is what ultimately defines gravity non-perturbatively in those spacetimes. In particular, the lessons which should readily generalize are those that do not rely on the spatial locality of the holographic system, given that, unlike holographic CFTs, there is no reason for the putative quantum systems describing these spacetimes to be spatially local.

\subsection{Main claim}

The main purpose of this paper is to propose and study a generalization of the RT formula, and its covariant Hubeny-Rangamani-Takayanagi (HRT) version, to multiboundary wormhole spacetimes with \emph{asymptotically flat boundaries}. (Later in the paper we will also consider de Sitter and other asymptotics.) To introduce and motivate this generalization, we will step through a series of four examples, before proceeding to the general case. The corresponding spacetimes are illustrated in Fig.\ \ref{fig:penroses}.

\begin{figure}
    \centering
    \includegraphics[width=\textwidth]{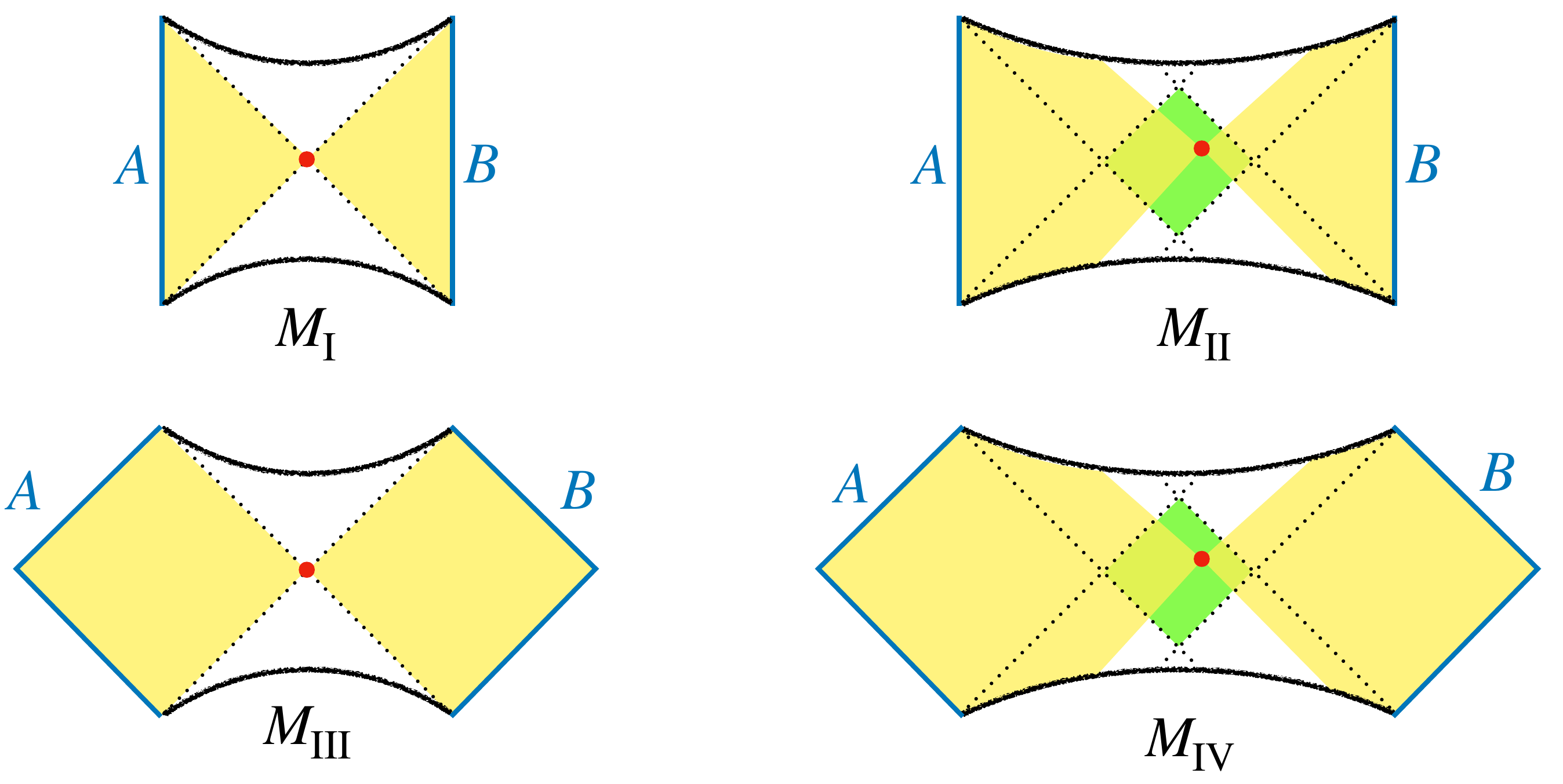}
    \caption{
Penrose diagrams for the spacetimes $\Mone,\Mtwo,\Mthree,\Mfour$ discussed in the main text. $\Mone$ is an AdS-Schwarzschild spacetime; the conformal boundaries $A,B$ are shown in blue, the exterior regions $W(A),W(B)$ in yellow, the bifurcation surface $\gamma_{\rm bif}$ in red, the future and past singularities in black, and the event horizons as dotted lines. $\Mtwo$ is a generic two-boundary asymptotically AdS wormhole; the causal shadow is shown in green, the entanglement wedges in yellow, and the HRT surface $\HRT$ in red. (Of course, a truly generic spacetime does not have the spherical symmetry required to draw a Penrose diagram. Our purpose here is simply to illustrate the important qualitative features of the spacetime.) $\Mthree$ is a Schwarzschild spacetime and $\Mfour$ is a generic two-sides asymptotically flat spacetime, with the same features shown as for $\Mone$ and $\Mtwo$.
  }  
    \label{fig:penroses}
\end{figure}

\subsubsection*{I. AdS-Schwarzschild}

To begin, suppose we have a (UV-complete, non-perturbatively defined)  quantum theory of gravity that admits a semi-classical AdS$_D$ ($D\ge3$) vacuum $V$. The definition of $V$ includes specifying the compactification manifold (if any), the values of any moduli, etc. By semi-classical, we mean that $V$ is not in the strict classical limit but is close enough to be well described by classical Einstein gravity. Quantized with AdS$_D$ boundary conditions corresponding to $V$, which we'll call AdS$^V$ boundary conditions for short, this theory has a Hilbert space $\HH^V$. (Such boundary conditions typically entail fixing the leading fall-off of the fields near the conformal boundary, while allowing subleading fall-offs to fluctuate.) $\HH^V$ equals the Hilbert space of the CFT dual to $V$, quantized on $\R\times S^{D-2}$, and there is a matching of observables between the two descriptions.

This theory necessarily admits (maximally extended) AdS-Schwarzschild solutions, which have two exterior regions, each obeying AdS$^V$ boundary conditions; see fig.\ \ref{fig:penroses} (top left). Let $\Mone$ be such a solution, and label the two conformal boundaries $A,B$, and corresponding exterior regions $W(A),W(B)$. An observer in either exterior region sees a static, eternal black hole, suggesting that their universe is in a mixed state, with entropy given by the Bekenstein-Hawking entropy:
\be\label{BH}
S_{\rm BH}:=\frac{\area{\gamma_{\rm bif}}}{4G_{\rm N}}\,,
\ee
where $\gamma_{\rm bif}$ is the bifurcation surface. We know that this spacetime indeed represents an entangled state in two copies of $\HH^V$, with entanglement entropy given by \eqref{BH}. More precisely, there is a class of states,
\be
\HH_{\Mone}\subset\HH_A^V\otimes\HH_B^V\,,
\ee
that are all described by the same classical spacetime $\Mone$. The states in $\HH_{\Mone}$ differ by the state of the bulk quantum fields (with negligible backreaction), as well as by non-perturbatively small parts of the wave function. For example, if $\Mone$ is above the Hawking-Page transition, then $\HH_{\Mone}$ includes a thermofield-double state, in which the bulk fields are themselves in a thermofield-double state, and the wave function also has non-perturbatively small parts representing other spacetimes. $\HH_{\Mone}$ is sometimes called the ``code subspace''. Its precise definition requires specifying various tolerances, such as how much backreaction is allowed; these details will not concern us. The important thing is that, for any (pure or mixed) state in $\HH_{\Mone}$, the entropies of both $A$ and $B$ are given, at leading order in $G_{\rm N}$, by the Bekenstein-Hawking entropy:
\be
S(A) = S_{\rm BH}+O(\GN^0)\,,\qquad
S(B) = S_{\rm BH}+O(\GN^0)\,.
\ee
Throughout the rest of this paper (except in subsection \ref{sec:extensions}, where we briefly discuss quantum corrections), we will be concerned only with the $O(\GN^{-1})$ part of the entropy, and will suppress the ``$+O(\GN^0)$''.

We also know that any operator on $\HH_{\Mone}$ representing a bulk observable localized in $W(A)$, can be lifted to an operator on $\HH_A^V\otimes\HH_B^V$ of the form $\mathcal{O}\otimes I$. We will abbreviate this statement by saying, ``Observables in $W(A)$ lift to operators in the $A$-subalgebra.'' And similarly for $B$.

\subsubsection*{II. General two-sided AdS black hole}

Now suppose we deform the AdS-Schwarzschild spacetime by adding matter, gravitational waves, etc., without changing the AdS$^V$ boundary conditions. We'll call the deformed spacetime $\Mtwo$. Assuming the matter obeys the null energy condition, the event horizons separate: the past horizon of $B$ now lies to the future of the future horizon of $A$, rather than coinciding as they do in AdS-Schwarzschild. As a result, the two conformal boundaries remain out of causal contact, and the causal shadow (the set of bulk points out of causal contact with both boundaries) becomes a codimension-0 spacetime region; see Fig.\ \ref{fig:penroses} (bottom left). An observer in one of the external regions now sees a black hole with time-dependent past and future event horizons; while they will still suspect that their universe is in a mixed state, they will not necessarily know its fine-grained entropy, since the time-dependent horizon areas are coarse-grained entropies.

As omniscient observers, we know that $\Mtwo$ does indeed represent a set of entangled states $\HH_{\Mtwo}\subset\HH^V_A\otimes\HH^V_B$, and we also know the fine-grained entropy: it is given by the area of the HRT surface $\HRT$, a distinguished extremal surface lying in the causal shadow:
\be
S(A)=S(B) = \frac{\area{\HRT}}{4G_{\rm N}}\,.
\ee
The HRT surface also defines entanglement wedges $W(A),W(B)$ that play the role of the external regions of AdS-Schwarzschild, in the sense that observables in $W(A)$ lift to operators in the $A$-subalgebra, and similarly for $B$.

So far everything we've said is well known. We would like to point out, however, something that is not often said in this context: to make the statements in the previous paragraph, it is not necessary to invoke the existence of a CFT description of $\HH^V_A$ and $\HH^V_B$. This is because the entanglement in question is between two copies of the entire CFT, and the latter equals the quantum gravity quantized with AdS$^V$ boundary conditions. (In contrast, applications of the HRT formula involving boundary-anchored extremal surfaces refer to spatial factorizations of the CFT Hilbert space, a specifically field-theoretic concept with no straightforward analogue in quantum gravity.) We can therefore choose to ignore the CFT and simply say that $\Mone$ and $\Mtwo$ represent entangled states of two AdS universes.

\subsubsection*{III. Schwarzschild}

Our claim is now that the entire story told above carries over to the asymptotically flat setting, essentially without change, except for the part about the CFT dual theory. We will again start with a Schwarzschild spacetime, then deform it.

Suppose we now have a quantum gravity theory that admits a $D$-dimensional Minkowski (Mink$_D$) vacuum $V'$. There is a Hilbert space $\HH^{V'}$ for the theory quantized with Mink$^{V'}$ boundary conditions. A Schwarzschild spacetime $\Mthree$ has two asymptotic regions obeying those boundary conditions.\footnote{Taking into account quantum corrections, the Schwarzschild black hole will eventually evaporate. We can either continuously feed it energy from the past null boundaries to counteract the evaporation and keep it static, or simply accept that the description given here is approximate and valid only over some long timescale of order $\hbar^{-1}$. This issue also arises for the other spacetimes considered in this paper (except for a large AdS-Schwarzschild black hole, which may be in equilibrium with its Hawking radiation, thereby representing a truly static spacetime). However, it does not affect the considerations in this paper, as we are working in a semiclassical regime and not considering very long timescales.} We again label the conformal boundaries $A,B$ and exterior regions $W(A),W(B)$; see Fig.\ \ref{fig:penroses} (top right). An observer in either exterior region observes a static, eternal black hole, suggesting their universe is in a mixed state with entropy given again by the Bekenstein-Hawking entropy \eqref{BH}. On the other hand, the \emph{entire} spacetime is not bounded by an event horizon, suggesting that it is in a \emph{pure} state, or at least that its entropy is less than $O(\GN^{-1})$. In other words, the two external regions purify each other. We know this is true for the quantum fields on either side of the bifurcation surface, so it seems plausible that it should be true for the gravitational field at the non-perturbative level as well.

We therefore come to the tentative conclusion that, like its AdS counterpart, Schwarzschild represents a set of entangled states, with entanglement entropy given by the Bekenstein-Hawking formula:\footnote{The idea that the (Minkowski) Schwarzschild solution represents an entangled state can be bolstered by showing that a coupling between the two sides can make the wormhole traversable, which is interpreted as teleportation from the viewpoint of the quantum systems \cite{Bao:2025hxx}.}
\begin{conjecture}\label{schwarzconj}
$\Mthree$ represents a set of entangled states in two copies of $\HH^{V'}$,
\be
\HH_\Mthree\subset\HH^{V'}_A\otimes\HH^{V'}_B\,,
\ee
with entanglement entropy given by
\be
S(A)=S(B)=\frac{\area{\gamma_{\rm bif}}}{4G_{\rm N}}\,.
\ee
Furthermore, observables in $W(A)$ lift to operators in the $A$-subalgebra, and similarly for $B$.
\end{conjecture}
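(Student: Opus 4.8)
The plan is to exploit the observation made in Example II: the corresponding statement for AdS-Schwarzschild never actually used the CFT, only the fact that $\HH^V$ is the Hilbert space of the full quantum-gravity theory with AdS$^V$ boundary conditions. My strategy is therefore to match, step by step, each ingredient of the (known) AdS argument with a Minkowski counterpart, so that establishing Conjecture \ref{schwarzconj} reduces to checking that nothing in that argument was specific to AdS asymptotics. Two features make this plausible and give two semi-independent lines of support: (i) the entropy is controlled by the near-horizon region, where the computation is local and blind to the asymptotics; and (ii) as advertised in the abstract, the asymptotically flat region can in many cases be replaced by an asymptotically AdS one via a domain wall, trading the poorly understood flat-space theory for the well-understood AdS one.

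First I would construct $\HH_\Mthree$ and the two subalgebras at the level of bulk effective field theory on the fixed background $\Mthree$. A Cauchy slice of the maximally extended Schwarzschild geometry is cut by $\gamma_{\rm bif}$ into two halves lying in $W(A)$ and $W(B)$; the bulk fields split accordingly, and in the Hartle--Hawking state the two halves are entangled. Observables supported in $W(A)$ act on the first factor, giving the desired lift to the $A$-subalgebra, and similarly for $B$; the code subspace $\HH_\Mthree$ is then the span of states obtained by acting with such low-energy, negligibly backreacting operators. The factorization $\HH^{V'}_A\otimes\HH^{V'}_B$ is, strictly, only approximate: the local algebra of a wedge is type III and admits no trace. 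I would make it precise by passing to the gravitationally dressed algebra, where adjoining the ADM Hamiltonian (the crossed-product construction) yields a type II algebra with a well-defined, renormalized entropy whose leading term is an area. This is exactly the step that gives operational meaning to ``$S(A)$'' and to the tensor-product structure.

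The entropy itself I would obtain from the Euclidean/replica computation. The Euclidean Schwarzschild saddle prepares the thermofield double, and $\mathrm{Tr}\,\rho_A^n$ is computed by the on-shell action of the $n$-fold cover with a conical singularity at the fixed-point surface $\gamma_{\rm bif}$; smoothness of the bulk filling together with the Einstein equations near the tip reproduce $S(A)=\area{\gamma_{\rm bif}}/4\GN$ in the Gibbons--Hawking/Lewkowycz--Maldacena form $S=(1-\beta\partial_\beta)\log Z$. The crucial point to emphasize is that this derivation only uses the geometry near $\gamma_{\rm bif}$ and the area of the fixed-point surface, so it is insensitive to whether the asymptotics are AdS or flat. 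As a consistency check on the purity claim, I would note that the full spacetime is not enclosed by any event horizon, so its global $O(\GN^{-1})$ entropy vanishes; this is compatible with $\rho_{AB}$ being pure and with $A$ and $B$ purifying each other, as required for the tensor-product interpretation.

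The main obstacle is that, unlike the AdS case, $\HH^{V'}$ has no known non-perturbative or holographic definition, so the factorization cannot be checked against any microscopic model and must be argued for intrinsically. The assumed clean tensor product $\HH^{V'}_A\otimes\HH^{V'}_B$ may be obstructed by global constraints---the total ADM mass and other asymptotic charges, BMS data, and possible superselection sectors---as well as by gravitational edge modes at $\gamma_{\rm bif}$; establishing that the gravitational field, and not merely the matter fields, genuinely factorizes and lifts at the non-perturbative level is precisely the step I do not expect to close rigorously, which is why the statement is a conjecture rather than a theorem. The domain-wall construction is the most promising way around this gap, since it reduces the flat-space claim to the AdS one, where these issues are under control.
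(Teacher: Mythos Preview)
The paper does not prove this statement; it is explicitly labeled a \emph{conjecture}, and the surrounding text offers only heuristic motivation: the analogy with AdS-Schwarzschild (together with the key observation that the AdS argument never actually invoked the CFT), the remark that an observer in either exterior sees a static eternal black hole and hence a mixed state with the Bekenstein--Hawking entropy, and the observation that the full spacetime is not bounded by a horizon and so should be globally pure at $O(\GN^{-1})$. Your proposal recovers all of these ingredients, and you correctly single out the domain-wall replacement (the paper's Section~\ref{sec:gluings}) as the most concrete line of support. You also correctly name the central obstruction---the absence of a non-perturbative definition of $\HH^{V'}$---as the reason the statement is a conjecture rather than a theorem.

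Where you go beyond the paper is in the crossed-product/type~II construction and the explicit Euclidean replica argument. The paper does not invoke the former at all. The latter is in fact flagged by the authors as an open problem: in Section~\ref{sec:discussion} they note that it is not clear asymptotically flat wormholes are ever dominant Euclidean saddles, and that it is not obvious how to impose boundary conditions to compute R\'enyi entropies with Minkowski asymptotics. So your Lewkowycz--Maldacena step, while the natural thing to try, is precisely one the paper regards as unresolved; you should present it as a direction rather than as an established ingredient.
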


The existence of a dual description of $\HH^{V'}$ --- be it a quantum field theory or any other kind of quantum mechanical system --- is neither needed nor claimed for this conjecture. Thus, we can simply say that the Schwarzschild spacetime represents an entangled state of two asymptotically flat universes.

\subsubsection*{IV. General two-sided asymptotically flat black hole}

As in case II, we now obtain a spacetime $\Mfour$ by deforming the Schwarzschild spacetime $\Mthree$ by adding matter and/or gravitational waves while maintaining the Mink$^{V'}$ boundary conditions. (The matter and waves may enter the spacetime through the past singularity or $i^-$ or $\mathcal{I}^-$ of either conformal boundary, and leave through the future singularity or $i^+$ or $\mathcal{I}^+$ of either conformal boundary.) We call the resulting spacetime $\Mfour$. The story is very similar. Again, the event horizons separate, the conformal boundaries remain outside of causal contact, and the causal shadow grows to be codimension-0; see Fig.\ \ref{fig:penroses} (bottom right). 

An observer in one of the exterior regions again sees a black hole with time-dependent past and future horizons, suggesting that their universe is in a mixed state. However, assuming conjecture \ref{schwarzconj} is correct, the entire spacetime presumably still represents some set of entangled states, with entanglement entropy of order $\GN^{-1}$. But what is the value of the entanglement entropy? Is there, in the causal shadow, an extremal surface that can play the role that $\HRT$ did for $\Mtwo$? In the next section, we will argue that such a surface, which we will continue to call $\HRT$, does exist, and defines entanglement wedges $W(A),W(B)$ in the same way as in asymptotically AdS spacetimes. We therefore propose:
\begin{conjecture}\label{2sidedconj}
$\Mfour$ represents a set of entangled states in two copies of $\HH^{V'}$,
\be
\HH_{\Mfour}\subset\HH^{V'}_A\otimes\HH^{V'}_B\,,
\ee
with entanglement entropy given by
\be
S(A)=S(B) = \frac{\area{\HRT}}{4G_{\rm N}}\,.
\ee
Furthermore, observables localized in $W(A)$ lift to operators in the $A$ subalgebra, and similarly for $B$.
\end{conjecture}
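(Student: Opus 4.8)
The plan is to construct $\HRT$ by a maximin prescription adapted to the two-asymptotic-region setting, following Wall's argument for the AdS case, and then to show that the resulting surface inherits the geometric properties it had for $\Mtwo$. The two physical statements --- the entropy formula and the subalgebra lifting --- would then follow by the same reasoning as in cases I--III, but now with no CFT available to appeal to.

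First I would fix the homology data. Since $A$ is an entire asymptotic region rather than a subregion, the relevant class is that of a cut of spatial infinity of the $A$-end: a closed codimension-2 surface $\gamma$ lying on a Cauchy slice $\Sigma$ is admissible if, together with a portion of $\Sigma$ reaching the $A$-end, it bounds a codimension-1 region. The nontrivial wormhole topology connecting the two ends guarantees that this class is nonempty and that its minimal representative does not collapse to the empty surface. Running maximin, I would let $m(\Sigma)$ be the minimal-area admissible surface on $\Sigma$ and define $\HRT$ as the maximizer of $\area{m(\Sigma)}$ over all $\Sigma$. Wall's arguments are local and make no use of the AdS boundary conditions, so they carry over to give: (i) $\HRT$ is extremal, with both null expansions vanishing; (ii) $\HRT$ is spacelike- or null-separated from both ends, hence lies in the causal shadow; and (iii) the two regions into which $\HRT$ divides a maximin slice have domains of dependence $W(A),W(B)$ that are well-defined and together fill the causal shadow, exactly as for $\Mtwo$.

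I would then argue the two physical claims. For the entropy, the cleanest intrinsic route is a gravitational replica computation: one constructs the $Z_n$-symmetric bulk saddle obtained by cyclically gluing $n$ copies of $\Mfour$ across the $A$-end, and extracts $S(A)=-\partial_n(\log Z_n - n\log Z_1)|_{n=1}$, which localizes to $\area{\HRT}/4\GN$ precisely as in Lewkowycz--Maldacena. The point is that this derivation lives entirely in the bulk and requires no boundary theory. For the subalgebra statement, I would invoke the same relative-entropy and modular-flow reconstruction arguments used in AdS, which are statements about bulk operators dressed to one asymptotic region and likewise do not rely on spatial locality of a dual.

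The main obstacle is precisely the absence of that dual. In AdS, HRT and entanglement-wedge reconstruction are ultimately underwritten by the existence of the boundary CFT and its state, whereas here $\HH^{V'}$ is defined only as the gravity Hilbert space with Mink$^{V'}$ boundary conditions, so each step that is a theorem in AdS must be re-justified purely gravitationally. Concretely I expect the hard part to be establishing that the replica saddle dominates and that the homology class chosen is the physically correct one --- an ambiguity that, as noted in the introduction, becomes sharp in the de Sitter setting. A secondary technical issue is controlling the maximin construction near null infinity, where the Cauchy slices and candidate surfaces can slide off toward $i^0$; one must verify that the minimal separating surface remains in a compact region rather than running out to spatial infinity.
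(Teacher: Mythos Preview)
The statement you are trying to prove is labeled a \emph{conjecture} in the paper, and the paper does not claim to prove it. There is therefore no ``paper's own proof'' to compare against. What the paper does is assemble \emph{evidence}: Section~\ref{sec:properties} runs the maximin construction essentially as you describe and checks that the resulting surface has the expected properties (existence, lying in the causal shadow, complementarity, nesting, SSA, MMI, area bounded by the causal information surface), while Section~\ref{sec:gluings} gives a separate argument by replacing the Minkowski ends with AdS ends via a domain wall and appealing to standard HRT there. Your geometric half --- the maximin construction and points (i)--(iii) --- is exactly in the spirit of the paper's Section~\ref{sec:properties}, and the paper agrees with you that Wall's arguments are insensitive to the asymptotics and go through. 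Your worry about surfaces sliding off toward $i^0$ is also addressed there: the paper notes that the metric is ``large'' near both AdS and Minkowski boundaries, forcing the minimal surface to stay in the interior.

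Where your proposal parts ways from the paper is in treating the replica/Lewkowycz--Maldacena computation as the ``cleanest intrinsic route'' to the entropy formula. The paper explicitly flags this as an open problem in its future-directions section: even in the time-symmetric case it is unclear that asymptotically flat wormholes are ever dominant saddles, and it is not obvious how to set up boundary conditions to compute R\'enyi entropies with Minkowski asymptotics. So the step you present as routine is precisely the step the authors regard as the hard part, and your acknowledgment that ``establishing that the replica saddle dominates'' is difficult understates how open this is. Likewise, the modular-flow/JLMS argument for reconstruction that you invoke has, in AdS, a CFT state and CFT relative entropy to anchor it; the paper does not attempt to port this over and instead simply includes entanglement-wedge reconstruction as part of the conjecture. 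In short: your geometric construction matches the paper's evidence, but the two ``physical'' steps you outline are not proofs --- they are the content of the conjecture itself.
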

\noindent In Schwarzschild, $\HRT=\gamma_{\rm bif}$, so conjecture \ref{schwarzconj} is a special case of conjecture \ref{2sidedconj}.

\subsubsection*{General multiboundary spacetime}

There are many well-established generalizations of case II. For example, the conformal boundary may have more than two connected components; the spacetime may have a more complicated topology; the conformal boundaries may be other than $\R\times S^{D-2}$, with the spacetime obeying asymptotically \emph{locally} AdS$_D$ boundary conditions; and the boundary conditions (compactification manifold, values of moduli, cosmological constant, etc.) may vary over the conformal boundary. In all such cases, the HRT formula gives the entanglement entropy between any subset of the boundary connected components and its complement.

Assuming conjecture \ref{2sidedconj} is correct, then by analogy it makes sense to extend it to general numbers of asymptotic regions and topology, and to varying boundary conditions. Given a quantum theory of gravity admitting a classical, Einstein-gravity limit, let $M$ be a connected classical solution with $n$ asymptotic regions $a_1,\ldots,a_n$, each of which is either asymptotically locally AdS$_D$ or asymptotically Mink$_D$. ($M$ may contain end-of-the-world branes, domain walls connecting different vacua, and other massive objects, as long as they don't carry entropy of order $\GN^{-1}$.) As we will show in the next subsection, for any subset $A\subseteq\{a_1,\ldots,a_n\}$, there exists an extremal surface $\HRT(A)$ and entanglement wedges $W(A),W(A^c)$ obeying all the properties we expect from these constructs in the AdS setting. We therefore propose:
\begin{conjecture}\label{mainconjecture}
$M$ represents a set $\HH_M$ of entangled states in the tensor product of $n$ Hilbert spaces $\HH^{a_i}$, each of which is obtained by quantizing the theory with the same boundary conditions as $a_i$:
\be
\HH_M\subset\HH^{a_1}\otimes\cdots\otimes\HH^{a_n}\,,
\ee
with entanglement entropy between $A$ and $A^c$ given by
\be
S(A)=S(A^c) = \frac{\area{\HRT(A)}}{4G_{\rm N}}\,.
\ee
Furthermore, observables localized in $W(A)$ lift to operators in the $A$-subalgebra.
\end{conjecture}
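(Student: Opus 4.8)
Since this is a conjecture rather than a theorem, there is no expectation of a complete proof; what one can actually establish is its geometric backbone --- the existence of a distinguished extremal surface $\HRT(A)$ together with entanglement wedges $W(A),W(A^c)$ enjoying all the properties familiar from the AdS case --- while the Hilbert-space factorization $\HH_M\subset\HH^{a_1}\otimes\cdots\otimes\HH^{a_n}$ and the subalgebra statement are then \emph{postulated} by analogy with cases I--II and supported by consistency checks. My plan is therefore to (i) construct $\HRT(A)$ through a covariant maximin prescription adapted to closed surfaces homologous to a union of asymptotic regions, (ii) establish its existence, extremality, and location in the causal shadow, (iii) define the wedges and verify the standard property list, and (iv) recover the already-accepted conjectures \ref{schwarzconj} and \ref{2sidedconj} as special cases.

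For the construction I would adapt Wall's maximin definition. On a fixed Cauchy slice $\Sigma$, consider surfaces $\gamma$ that separate the asymptotic ends in $A$ from those in $A^c$, i.e.\ such that $\gamma$ together with the ends in $A$ bounds a spatial region $r_A\subset\Sigma$. The essential structural difference from the boundary-anchored AdS case is that here $\gamma$ is a \emph{compact} codimension-2 surface sitting in the interior, since $A$ is a collection of entire asymptotic regions rather than a spatial subregion of a single boundary. Minimizing $\area{\gamma}$ over this homology class, existence follows from standard geometric-measure-theory compactness: a separating surface pushed out toward any asymptotic end has diverging area, while collapsing it to zero would violate the separation condition, so the minimizer is confined to a compact portion of the bulk. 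Maximizing over Cauchy slices then produces the maximin surface, which by the usual stationarity argument --- minimal within the slice, maximal across slices --- is extremal; I would take this to be $\HRT(A)$.

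I would then verify the expected properties. Using the null energy condition and the focusing theorem, together with the horizon separation already argued for $\Mfour$, one shows that $\HRT(A)$ lies strictly in the causal shadow, out of causal contact with every asymptotic region. Defining $W(A)$ as the bulk domain of dependence of $r_A$ and $W(A^c)$ analogously, one checks that the two wedges are spacelike separated and meet only at $\HRT(A)$, which gives complementarity $S(A)=S(A^c)$ immediately. Strong subadditivity and nesting follow from the maximin technology exactly as in Wall's treatment, since those arguments are local and causal and never invoke the AdS boundary. Finally, switching off the matter and waves recovers $\HRT=\gamma_{\rm bif}$ and conjecture \ref{schwarzconj}, while taking all asymptotic regions to be AdS recovers the standard HRT surface of $\Mtwo$; the Brill-Lindquist spacetimes then supply explicit worked examples, including the phase transitions.

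The hard part --- and the reason this must remain a conjecture --- is the final clause, that observables localized in $W(A)$ lift to operators in the $A$-subalgebra. In AdS this rests on entanglement-wedge reconstruction theorems that are fundamentally statements about a \emph{boundary CFT}, and no such microscopic theory is available here. The best one can do is argue from bulk causal structure that wedge-localized observables for $A$ and $A^c$ commute and that the data on $r_A$ determine $W(A)$ by causal evolution, and then lean on analogy with the fully understood AdS cases and on independent evidence such as the traversable-wormhole/teleportation protocols. A secondary subtlety, which becomes sharp only for de Sitter asymptotics, is an ambiguity in the homology condition itself; one must check that for the AdS and flat asymptotics considered here this ambiguity is absent, so that $\HRT(A)$ is unambiguously defined.
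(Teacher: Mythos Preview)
Your proposal is correct and mirrors the paper's own treatment: since this is a conjecture, the paper likewise establishes only the geometric backbone (existence of $\HRT(A)$ via maximin, causal-shadow localization, complementarity, nesting, SSA/MMI, RT on time-symmetric slices) using Wall-style arguments that are insensitive to the AdS-versus-Minkowski asymptotics, and then postulates the Hilbert-space and subalgebra statements by analogy. The one substantive line of evidence you omit is the domain-wall argument: the paper separately supports the conjecture by showing that, in theories admitting both AdS and Minkowski vacua, one can excise the flat asymptotic regions and glue in AdS-Schwarzschild patches via thin shells so that the standard HRT formula applies and, for suitable parameters, still selects the original interior extremal surface.
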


A small extension of conjecture \ref{mainconjecture} would state that the HRT formula also computes entanglement entropies of regions that include \emph{subregions} of the AdS boundaries (potentially along with entire AdS and/or Minkowski boundaries). To avoid complicating the exposition, in this paper we will focus on regions that only include entire boundaries.

Most of the rest of this paper is devoted to giving evidence for and studying examples of conjecture \ref{mainconjecture}. Aside from the above argument by analogy, we will provide two lines of evidence. First, we will show that, like the standard HRT formula, it passes a suite of non-trivial checks (section \ref{sec:properties}). Second, we will give an argument based on replacing the Minkowski asymptotic regions with AdS ones, where the standard HRT formula applies (section \ref{sec:gluings}).

Theories of gravity admit other kinds of boundary conditions besides Minkowski and AdS. Our conjecture can in principle be generalized to any boundary condition that admits a non-perturbative quantization, as long as the construction and properties of the HRT surface presented in the next section go through. One can also ask about closed universes, which have no spatial boundary on which to impose boundary conditions. We will discuss the case of de Sitter asymptotics in section \ref{sec:closed} and other asymptotics in section \ref{sec:discussion}.

\subsection{Comparison to other non-AdS entanglement entropy formulas}

The above conjectures are by no means the first attempt to generalize the notion of holographic entanglement beyond asymptotically AdS spacetimes. Indeed, many such formulas have been put forward in the context of various putative holographic dualities such as flat, wedge, Carrollian, Lifshitz, Galilean, and celestial holography, dS/CFT, and so on \cite{Li:2010dr,Apolo:2020bld,Gentle:2015cfp,Bagchi:2014iea,Sato:2015tta,Doi:2022iyj,Narayan:2012ks,Narayan:2015vda,Narayan:2020nsc,Akal:2020wfl,Ogawa:2022fhy}.  
However, such works typically focused on \emph{boundary-anchored} HRT surfaces (or their analogues), which are claimed to compute the entropies of subregions of the boundary. For a subregion to have an entropy would seem to require the Hilbert space to admit a spatial tensor factorization { 
in which the Hamiltonian is local} (or, in the algebraic QFT setting, the split property), which is characteristic of a local quantum field theory and which quantum gravity theories with non-AdS asymptotics may not have. Which is not to say that applications of HRT-like formulas to non-AdS spacetimes are meaningless, just that it is not clear in every case that they are actually computing entanglement entropies. This issue is likely related to the fact that extremal surfaces generally do not gracefully meet the conformal boundaries of non-asymptotically AdS spacetimes, requiring somewhat complicated constructions to define the relevant surfaces. 

The difference in this work is that we are focusing on entire boundary connected components, so that the HRT surfaces are not boundary-anchored. Our claim is that such components \emph{do} represent tensor factors; furthermore, we can be specific that each factor is just the Hilbert space for the theory quantized with those boundary conditions. This claim does not require a dual quantum field theory, or indeed a dual theory at all; it is entirely internal to the given quantum gravity theory. In this sense, our conjecture may be considered as a more modest, or conservative, generalization of the HRT formula than the ones referenced above.

A different kind of generalization of the HRT formula beyond AdS is provided by very general formulas, such as Miyaji-Takayanagi's ``surface-state'' (SS) correspondence \cite{Miyaji:2015yva} and Bousso-Penington's ``generalized entanglement wedge'' (GEW) proposal \cite{Bousso:2022hlz,Bousso:2023sya} (see also \cite{Balasubramanian:2023dpj}), which are supposed to apply to any quantum theory of gravity and do not need a boundary at all for their definition. According to SS, a Hilbert space and state are associated to any open or closed codimension-2 spacelike surface $\Sigma$ (subject to a certain convexity condition we won't enter into the details here). The entropy of the corresponding state $\rho(\Sigma)$ is then given by the area of the smallest extremal surface spacelike-homologous to $\Sigma$. The GEW proposal is most easily understood in the time-reflection symmetric case. A subset $R$ of the time-reflection symmetric slice has an entanglement wedge $W(R)$ given by the region containing $R$ with the smallest boundary area, and that area gives the entropy of the corresponding state.

Our conjecture \ref{mainconjecture} is a special case of SS, where we take the surface $\Sigma$ to be a spatial sphere near infinity in the relevant (AdS or Minkowski) asymptotic boundary (or union thereof in case $A$ includes multiple boundaries). In the time-reflection case, it is also a special case of GEW, where we take the spatial region $R$ to be the set of points on the reflection slice outside (i.e.\ closer to the boundaries than) $\Sigma$. In either case, the relevant surface is the HRT surface appearing in conjecture \ref{mainconjecture}, so the SS and GEW conjectures reduce to the latter one. In principle, this could be viewed as evidence in favor of conjecture \ref{mainconjecture}. However, conjecture \ref{mainconjecture} is a far more conservative extension of standard AdS/CFT facts than SS or GEW, and is therefore arguably on safer ground.

\subsection{Extensions}
\label{sec:extensions}

The HRT formula sits at the center of a large and growing world of investigations into information-theoretic properties of holography and quantum gravity. Our generalization, conjecture \ref{mainconjecture}, naturally incorporates many of these extensions. In fact, we have already included one of them in conjecture \ref{mainconjecture}, namely entanglement wedge reconstruction (or subregion duality). Here we discuss a few others and their implications. The list presented here is far from exhaustive, and our treatment will be brief, assuming the reader is already somewhat familiar with the literature in this area. In particular, we will not repeat the arguments that have been made in favor of each extension.

\paragraph{Quantum corrections:} Perturbative corrections in $\GN$, of order $\GN^0$ and higher, may be incorporated into the HRT formula by the quantum extremal surface formula \cite{Engelhardt:2014gca}, in which the quantity to be extremized is not the area of the surface $\gamma$ but its generalized entropy,
\be
S_{\rm gen}(\gamma) = \frac{|\gamma|}{4\GN}+S_{\rm bulk}(\gamma)\,,
\ee
where the last term is the entropy of the bulk fields between the surface $\gamma$ and $A$. This formula is believed to hold when the state of the bulk fields is sufficiently generic, or more precisely when its min- and max-entropies are of the same order in $\GN$, and is replaced by a more complicated version involving those entropies in the general case \cite{Akers:2023fqr}. We would expect all of these considerations to carry over directly to the asymptotically Minkowski setting.

\paragraph{Quantum error correction:} It has been pointed out that in AdS/CFT, as a consequence of entanglement wedge reconstruction, the map from the bulk quantum fields to the boundary CFT for a given classical spacetime must have the properties of a quantum error correcting code \cite{Almheiri:2014lwa,Dong:2016eik}. The same arguments will hold in the present context. For example, certain wormholes connecting three universes will contain a codimension-0 spacetime region lying in the entanglement wedge of any two of the three universes. (We will construct an explicit asymptotically Minkowski example in subsection \ref{sec:n3BL}.) The state of the fields within that region must be reconstructible given the reduced state on any two of the universes; or, to put it differently, it must be reconstructible given arbitrary errors introduced in any one universe.

\paragraph{Outer entropy:} The paper \cite{Engelhardt:2018kcs} defined a ``minimar surface'' $\mu$ in a multiboundary holographic spacetime as a compact marginally trapped surface that is minimal on a partial Cauchy slice that intersects the boundary on a Cauchy slice for an entire boundary connected component. The authors argued that the area of such a surface equals the ``outer entropy'' of $\mu$, defined by maximizing the entropy of the CFT on that boundary while holding fixed the data on that partial Cauchy slice (or equivalently the data in its causal domain, which is the outer wedge of $\mu$). These definitions and arguments are equally valid in the present setting, simply replacing ``CFT'' by ``universe''.

\paragraph{Entanglement wedge cross section:} Given disjoint boundary regions $A,B$, their joint entanglement wedge $W(AB)$ can be considered a spacetime in its own right, bounded spatially by the joint HRT surface $\HRT(AB)$ as well as by $A$ and $B$ themselves. The HRT formula can then be applied within $W(AB)$, to $A$ or $B$, with the homology constraint applied relative to $\HRT(AB)$. The resulting ``HRT surface'' is the same for $A$ and $B$, and is called the entanglement wedge cross section (EWCS) surface. Its area, the EWCS, has been conjectured to equal quite a few different information-theoretic quantities, such as the entanglement of purification \cite{Takayanagi:2017knl,Nguyen:2017yqw}, the reflected entropy \cite{Dutta:2019gen}, the logarithmic negativity \cite{Kudler-Flam:2018qjo}, and the odd entropy \cite{Tamaoka:2018ned}. We refer to the respective papers for the definitions of those quantities. Due to the difficulty of computing these quantities from first principles in the dual field theory, it is not yet clear which one of these conjectures is correct; in fact, it is quite possible that more than one are correct, since, although the quantities are not in general equal, some of them may be equal for holographic states.

The definition of the EWCS surface goes through equally well with asymptotically Minkowski boundaries. The putative dual quantities simply require a state $\rho_{AB}$ on a product Hilbert space $\mathcal{H}_A\otimes\mathcal{H}_B$ and are therefore well-defined in the present context. Presumably, whichever one or ones of the above conjectures are correct in the AdS context are correct here as well.

\paragraph{Pythons:} Suppose we have a time-reflection symmetric holographic spacetime, and that on the symmetric slice there exist two locally minimal surfaces homologous to a given boundary region $A$. If the smaller one, $\gamma_{\rm RT}$, is further from $A$ than the larger one, called the ``constriction'' $\gamma_c$, then the region between them, called the ``python'', is contained in the entanglement wedge and therefore in principle reconstructible from $A$. However, the lack of a simple HKLL-type reconstruction \cite{Hamilton:2006az}, together with arguments related to the complexity of decoding Hawking radiation, have suggested that this reconstruction is much more complex than the reconstruction of the exterior region between $\gamma_c$ and $A$. This complexity can be quantified in terms of the number $\mathcal{C}$ of gates required to write a generic local observable in the python as a circuit, where the gates are taken from a set of simple boundary operators in $A$, and simple means dual to a bulk local operator near the boundary. Specifically, it was conjectured in \cite{Brown:2019rox} that the complexity $\mathcal{C}$ goes like
\be\label{PLC}
\mathcal{C}\sim\exp\left[\frac{|\gamma_b|-|\gamma_c|}{8\GN}\right],
\ee
where $\gamma_b$ is the ``bulge'' surface, a non-minimal extremal surface contained in the python, whose existence can be inferred by a minimax argument in the space of surfaces homologous to $A$. Eq.\ \eqref{PLC} is the ``python's lunch conjecture''. (See \cite{Brown:2019rox,Engelhardt:2021qjs,Engelhardt:2021mue,Arora:2024edk} for specifics and further discussion, including the covariant generalization.)

The phenomenon of competing candidate RT surfaces also occurs in asymptotically Minkowski spacetimes; we will see explicit examples in section \ref{sec:BLwormholes}. The same minimax argument then implies the existence of a bulge surface. And the notion of a ``simple'' operator can be maintained without reference to the CFT, as a local observable in the asymptotic region. So the python's lunch conjecture can be ported directly over to the present context.

\paragraph{Tensor networks:} Tensor networks (TNs) were originally proposed as toy models of holographic states because they obey a form of the RT formula and because they mimic the hyperbolic geometry and relation between energy scale and position in the extra dimension that are characteristic of holography \cite{Swingle:2009bg}. The latter property is specific to AdS/CFT, and while it might be possible to construct a TN that is geometrically flat (or asymptotically flat), it is not obvious that the states such a network produces would be in any sense holographic.

However, there is a coarser kind of TN that is specifically adapted to computing entanglement entropies, without necessarily reproducing the UV physics of a CFT. The paper \cite{Bao:2018pvs} starts from a Cauchy slice $\Sigma$ of a holographic spacetime, along with a fixed set of boundary regions $A,B,\ldots$, and constructs a TN based on a decomposition of $\Sigma$ along RT surfaces of the regions $A,B,\ldots$, along with a set of composite regions. Thus, each node in the network corresponds to a region of $\Sigma$, and each link to a (part of or whole) RT surface, with bond dimension equal to the area over $4\GN$. If the set of composite regions is chosen without crossings, then the RT surfaces do not intersect, and the network is a tree. In that case, the tensors can be chosen so that the TN description of the state is exact. (If the RT surfaces intersect, then the situation is more complicated, and it is not clear if an exact TN description of the state exists.)

This coarse TN description should exist for asymptotically Minkowski spacetimes, assuming (as always) that the boundary regions are taken to be made up of entire connected components. We will discuss this further in subsection \ref{sec:TNs}.

\subsection{Outline of paper}

The rest of this paper is organized as follows. In section \ref{sec:properties}, we define the HRT surfaces for asymptotically Minkowski and mixed AdS/Mink spacetimes and study their properties. Specifically, subject to some mild assumptions about the spacetime, we argue that: extremal surfaces in the relevant homology classes exist; they are outside of causal contact with all conformal boundaries; the entanglement wedges obey complementarity and nesting; and their areas obey strong subadditivity and are smaller than those of the relevant causal horizons. These properties give support to conjecture \ref{mainconjecture}. We also show that, if a spacetime admits a global time-reflection symmetry, then the HRT surface is globally minimal on the invariant slice (RT formula).

In section 3, we give further support to conjecture \ref{mainconjecture}, by showing that, in any quantum gravity theory that admits both AdS$_D$ and Mink$_D$ vacua connected by domain walls, it can be obtained as a limit of the standard HRT formula. The argument involves replacing Minkowski asymptotic regions with AdS ones, following the procedure of \cite{Freivogel:2005qh}. Along the way, we make some new observations about the gluing procedure. (More details are provided in appendix \ref{sec:patching}.)

Multiboundary AdS wormholes, particularly in $D=3$, are widely studied in the holographic literature (see e.g.\ \cite{Krasnov:2000zq,Balasubramanian:2014hda} for some early work). On the other hand, asymptotically flat multiboundary wormholes, to which we can apply conjecture \ref{mainconjecture}, are perhaps not as familiar, and the reader may wonder whether they actually exist. In fact, there is a known class of exact vacuum initial data in $D=4$ with an arbitrary number of asymptotic regions, called Brill-Lindquist metrics \cite{brill-lindquist}. We study these in section \ref{sec:BLwormholes}, numerically finding the relevant RT surfaces and studying the entanglement entropies. Among other phenomena, we find a mutual information phase transition similar to the familiar ones in AdS. We also find non-minimal extremal (bulge) surfaces relevant to the python's lunch conjecture \cite{Brown:2019rox}. We show that Brill-Lindquist geometries can be glued together to form networks of entangled universes (including spacetimes with zero and one asymptotic regions) and discuss their entanglement properties.\footnote{A forthcoming paper by one of the present authors \cite{Gupta} will use the Brill-Lindquist geometries to construct a toy model of black hole evaporation in flat space, following the example of the AdS model developed in \cite{Akers:2019nfi}.}

In section \ref{sec:closed}, we consider generalizing conjecture \ref{mainconjecture} to spacetimes with asymptotically de Sitter regions. We point out several distinct consistent generalizations of conjecture \ref{mainconjecture}, and comment on their possible physical interpretations.

Finally, in section \ref{sec:discussion}, we conclude by discussing some further points and future directions, including tensor networks for describing entangled universes and the creation of entangled universes in inflation and by the collapse of radiation in flat space.

\section{General properties of entangled universes}
\label{sec:properties}

In this section, we will define HRT surfaces and entanglement wedges for spacetimes with asymptotically Minkowski and/or AdS regions. We will show that they share all the general properties of HRT surfaces in the usual AdS setting, namely:
\begin{itemize}
\item existence;
\item lying in the causal shadow (causal wedge inclusion);
\item complementarity;
\item nesting;
\item strong subadditivity;
\item MMI; 
\item having area less than or equal to that of the causal information surface;
\item in the presence of a time reflection symmetry, being minimal on the time reflection invariant slice (RT formula).
\end{itemize}
Our main point is that the usual arguments for these properties, from the AdS setting, go through just as well with asymptotically Minkowski boundaries, so we will be very brief.

As we will discuss, some of these properties constitute checks on conjecture \ref{mainconjecture}, while others are interesting or useful properties of the surface or entropy. We will also present several equivalent formulations of the HRT formula: maximin, minimax, U-flow, and V-flow.

\subsection{Setup}
\label{sec:setup}

The bulk spacetime $M$, with dimension $D_M$, is assumed to have asymptotic regions $a_1,\ldots,a_n$ ($2\le n<\infty$), where each $a_i$ is either asymptotically locally AdS$_D\times X_i$ or asymptotically Mink$_D\times X_i$, the dimensions satisfy $D_M\ge D\ge3$, and $X_i$ is a compact, connected manifold of dimension $D_M-D$. Each asymptotically AdS region $a_i$ has a conformal boundary $b_i$ that is a connected $(D-1)$-dimensional manifold equipped with a causal structure. Each Minkowski asymptotic region $a_i$ has a conformal boundary $b_i$ consisting of a point $i^0_i$ at spacelike infinity and $(D-1)$-dimensional null boundaries $\mathcal{I}^\pm_i$.

We assume that $M$ is (AdS) globally hyperbolic (GH). A Cauchy slice $\sigma$ for $M$ meets each boundary $b_i$ on $\sigma_i$, where $\sigma_i$ is a Cauchy slice for $b_i$ in the AdS case and $\sigma_i=i^0_i$ in the Minkowski case. (Henceforth, we will abbreviate \emph{Cauchy slice} by \emph{slice}.) We assume that, with these boundaries adjoined, $\sigma$ is compact. We also assume that the (Einstein-frame) metric $g_{\mu\nu}$ on $M$ is smooth and obeys the null curvature condition (NCC, $R_{\mu\nu}k^\mu k^\nu\ge0$ for all null vectors $k^\mu$, a consequence of the Einstein equation and null energy condition). Finally, we assume that $M$ is complete, in the sense that it cannot be contained in a larger globally hyperbolic spacetime.

A \emph{surface} is a $(D_M-2)$-dimensional spacelike submanifold of $M$. Areas of surfaces are measured with respect to $g_{\mu\nu}$. A surface is \emph{extremal} if it extremizes the area, or equivalently, if the trace of its extrinsic curvature vanishes.

The following key facts, which underpin most properties of HRT surfaces, follow from the assumptions of GH and NCC:
\begin{enumerate}
\item Given a closed surface $\gamma$, each point $x$ on the boundary of $J^\pm(\gamma)$ is connected to $\gamma$ by a null geodesic that meets $\gamma$ orthogonally and does not contain a conjugate point between $\gamma$ and $x$.
\item Given a congruence of null geodesics, if its expansion at a point $x$ is non-positive, then the expansion at all points further along the geodesic from $x$ is non-positive (focusing).
\item A null congruence of which every geodesic reaches a boundary $b_i$ has infinite limiting area, and therefore everywhere positive expansion.
\end{enumerate}

We can generalize the setup slightly to allow $g_{\mu\nu}$ to have timelike singularities, e.g.\ conical singularities or junctions (hypersurfaces on which the metric is continuous but not differentiable), as long as they can be smoothed out in a way that preserves GH and NCC, so that the above facts still hold. End-of-the-world branes are also allowed, again so long as the above facts are valid. This will hold as long as the spacetime can be doubled along the brane (i.e.\ glued to a reflected copy of itself) and then smoothed out to produce a spacetime that obeys GH and NCC. The extremality condition on a surface, in particular the requirement to be extremal with respect to changes in where it intersects the brane, implies that the intersection must be orthogonal. The homology condition on a surface (defined below) becomes homology relative to the brane.

From the causal future and past $J^\pm(b_i)$ of each boundary, we define:
\begin{itemize}
\item The past and future event horizons $C\!H^\mp_i$, which are the past and future boundaries of $J^\pm(b_i)$ respectively. These are null congruences, with everywhere non-positive expansion going away from $b_i$.
\item The causal wedge $C\!W_i:=J^+(b_i)\cap J^-(b_i)$.
\item The causal information surface $\Xi_i:=C\!H^+_i\cap C\!H^-_i$.
\end{itemize}
A generator of the event horizon $C\!H^\mp_i$ cannot reach any conformal boundary. Therefore, the boundaries are mutually acausal, and the causal wedges do not intersect; in other words, the wormhole is not traversable.

The \emph{causal shadow} $C\!S$ is the set of points in $M$ out of causal contact with all the boundaries:
\be
C\!S:=M\setminus\bigcup_i\left(J^+(b_i)\cup J^-(b_i)\right).
\ee
Any closed extremal surface must lie entirely in $C\!S$; else a subset of the generators of the boundary of its past or future would reach $b_i$ for some $i$, which is impossible since the expansion is initially zero. (If we were considering subregions of the AdS boundaries, the causal shadow would depend on the subregions. Here, since we're considering entire boundaries, it does not; there is simply a single causal shadow for the entire spacetime.)

\subsection{HRT surface: definition \& properties}

Given a subset $A\subseteq{1,\ldots,n}$ of the boundaries, a surface $\gamma$ is \emph{homologous} to $A$ if there exists a Cauchy slice $\sigma$ containing $\gamma$ and a region $r\subset\sigma$ such that
\be
\partial r = \gamma\cup\bigcup_{i\in A}\sigma_i\,.
\ee
Such a surface is necessarily closed. It follows immediately, via the region $r^c:=\sigma\setminus r$, that $\gamma$ is also homologous to $A^c$. It divides $M$ into four spacetime regions: its future and past $J^\pm(\gamma)$ and the two wedges $w(\gamma,A):=D(r)$, $w(\gamma,A^c):=D(r^c)$. In the cases $A=\emptyset$ and $A=\{1,\ldots,n\}$, the empty set is homologous to $A$.

Following \cite{Hubeny:2007xt}, the \emph{HRT surface} $\HRT(A)$ is defined as the least-area extremal surface homologous to $A$. We immediately have $\HRT(A)=\HRT(A^c)$. To simplify the exposition, we will assume that the minimum is unique; this assumption can be relaxed at the expense of a few mild complications. The \emph{entanglement wedge} is the wedge for $\HRT(A)$, $W(A):=w(\HRT(A),A)$, and similarly for $A^c$. Since $\HRT(A)$ is in the causal shadow, $W(A)\supseteq C\!W_i$ for all $i\in A$, and $W(A^c)\supseteq C\!W_i$ for all $i\in A^c$. The future and past boundaries of $W(A)$ are the \emph{entanglement horizons} $E\!H^\pm(A)$; we write $E\!H(A):=E\!H^+(A)\cup E\!H^-(A)$. For any slice $\sigma$, $E\!H(A)\cap\sigma$ is a surface homologous to $A$ with area less than or equal to $\HRT(A)$. 

\begin{lemma}\label{HRTlemma}
If a surface $\gamma$ is homologous to $A$, extremal, and minimal on some slice $\sigma$, then $\gamma=\HRT(A)$.
\end{lemma}
\begin{proof}
Otherwise, $|\gamma|>|\HRT(A)|\ge|E\!H(A)\cap\sigma|$, a contradiction with minimality of $\gamma$ on $\sigma$. 
\end{proof}
In particular, if $M$ has a time-reflection symmetry with invariant slice $\sigma$, and $\gamma$ is the minimal surface homologous to $A$ on $\sigma$, then it is necessarily extremal (since it is extremal with respect to variations both within $\sigma$ and, by symmetry, orthogonal to it), hence it is the HRT surface; this is the RT formula. All of this is the same as in the standard AdS case, as long $A$ is a union of entire boundaries.

A very useful fact that again carries over directly from the standard AdS case is that the HRT surface may be found by a maximin procedure, as follows \cite{Wall:2012uf}. On each slice, find the minimal surface homologous to $A$, then maximize the minimal surface area over the choice of slice. The arguments used to show that a minimum and a maximum exist, and that the resulting surface is indeed the minimal-area extremal one, are not much affected by changing the asymptotic geometry from AdS to Minkowski. For the existence of a minimal surface on a given slice, the key point is that the metric is ``large'' near the boundaries, forcing the existence of a minimum in the interior; this still holds if some of the boundaries are asymptotically flat rather than hyperbolic. The argument for the existence of a maximum is the hardest step, and as far as we know a fully general proof does not exist. Roughly speaking, the assumption of completeness implies that the spacetime is bounded in the past and future either by some kind of singularity, where the spatial metric (or some components of it) collapse, or by a null Cauchy horizon (as in Reissner-Nordstrom). In either case, the minimal surface area decreases as the past and future boundaries are approached, implying the existence of a maximum somewhere between them. In this very crude discussion, we are of course eliding all kinds of technical issues, for example concerning the topology on the space of slices and the continuity of the area functional; our point is simply that the same arguments apply equally well for asymptotically Minkowski as for asymptotically AdS spacetimes. The fact that the maximin surface is extremal is local and has nothing to do with the asymptotics of the spacetime. Finally, since it is extremal and minimal on a slice, by lemma \ref{HRTlemma}, it is the HRT surface.

The remaining properties mentioned above are proved using the maximin formulation, and in particular the existence of a slice $\sigma$ on which $\HRT(A)$ is minimal. For example, its area is no larger than the sum of the causal information surfaces \cite{Wall:2012uf}
\be\label{CHIbound}
\area{\HRT(A)}\le\sum_{i\in A}\area{\Xi_i}\,.
\ee
This is proved as follows. If we let $I_i$ be the interior part of the event horizons,
\be
I_i:=(E\!H^+_i\setminus J^+(b_i))\cup(E\!H^-_i\setminus J^-(b_i))\,,
\ee
each point on $I_i$ is connected to $\Xi_i$ by a null geodesic, on which the expansion is non-positive moving away from $\Xi_i$. Therefore, for any slice $\sigma$, the area of its intersection with $I_i$ is bounded above by $\area{\Xi_i}$. Minimality of $\HRT(A)$ on $\sigma$ then gives \eqref{CHIbound}. All of this goes through whether the boundaries are AdS or Minkowski.

For nested boundary sets $A$ and $AB$, by applying maximin to the sum of the areas of surfaces homologous to each boundary set, one can show that there exists a slice on which $\HRT(A)$ and $\HRT(AB)$ are both minimal \cite{Wall:2012uf}. This fact is then used to prove that the respective entanglement wedges are nested, $W(AB)\supset W(A)$, and also that the areas of HRT surfaces obey the strong subadditivity and MMI inequalities,
\be
\area{\HRT(AB)}+\area{\HRT(BC)}\ge \area{\HRT(B)}+\area{\HRT(ABC)} \,,
\ee
\begin{multline}\label{MMI}
\area{\HRT(AB)}+\area{\HRT(BC)}+\area{\HRT(AC)} \ge \\ \area{\HRT(A)}+\area{\HRT(B)}+\area{\HRT(C)}+\area{\HRT(ABC)}\,.
\end{multline}
The proofs of these statements are identical to the AdS case \cite{Wall:2012uf}, so we will not repeat there here.

\subsection{Implications of properties}

Many of the properties of HRT surfaces discussed in the previous section are consistency checks on conjecture \ref{mainconjecture}. The existence of an HRT surface is the first check. The strong subadditivity inequality is a check on the identification of its area with an entanglement entropy. The statement that the $A$-subalgebra contains the observables in $W(A)$ requires that, for disjoint boundary sets $A,B$, first, $W(AB)\supset W(A)$ and, second, $W(A)$ and $W(B)$ are spacelike-related, so that their observables commute. The first is the nesting property, and the second is implied by nesting and complementarity (i.e.\ the fact that $W(A^c)$ is the complementary wedge to $W(A)$). Complementarity is in fact a bit stronger than what is needed, which is merely that $W(A)$ and $W(A^c)$ are spacelike-related. Indeed, in the case of degenerate HRT surfaces, the correct entanglement wedge is presumably the smallest one; if this is true, then $W(A)$ and $W(A^c)$ are separated by a gap, but they are still spacelike related (since degenerate HRT surfaces lie on a common Cauchy slice \cite{Grado-White:2024gtx}).

The fact that $\HRT(A)$ lies in the causal shadow also constitutes a check on conjecture \ref{mainconjecture}, since it protects the HRT surface, and hence the entropy, from being changed by disturbances to the spacetime sent in from any of the AdS or past or future Minkowski boundaries and propagating causally. Such a disturbance, which is associated with changes to the subleading fall-off of the metric and other fields near one of the boundaries $b_i$ (in the AdS case) or $\mathcal{I}_i^\pm$ (in the Minkowski case), is a local unitary --- local in the sense that it acts on one Hilbert space factor $\HH^{a_i}$.

The inequality \eqref{CHIbound} reflects the fact that the entanglement entropy is a fine-grained entropy whereas the causal information surface areas are presumably some kind of coarse-grained entropy, although the precise form of the coarse-graining that gives rise to the latter is not known. A more precise statement can be made about the asymptotic area of the future event horizon, which is the Bekenstein-Hawking entropy of the black hole in its final equilibrium state; here the coarse-graining simply reflects all the information available to an observer who stays outside the horizon. This area is no larger than $\area{\Xi_i}$; so the sum over $i\in A$ is again bounded by $\area{\HRT(A)}$.

Just as, in the standard AdS setting, these properties constitute an extensive and stringent set of consistency checks on the HRT formula and entanglement-wedge reconstruction, they also constitute an extensive and stringent set of checks on our conjecture \ref{mainconjecture}. All of the assumed properties of the spacetime come into play: completeness, global hyperbolicity, dynamics in the form of the null curvature condition (which reflects the classical equations of motion of both the metric and the matter), and boundary conditions. The fact that it passes such highly non-trivial tests lends strong support to the validity of conjecture \ref{mainconjecture}.

The other two properties discussed in the last subsection are not per se consistency checks, but are important to know nonetheless. The fact that the HRT surface is the minimal surface on a time-reflection slice is certainly a very useful fact that we will make extensive use of in this paper. Finally, the MMI inequality \eqref{MMI} is a constraint on the entanglement structure of semiclassical states in quantum gravity. It is interesting that, according to our conjecture, it is not related to AdS but holds also for asymptotically flat states. The interpretation, from a quantum-information viewpoint, of the MMI inequality is unknown; a proposal, called ``bipartite dominance'', was put forward in \cite{Cui:2018dyq}, but has been contested \cite{Akers:2019gcv,Mori:2024gwe}.

MMI is the simplest in an infinite family of inequalities known to be obeyed by the RT formula \cite{Bao:2015bfa}. These inequalities are statements about the areas of minimal surfaces on Riemannian manifolds, and their proofs are independent of the topology, geometry, and boundary conditions of the manifold. Therefore, they all hold for the RT version of conjecture \ref{mainconjecture}. Furthermore, there is evidence that they hold for the HRT formula as well (see \cite{Grado-White:2024gtx} for a summary of the state of the art); if this is true then they presumably hold with Minkowski boundaries as well.

\subsection{Minimax \& flow formulations}

In addition to maximin, the HRT formula admits several other, equivalent, formulations, namely a minimax formula and so-called V-flow and U-flow formulas. These formulas, and proofs that they are equivalent to the HRT formula, can be found in \cite{Headrick:2022nbe}. (For more on minimax, see also \cite{Grado-White:2025jci}.) Here we will simply indicate how they can be adapted in the presence of a Minkowski boundary. In each case, there is a choice of whether to group the null boundaries $\mathcal{I}^\pm_i$ together with the spacelike one $i_i^0$ or together with the singularities (or Cauchy horizons) that bound the spacetime in the future and past.

The minimax formula involves finding the maximal-area acausal surface on a time-sheet (a timelike hypersurface), and then minimizing over the choice of timesheet. The boundary region $A$ is encoded in a homology condition on the timesheet. This is a \emph{spacetime} homology condition, as opposed to the spatial homology condition found in the definition of the HRT surface. Specifically, the time-sheet $\tau$ should be the bulk part of the boundary of a spacetime region $R\subseteq M$ whose conformal boundary includes $b_i$ for all $i\in A$, and does not include $b_i$ for $i\in A^c$. For Minkowski boundaries, one can alternatively relax the homology condition to the following: the conformal boundary of $R$ must contain $i^0_i$ for $i\in A$ and not contain $i^0_i$ for $i\in A^c$. This allows $\tau$ to include components that end on $\mathcal{I}^\pm_i$. The area of an acausal surface in $\tau$ will then be unbounded above, effectively removing it from consideration at the minimization step.

The U-flow formula involves a divergenceless future-directed timelike vector field $U^\mu$, with vanishing flux on all the boundaries $b_i$. $U^\mu$ is subject to the following bound, for every spacelike curve $\mathcal{C}$ connecting $A$ to $A^c$:
\be\label{Uflowconstraint}
\int_{\mathcal{C}}ds\,|U_\perp|\ge \frac1{4\GN}\,,
\ee
where $ds$ is the proper distance along $\mathcal{C}$ and $U_\perp$ is the projection of $U^\mu$ orthogonal to the tangent vector of $\mathcal{C}$. (This nonlocal constraint can be expressed as a local constraint at the expense of adding a scalar field.) Since $U^\mu$ is divergenceless and has vanishing flux on the AdS boundaries, its flux is the same through all slices. Subject to the constraint \eqref{Uflowconstraint}, the minimal flux equals the HRT surface area. Alternatively, the no-flux boundary condition on $b_i$ can be dropped for the Minkowski boundaries, allowing flux to enter through $\mathcal{I}_i^-$ and leave through  $\mathcal{I}_i^+$; this does not affect the flux through a slice.

Finally, the V-flow formula involves a divergenceless vector field $V^\mu$ with flux that enters the spacetime through $b_i$ ($i\in A$) and leaves through $b_i$ ($i\in A^c$). $V^\mu$ is subject to the following bound, for every timelike curve $q$:
\be\label{Vflowconstraint}
\int_qdt\,|V_\perp|\le\frac1{4\GN}\,,
\ee
where $dt$ is the proper time along $q$ and $V_\perp$ is the projection of $V^\mu$ orthogonal to the tangent vector of $q$. (This nonlocal constraint can be expressed as a local constraint at the expense of adding a scalar field.) Subject to the constraint \eqref{Vflowconstraint}, the maximal flux of $V^\mu$ equals the HRT surface area. Alternatively, for Minkowski boundaries, one can require the flux to enter and leave through just $i^0_i$; since the space is so large there, this will not affect the norm bound or the maximal flux.

\section{Replacing Minkowski boundaries with AdS}
\label{sec:gluings}

We will now make a different argument for conjecture \ref{mainconjecture}. This argument assumes that we have a quantum gravity theory that admits both Mink$_D$ and AdS$_D$ vacua as well as a domain wall connecting them, which may be treated as a classical source obeying the null energy condition. (See e.g.\ \cite{Ooguri:2020sua,Simidzija:2020ukv} for discussions of the question of the existence of domain walls connecting different vacua in quantum gravity.) We can then take a spacetime $M$, satisfying the assumptions laid out in the previous section, remove the Minkowski asymptotic regions $a_i$, and, using a domain wall, glue in AdS asymptotic regions $a'_i$. The result is a spacetime $M'$ with the same topology as $M$, but with only AdS conformal boundaries. According to the standard HRT formula, $M'$ represents an entangled state in the Hilbert space of $n$ CFTs, or equivalently of $n$ AdS universes. If the gluing has not introduced any smaller extremal surfaces, then the entanglement entropies are computed by the original HRT surfaces, in the original portion of the spacetime. In other words, the entanglement is controlled by the original spacetime $M$. This suggests that the entanglement entropies, at least, are the same between $M$ and $M'$, even though they involve different Hilbert spaces. In other words, the entanglement involves the ``infrared'' in the parlance of AdS/CFT --- the part of the spacetime far from the boundary --- while going from $M$ to $M'$ changes the ``ultraviolet''.

\subsection{Gluing AdS onto Schwarzschild}
\label{sec:AdSgluing}

\begin{figure}[ht]
    \centering
    \includegraphics[scale = 0.24]{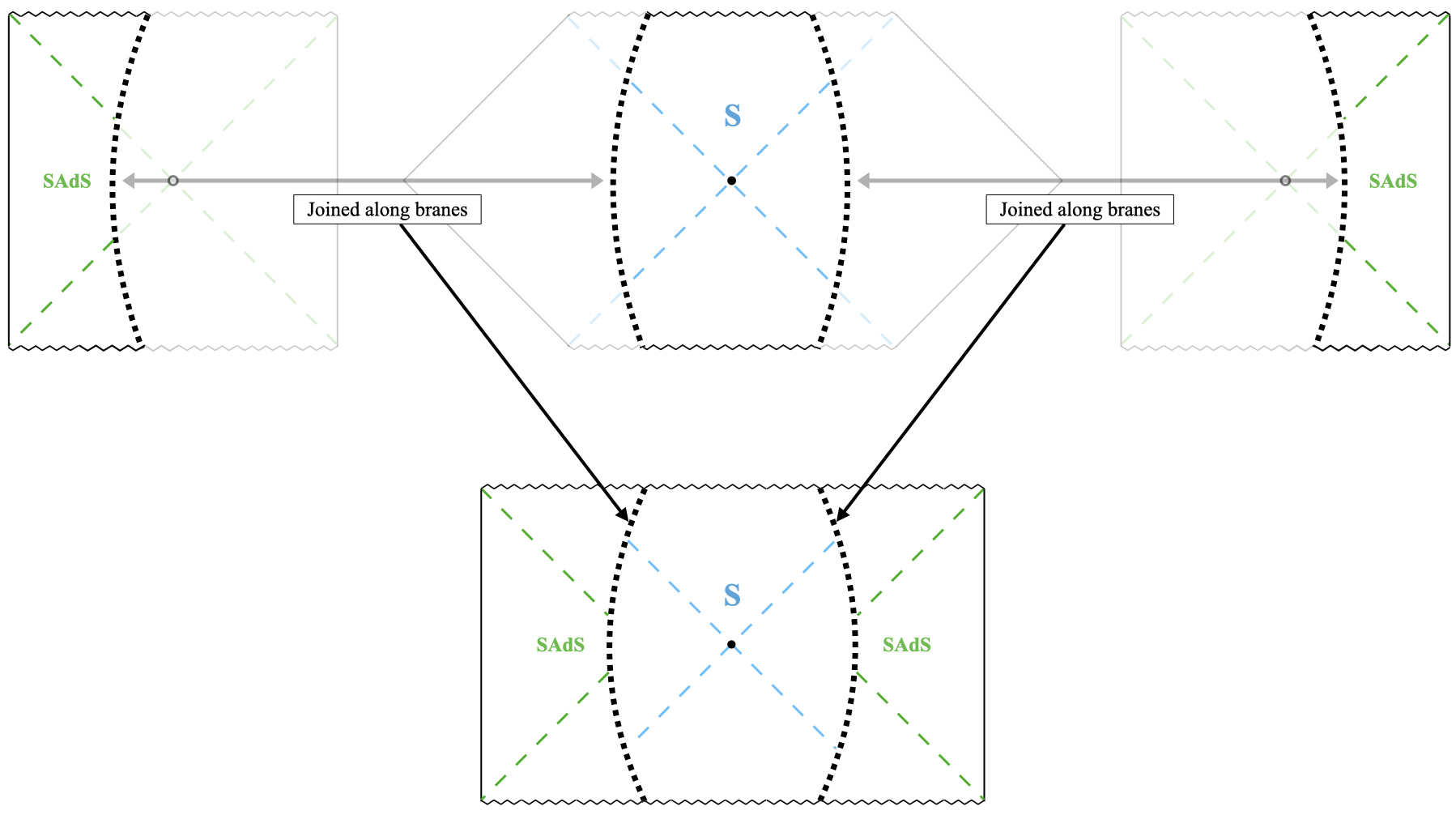}
    \caption{Gluing process to impose AdS boundary conditions on the Schwarzschild metric. On the top, the left and right Penrose diagrams are (identical) SAdS patches while the middle is Schwarzschild. The lighter regions are '`cut out'' and after gluing across the branes (dotted lines) we obtain the metric in the second line.}
    \label{fig:gluing}
\end{figure}

We will illustrate the above construction in the simplest setting, namely where $M$ is a Schwarschild solution with black hole mass $\mu_S$, and where in passing to $M'$ we preserve spherical symmetry, time reflection, and a spatial reflection that exchanges the asymptotic regions (see Fig.\ \ref{fig:gluing}). The fact that the spacetime has a time reflection symmetry will allow us to use the RT formula. Gluings of spacetimes with zero (or positive) cosmological constant inside of asymptotically AdS ones were discussed in \cite{Freivogel:2005qh}, in the thin-shell limit of the domain wall. For concreteness, following \cite{Freivogel:2005qh}, we will focus on the case $D=4$, although the construction is actually fairly independent of the dimension.

In the thin-shell approximation, the stress tensor of the wall is a delta function localized at the shell. The shell, which we also refer to as a matter brane, has an associated parameter $\kappa = 4\pi\sigma$, where $\sigma > 0$ is the domain wall tension. To have a valid gluing, we also impose the Israel junction conditions \cite{junction}, which (1) require the metric to be continuous across the domain wall and (2) relate the jump in extrinsic curvature across the junction to the brane tension. This yields the equation for the radial motion of the shell, which is constrained by various considerations, as described in Appendix \ref{sec:patching}.

In the complete spacetime $M'$, the brane has a trajectory. For time-symmetric metrics, we can have 2 non-trivial trajectories: the brane starts infinitely small at $r=0$, grows to a maximum and then shrinks back down to $r=0$; or the brane starts infinitely large at $r=\infty$, shrinks to a finite size and then expands back to $r=\infty$. Since the brane must behave the same as viewed from either metric, the brane is attached to the same points in both metrics, i.e.\ either both at $r=0$ or both at $r=\infty$. For simplicity, we assume the brane starts and ends on the singularity, i.e.\ $r=0$.

\begin{figure}[ht]
    \centering
    \begin{subfigure}[b]{0.48\textwidth}
        \centering
        \includegraphics[width = \textwidth]{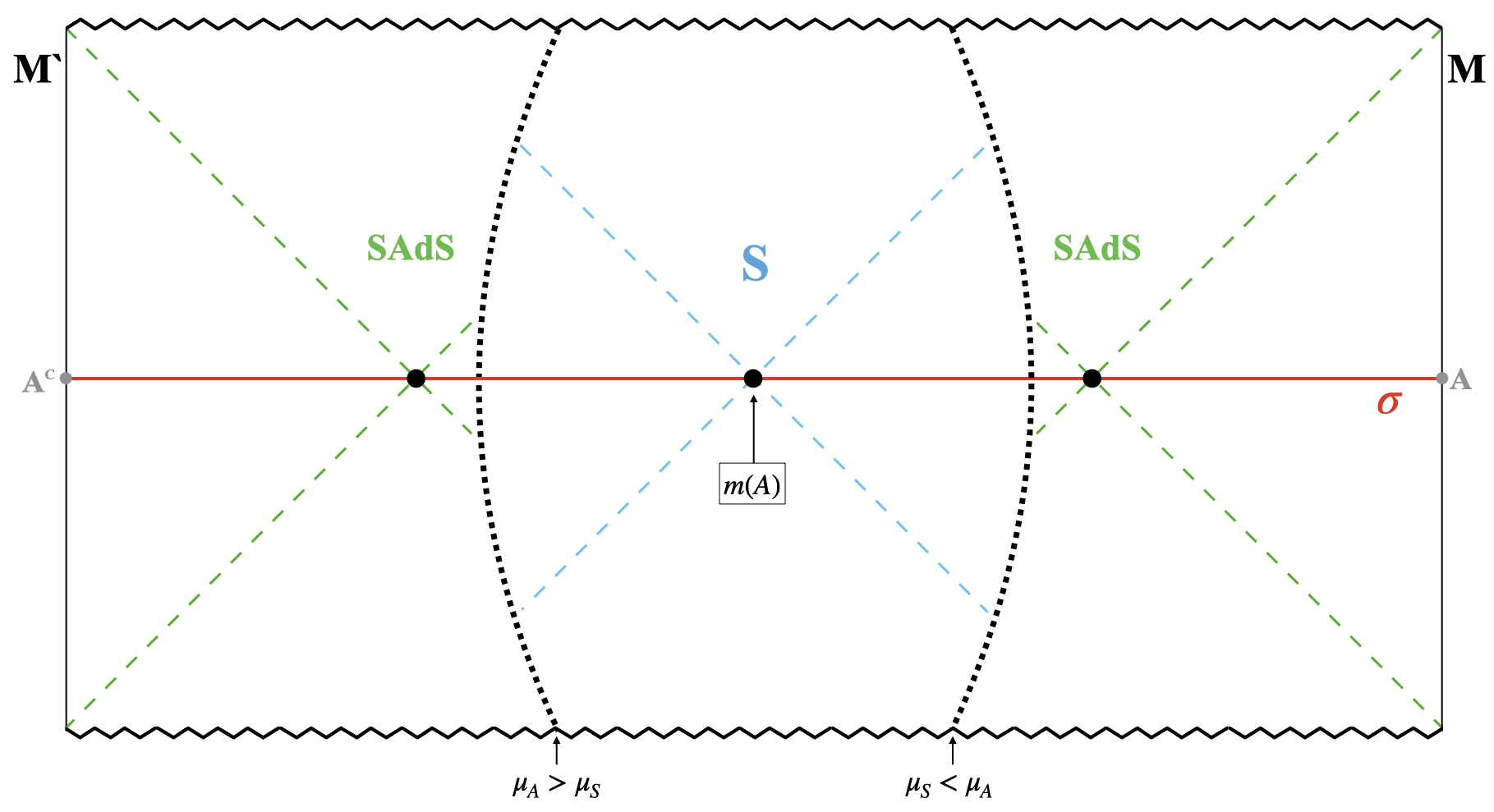}
        \caption{}
        \label{fig:3 bifurcate S}
    \end{subfigure}\hfill
    \begin{subfigure}[b]{0.48\textwidth}
        \centering
        \includegraphics[width = \textwidth]{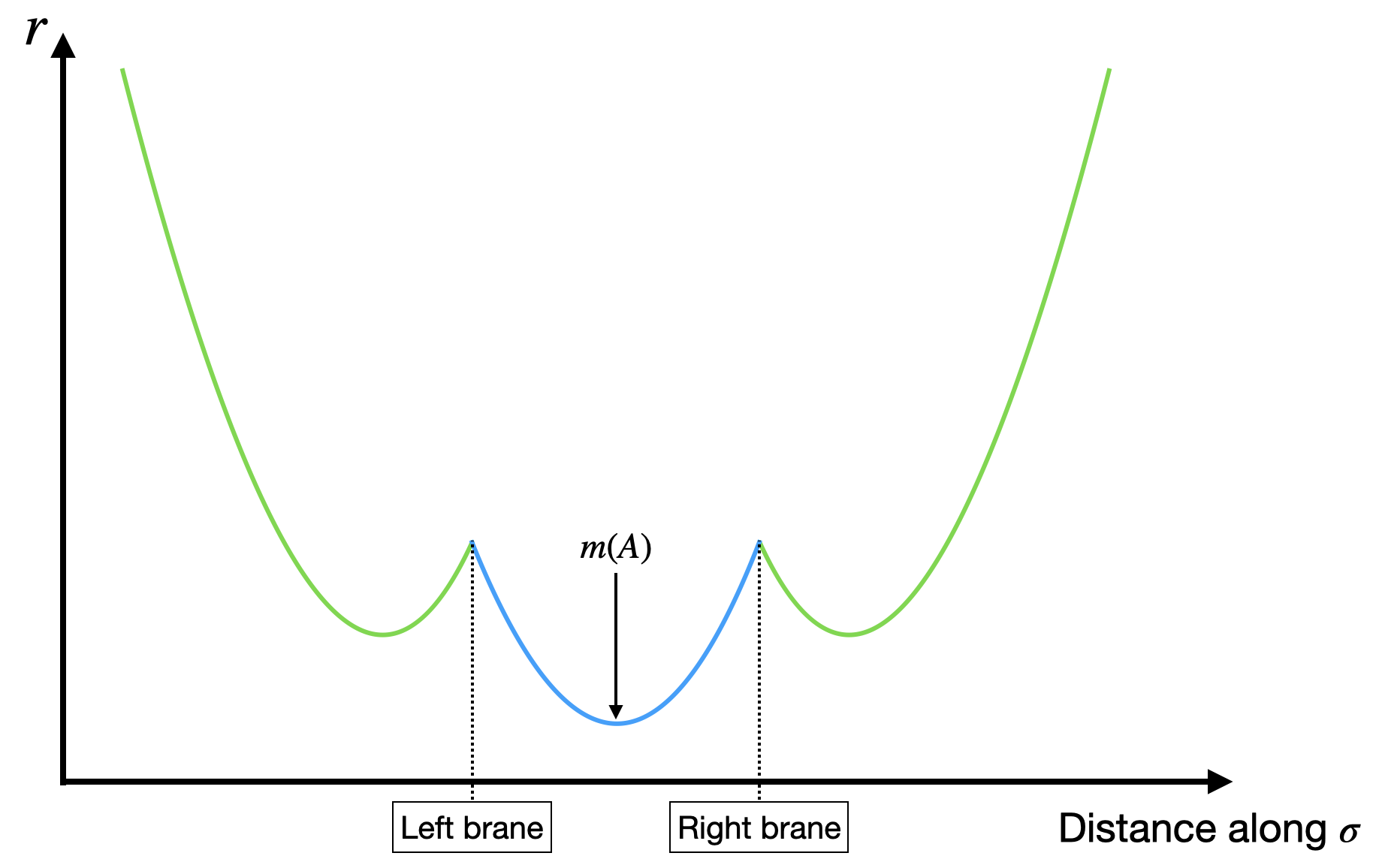}
        \caption{}
        \label{fig:area a>s 3}
    \end{subfigure}
    \caption{\label{fig:caseA}Case 1 for replacing asymptotic regions of Schwarzschild spacetime with patches of Schwarzschild-AdS spactime, requires $\mu_A > \mu_S$ for the gluing. The SAdS patches include the SAdS bifurcation surface. Including the original Schwarzschild bifurcation surface, the spacetime thus contains three candidate HRT surfaces, of which the smallest and hence the true HRT surface can be chosen to be the Schwarzschild one. The other two --- the SAdS bifurcation surfaces --- are constrictions for the two boundaries respectively, and the regions between them and the HRT surface is a python. Each python contains a bulge surface, an index-1 extremal surface whose area according to the python's lunch conjecture quantifies the complexity of the reconstructing observables in the python \cite{Brown:2019rox}. Naively, the bulge surface is located at the domain wall, since that is extremal and a local maximum of the area among spherically-symmetric surfaces. However, it was shown in \cite{Arora:2024edk} that this surface has index greater than 1, and the true bulge in such a circumstance lies in the vicinity of the domain wall but breaks the spherical symmetry.
    }
\end{figure}

\begin{figure}[ht]
    \centering
    \begin{subfigure}[b]{0.48\textwidth}
        \centering
    \includegraphics[width = \textwidth]{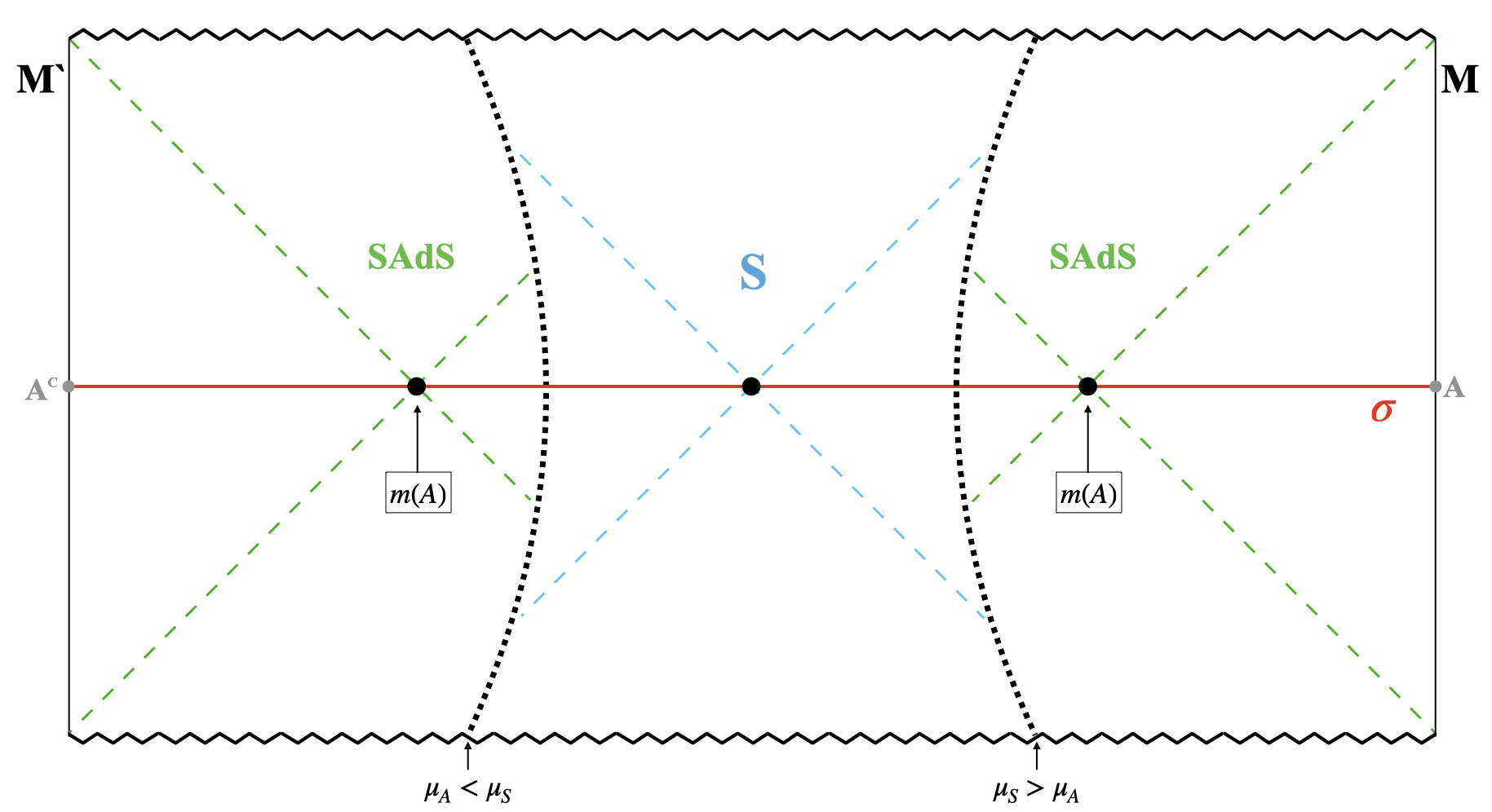}
    \caption{}
    \label{fig:3 bifurcate S 2}
    \end{subfigure}\hfill
    \begin{subfigure}[b]{0.42\textwidth}
        \centering
    \includegraphics[width = \textwidth]{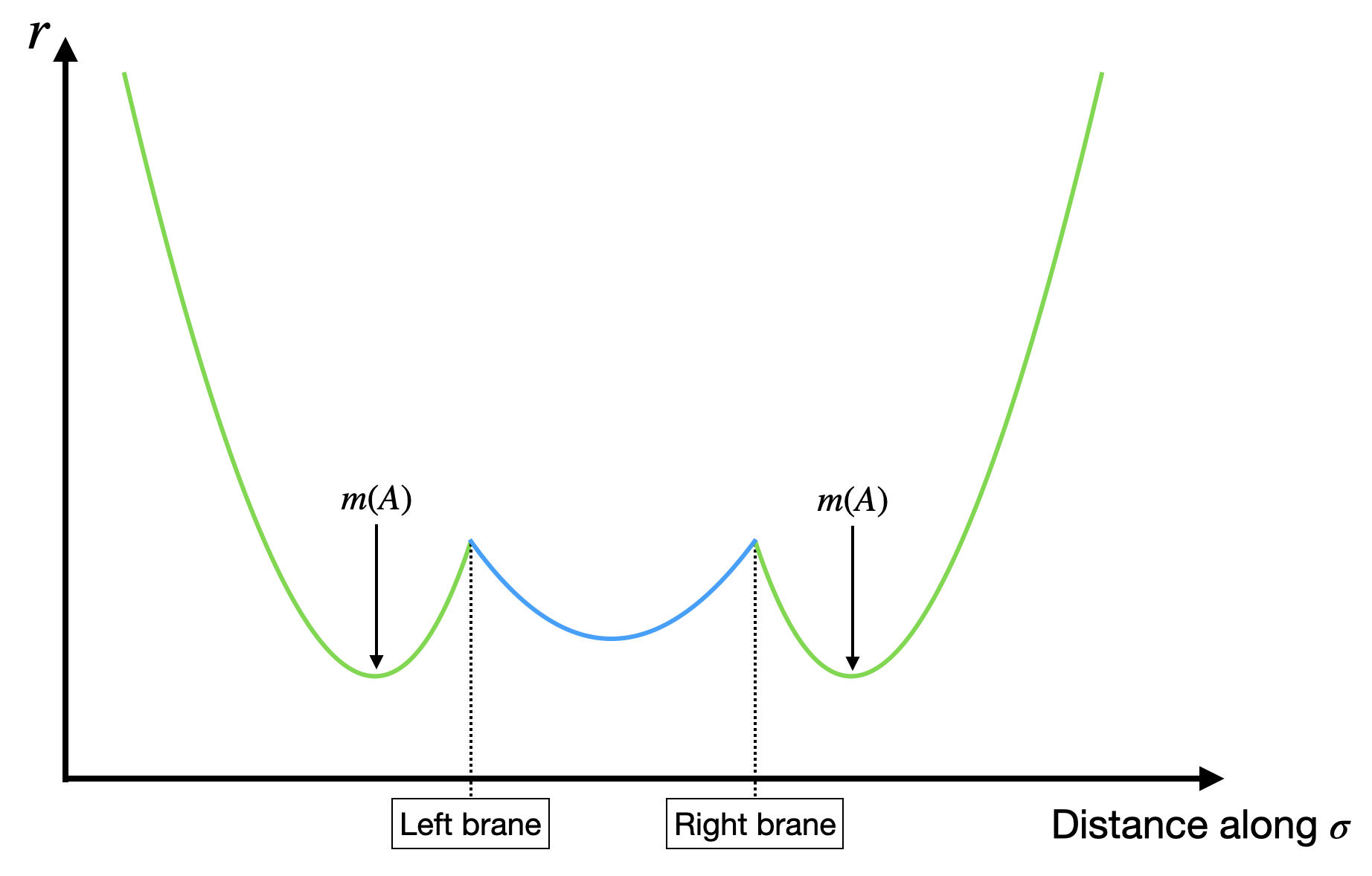}
    \caption{}
    \label{fig:area a<s 3}
    \end{subfigure}
    \caption{\label{fig:caseB}Case 2: Same as case 1 (Fig.\ \ref{fig:caseA}), except here the gluing requires $\mu_A < \mu_S$ and so the SAdS bifurcation surfaces are smaller than the Schwarzschild one, making them the HRT surfaces.}
\end{figure}

\begin{figure}[ht]
    \centering
    \begin{subfigure}[b]{0.45\textwidth}
        \centering
    \includegraphics[width = \textwidth]{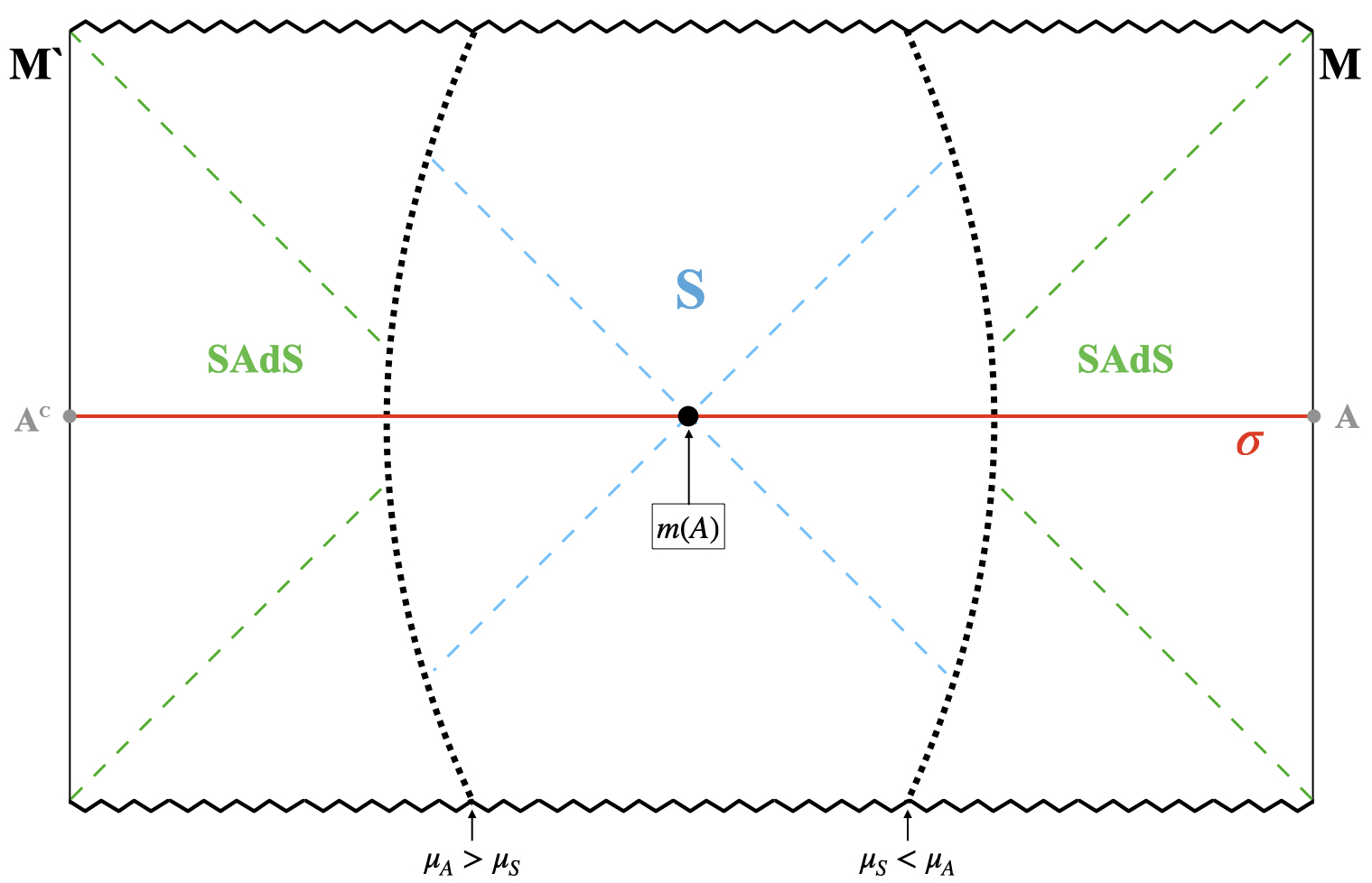}
    \caption{}
    \label{fig:1 bifurcate S}
    \end{subfigure}\hfill
    \begin{subfigure}[b]{0.45\textwidth}
        \centering
    \includegraphics[width = \textwidth]{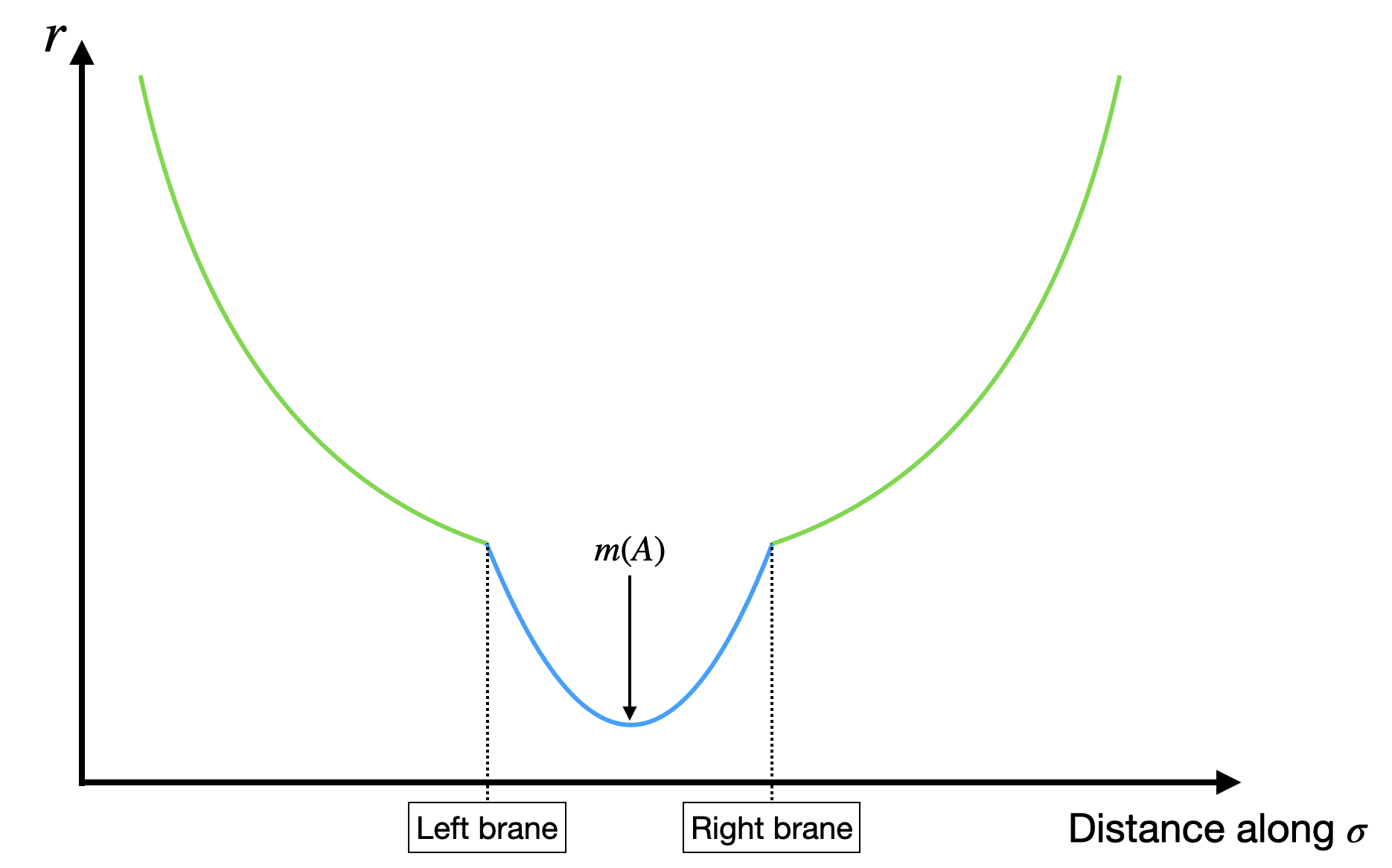}
    \caption{}
    \label{fig:area a<s 2}
    \end{subfigure}
    \caption{\label{fig:caseC}Case 3: SAdS patches do not include the bifurcation surface, so the only extremal surface is the Scharzschild bifurcation surface, which is therefore the HRT surface.}
\end{figure}

We can now glue asymptotically AdS spacetimes to each boundary of a two-sided metric, such as the Schwarzschild metric, by applying the junction conditions near each boundary. Given the spherical symmetry, the spacetimes glued on either side are Schwarzschild-AdS (SAdS), defined by a (negative) cosmological constant $\lambda$ and a black hole mass $\mu_A$. This gives us a set of configurations for Schwarzschild in the middle with AdS boundary conditions on either side, to which we can then apply RT. There are three classes of solutions, shown in Figs.\ \ref{fig:caseA}, \ref{fig:caseB}, \ref{fig:caseC}, which we'll call cases 1, 2, 3 respectively. In cases 1 and 2, the glued regions include the bifurcation surfaces of the SAdS spacetime, while in case 3 they do not. The SAdS bifurcation surfaces are extremal surfaces, homologous to each AdS boundary, and are therefore candidate HRT surfaces. The other candidate is the Schwarzschild bifurcation surface of the original spacetime. In case 2, since the gluings require $\mu_A < \mu_S$, the AdS bifurcation surface is smaller, and is therefore the HRT surface. Since case 1 requires $\mu_S < \mu_A$, the Schwarzschild bifurcation surface can be chosen to be smaller by choosing a sufficiently large $\mu_A$. In case 3 there is only one candidate, the Schwarzschild bifurcation surface. Case 1 and 3 are therefore examples of the constructions used in the argument at the beginning of this section. Whether they are feasible depends on the values of the relevant parameters (see Appendix \ref{sec:patching}); however, it is clear that there exists a large class of parameters for which gluings of case 1 or 3 are allowed and thus where the Schwarzschild bifurcation surface is the RT (minimal) surface.

Another consideration that will become relevant when extending this argument to the Brill-Lindquist wormholes in section \ref{sec:BLwormholes} is that of performing the gluings arbitrarily far from the Schwarzschild bifurcation surface. Appendix \ref{sec:patching} details how this can be done in the small tension regime ($\kappa^2 < \lambda$) while additionally ensuring $\mu_A > \mu_S$. Then, we may take $\mu_A \rightarrow \infty$ and glue the brane arbitrarily far away, while still ensuring the Schwarzschild bifurcation surface is minimal. As outlined in Appendix \ref{sec:patching}, the geometry is of case 1.

\subsection{Gluing AdS onto Schwarzschild-de Sitter}
\label{sec:dS metric}

As a small aside, the same gluing procedure can be applied to the Schwarzschild-de Sitter (SdS) metric with black hole mass $\mu_S$ and cosmological constant $\lambda_{dS} > 0$. This results in configurations with a de Sitter patch inside AdS boundary conditions, to which we can apply the RT formula (see subsection \ref{sec:dSnetworks} for a more in-depth discussion of RT in de Sitter). However, the presence of a cosmological horizon makes the gluings different from the previous section. The maximal extension of the solution involves an infinite tiling of consecutive patches of a black hole horizon (located at $r_h$) followed by a cosmological horizon (located at $r_c$). Gluing patches of SAdS on either side cuts off the tiling to include 2 types of symmetric inner regions: (1) including only the black hole patch (and thus only the black hole bifurcation surface) and (2) including cosmological horizon patches on either side of the black hole patch. The cosmological patches do not have singularities at $r=0$, thus for the SAdS gluings on these patches we must have matter branes starting and ending at $r = \infty$, while for the de Sitter black hole patch we once again have branes glued at $r = 0$. The SdS solution must also be constrained by the Nairai limit, where requiring the SdS metric factor $f(r)$ has 2 positive zeroes imposes the constraint $\mu_S < \frac{2}{3\sqrt{3\lambda_{dS}}}$. 

With these considerations, we begin with the allowed configurations for de Sitter black hole patches with SAdS glued on either side at $r = 0$ (see Appendix \ref{sec:dS gluings} for details). The allowed configurations we obtain, as shown in Fig. \ref{fig:SdS r=0 configs}, are identical to those for the $r=0$ gluings in subsection \ref{sec:AdSgluing}. Thus the same considerations apply, and so for case 1 and 3 gluings, the RT surface can be chosen to be the SdS bifurcation surface for appropriate parameter values.

\begin{figure}[ht]
    \centering
    \begin{subfigure}[b]{0.45\textwidth}
        \centering
        \includegraphics[scale = 0.17]{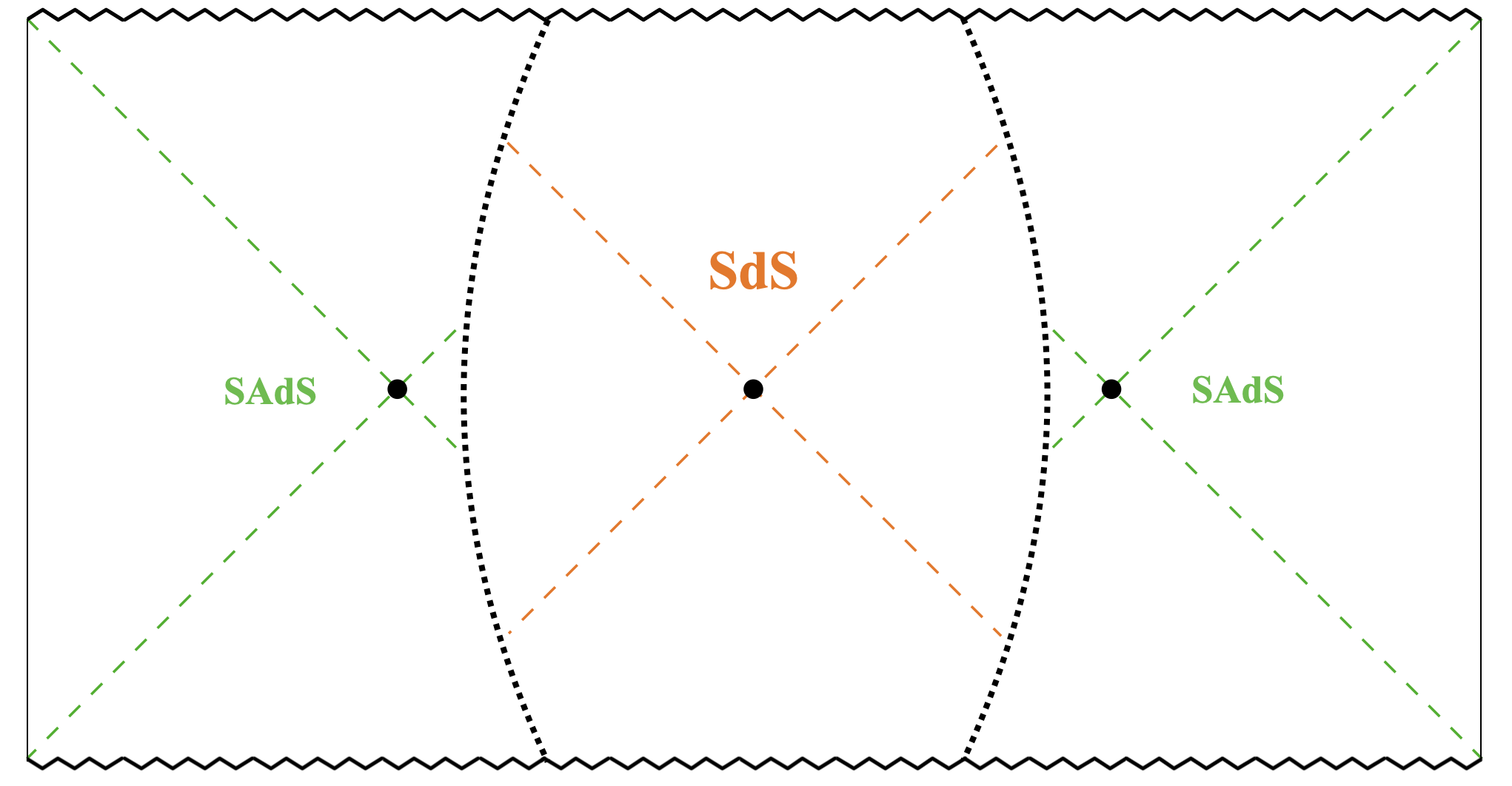}
        \caption*{Case 1: All 3 bifurcation surfaces included, $\mu_A > 
        \mu_S$}

    \end{subfigure}\hfill
    \begin{subfigure}[b]{0.45\textwidth}
        \centering
        \includegraphics[scale = 0.17]{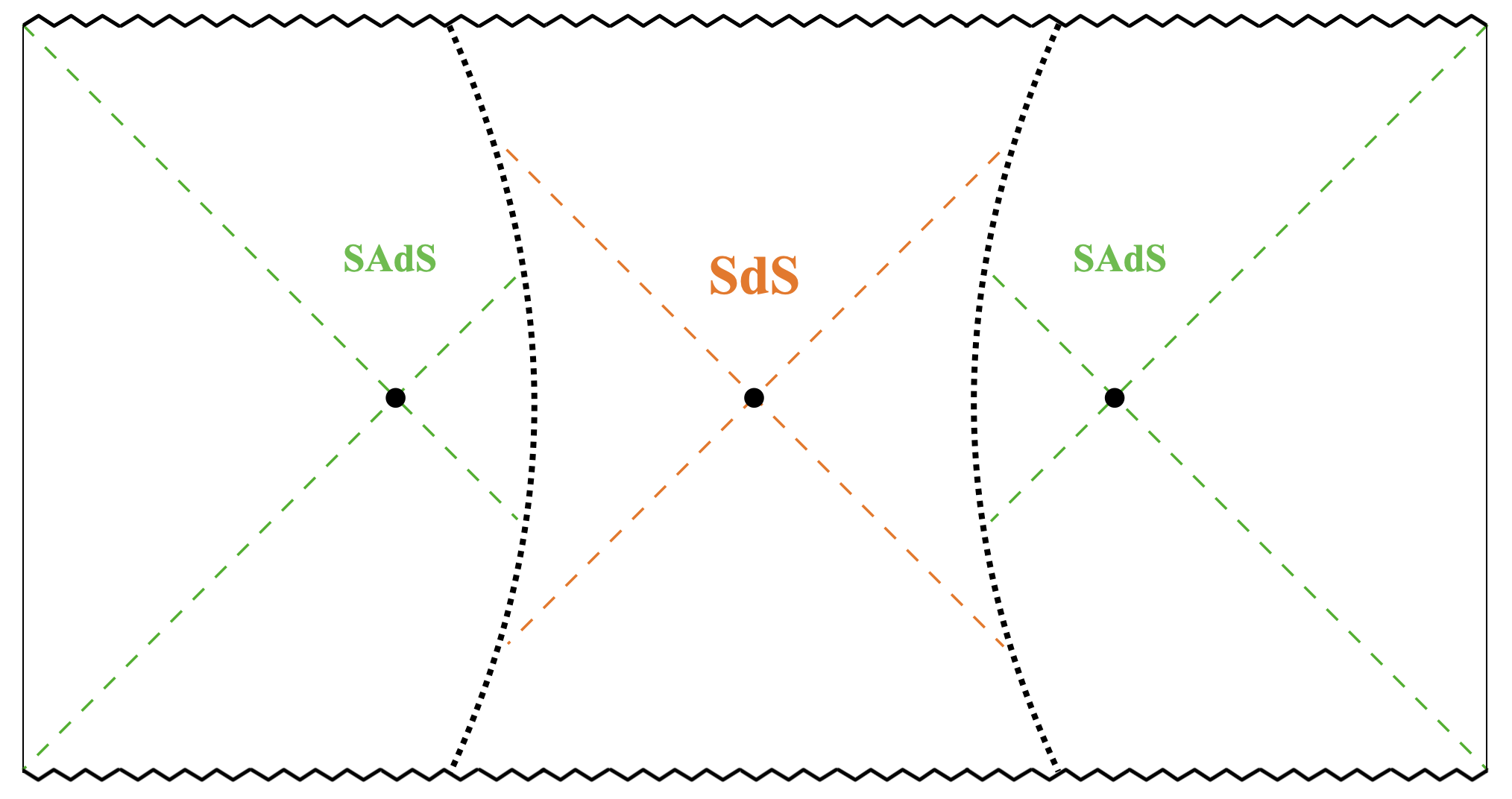}
        \caption*{Case 2: All 3 bifurcation surfaces included, $\mu_S > \mu_A$}
 
    \end{subfigure}\\[\smallskipamount]
    \vspace{0.3 cm}
    \begin{subfigure}[b]{0.49\textwidth}
        \centering
        \includegraphics[scale = 0.17]{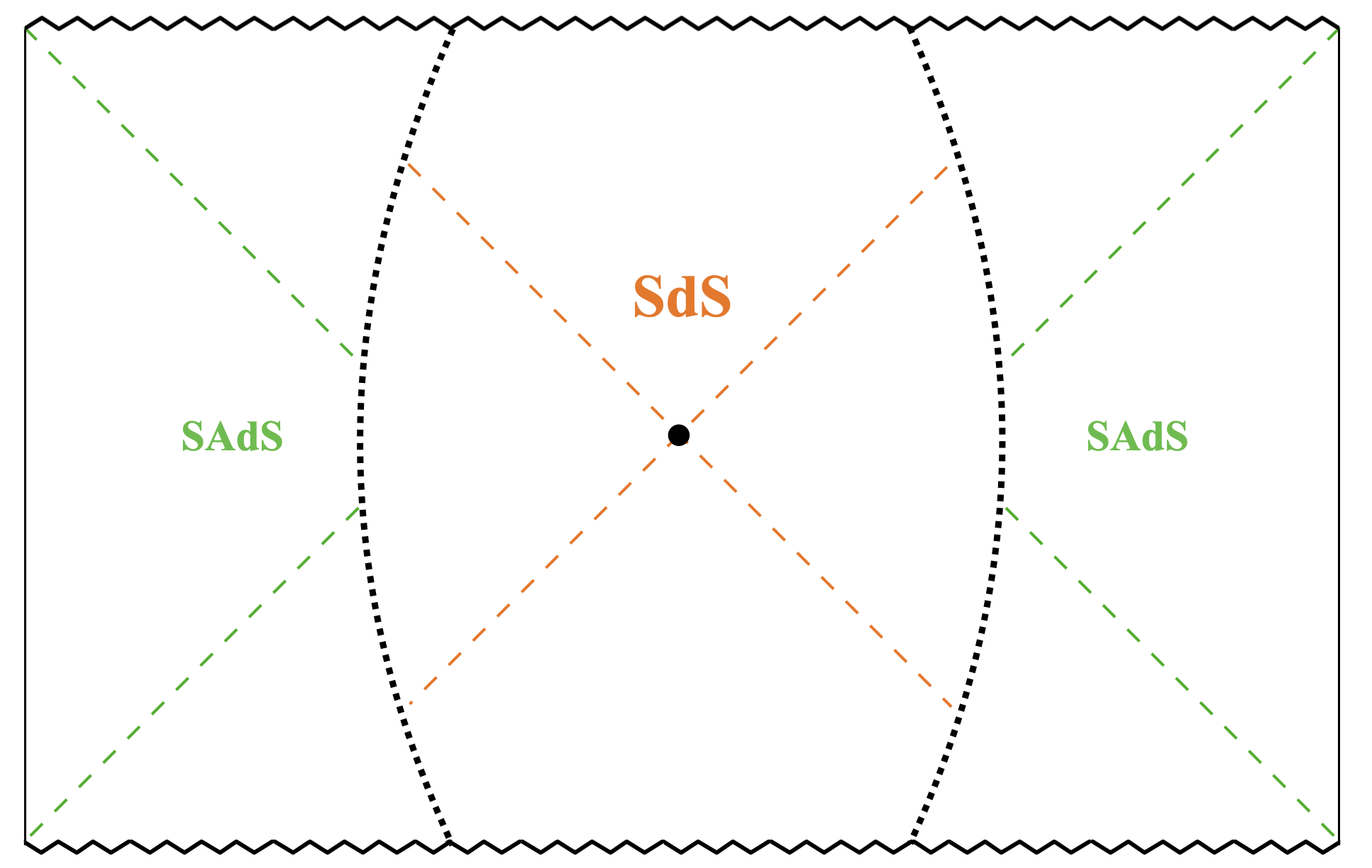}
        \caption*{Case 3: Only S (Schwarzschild) bifurcation surface included, $\mu_A > \mu_S$}
 
    \end{subfigure}
    \vspace{0.2 cm}
    \caption{Configurations of bifurcation surfaces for $r=0$ gluings with SdS in the middle}
    \label{fig:SdS r=0 configs}
\end{figure}

We additionally have gluings at $r = \infty$ in the dS vacuum patches, for which we have 2 possible configurations, which depend on whether the cosmological bifurcation surface is included or cut out. The gluing configurations again correspond to which bifurcation surfaces are included, with additional constraints now on the brane tension ($\kappa$) and de Sitter cosmological constant ($\lambda_{dS}$). Since the cosmological bifurcation surface is a local maximum along the time symmetric Cauchy slice $\sigma$, it is never the RT surface. This is also the reason we did not consider gluings with vacuum dS patches in the middle: since the dS horizon is a locally maximal surface, any gluing procedure will have geometries with the minimal surface either inside the SAdS patches or on the branes. 

\begin{figure}[ht]
    \centering
    \begin{subfigure}[b]{0.55\textwidth}
        \centering
        \includegraphics[width = \textwidth]{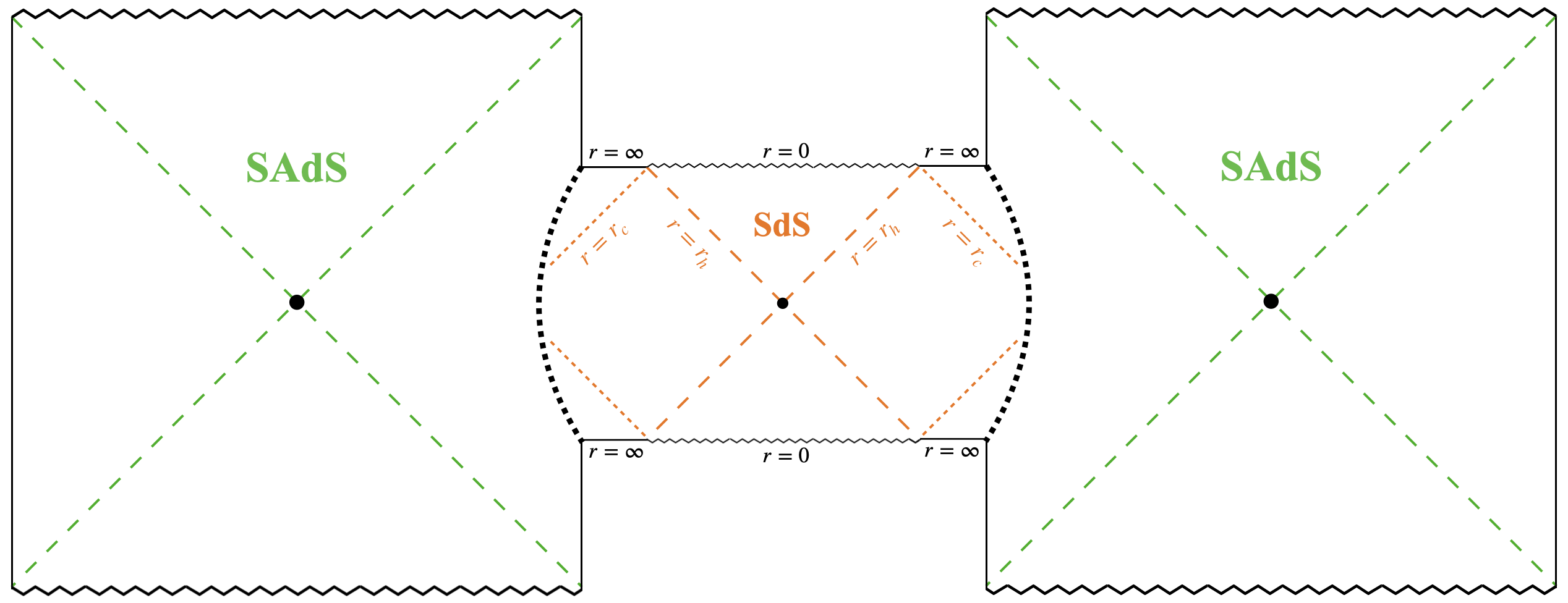}
   
    \end{subfigure}\hfill
    \begin{subfigure}[b]{0.38\textwidth}
        \centering
        \includegraphics[width = \textwidth]{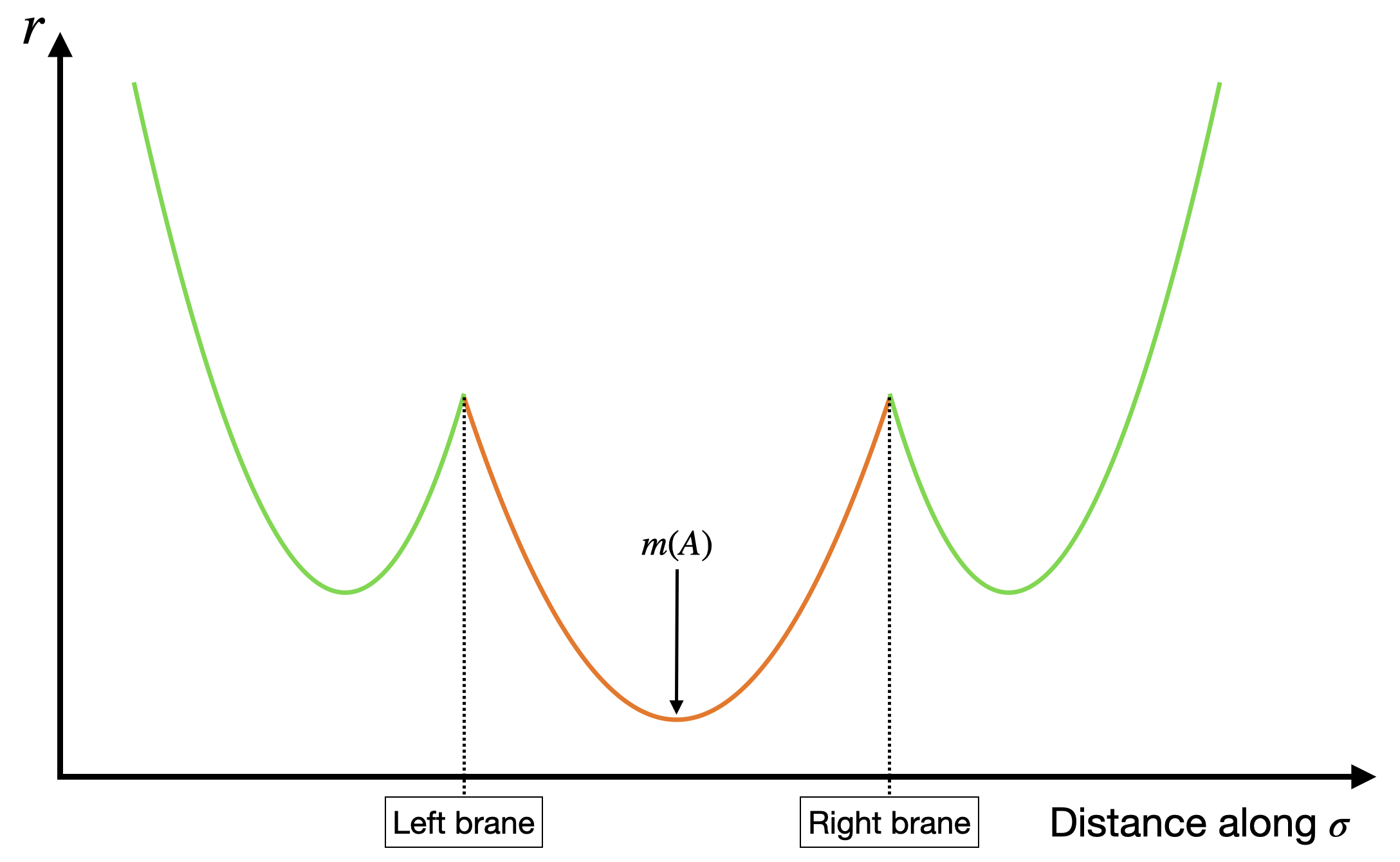}
  
    \end{subfigure}
    \vspace{0.2 cm}
    \caption{Case 4: SdS patch includes only the black hole bifurcation surface, not tyhe cosmological bifurcation surface. For small tension ($\kappa < \sqrt{\lambda_{dS} + \lambda}$), $\mu_A > \mu_S$ required for gluing. No additional constraints for large tension ($\kappa > \sqrt{\lambda_{dS} + \lambda}$).}
    \label{fig:SdS r=inf case 4}
\end{figure}

\begin{figure}[ht]
    \centering
    \begin{subfigure}[b]{0.6\textwidth}
        \centering
        \includegraphics[width = \textwidth]{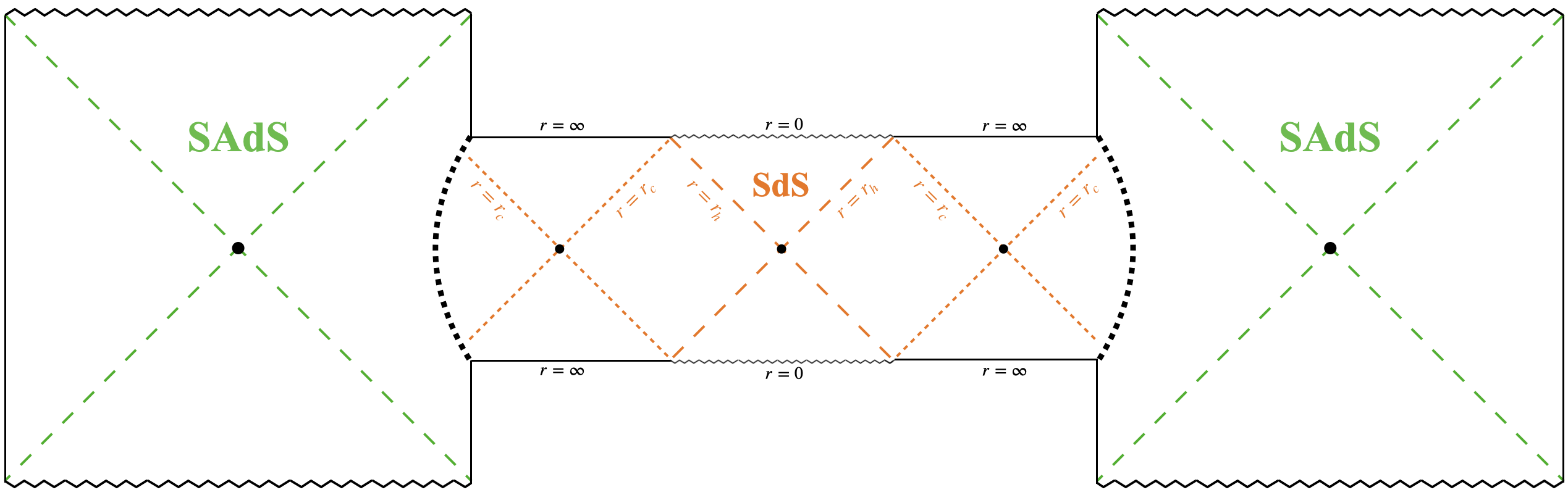}
     
    \end{subfigure}\hfill
    \begin{subfigure}[b]{0.35\textwidth}
        \centering
        \includegraphics[width = \textwidth]{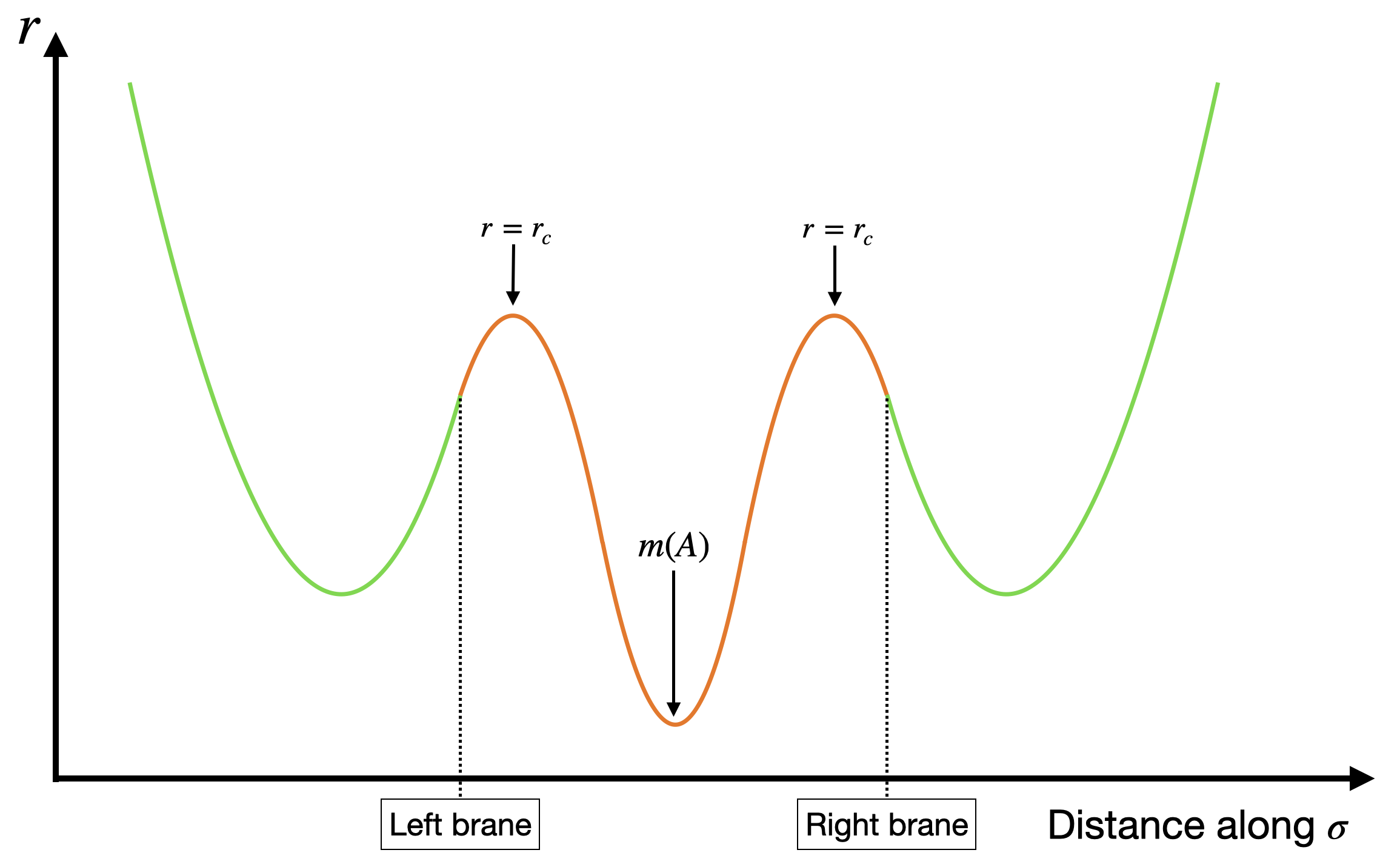}
  
    \end{subfigure}
    \vspace{0.2 cm}
    \caption{Case 5: SdS patch includes black hole and cosmological bifurcation surfaces. For large tension ($\kappa > \sqrt{\lambda_{dS} + \lambda}$), $\mu_S > \mu_A$ required for gluing. No additional constraints for small tension ($\kappa < \sqrt{\lambda_{dS} + \lambda}$). Note that the cosmological bifurcation surfaces are (locally) maximal and are thus not candidate HRT surfaces.}
    \label{fig:SdS r=inf case 5}
\end{figure}

The allowed configurations are shown in Figs.\ \ref{fig:SdS r=inf case 4} and \ref{fig:SdS r=inf case 5}. As shown by how $r$, the radial co-ordinate changes along $\sigma$, we see that we always have 3 candidate RT surfaces: the SdS black hole bifurcation surface or either one of the SAdS bifurcation surfaces. Numerical analysis indicates that we can select parameters within the Nariai limit such that the SdS black hole bifurcation surface is minimal, and thus the true RT surface. An application of the RT formula to the complete spacetime then selects the area of this surface that lies in the de Sitter patch to compute the entropy, as desired.

\section{Entanglement entropies for Brill-Lindquist wormholes}
\label{sec:BLwormholes}

So far we have mostly been talking theoretically about entanglement entropies in multiboundary asymptotically Minkowski spacetimes. In this section, we will calculate entanglement entropies, using conjecture \ref{mainconjecture}, in explicit examples of such spacetimes, namely four-dimensional Brill-Lindquist (BL) solutions \cite{brill-lindquist}. The BL ansatz provides explicit time-symmetric initial data obeying the constraint equations for the source-free, $\Lambda=0$ Einstein-Maxwell system, with an arbitrary number of asymptotically flat regions. Evolving this initial data to the past and future yields a time-symmetric vacuum solution connecting an arbitrary number of asymptotically Minkowski universes. These solutions are not known explicitly. Luckily, since we know that the HRT surface in a time-symmetric spacetime is the minimal surface on the symmetric slice (RT formula), we don't need the full solution to calculate entropies; the initial data is enough.

We will explore a few different BL configurations, with three and four asymptotic regions, including both neutral and charged black holes. We will also show that BL geometries can be glued together along extremal surfaces to produce novel initial-data slices with more complicated topologies.

Unlike in the rest of the paper, throughout this section we set $\GN=1$. The factors of $\GN$ can be recovered by multiplying all masses and entropies by $\GN$, for example by the replacements $m_i\to\GN m_i$, $S(A)\to \GN S(A)$.

\subsection{Neutral BL ansatz}

Let us start with the slightly simpler neutral BL ansatz; we will discuss the charged solutions in subsection \ref{sec:chargedBL} below. The assumption of time-reflection symmetry about the initial data slice implies that its extrinsic curvature vanishes, $K_{ab}=0$. The constraint equation (for vacuum and $\Lambda=0$) becomes simply $R^{(3)}=0$ (where $R^{(3)}$ is the 3-dimensional Ricci scalar). This can be solved by a conformally flat metric,
\be \label{eq:basic metric}
  ds^2_3 = \psi(\vec r)^4 d\vec{r}\,^2\,,
\ee
where $\vec r$ is a 3-dimensional vector, $d\vec{r}^2$ is the flat 3-dimensional Euclidean metric, and $\psi$ is a positive harmonic function on $\R^3$:
\begin{equation}\label{Laplace}
    \nabla^2 \psi = 0\,.
\end{equation}
Asymptotic flatness for large $r$ implies that $\psi\to1$ there. The unique solution to \eqref{Laplace} is then $\psi=1$. To get something more interesting, we remove $n-1$ points at $\vec{r}=\vec{r}_i$ ($i=1,\ldots,n-1$) from $\R^3$, allowing $\psi$ to blow up at those points:\footnote{In the notation of \cite{brill-lindquist}, this corresponds to the case $\beta_i=\alpha_i$ and $\chi=\psi$, with $N=n-1$.}
\begin{equation}
    \label{eq:conformal factor}
    \psi = 1 + \sum^{n-1}_{i=1} \frac{\alpha_i}{\|\vec{r}-\vec{r_i}\|}\,,
\end{equation}
where $\alpha_i$ are arbitrary positive coefficients (with units of length) and $\|\cdot\|$ denotes the 3-dimensional Euclidean norm. The change of coordinates
\be
\vec{r}'=\alpha_i^2\frac{\vr-\vr_i}{\|\vr-\vr_i\|^2}
\ee
shows that the region close to each puncture, $\|\vec{r}-\vec{r}_i\|\ll \alpha_i$, is also asymptotically flat, so this metric has $n$ asymptotically flat regions. In the case $n=2$, the metric reduces to the initial data for the Schwarzschild metric of radius $\alpha_1$ (in isotropic coordinates, including both exterior regions); the bifurcation surface is the sphere $\|\vec{r}-\vec{r}_1\|=\alpha_1$.

The ADM mass $m_i$ measured in the asymptotic region $a_i$ (where $\vec{r} \rightarrow \vec{r_i}$) is computed by matching the first-order expansion of the metric near $\vec{r} = \vec{r_i}$ to the Schwarzschild mass term, giving:
\begin{align}
\label{eq:M_i ADM}
m_i = 2\alpha_i \left(1+ \sum^{n-1}_{j \neq i} \frac{\alpha_j}{\|\vec{r}_i - \vec{r}_j\|}\right)\qquad (i=1,\ldots,n-1)\,.
\end{align}
For the ADM mass $m_n$ for the asymptotic region $a_n$ (where $\|\vec{r}\|\rightarrow\infty$), we similarly match the first-order expansion to the Schwarzschild mass term, giving:
\begin{equation}
\label{eq:M_inf ADM}
m_n = 2\sum_{i=1}^{n-1} \alpha_i\,.
\end{equation}

With the following alternative ansatz,
\be
ds^2 = \phi(\vs)^4d\vs\,^2\,,\qquad
\phi(\vs) = \sum_{i=1}^n\frac{\mu_i}{\|\vs-\vs_i\|}\,,
\ee
the $n$ asymptotic regions are treated democratically. The absence of a ``$1+$'' in the expression for $\phi$ means that the region $s\to\infty$ is not asymptotic; rather, the point $s=\infty$ is at a finite distance, so we should append it to $\R^3$, giving an $n$-punctured 3-sphere. The mass measured in the asymptotic region $a_i$ is given by
\be
m_i = 2\mu_i\sum^n_{j\neq i}\frac{\mu_j}{\|\vs_i-\vs_j\|}\qquad (i=1,\ldots,n)\,.
\ee
We call these ``inverted coordinates'', as they can be related to the $\vr$ coordinates by the inversion $\vec s=\mu_n^2\vec r/r^2$, where $\mu_n$ is an arbitrary length and the parameters in the two ansatzes are related by
\be
\frac{\mu_i}{\mu_n} = \frac{\alpha_i}{r_i}\,,\qquad \frac{\vs_i}{\mu_n^2}=\frac{\vr_i}{r_i^2}\qquad (i=1,\ldots,n-1)
\ee
and $\vs_n=\vec{0}$.

\subsection[Minimal surfaces for n = 3]{Minimal surfaces for $n=3$}
\label{sec:n3BL}

Interestingly, already in their 1963 paper, Brill-Lindquist investigated minimal surfaces in their metrics, finding a transition in the topology of the minimal surfaces as the separation between the points $\vec{r}_i$ is varied. Specifically, they focused on the case $n=3$ with equal parameters $\alpha_1=\alpha_2=:\alpha$. For each puncture $\vec{r}_{1,2}$, there is a minimal surface $\gamma_{1,2}$ of spherical topology enclosing that puncture; this is the throat of the corresponding Einstein-Rosen bridge.

\begin{figure}[ht]
    \centering
    \includegraphics[scale = 0.35]{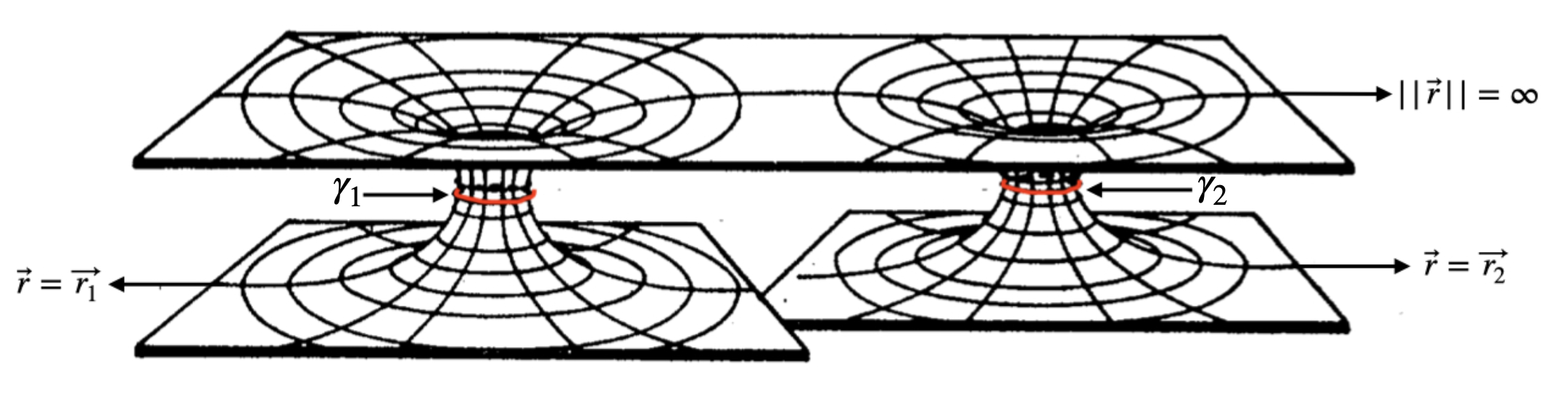}
    \caption{Embedding diagram of BL initial data geometry for $n=3$, $\alpha_1=\alpha_2=:\alpha$, and large separation, $\|\vec{r}_1-\vec{r}_2\|\gg \alpha$. There are two Einstein-Rosen bridges and two minimal surfaces $\gamma_1$, $\gamma_2$. (Figure adapted from \cite{brill-lindquist}.)}
    \label{fig:2 ER}
\end{figure}

For large separation $r_{12}:=\|\vec{r}_2-\vec{r}_1\|\gg\alpha$, $\gamma_{1,2}$ are the only minimal (or extremal) surfaces; as $r_{12}$ is increased, they approach the spheres $\|\vec{r}-\vec{r}_i\|=\alpha$ (see Fig.\ \ref{fig:2 ER}). The minimal surface homologous to the sphere at infinity in region $a_n$ is $\gamma_1\cup \gamma_2$. An observer living in that region sees a pair of black holes, which under time evolution collide and coalesce. (Indeed, BL initial data is often used for simulating black hole collisions \cite{collision1,collision2}.)

As we reduce the separation $r_{12}$, there is a critical separation $r_{12}^c\approx3.1\alpha$ \cite{crit} at which a new (locally) minimal surface $\gamma_3$ appears, with spherical topology and surrounding both punctures. The surfaces $\gamma_{1,2}$ are hidden behind $\gamma_3$, so the observer in region $a_n$ sees only one black hole. The geometry on the slice now looks qualitatively like a three-boundary AdS wormhole, in the sense that each asymptotic region has a neck separating it from a central region (see Fig.\ \ref{fig:3 ER}). The three asymptotic regions are now in a sense on equal footing. In fact, for $r_{12}=\alpha$, we have $m_1=m_2=m_3=\alpha$, and there is a $S_3$ group of isometries permuting the three asymptotic regions; this symmetry can be made manifest in the inverted coordinates, with the punctures lying at the vertices of an equilateral triangle and the parameters $\mu_i$ all equal (see Figs. \ref{fig:surfaces}, \ref{fig:stereo}, and Appendix \ref{sec:coords} for details).

\begin{figure}[ht]
    \centering
    \includegraphics[scale = 0.3]{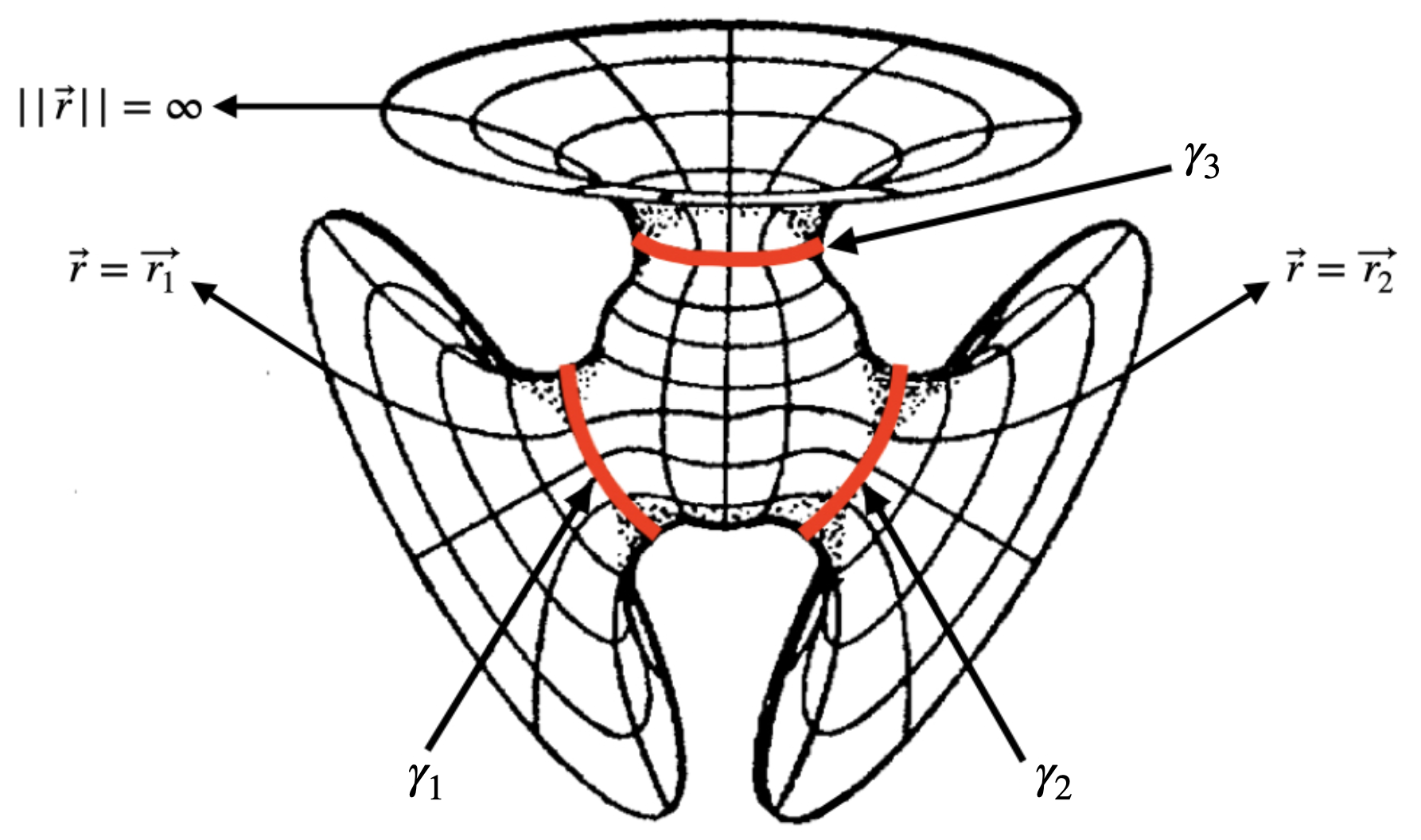}
\caption{Embedding diagram of BL initial data geometry for equal masses and small separation, $r_{12}\ll \alpha$. There are three Einstein-Rosen bridges and three minimal surfaces $\gamma_{1,2,3}$. (Figure taken from \cite{brill-lindquist}.)}
    \label{fig:3 ER}
\end{figure}

\begin{figure}[ht]
    \centering
    \begin{subfigure}[b]{0.45\textwidth}
        \centering
        \includegraphics[width=\textwidth]{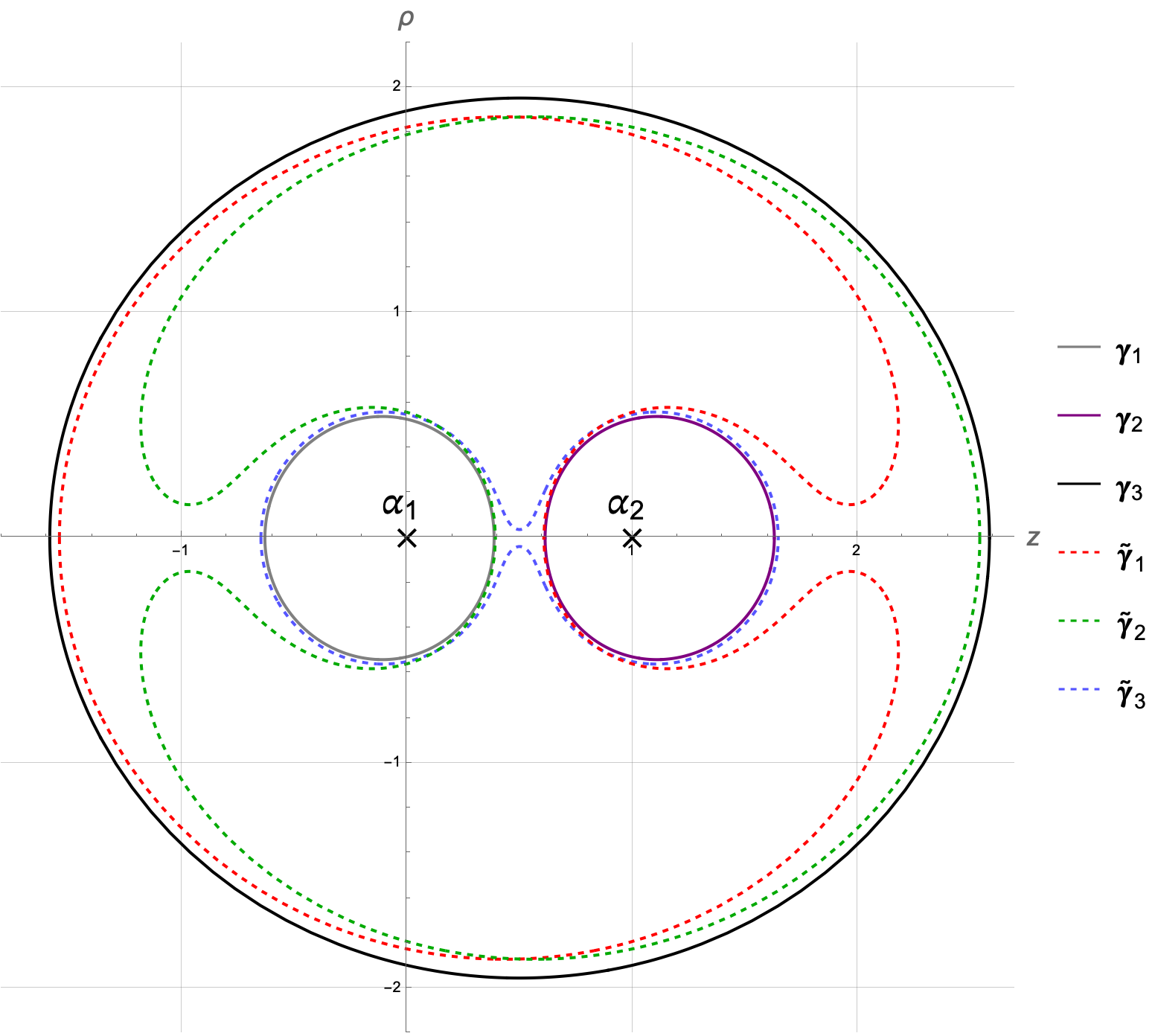}
    \end{subfigure}\hfill
    \begin{subfigure}[b]{0.48\textwidth}
        \centering
        \includegraphics[width=\textwidth]{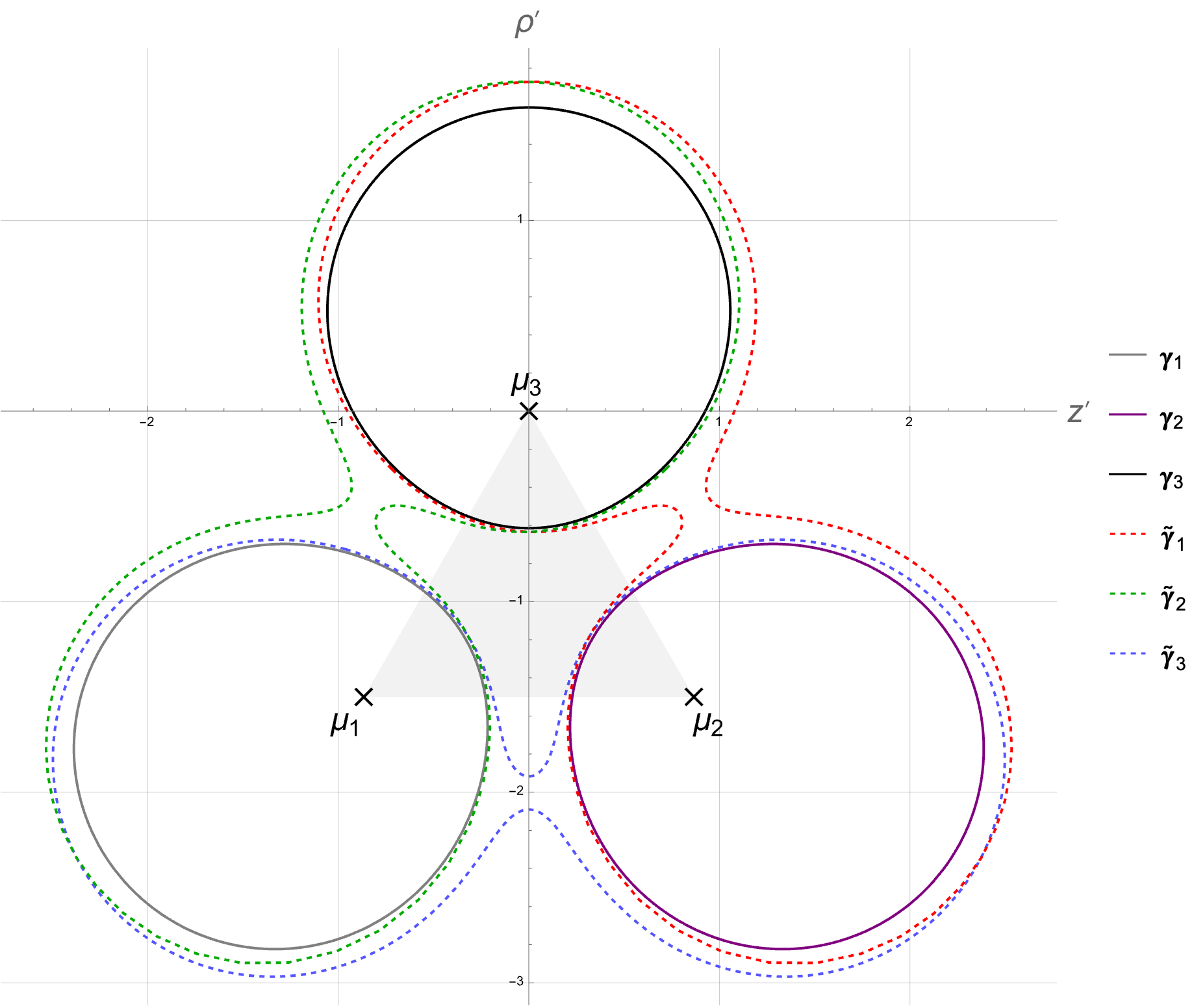}
    \end{subfigure}
\caption{\label{fig:surfaces}Planar sections of extremal surfaces in the $S_3$-symmetric BL metric, in the original $\vr$ (left) and inverted $\vs$ (right) coordinates. Solid curves are  minimal surfaces $\gamma_{1,2,3}$. Dashed curves are index-1 extremal surfaces $\tilde\gamma_1,\tilde\gamma_2,\tilde\gamma_3$.}
\end{figure}

\begin{figure}
    \centering
    \begin{subfigure}[b]{0.45\textwidth}
        \centering
        \includegraphics[width=\textwidth]{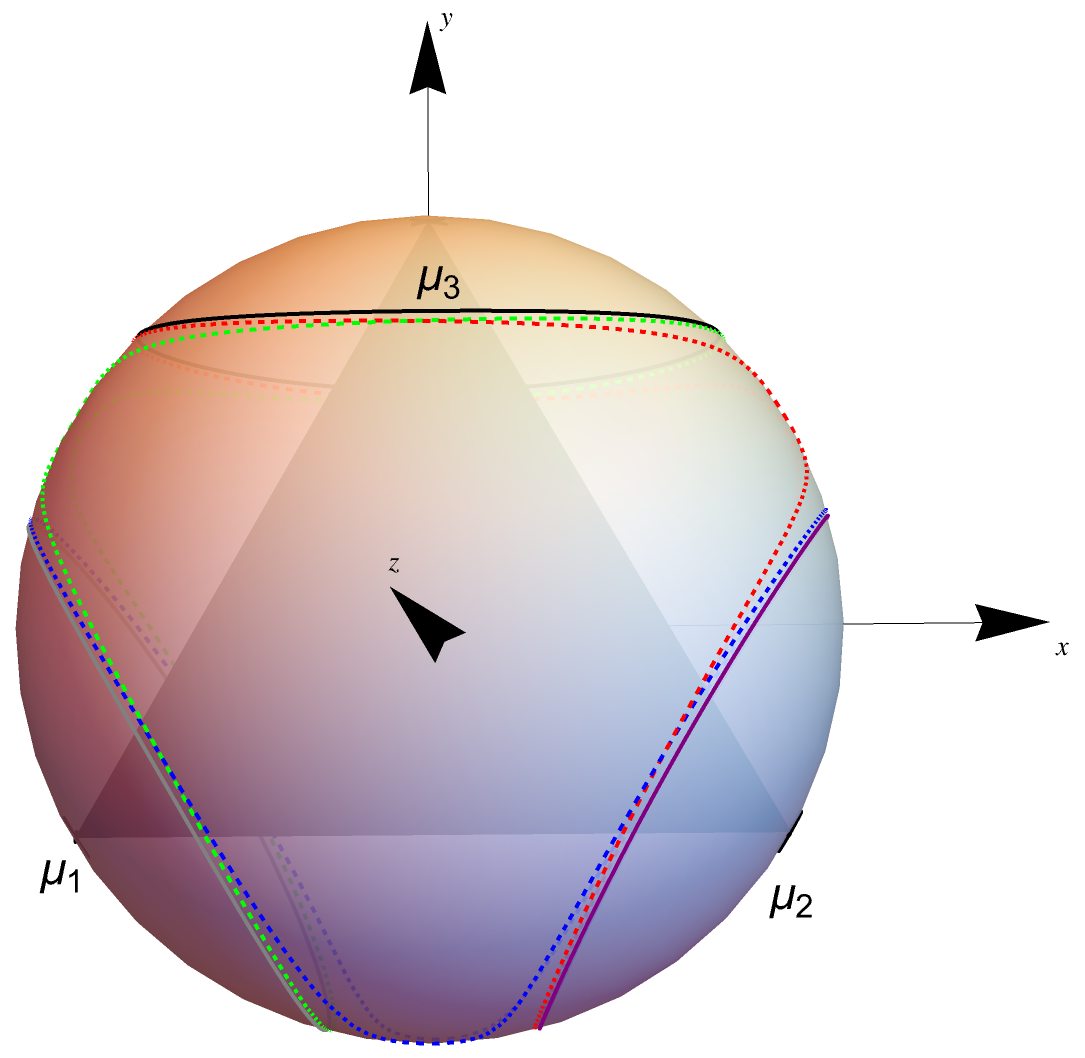}
        \label{fig:n2 stereo 1}
    \end{subfigure}
    \begin{subfigure}[b]{0.45\textwidth}
        \centering
        \includegraphics[width=\textwidth]{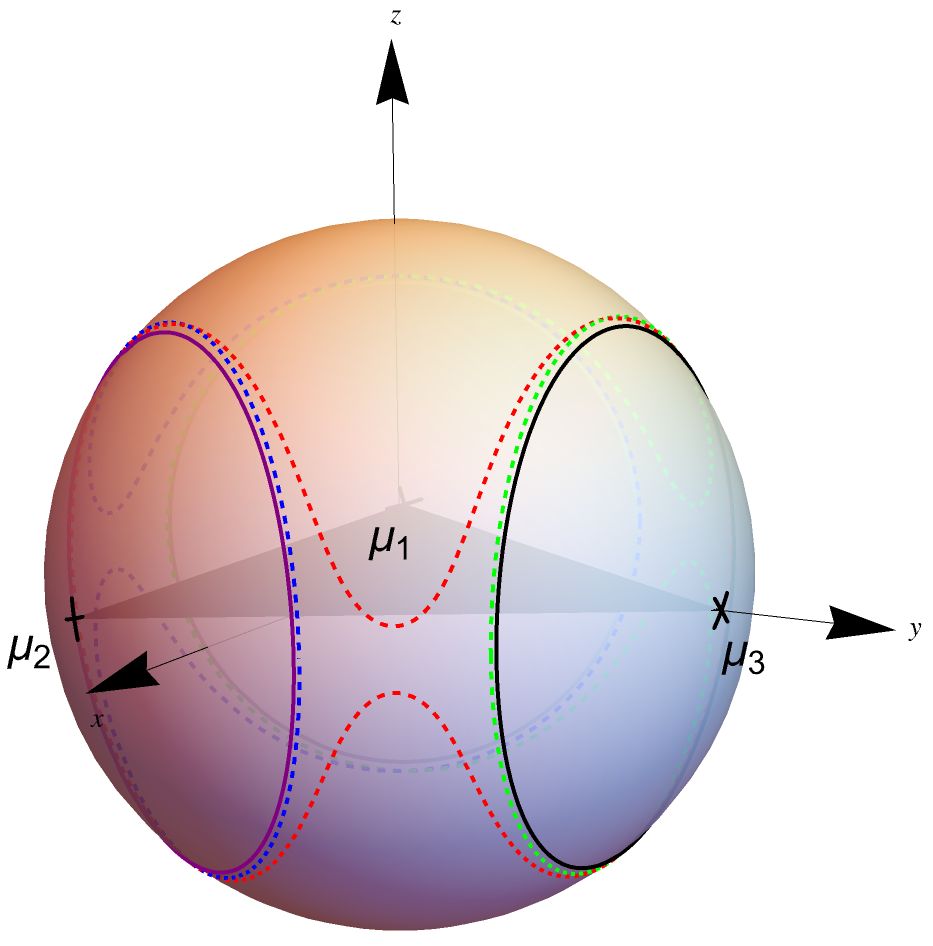}
        \label{fig:n2 stereo 2}
    \end{subfigure}
    \caption{Two views of a stereographic projection to the sphere of the extremal surface sections shown in Fig.\ \ref{fig:surfaces} in inverted coordinates. The masses are at the vertices of an equilateral triangle circumscribed in the sphere.\label{fig:stereo}}
\end{figure}

These observations by Brill-Lindquist already tell us a lot about the entanglement entropies for this system, given conjecture \ref{mainconjecture}. The RT surface for universe 1 is $\gamma_1$, and similarly for universe 2. For universe 3, if $r_{12}>r_{12}^c$, then the RT surface is $\gamma_1\cup \gamma_2$. This is also the RT surface for its complement, the union of universes 1 and 2, implying that the mutual information vanishes, $I(1:2)=0$. On the other hand, if $r_{12}<r_{12}^c$, then there is another candidate surface $\gamma_3$.

\begin{figure}
    \centering
    \includegraphics[width=0.9\linewidth]{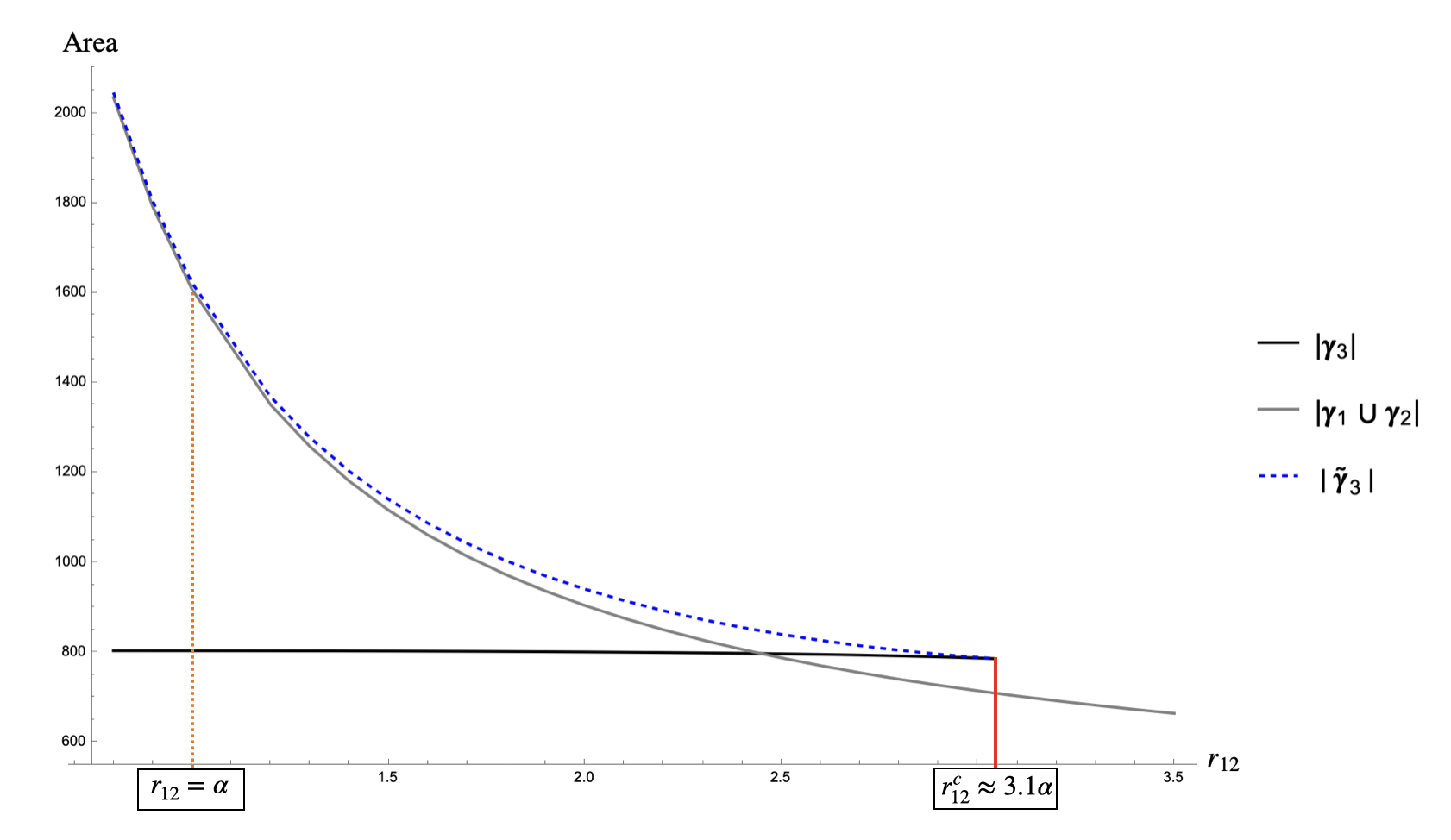}
    \caption{Areas of extremal surfaces as a function of $r_{12}$ for $\alpha = 1$. Note the swallowtail behavior at the critical separation $r^c_{12}$, where $\gamma_3$, $\tilde\gamma_3$ appear/disappear together. $S_3$ isometry is present for $r_{12} = \alpha$. While the area of $\gamma_3$ appears constant, it is actually slowly varying.}
    \label{fig:areas}
\end{figure}

To go further, we need to know the areas of these surfaces: if $|\gamma_3|<2|\gamma_1|$ then the RT surface jumps to $\gamma_3$ and $I(1:2)>0$. To compute the areas of these surfaces, we solved the minimal surface equation numerically using a shooting method; see Appendix \ref{sec:shooting} for details. The areas of these surfaces are shown in Fig.\ \ref{fig:areas}. The surfaces themselves are shown for the $S_3$-symmetric case in Figs.\ \ref{fig:surfaces} and \ref{fig:stereo}; in this case, we find

\be
I(1:2) = \frac12|\gamma_1|-\frac14|\gamma_3| = \frac14|\gamma_1| \approx 64 \pi r_{12}^2\,.
\ee

General considerations from Morse theory suggest that, under smooth deformations of the metric, extremal surfaces in a given homology class should typically appear or disappear in pairs, with Morse index differing by 1, the Morse index of an extremal surface being the number of independent small deformations of the surface that reduce its area at second order. We saw above that the locally minimal surface $\gamma_3$ appeared as the separation $r_{12}$ was decreased past the critical value $r_{12}^c$. We should therefore expect an index-1 surface to also appear at the same value of $r_{12}$. Indeed there is a second surface $\tilde\gamma_3$ appearing at $r_{12}^c$, as shown in Figs. \ref{fig:surfaces}, \ref{fig:stereo} and \ref{fig:areas}. While we have not checked directly that $\tilde\gamma_3$ has index 1, a narrow neck such as it exhibits is typically associated with a negative mode; in the vicinity of the neck it can be approximated by a catenoid in flat space, which indeed has one negative mode. This surface would play the role of the bulge surface in the python's lunch conjecture \cite{Brown:2019rox}, controlling the complexity of reconstructing the region between the locally minimal surfaces $\gamma_3$ and $\gamma_1\cup\gamma_2$.\footnote{For a detailed discussion of the Morse index and bulge surfaces, see \cite{Arora:2024edk}.} By symmetry, there are also index-1 surfaces $\tilde\gamma_1,\tilde\gamma_2$ homologous to $\gamma_1,\gamma_2$ respectively (shown in Figs. \ref{fig:surfaces}, \ref{fig:stereo}).

\begin{figure}[ht]
    \centering
    \begin{subfigure}{0.32\textwidth}
        \centering
        \includegraphics[width = \textwidth]{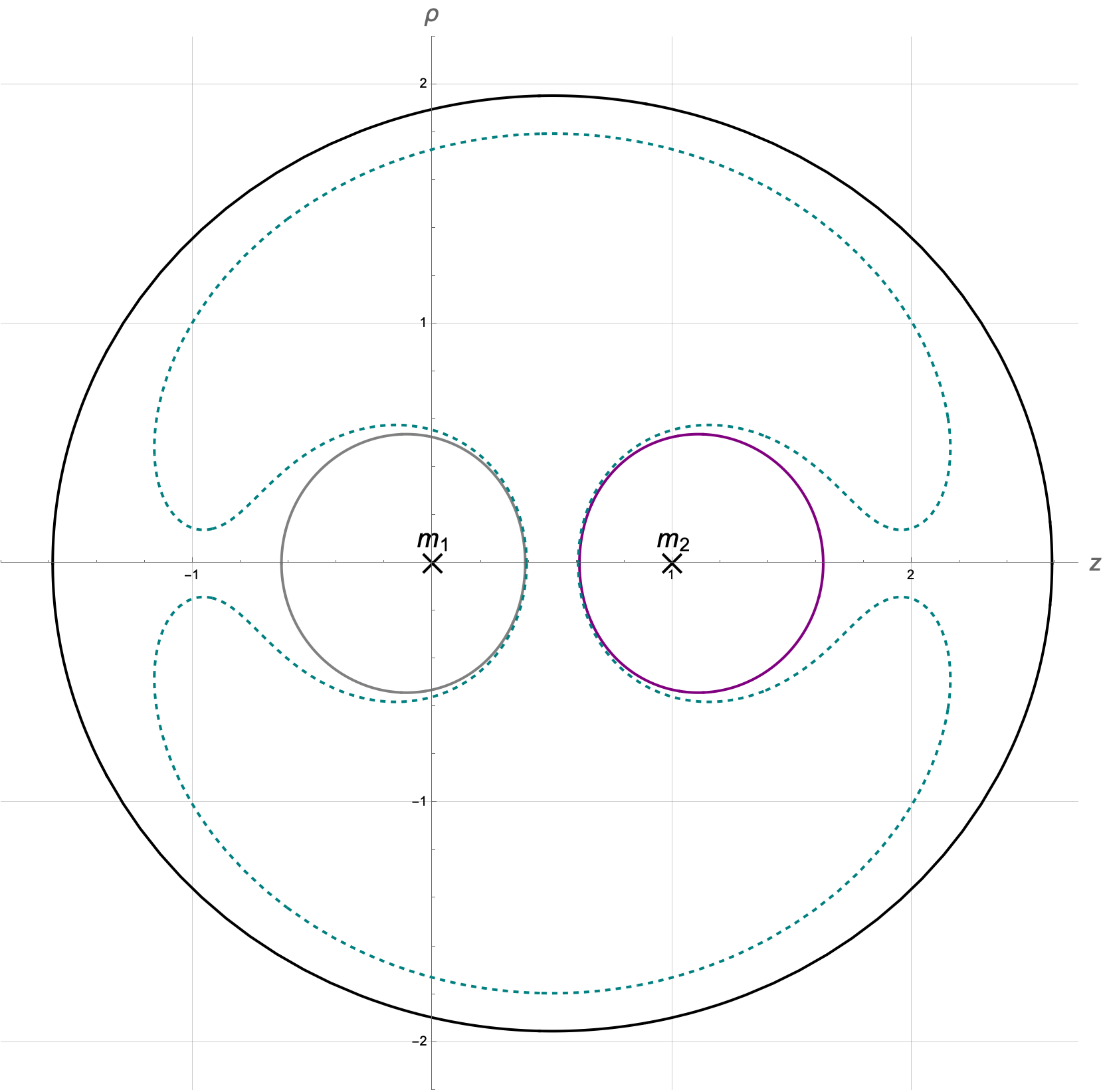}
    \end{subfigure}\hfill
    \begin{subfigure}{0.32\textwidth}
        \centering
        \includegraphics[width = \textwidth]{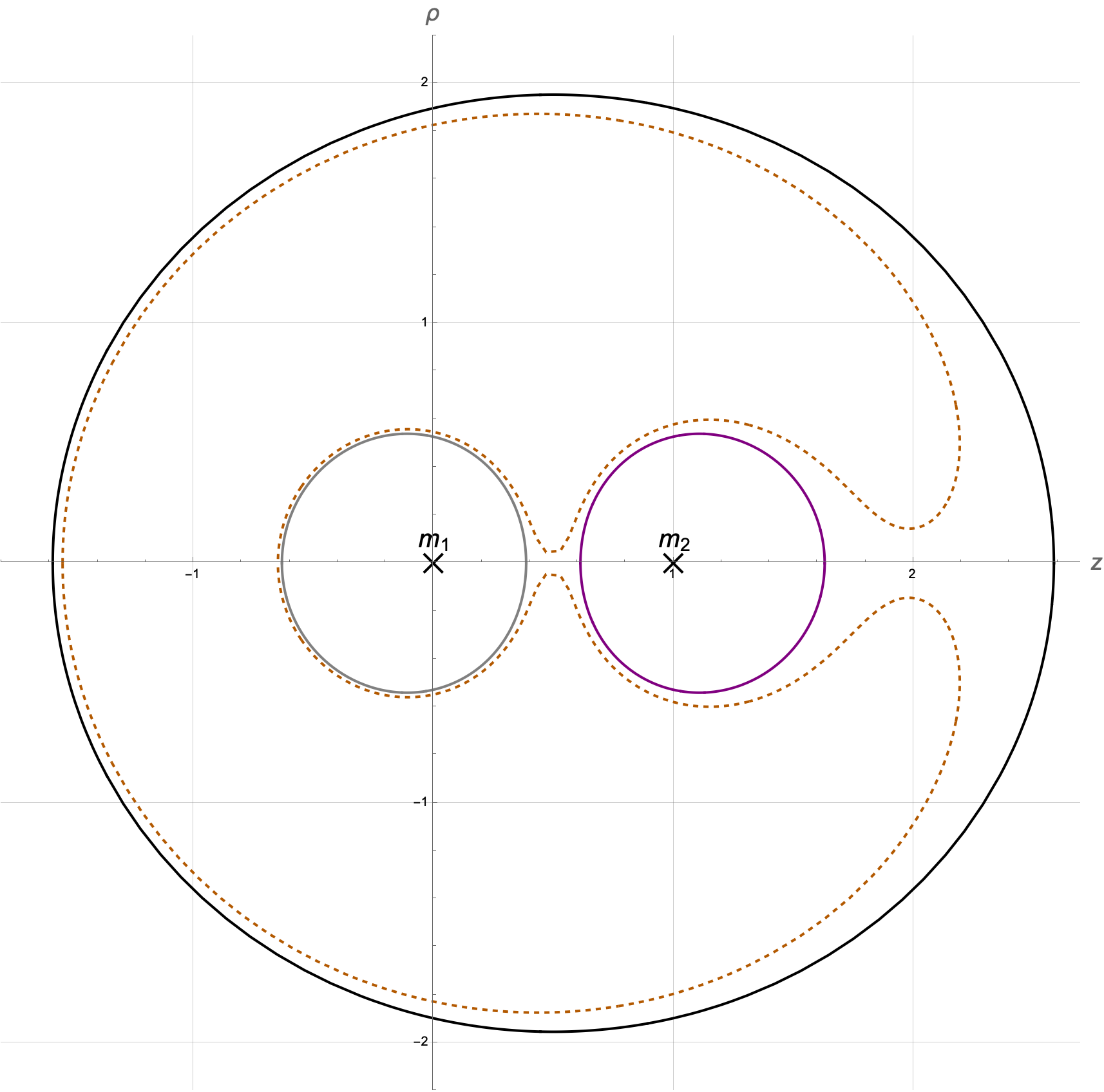}
    \end{subfigure}\hfill
    \begin{subfigure}{0.32\textwidth}
        \centering
        \includegraphics[width = \textwidth]{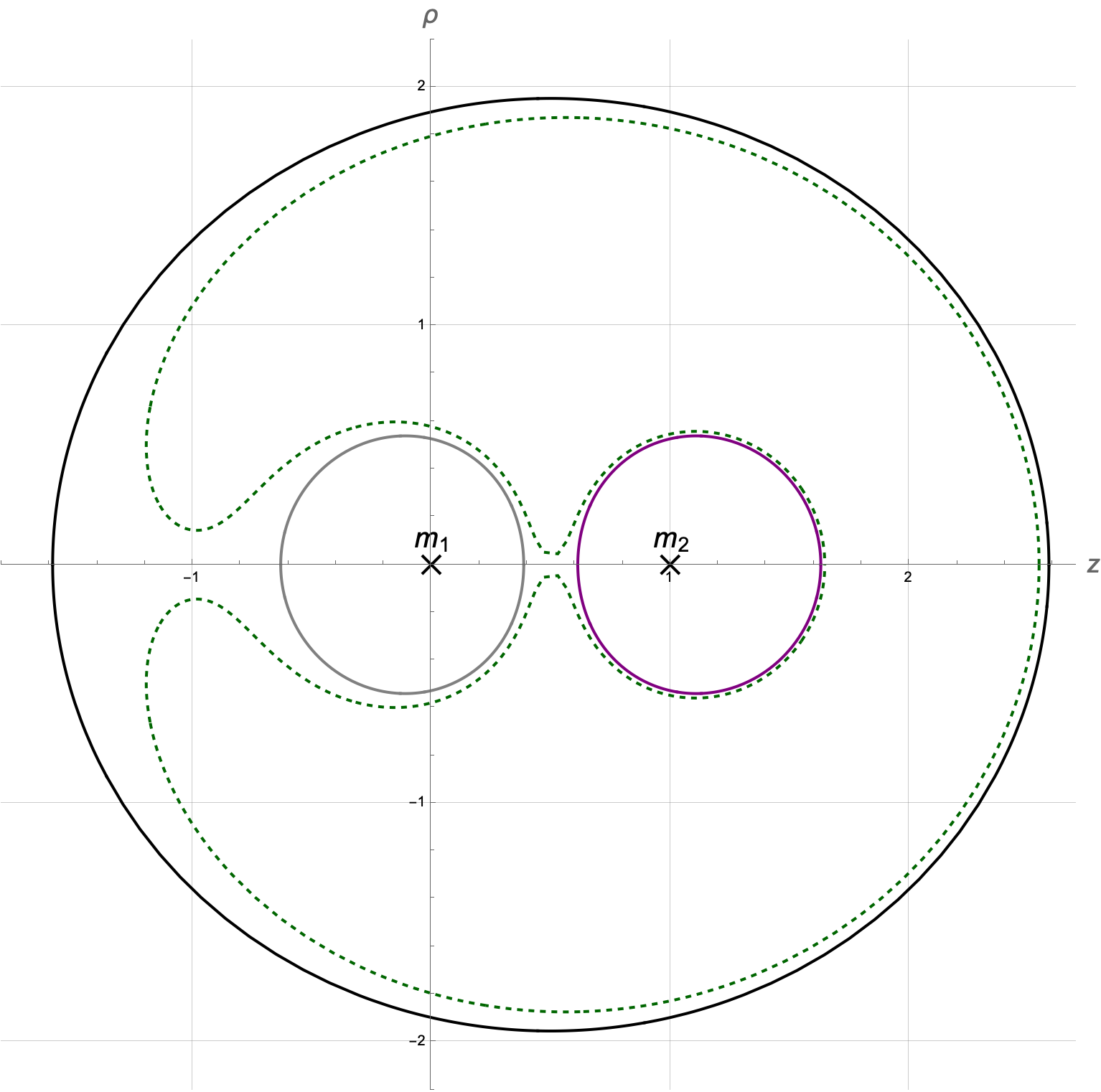}
    \end{subfigure}
    \caption{Index 2 extremal surfaces (dashed curves) in different homotopy classes}
    \label{fig:index2}
\end{figure}

One can also find higher-index surfaces. For example, Fig. \ref{fig:index2} shows null-homologous surfaces that possess two necks and therefore likely have index 2. The physical interpretation, if any, of such surfaces is not known. A complete classification of all extremal surfaces in BL geometries is also not known. Numerical searches to classify these surfaces have suggested that the number may even be infinite \cite{Pook-Kolb:2020zhm, Pook-Kolb:2021gsh, Booth:2021sow, Pook-Kolb:2021jpd}.

\subsection[Minimal surfaces for n = 4]{Minimal surfaces for $n=4$}
\label{sec:n4BL}

We will now briefly discuss extremal surfaces for the case $n=4$. For concreteness, we will specialize to a highly symmetric configuration where, in inverted coordinates, the punctures lie at the vertices of a square and have equal parameters $\mu_i$ (see Appendix \ref{sec:coords} for details). This configuration has a dihedral $D_4$ isometry group. It can be described in the original BL coordinates with three collinear punctures with $\vr_2$ centered between $\vr_1$ and $\vr_3$ and $\alpha_1=\alpha_3=2^{1/2}r_{12}$, $\alpha_2=r_{12}$.

\begin{figure}[ht]
    \centering
    \begin{subfigure}[b]{0.45\textwidth}
        \centering
        \includegraphics[width=\textwidth]{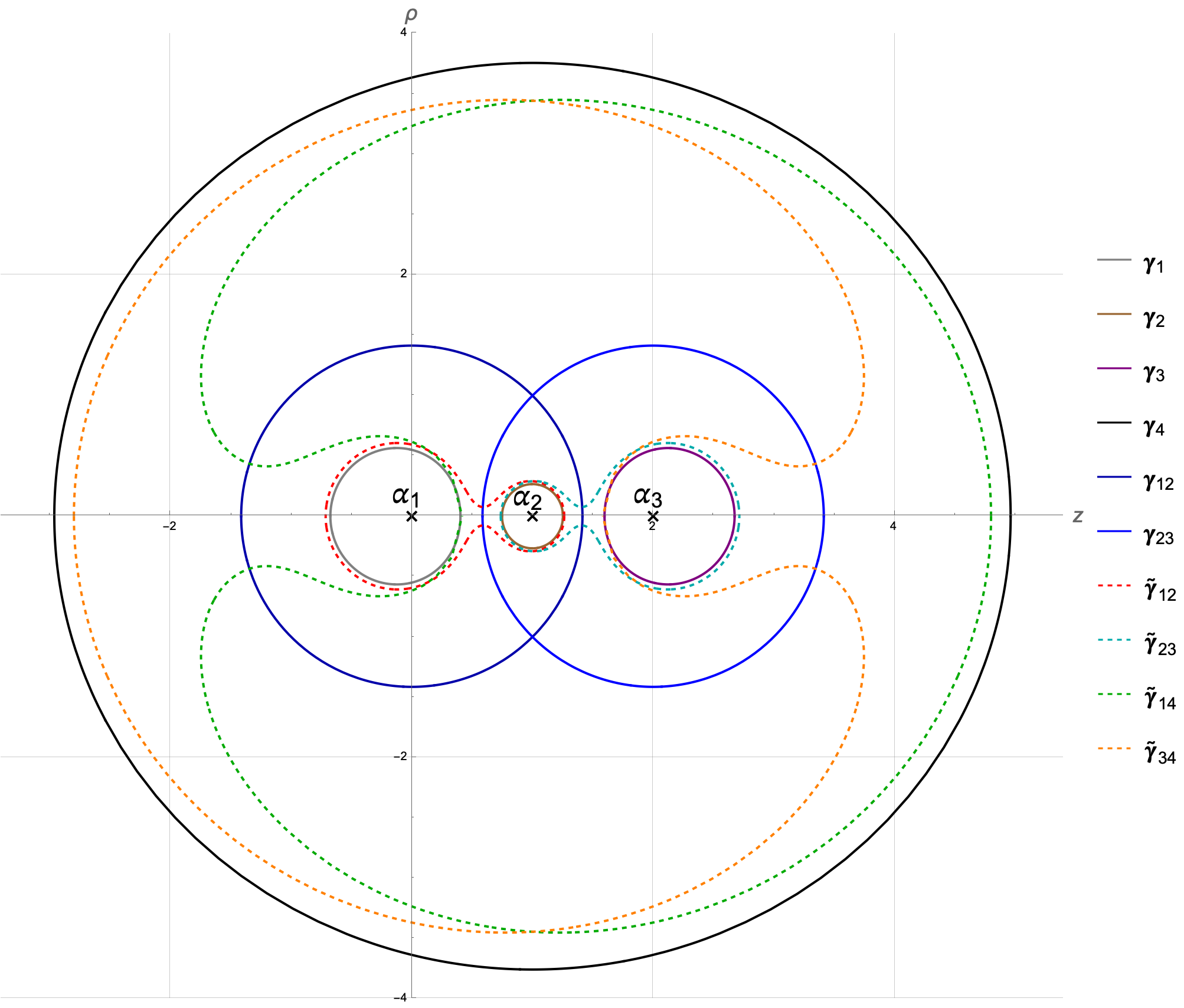}
    \end{subfigure}\hfill
    \begin{subfigure}[b]{0.45\textwidth}
        \centering
        \includegraphics[width=\textwidth]{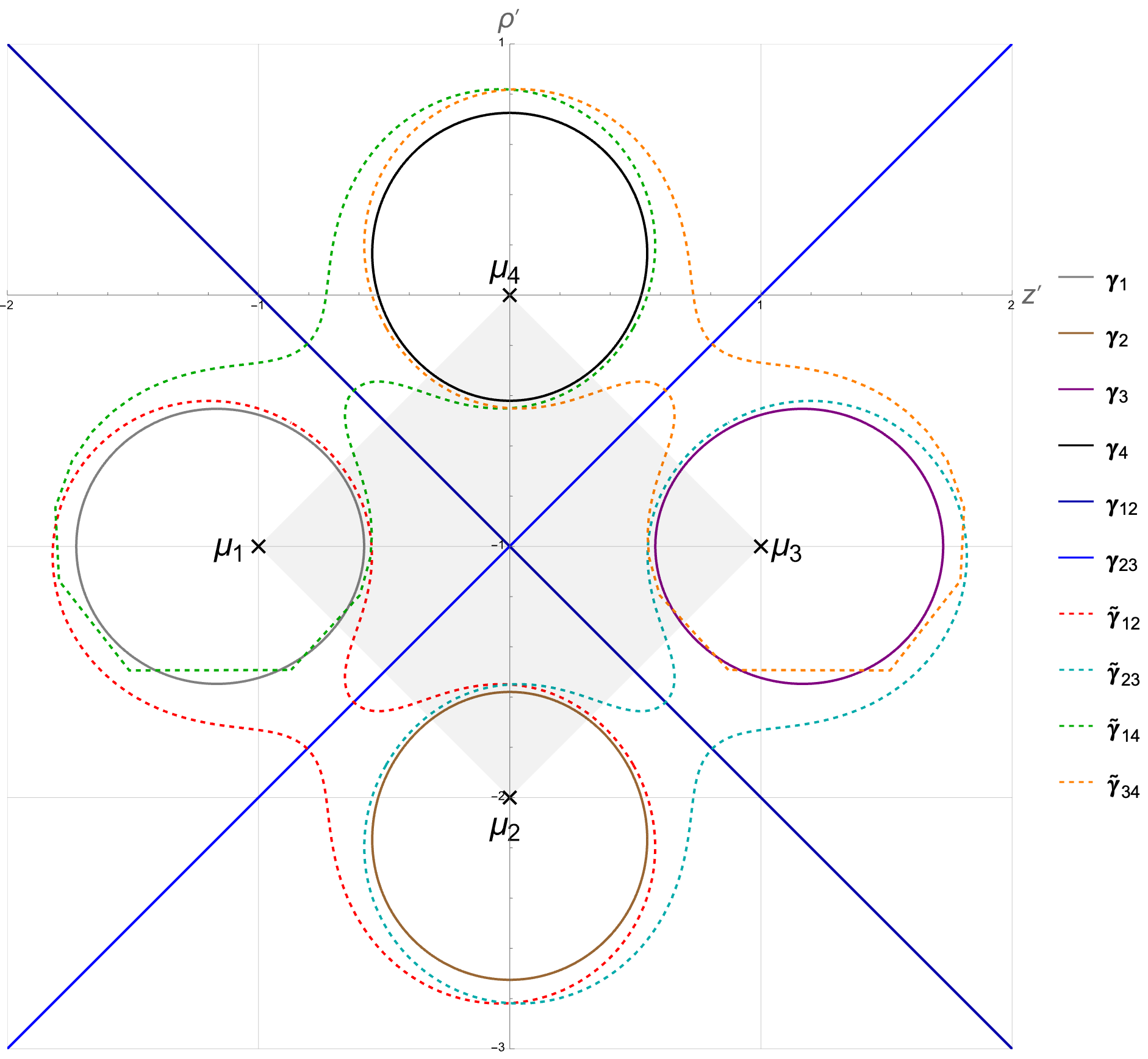}
    \end{subfigure}
    \caption{Planar sections of extremal surfaces in original (left) and inverted (right) coordinates. Solid curves are minimal surfaces; dashed curves are index-1 surfaces. The minimal surfaces $\gamma_{12}$ and $\gamma_{23}$ extend to infinity in the inverted coordinates. (The apparent intersections of $\tilde\gamma_{14}$ and $\tilde\gamma_{34}$ with $\gamma_1$ and $\gamma_3$ respectively are numerical artifacts.)}
    \label{fig:n4 surfaces}
\end{figure}

\begin{figure}[ht]
    \centering
    \begin{subfigure}[b]{0.48\textwidth}
        \centering
        \includegraphics[width=0.92\textwidth]{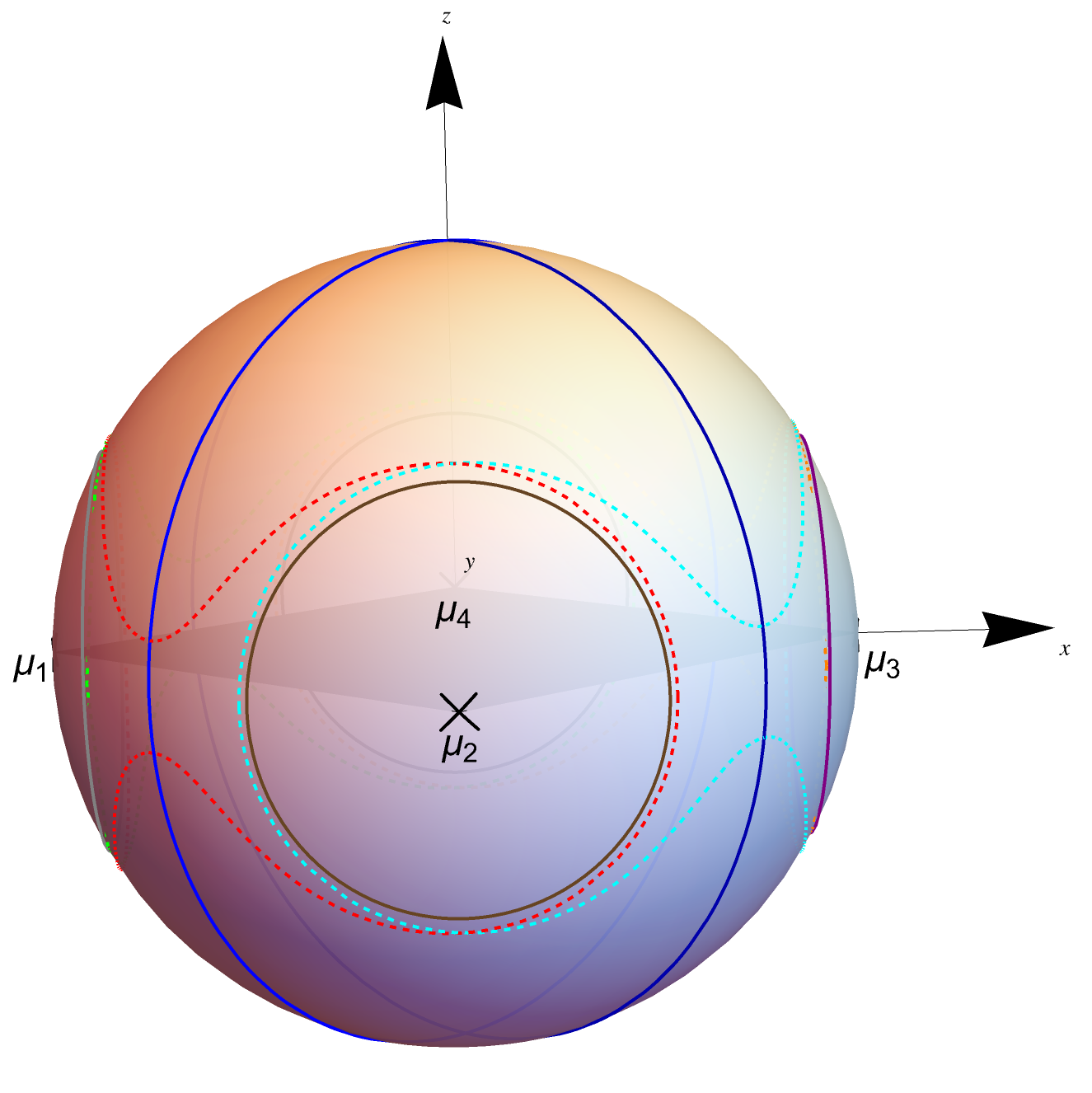}
    \end{subfigure}
    \begin{subfigure}[b]{0.48\textwidth}
        \centering
        \includegraphics[width=0.9\textwidth]{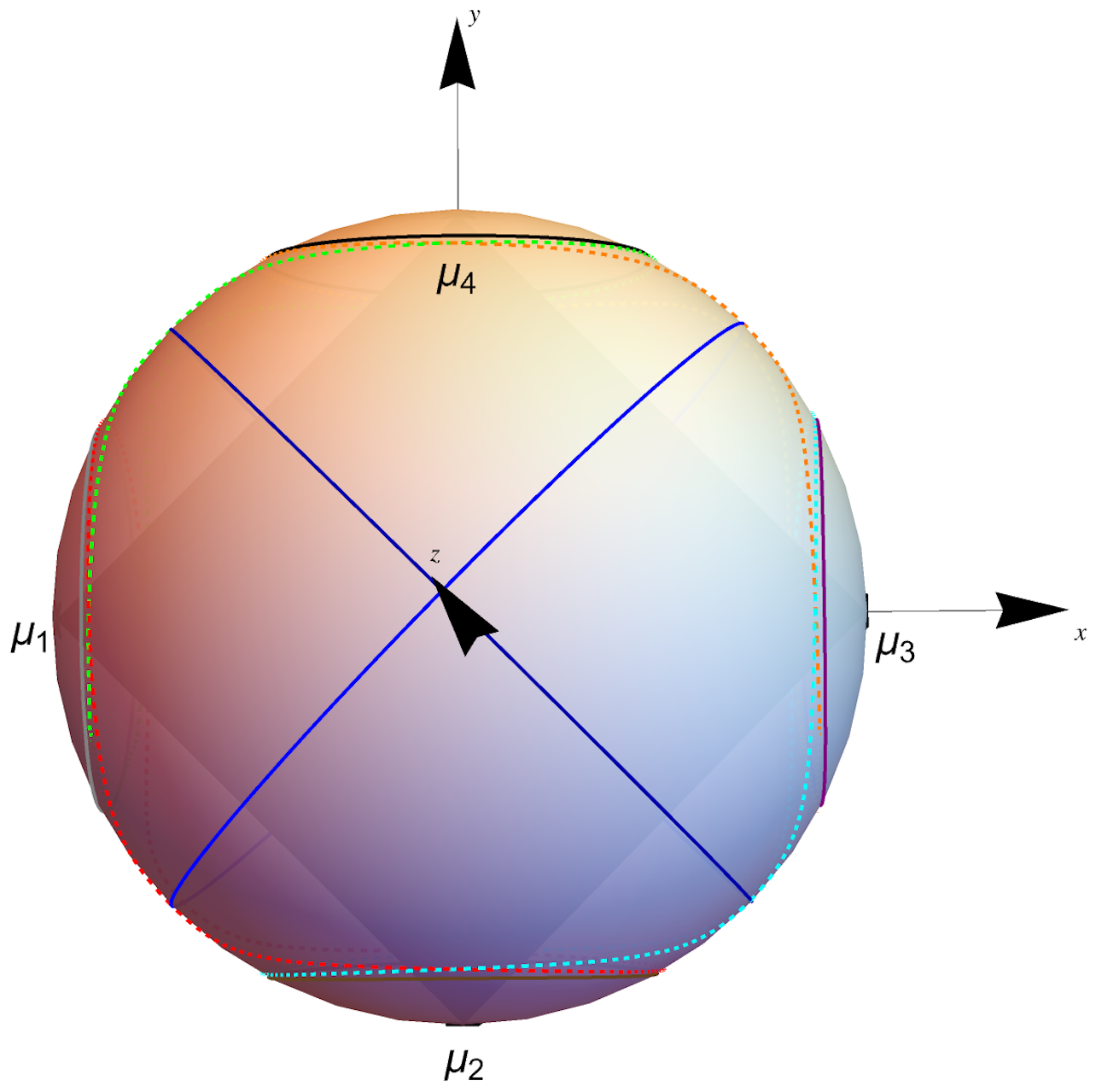}
    \end{subfigure}
    \caption{Two views of a stereographic projection to the sphere of the extremal surface sections shown in Fig. \ref{fig:n4 surfaces} in inverted coordinates. The masses are at the vertices of a square inscribed in the sphere.}
    \label{fig:n4 stereo}
\end{figure}

We find six minimal surfaces: $\gamma_i$ ($i=1,\ldots,4$) separates universe $i$ from the three others; $\gamma_{12}$ separates $1\cup 2$ from $3\cup 4$; and $\gamma_{23}$ separates $2\cup 3$ from $1\cup 4$ (see Figs.\ \ref{fig:n4 surfaces}, \ref{fig:n4 stereo}). On the other hand, there is no minimal surface separating $1\cup 3$ from $2\cup 4$, or $1\cup 3$ from $2\cup 4$. Their areas are
\be
|\gamma_i| \approx 938 \pi r_{12}^2\, ,
|\gamma_{12}|=|\gamma_{23}|\approx 1459\pi r_{12}^2\,.
\ee
From these facts, we can deduce the RT surface for each set of universes. The RT surface for universe $i$ is $\gamma_i$. Since $|\gamma_{12}|<2|\gamma_1|$, the RT surface for $1\cup 2$ is $\gamma_{12}$; hence universes $1$ and $2$ have a non-zero mutual information:
\be
I(1:2) = \frac12|\gamma_1|-\frac14|\gamma_{12}| \approx 104\pi r_{12}^2\,.
\ee
(This also goes for $I(2:3)$, $I(3:4)$, and $I(4:1)$.) On the other hand, for $1\cup 3$, the candidate RT surfaces are $\gamma_1\cup\gamma_3$ and $\gamma_2\cup\gamma_4$; these have equal area, so either one can be considered the RT surface. Either way, we find $I(1:3)=0$. (The same goes for $2\cup 4$.)

As we found for $n=3$, there are also index-1 surfaces. In the region between $\gamma_1\cup\gamma_2$ and $\gamma_{12}$, and in the same homology class as those surfaces, there is an index-1 surface $\tilde\gamma_{12}$ that, according to the python's lunch conjecture, serves as the bulge surface computing the complexity of reconstructing that region of the bulk (similarly for the other three consecutive pairs).\footnote{It should be noted this is not a complete classification of the index-1 surfaces. Some of these additional surfaces play a role in the black hole evaporation model proposed in \cite{Gupta}} These surfaces are also shown in Figs.\ \ref{fig:n4 surfaces}, \ref{fig:n4 stereo}.

\subsection{Gluing BL spacetimes}
\label{sec:gluingBL}

We can create (initial data for) spacetimes with more complicated topologies by cutting and gluing BL metrics along minimal surfaces. As a first example, let $\Sigma$ be the $n=3$ BL geometry with $\alpha_1=\alpha_2$, as discussed in subsection \ref{sec:n3BL}. This geometry admits an isometric reflection that exchanges punctures 1 and 2, and surfaces $\gamma_1$ and $\gamma_2$. We can remove from $\Sigma$ the interior of these two surfaces (thereby removing universes 1 and 2) and identify them to make a surface $\gamma'$ (see Fig. \ref{fig:BL self}). This results in a manifold $\Sigma'$ with just one asymptotic region, $a_3$, and a non-trivial fundamental group, $\pi_1(\Sigma')=\mathbf{Z}$.

\begin{figure}[ht]
    \centering
    \includegraphics[width=0.4\linewidth]{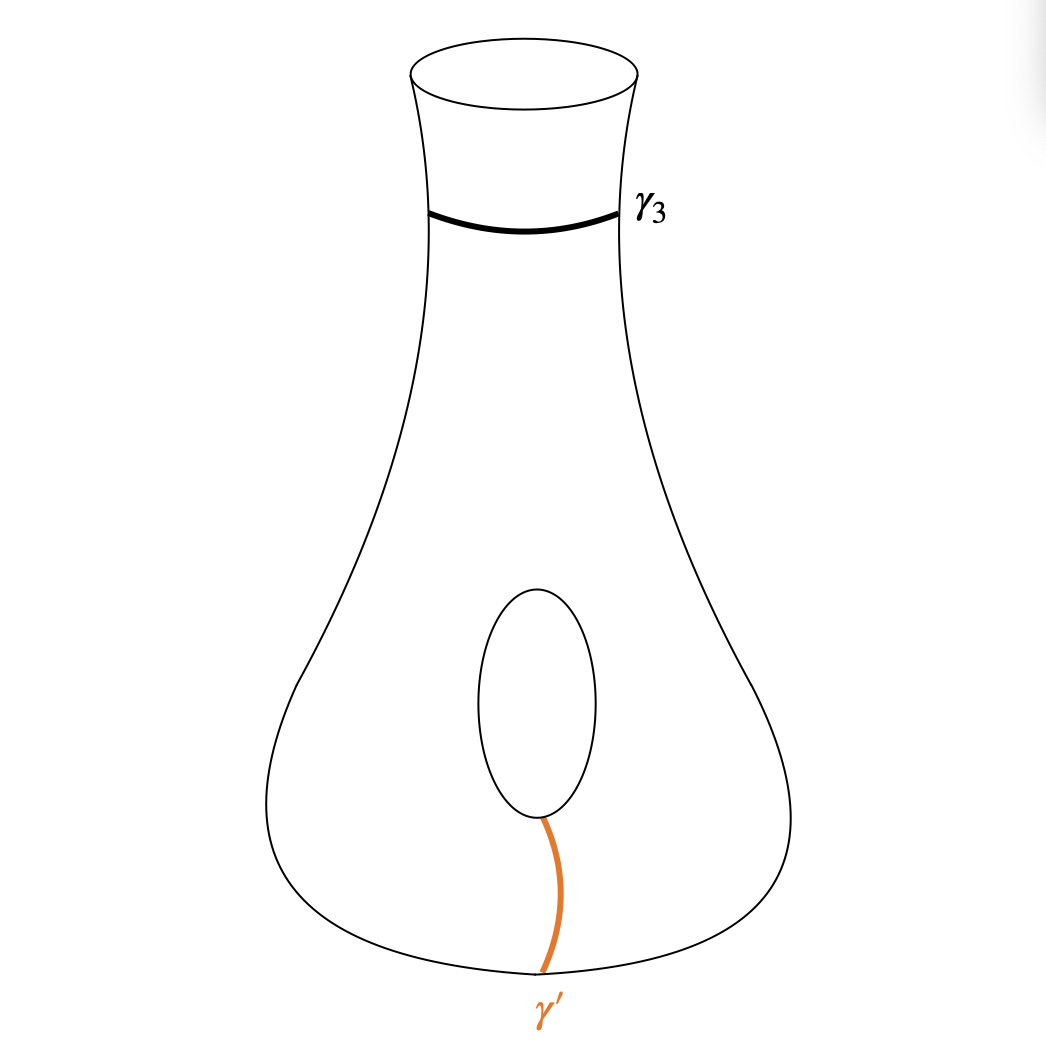}
    \caption{Gluing of $n=3$ BL geometry along surfaces $\gamma_1$ and $\gamma_2$ to make surface $\gamma'$.}
    \label{fig:BL self}
\end{figure}

The metric on $\Sigma'$ is continuous but not smooth. In particular, since $\gamma_1$ and $\gamma_2$ have non-zero extrinsic curvature, the extrinsic curvature of $\gamma'$, computed on its two sides, differs by a sign. By the Gauss-Codazzi equation, the (three-dimensional) Ricci tensor of $m'$ has a delta-function on $\gamma'$, proportional to the jump in the extrinsic curvature. However, because the surfaces $\gamma_{1,2}$ are minimal, their extrinsic curvatures are traceless, so the discontinuity in the extrinsic curvature of $\gamma'$ is also traceless, so the Ricci scalar of $\Sigma'$ vanishes on $\gamma'$ (and everywhere else of course). Hence $\Sigma'$ obeys the vacuum constraint equations, and is valid (albeit singular) initial data. Its time-evolved spacetime $M'$ includes a gravitational shock propagating away from $\gamma'$, to the past and future, in both spatial directions. This shock is hidden behind the event horizon(s), so is invisible to an observer in the remaining asymptotic region $a_3$, who would not notice any difference between $M'$ and the time evolution $M$ of $\Sigma$. However, observers willing to jump into the black holes would notice a difference, at least in the regime of large separation where there is no surface $\gamma_3$ and so there are initially two distinct black holes. Specifically, observers who jump into different black holes can meet in $M'$ (assuming they survive crossing the shock) but not in $M$.

According to conjecture \ref{mainconjecture}, $M'$ represents a set of states in the Hilbert space $\HH^{a_3}$, and these states have vanishing entropy (at order $\GN^{-1}$). They are therefore microstates of the black hole(s) seen in region $a_3$. In the case where $\vr_1$ and $\vr_2$ are close enough that a surface $\gamma_3$ exists, then this surface is a constriction, and the index-1 surface $\tilde\gamma_3$ is the bulge surface, indicating, according to the python's lunch conjecture, an exponential complexity for reconstructing observables behind $\gamma_3$. An intriguing question is whether the extremal surface $\gamma'$, which is homologically non-trivial, and indeed minimal in its homology class, but not homologous to any boundary region, represents entanglement in any sense. In the limit that $\gamma_3$ is much smaller than $\gamma'$, we can also think of $M'$ as representing a small black hole inside a closed universe with spatial topology $S^2\times S^1$; in the spirit of \cite{Jafferis:2020ora}, we can think of this black hole as modelling an observer in the closed universe, with Hilbert space dimension roughly $e^{|\gamma_3|/4\GN}$, and $\HH^{a_3}$ as an auxiliary Hilbert space that purifies this observer. 

This construction can be extended to any two minimal surfaces related by a reflection (i.e.\ an isometry that maps the excised parts of the space to each other). Thus, if a BL wormhole with $n$ asymptotic regions has a reflection symmetry that relates two of them, they can be excised and the respective minimal surfaces glued together to create a wormhole with $n-2$ asymptotic regions and fundamental group $\mathbf{Z}$. For example, if an $n=4$ wormhole $\Sigma$ has a reflection symmetry that exchanges $\gamma_1$ and $\gamma_2$, the result is a wormhole $\Sigma'$ connecting the universes 3 and 4. Their respective RT surfaces in $\Sigma$, $\gamma_3,\gamma_4$, find themselves in the same homology class in $\Sigma'$; whichever one is smaller is now the RT surface for both 3 and 4. Hence, in passing from $\Sigma$ to $\Sigma'$, the entropy of one of the remaining universes has decreased, while the other has stayed the same.

We can also glue copies of a BL geometry to itself. For example, let $\Sigma$ and $\tilde\Sigma$ be BL geometries whose punctures are related by a reflection through some plane (in either the original or inverted coordinates). They may be glued together along any pair of minimal surfaces that are images of each other under the reflection, giving a new wormhole $\Sigma'$. This is essentially the same construction used to construct canonical purifications holographically, for example, for the purpose of computing the reflected entropy \cite{Engelhardt:2018kcs,Dutta:2019gen}. We may therefore conjecture that $\Sigma'$ represents the canonical purification of the reduced density matrix for the remaining (unexcised) universes of $\Sigma$.

\begin{figure}[ht]
    \centering
    \includegraphics[width=0.8\linewidth]{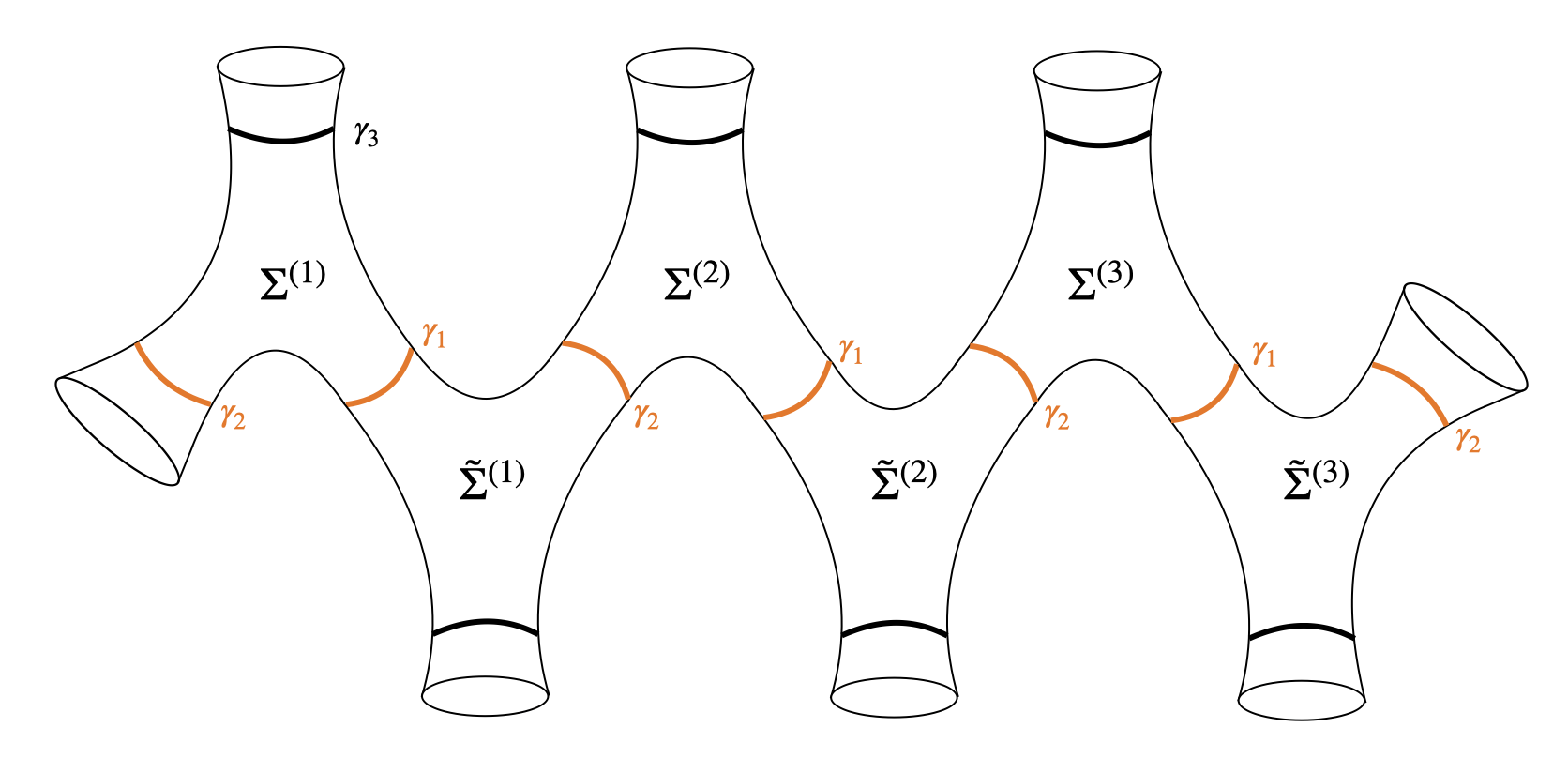}
    \caption{Gluing copies of $n=3$ BL geometries to form a chain; the surfaces $\gamma_1$ on each end can also be glued together to form a closed chain.}
    \label{fig:BLchain}
\end{figure}

We may continue gluing more copies of $\Sigma$ and $\tilde\Sigma$ together to make arbitrarily complicated networks of universes. As a relatively simple example, suppose that $\Sigma$ is an $n=3$ wormhole. We can glue a copy $\Sigma^{(1)}$ of $\Sigma$ to a copy $\tilde\Sigma^{(1)}$ of $\tilde\Sigma$ along, say, $\gamma_1$. This gives a 4-boundary wormhole. Then we can glue a second copy $\Sigma^{(2)}$ of $\Sigma$ to $\tilde\Sigma^{(1)}$ along $\gamma_2$, giving a 5-boundary wormhole. Then we can glue a second copy $\tilde\Sigma^{(2)}$ of $\tilde\Sigma$ to $\Sigma^{(2)}$ along $\gamma_1$, giving a 6-boundary wormhole. (See Fig.\ \ref{fig:BLchain}.) We can continue in this fashion for as long as we want, making a wormhole that is essentially a linear chain, with two universes at each end and one on each internal link. If the chain has an equal number of $\Sigma$ and $\tilde\Sigma$ links, then we can close it up to make a circle by gluing the last $\tilde\Sigma$ to the first $\Sigma$ along $\gamma_2$. Being very simple, the entanglement entropies are easy to calculate for either the open or closed chain, given the minimal surface areas in the original manifold $\Sigma$. We would like to point out an interesting feature that gives a clue about the entanglement structure of the state represented by this spacetime. Namely, as the reader may easily check, the mutual information $I(A:B)$ between any two sets of universes $A,B$ that are \emph{not} adjacent in the chain vanishes. This means that the state lacks long-range entanglement and is a Markov chain.

A more extreme construction would remove all asymptotic regions. For example, starting from a symmetric $n=4$ BL geometry, we can glue $\gamma_1$ to $\gamma_2$ and $\gamma_3$ to $\gamma_4$. This leaves a closed universe; under time evolution it evolves into a singularity in both the past and future, yielding a big bang-big crunch cosmology. Taking conjecture \ref{mainconjecture} down to $n=0$, this spacetime represents a set of states in a zero-fold tensor product Hilbert space, in other words, in a 1-dimensional Hilbert space (so the ``set of states'' is just one state). This agrees with the common lore that closed universes have a 1-dimensional Hilbert space.

\subsection{Charged solutions}
\label{sec:chargedBL} 

The original Brill-Lindquist metric is written for charged black holes. Here, instead of the conformal factor $\psi^4$ we consider the factor $(FG)^2$, where:
\begin{align}
    F = 1 + \sum_{i=1}^{n-1} \frac{f_i}{\|\vec{r} - \vec{r_i}\|} \\
    G = 1 + \sum_{i=1}^{n-1} \frac{g_i}{\|\vec{r} - \vec{r_i}\|}
\end{align}
The uncharged solution is recovered for $f_i = g_i = \alpha_i$. The expressions for the ADM masses are now:
\begin{align}
    m_i &= f_i\left(1 + \sum_{j\neq i}^{n-1} \frac{g_j}{\|\vec{r_i} - \vec{r_j}\|}\right) + g_i\left(1 + \sum_{j\neq i}^{n-1} \frac{f_j}{\|\vec{r_i} - \vec{r_j}\|}\right) \\
    m_{n} &= \sum_{i=1}^{n-1} (f_i + g_i)
\end{align}
where Eq. \ref{eq:M_i ADM} and \ref{eq:M_inf ADM} are recovered for $f_i = g_i = \alpha_i$.

The charge on each sheet can be found by evaluating the flux of the electric field $E_i \equiv  \left[\ln{(F/G)}\right]_{,i}$ through a sphere:
\begin{align}
    q_i &= \frac{1}{4\pi} \int E_i n^i dS \\
    &= g_i\left(1+\sum_{j \neq i}^{n-1} \frac{f_j}{\|\vec{r_i} - \vec{r_j}\|}\right) - f_i\left(1 + \sum_{j \neq i}^{n-1} \frac{g_j}{\|\vec{r_i} - \vec{r_j}\|}\right)
\end{align}
where $dS = (FG)^2 r^2 \sin{(\theta)} d\theta d\phi$ is the spherical area element. By conservation of flux, the charge on the asymptotic infinity sheet is $q_{n} = \sum_{i=1}^{n-1} q_i$.

Inverting the formulae for $m_i, q_i$ to obtain formulae for $f_i, g_i$ is difficult to do exactly, thus \cite{brill-lindquist} gives the inverse formulae to leading order in $1/\|\vec{r_i} - r_j\|$:
\begin{align}
    f_i &\approx \frac{1}{2} (m_i - q_i) \left(1 - \frac{1}{2} \sum_{i \neq j}^{n-1} \frac{m_j + q_j}{\|\vec{r_i} - \vec{r_j}\|} \right) \\
    g_i &\approx \frac{1}{2} (m_i + q_i) \left(1 - \frac{1}{2} \sum_{i \neq j}^{n-1} \frac{m_j - q_j}{\|\vec{r_i} - \vec{r_j}\|} \right)
\end{align}
Here, it's clear to see that the regularity condition Brill and Lindquist impose ($f_i, g_i > 0$) ensures all black holes are sub-extremal. Moreover, near-extremal black holes can be considered by taking either $f_i$ or $g_i$ to be very small (depending on the sign of the charge $q_i$). We can understand how the metric behaves in this limit by starting with an uncharged solution ($f_i = g_i$) and decreasing $f_i$ to 0 (extremal limit). This limit is important in the context of holography, since the exact quantum mechanical descriptions of certain BPS black holes/branes are under more control. 

\begin{figure}[ht]
    \centering
    \begin{subfigure}[b]{0.32\textwidth}
        \centering
        \includegraphics[width =\textwidth]{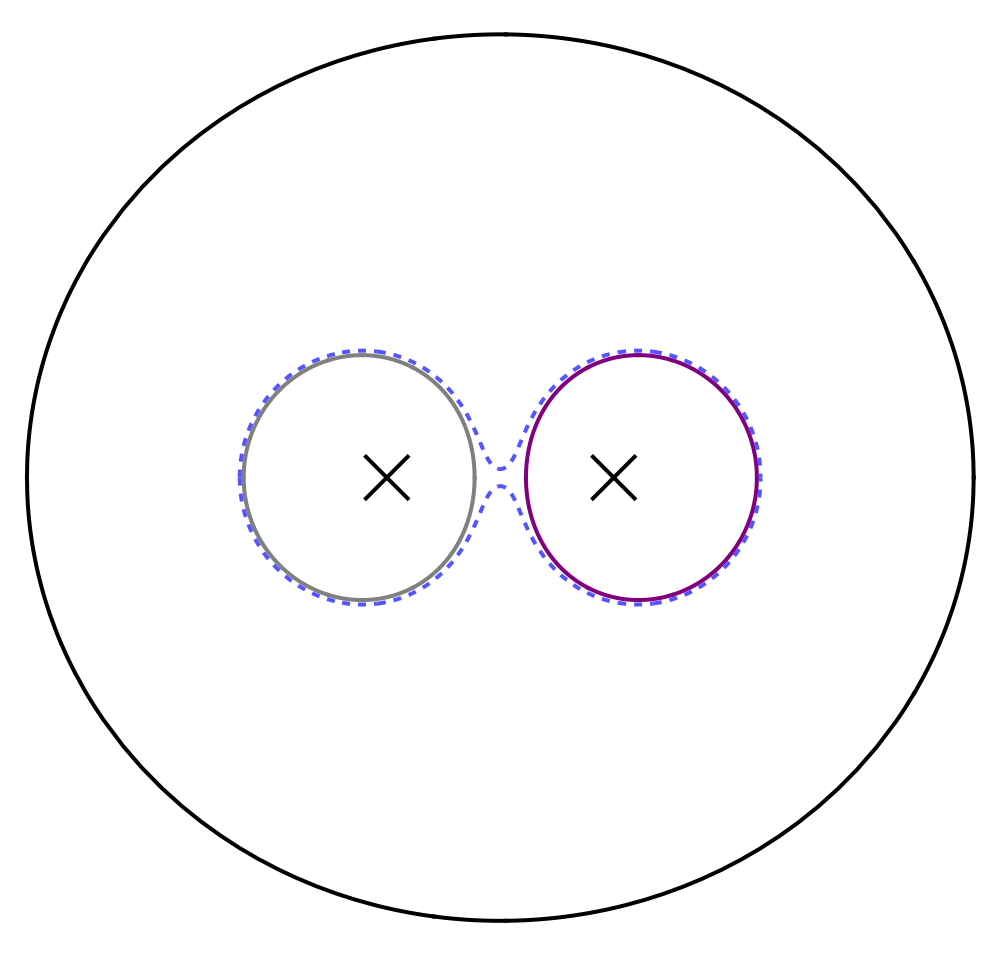}
        \caption*{$f := f_1 = f_2 = 1$}
        \label{fig:N=2, a=1}
    \end{subfigure}
    \begin{subfigure}[b]{0.32\textwidth}
        \centering
        \includegraphics[width = \textwidth]{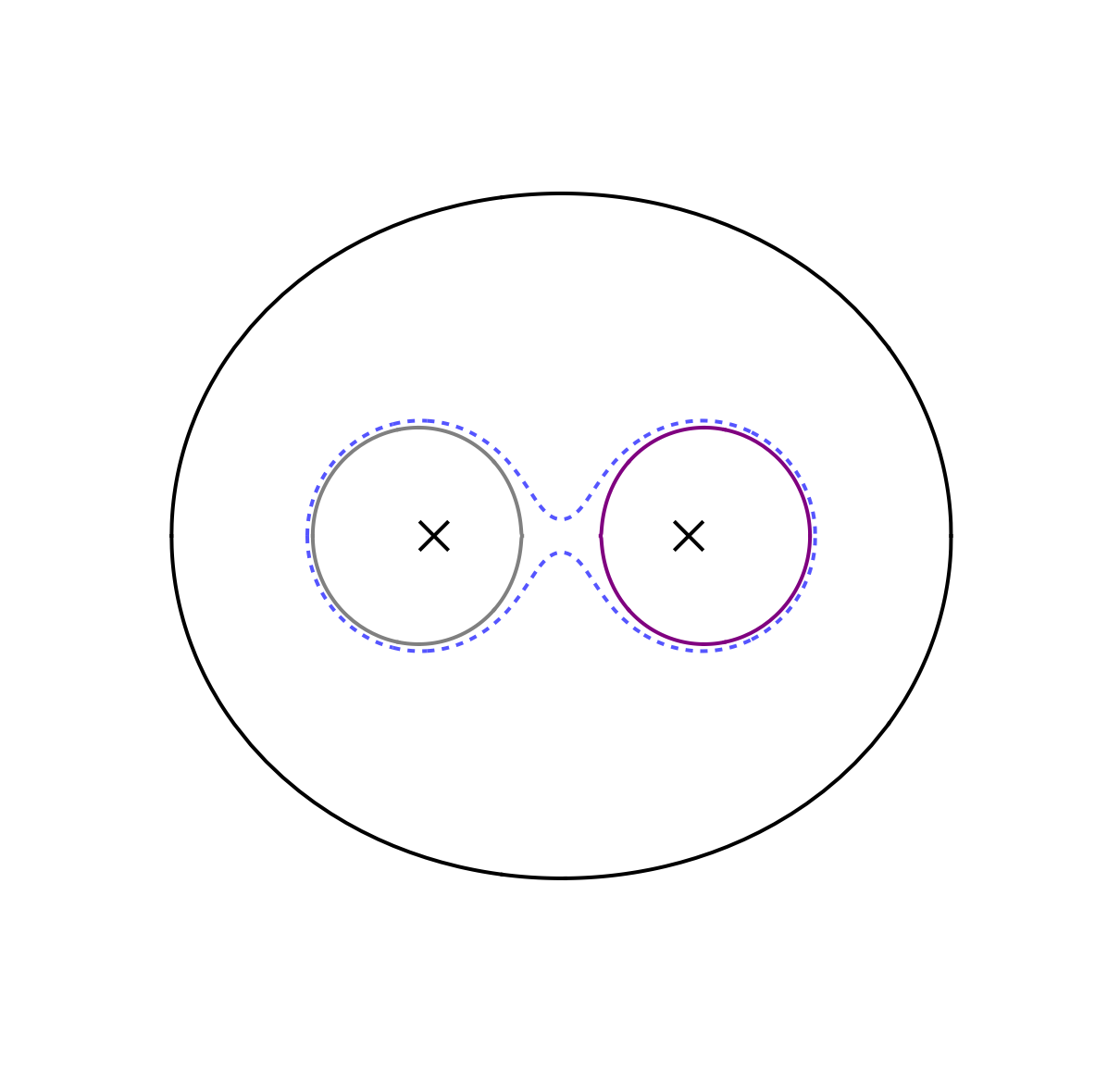}
        \caption*{$f = 0.5$}
        \label{fig:N=2, a=0.5}
    \end{subfigure}
    \begin{subfigure}[b]{0.32\textwidth}
        \centering
        \includegraphics[width = \textwidth]{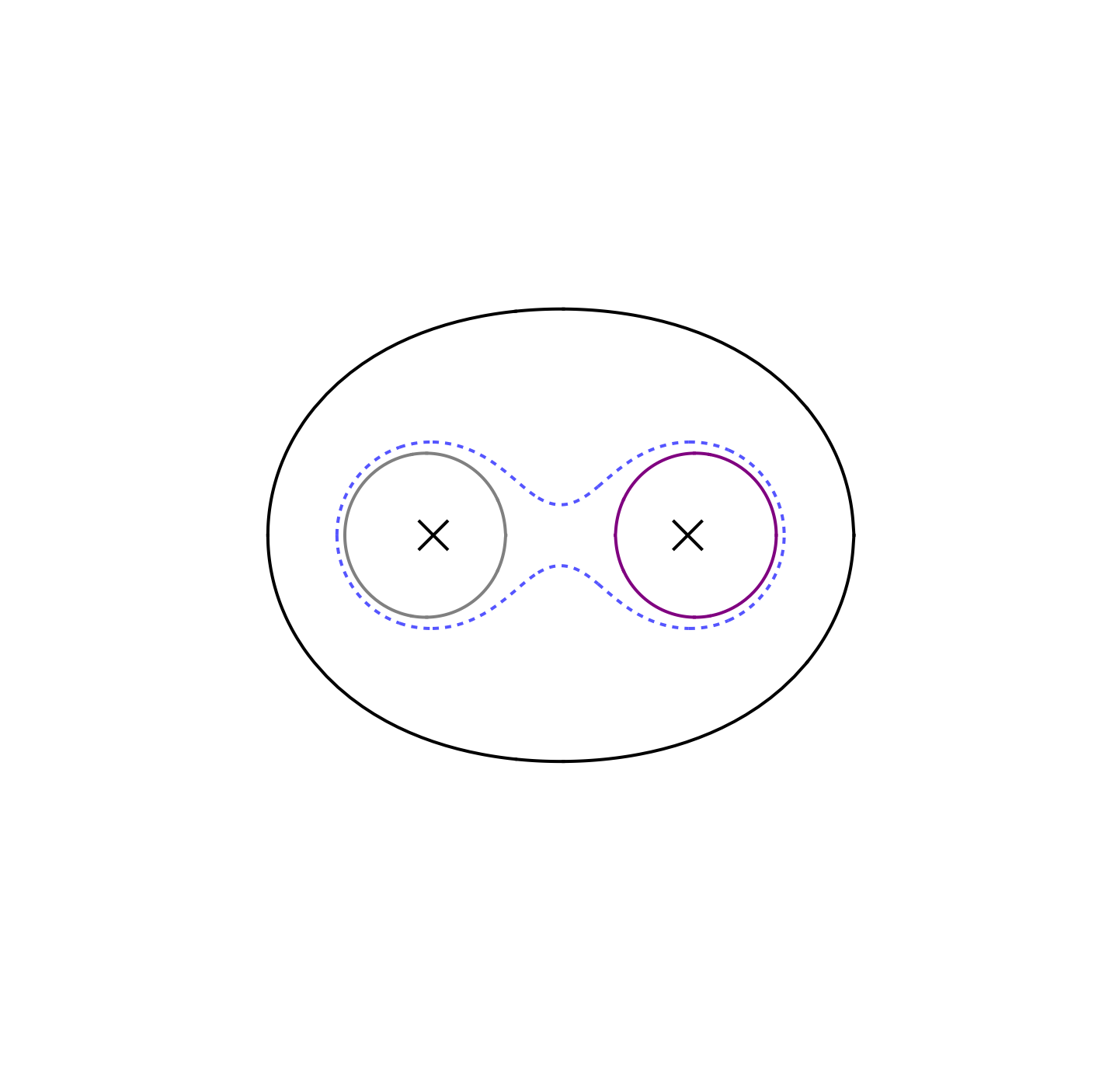}
        \caption*{$f = 0.25$}
        \label{fig:N=2, a=0.25}
    \end{subfigure}\\[\smallskipamount]
    \vspace{0.3 cm}
    \begin{subfigure}[b]{0.32\textwidth}
        \centering
        \includegraphics[width =\textwidth]{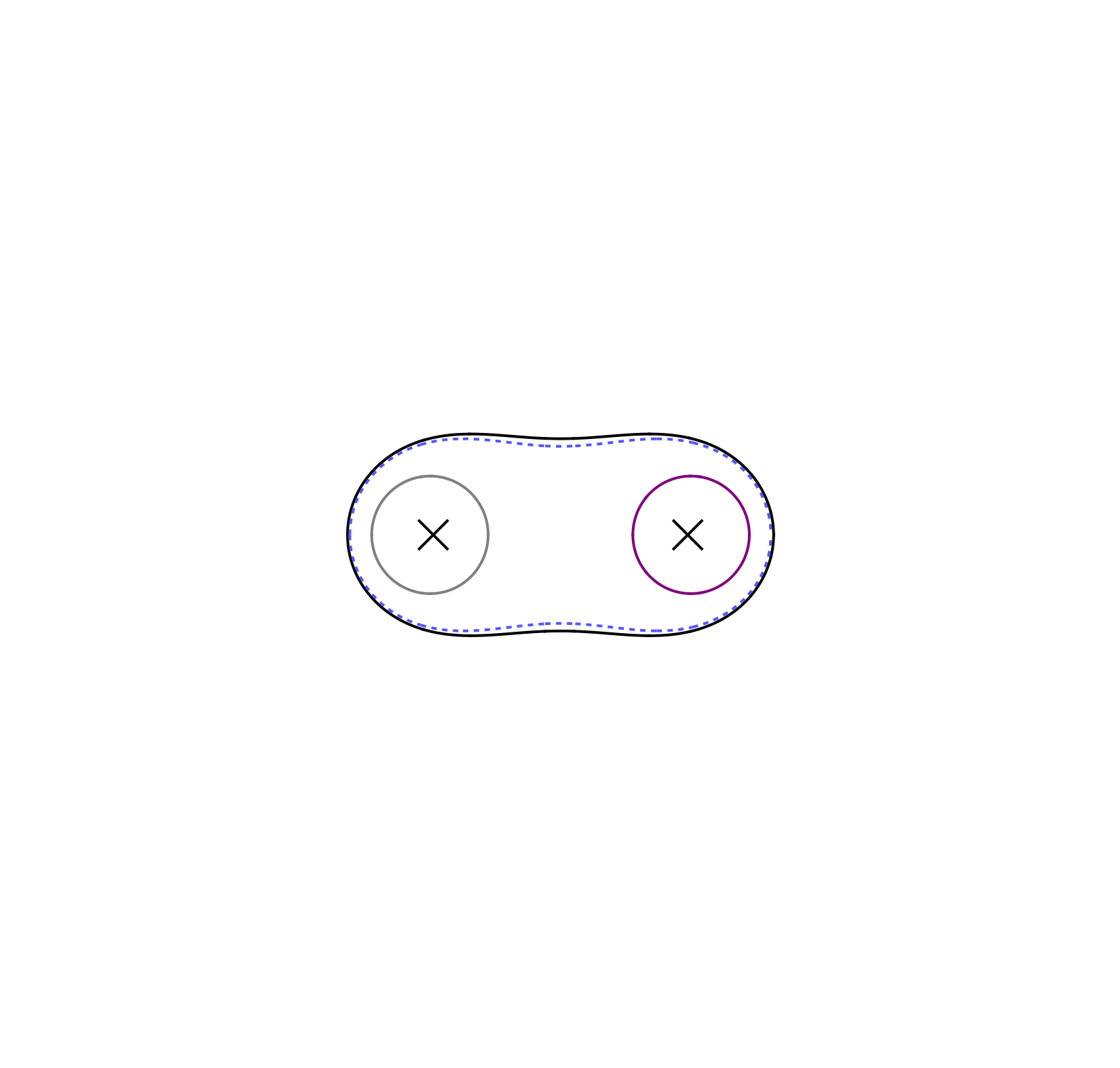}
        \caption*{$f = 0.1175$}
        \label{fig:N=2, a=0.1175}
    \end{subfigure}
    \begin{subfigure}[b]{0.32\textwidth}
        \centering
        \includegraphics[width = \textwidth]{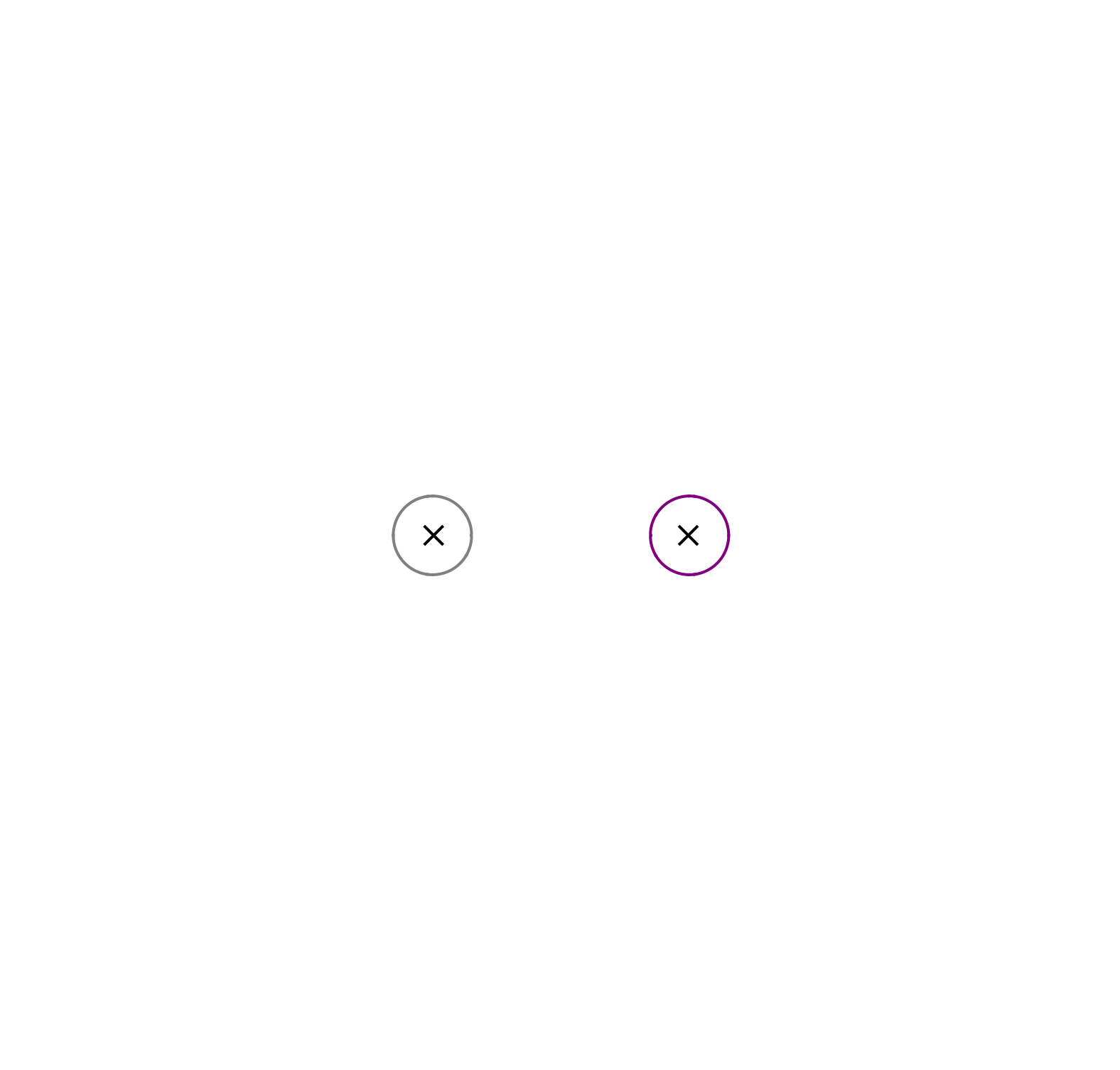}
        \caption*{$f = 0.05$}
        \label{fig:N=2, a=0.05}
    \end{subfigure}
    \begin{subfigure}[b]{0.32\textwidth}
        \centering
        \includegraphics[width = \textwidth]{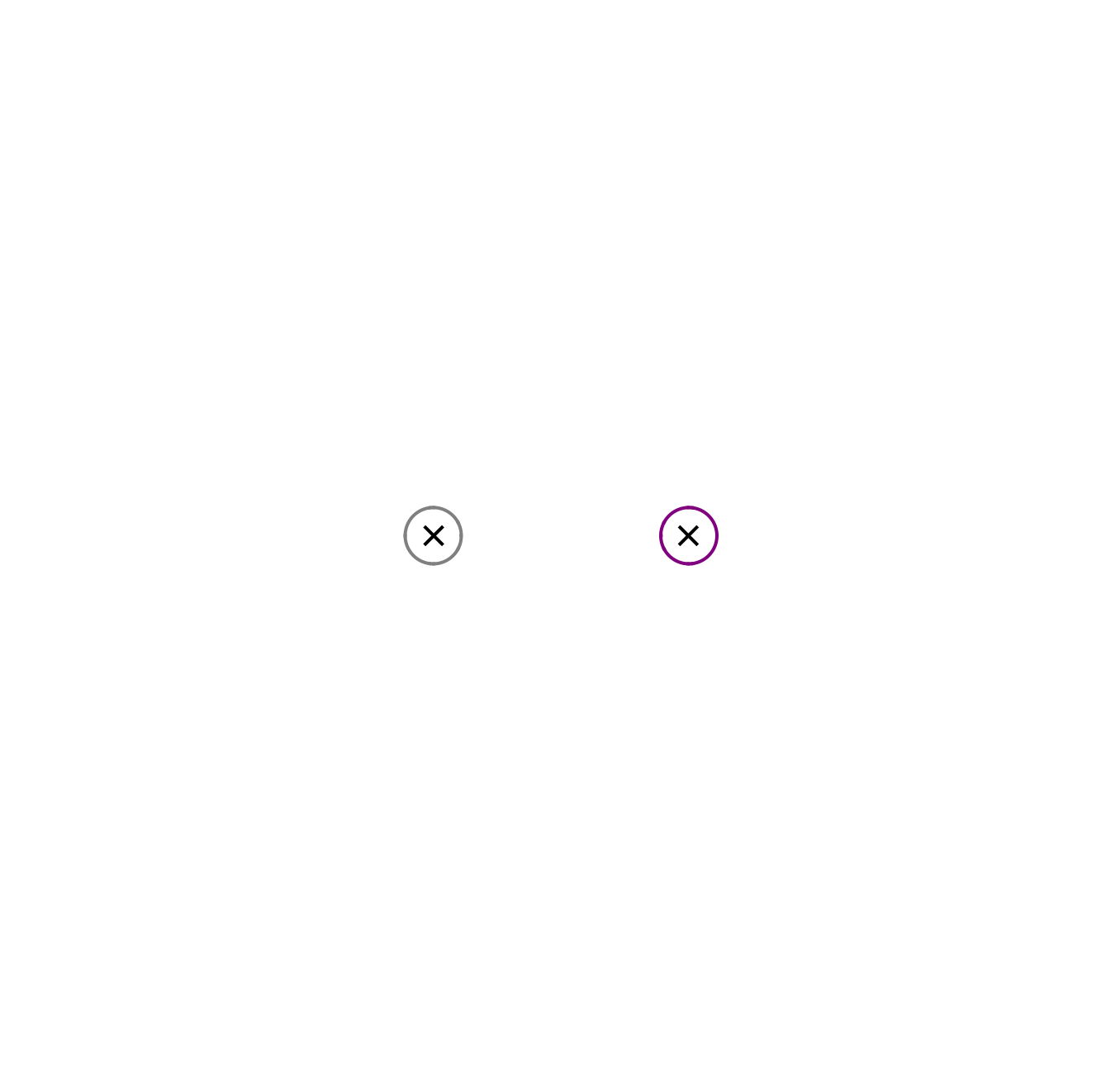}
        \caption*{$f = 0.025$}
        \label{fig:N=2, a=0.025}
    \end{subfigure}
    \vspace{0.2 cm}
    \caption{Extremal surfaces for $n=3$ charged Brill-Lindquist solution, with $g_1 = g_2 = 1$ and separation $r_{12} = 1$ fixed. As $f_1 = f_2 =: f \rightarrow 0$, the ADM mass and charge $m_i, q_i \rightarrow g_i$}
    \label{fig:Charged decreasing alpha}
\end{figure}

As shown in Fig. \ref{fig:Charged decreasing alpha}, in the extremal limit there also exists a critical point where the outer minimal surface disappears. Approaching the extremal limit, the minimal surfaces are pushed closer to the singularities, which, as we recall, are simply images of infinity. Thus, the BL wormhole is stretching to infinite proper length, with the minimal surfaces being pushed infinitely far away, which is expected from extremally charged wormholes. It should also be noted that in this extremal limit, the BL metric actually approaches the initial time slice of the Majumdar-Papapetrou (MP) geometry, which is a static solution of spherically symmetric, extremal black holes.  It is given by the metric \cite{Hartle1972},
\begin{equation}
    ds^2 = -\frac{dt^2}{U(\vec{x})^2} + U(\vec{x})^2 \left(dr^2 + r^2 d\Omega^2\right)
\end{equation}
where $U(\vec{x})$ is similar to the BL conformal factor and for $n-1$ point masses is given by:
\begin{equation}
    U(\vec{x}) = 1 + \sum^{n-1}_{i=1} \frac{m_i}{\|\vec{r}-\vec{r_i}\|}\,.
\end{equation}

Computing the electric flux through a sphere around each mass, we also obtain $q_i = m_i$ i.e. all black holes are extremal (and positively charged). The initial time slice of this metric then has a wormhole geometry, with throats of area $\mathcal{A}_i = 4\pi m_i^2$ around each singularity. Taking the BL limit as above ($f_i \rightarrow 0$) we have the conformal factor $F \rightarrow 1$; thus the initial time slice of MP is equivalent to the extremal BL metric, with the conformal factor $G$ identified with $U$. To see that this holds, we can also consider the areas of the BL minimal surfaces in the extremal limit as shown in Fig. \ref{fig:N=2 area asyptotics}. The ADM masses $m_1, m_2 \rightarrow 1$ and so the area asymptotes to the finite value $4\pi m_1^2$, as expected for the MP geometry.

\begin{figure}[ht]
    \centering
    \includegraphics[width=0.7\linewidth]{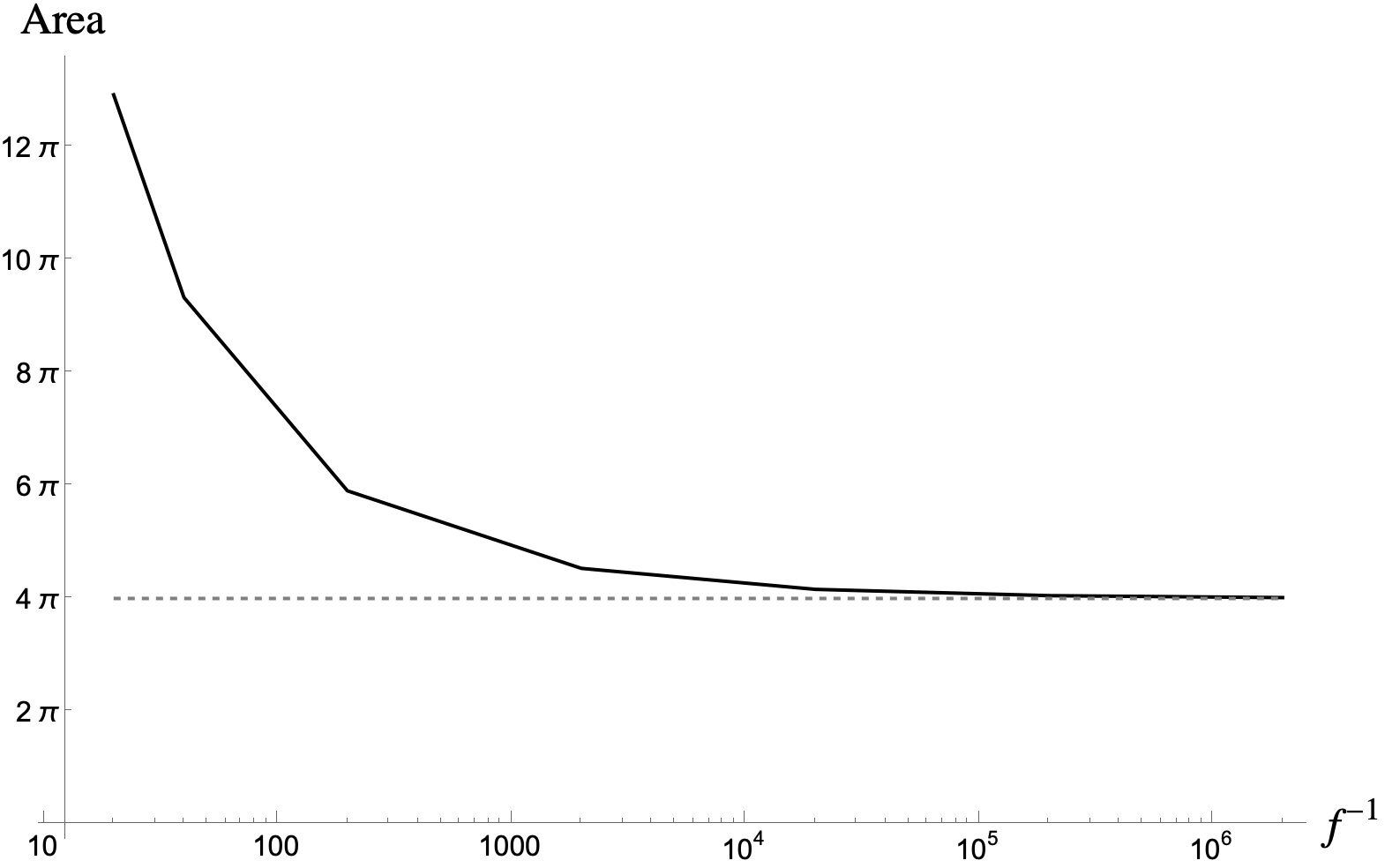}
    \caption{Area of black hole of mass $m_1$ in the extremal limit shown in Fig. \ref{fig:Charged decreasing alpha}}
    \label{fig:N=2 area asyptotics}
\end{figure}

It should be noted that taking the $n=3$ BL metric in this extremal limit ($f \rightarrow 0$), there are three extremal black holes but only two minimal surfaces. Since the wormhole is becoming longer as we approach the extremal limit, the outer minimal surface disappears at some point as it is pushed ``below'' the two inner minimal surfaces, since, as shown in Fig.\ \ref{fig:Charged decreasing alpha}, the outer minimal surface seems to approach and ``collide'' with the inner surfaces. Thus there are only two minimal surfaces that can be seen in the BL geometry due to the lengthening of the wormhole; nevertheless, there exist three extremal black holes.

In this limit, the charged BL metric can thus be thought of as a generalization of MP on the initial-time slice, allowing for non-extremal charged solutions. As mentioned before, these extremal/near-extremal black holes have known duals in certain supergravity theories, thus possibly allowing for a direct computation of the entanglement entropy and thus a direct verification of the RT formula in flat spacetimes. However, this is beyond the scope of this paper.

\begin{figure}[ht]
    \centering
    \begin{subfigure}[b]{0.32\textwidth}
        \centering
        \includegraphics[width =\textwidth]{figures/Brill-Lindquist_figures/Charged_figures/N=2_a=1.png}
        \caption*{$f_1 = g_2 =: a = 1$}
        \label{fig:dipole, a=1}
    \end{subfigure}
    \begin{subfigure}[b]{0.32\textwidth}
        \centering
        \includegraphics[width = \textwidth]{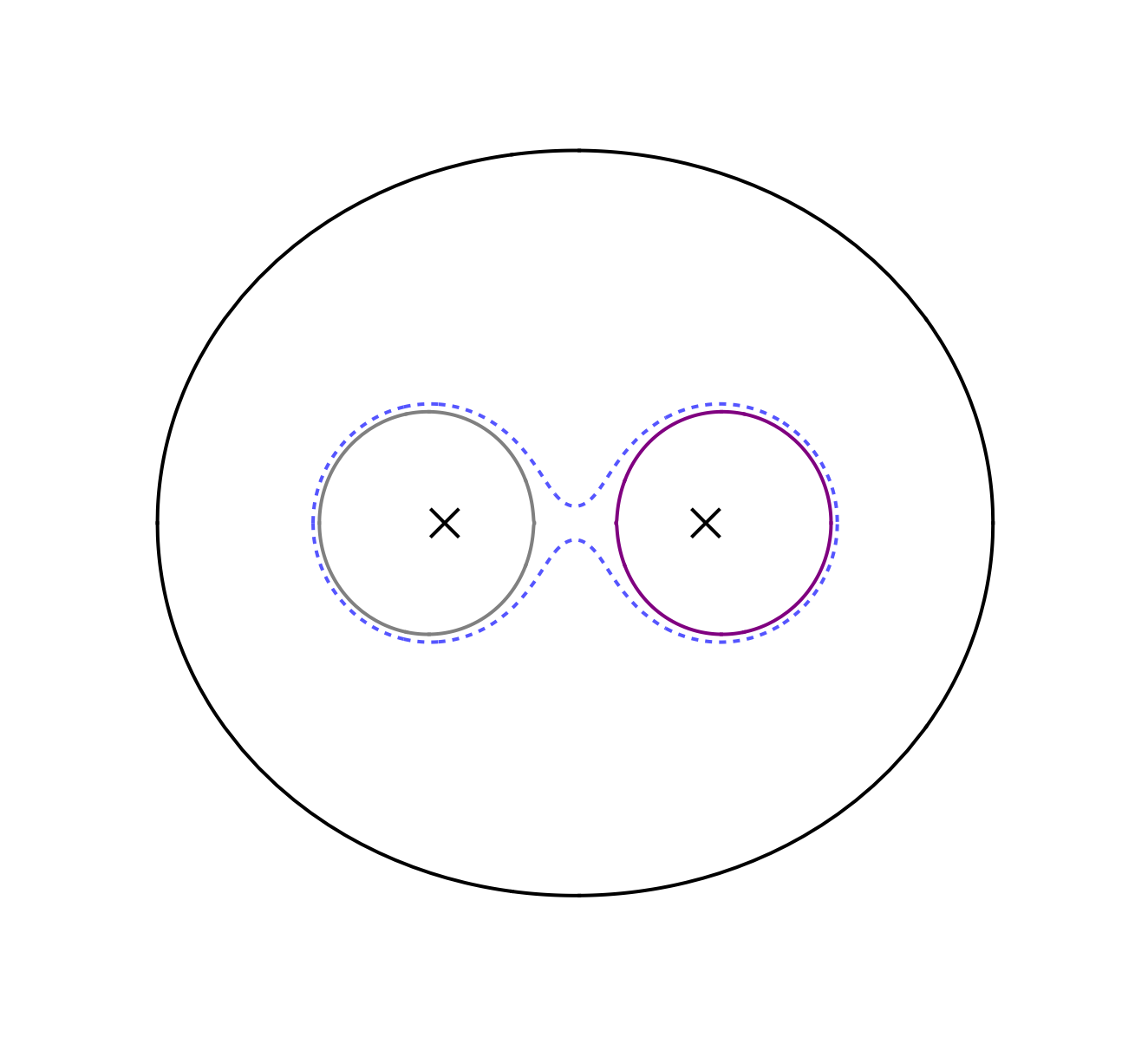}
        \caption*{$a = 0.5$}
        \label{fig:dipole, a=0.5}
    \end{subfigure}
    \begin{subfigure}[b]{0.32\textwidth}
        \centering
        \includegraphics[width = \textwidth]{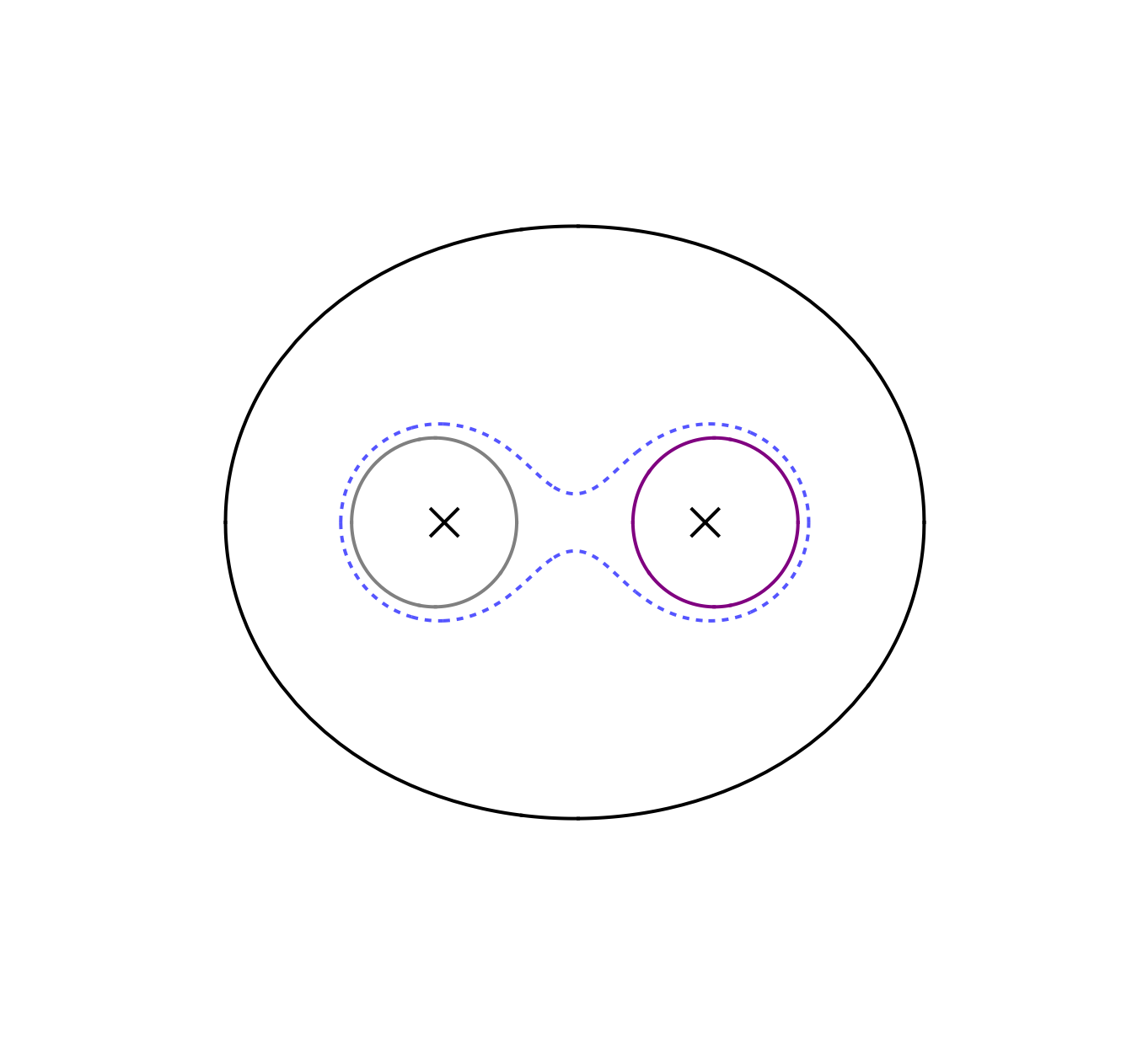}
        \caption*{$a = 0.25$}
        \label{fig:dipole, a=0.25}
    \end{subfigure}\\[\smallskipamount]
    \vspace{0.3 cm}
    \begin{subfigure}[b]{0.32\textwidth}
        \centering
        \includegraphics[width =\textwidth]{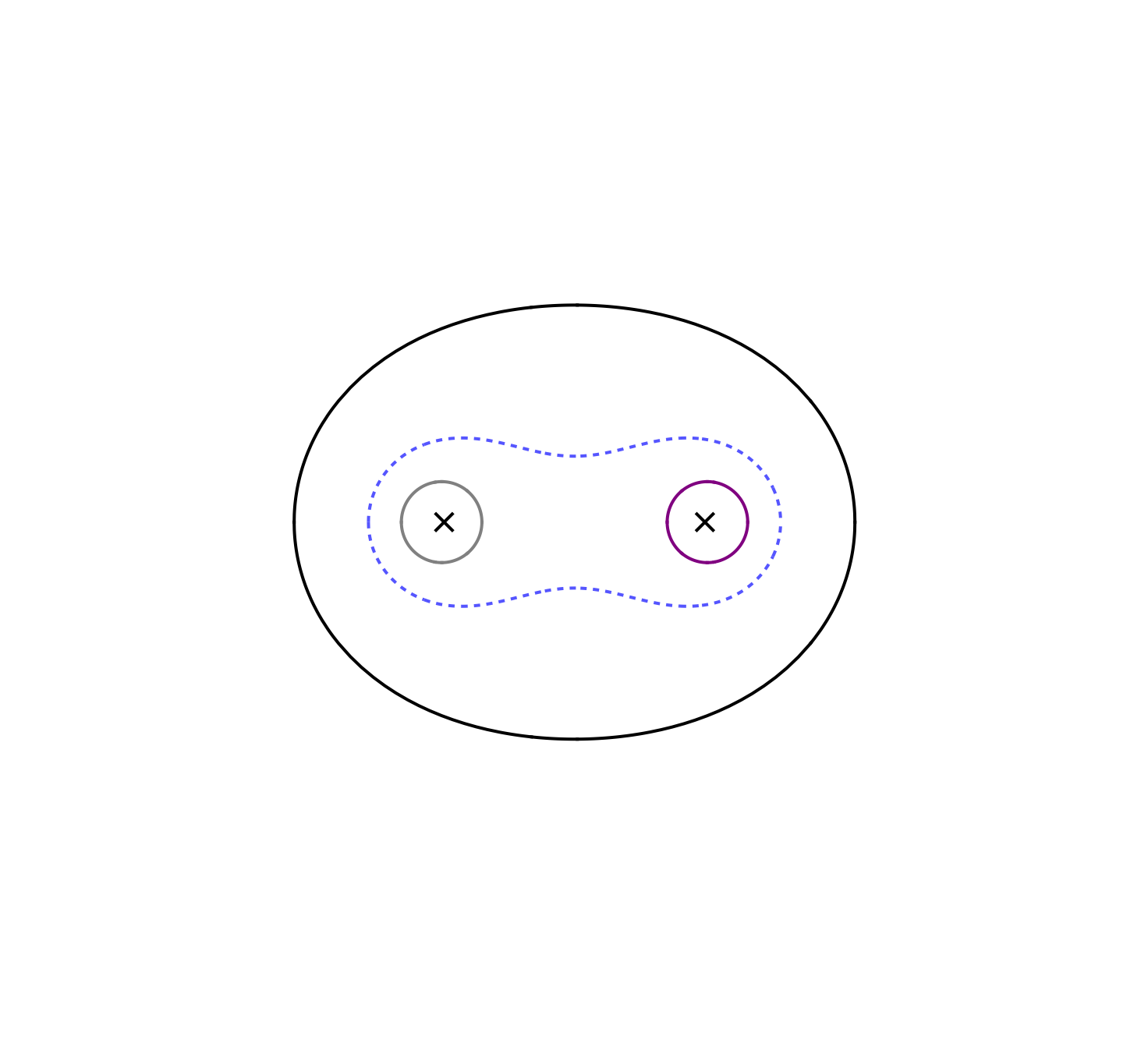}
        \caption*{$a = 0.05$}
        \label{fig:dipole, a=0.05}
    \end{subfigure}
    \begin{subfigure}[b]{0.32\textwidth}
        \centering
        \includegraphics[width = \textwidth]{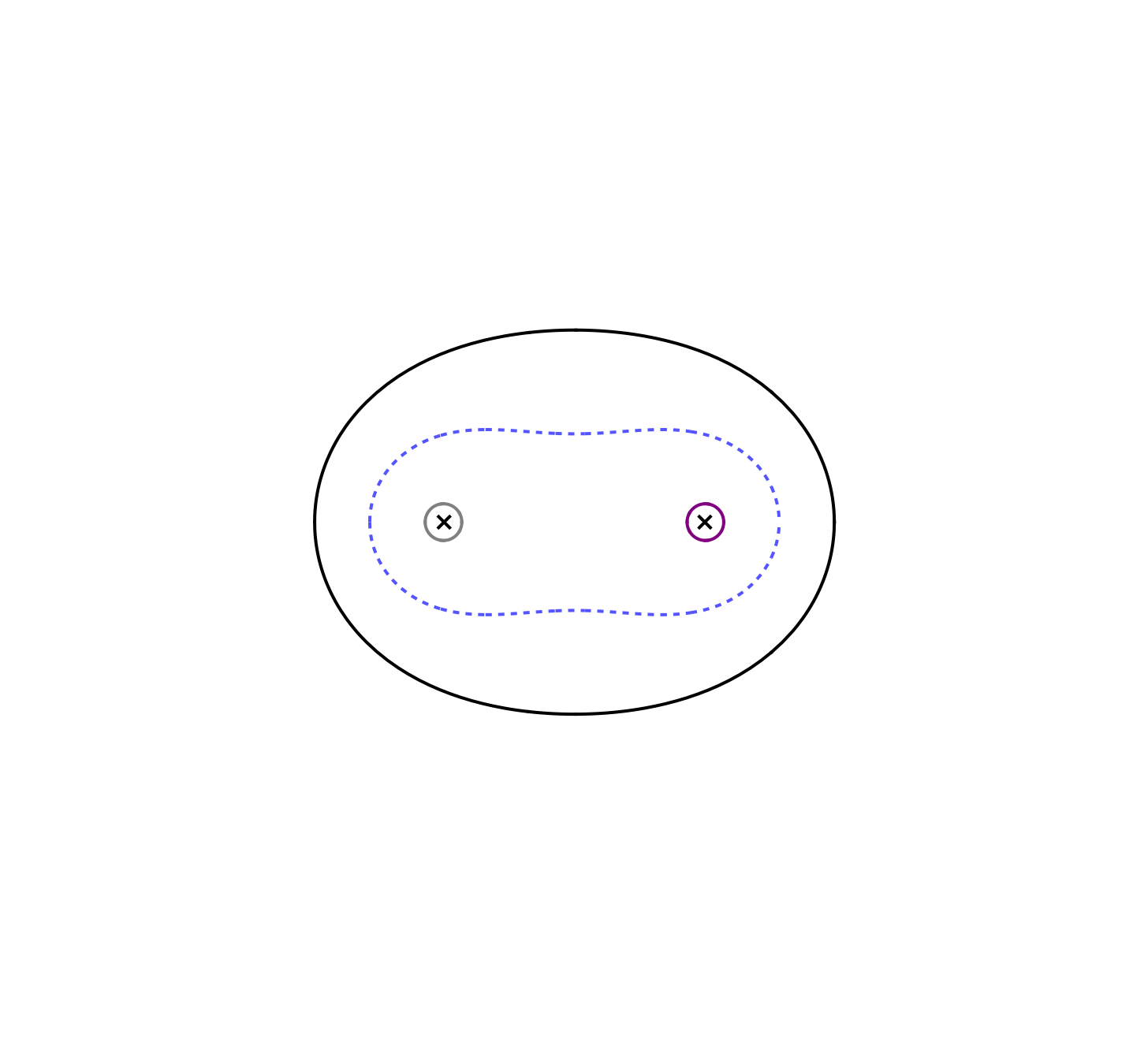}
        \caption*{$a = 0.01$}
        \label{fig:dipole, a=0.01}
    \end{subfigure}
    \begin{subfigure}[b]{0.32\textwidth}
        \centering
        \includegraphics[width = \textwidth]{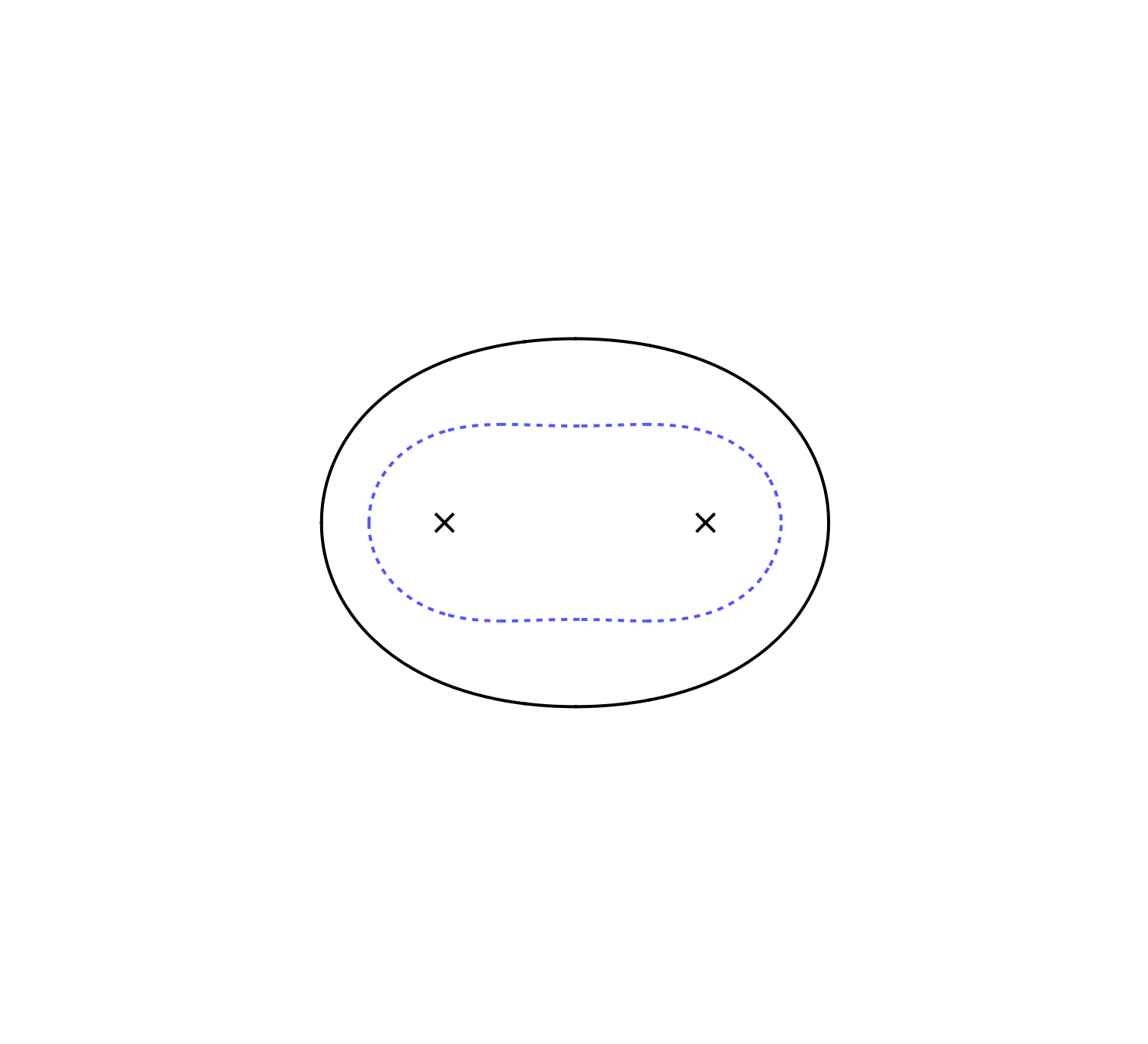}
        \caption*{$a = 0$}
        \label{fig:N=2, a=0}
    \end{subfigure}
    \vspace{0.2 cm} 
    \caption{Extremal surfaces for dipole Brill-Lindquist solution, with $g_1 = g_2 =: b = 1$ and separation $r_{12} = 1$ fixed. As $a \rightarrow 0$, the ADM mass and charge $m, q \rightarrow b + \frac{b^2}{r_{12}} = 2$}
    \label{fig:Dipole decreasing alpha}
\end{figure}

Another interesting configuration that can be obtained from the charged BL metric is shown in Fig.\ \ref{fig:Dipole decreasing alpha}, where we consider a dipole geometry. The two singularities have equal and opposite charge, and thus the 3rd black hole (corresponding to the outer minimal surface) is uncharged. For $n=3$, we set $q_1 = -q_2 =: q$ and $m_1 = m_2 =: m$ to obtain,
\begin{align}
    f_1 &= g_2 =: a \\
    f_2 &= g_1 =: b \\
    \implies q &= b - a + \frac{b^2 - a^2}{r_{12}} \\
    m &= a + b + \frac{a^2 + b^2}{r_{12}}\,.
\end{align}

As seen in Fig. \ref{fig:Dipole decreasing alpha}, the outer minimal surface no longer disappears in the extremal limit. In fact, along with the extremal index 1 surface (dotted blue line) it asymptotically approaches a finite area (the geometry is shown as the last snapshot) while the inner minimal surfaces are pushed to infinity. Similar to the previous limit, the areas of the charged black holes asymptotically approach finite values in the extremal limit, shown in Fig.\ \ref{fig:Dipole area}.

\begin{figure}[ht]
    \centering
    \includegraphics[width=0.7\linewidth]{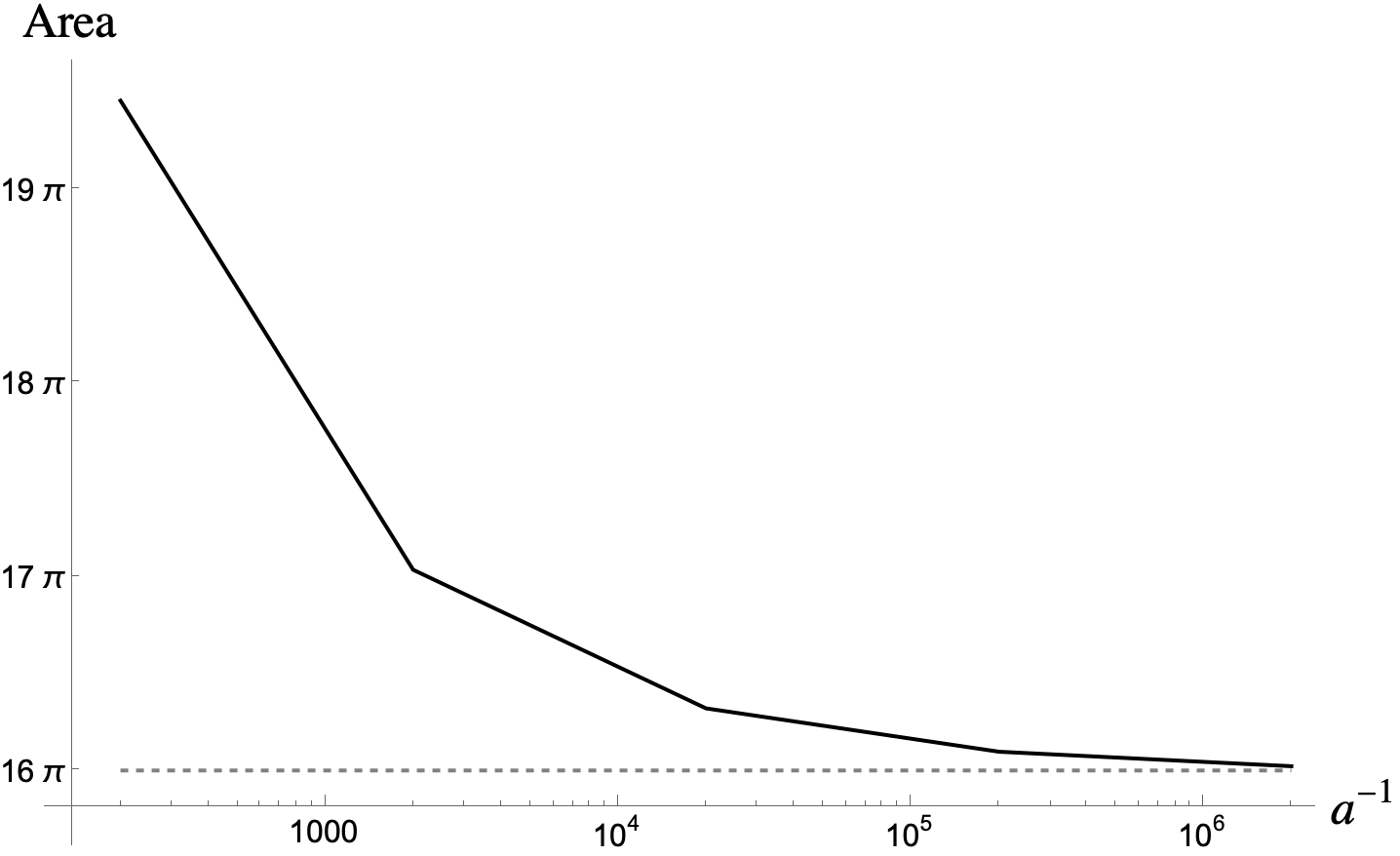}
    \caption{Area of black hole of mass $m_1$ in the extremal limit shown in Fig. \ref{fig:Dipole decreasing alpha} (($a \rightarrow 0$))}
    \label{fig:Dipole area}
\end{figure}

The mass $m \rightarrow 2$, which once again implies the relation $|\gamma| \rightarrow 4 \pi m^2$ for both (extremal) black holes. This area is again consistent with the horizon area in MP, even though MP does not allow for negatively charged black holes. Such dipole configurations thus offer a novel near-extremal geometry, which may similarly admit exact CFT duals.

\section{De Sitter asymptotics}
\label{sec:closed}

De Sitter space is famously confusing in quantum gravity. It is not even known with any certainty whether there exist quantum gravity theories with dS vacua \cite{Goheer:2002vf,Kachru:2003aw,Bena:2009xk,Obied:2018sgi,McAllister:2024lnt}. Even if, as has been conjectured, there do not exist absolutely stable dS vacua, there likely exist long-lived ones (such as the ones our universe seems to have been in during inflation and seems to be in now). Either way, we can consider spacetimes with dS asymptotic regions, although in the unstable case these would not be literally asymptotic but would decay at some very large time. The conformal boundary of such an asymptotic region is spacelike, forming either a past or future boundary of the spacetime. Our purpose in this section will be to explore possible consistent generalizations of the HRT formula and conjecture \ref{mainconjecture} to such spacetimes, and see what insight can be gained from them. We will find that multiple distinct generalizations are possible, with different interpretations.

In the first subsection below, we focus on a particularly simple class of spacetimes, before turning in the next subsection to the general case.

\subsection{Example: AdS-dS wormhole}

\begin{figure}
    \centering
    \includegraphics[width=\textwidth]{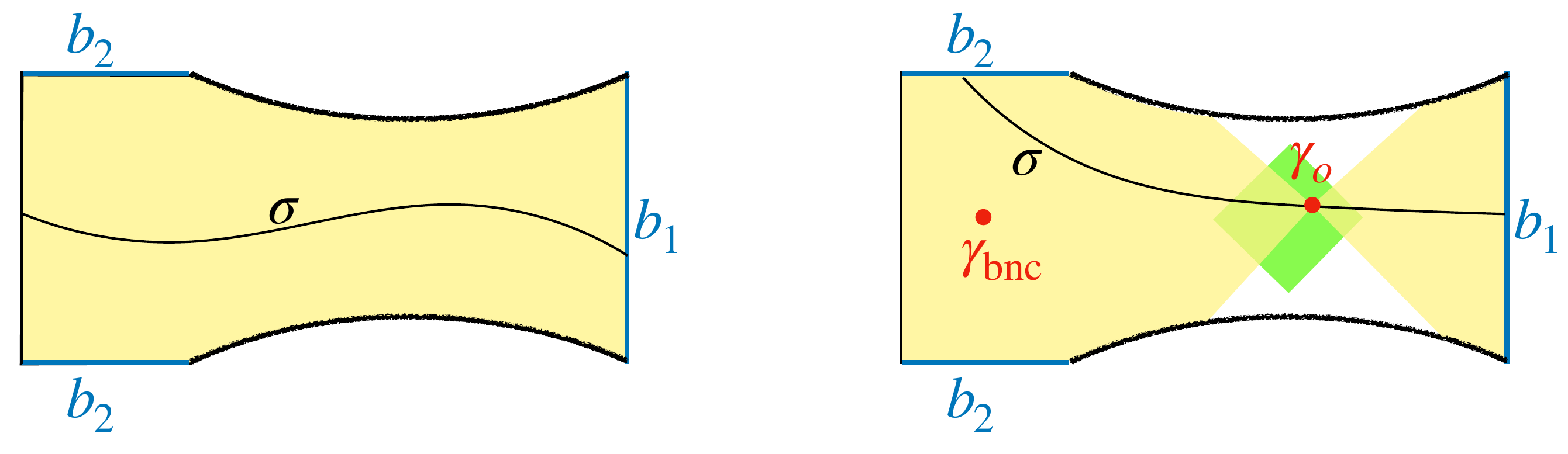}
    \caption{
Penrose diagram for a wormhole connecting asymptotically AdS and dS regions. An observer in either region sees a black hole. Left: Orthodox application of HRT formula to AdS boundary $b_1$: $b_1$ is null-spacelike-homologous, since its intersection with any Cauchy slice $\sigma$ is null-homologous on $\sigma$. The entire spacetime is the entanglement wedge of $b_1$. Right: Heterodox application of HRT formula: instead of restricting to Cauchy slices, we allow achronal hypersurfaces $\sigma$ that end on the dS boundary $b_2$, and furthermore treat the intersection $\sigma\cap b_2$ as non-trivial in homology; we then find an extremal surface $\gamma_o$ lying in the causal shadow (green shading), which defines entanglement wedges (yellow). In time-symmetric spacetimes, there is an index-1 extremal surface $\gamma_{\rm bnc}$ on the time-symmetric slice, lying in the dS region; typically, however, $\gamma_{\rm bnc}$ is a bounce rather than a bulge, meaning that it has a positive temporal deformation mode.
  }  
    \label{fig:dSAdSwormhole}
\end{figure}

Let's begin with a spacetime containing a wormhole connecting an asymptotically dS region to another asymptotic region which, for concreteness, we choose to be asymptotically AdS. We call the AdS region $a_1$ and the dS region $a_2$ (see Fig.\ \ref{fig:dSAdSwormhole}). Because of the black hole in dS, there will necessarily be some matter on the other side of the dS region. (We could instead put another black hole there; we will discuss this below.) We call the AdS boundary $b_1$ and the union of the future and past dS boundaries $b_2$. An observer in either region sees a black hole. An explicit construction of such a spacetime can be seen in Fig.\ \ref{fig:dSAdSgluings} (A1), involving part of a Schwarzschild-dS solution glued to part of a Schwarzschild-AdS solution along a domain wall.

An orthodox application of the HRT formula to the AdS boundary $b_1$ returns $S(b_1)=0$: $b_1$ is spacelike-homologous to the empty set, which is therefore the HRT surface. (See left side of Fig.\ \ref{fig:dSAdSwormhole}.) Furthermore, its entanglement wedge is the entire spacetime. The same result is obtained using the maximin formula, since on every Cauchy slice its intersection with $b_1$ is null-homologous. It is also obtained using the minimax formula, provided the spacetime homology constraint is enforced relative to $b_2$ as well as the singularities, so that $b_1$ is again null homologous. From this viewpoint, the spacetime represents a microstate of the AdS black hole (or a set of $O(\GN^0)$ microstates).

Nonetheless, there exists a closed extremal surface $\gamma_o$ lodged in the throat of this wormhole. (See right side of Fig.\ \ref{fig:dSAdSwormhole}.) For example, in the glued solution of Fig.\ \ref{fig:dSAdSgluings} (A1), $\gamma_o$ is the bifurcation surface of the Schwarzschild-dS spacetime. In a general spacetime, the existence of a closed extremal surface can be proven using either maximin or minimax, but to do so we must change the rules of the game. In maximin, we should choose to include in the maximization not just Cauchy slices but also achronal hypersurfaces that end on $b_2$. If we consider the intersection of such a hypersurface $\sigma$ with $b_2$ to be non-trivial in homology (i.e.\ if we do \emph{not} work relative to $b_2$), then its intersection with $b_1$ is also non-trivial. The induced metric on $\sigma$ is large near both the AdS and dS boundaries, so there exists a minimal surface somewhere in the middle. Now, if we push $\sigma$ forward or backward in time, close to either singularity, the minimal surface area will go to zero. Therefore, somewhere in the middle, there exists a maximin surface $\gamma_o$, which is extremal by the usual maximin arguments.\footnote{As usual, there is a large freedom in choosing the maximin hypersurface $\sigma$, including where it ends on $b_2$. Therefore, $\sigma$ should not be viewed as dividing the dS boundary into subregions in any meaningful way.} With minimax, we again strengthen the spacetime homology constraint so that $b_2$ is \emph{not} considered trivial (i.e.\ we do not work relative to it). A time-sheet $\tau$ homologous to $b_1$ then ends on the past and future singularities, where the metric collapses, implying the existence of a maximal surface. If we push $\tau$ toward $b_1$ or toward the center of the dS region, the maximal area increases, so there is a minimax surface $\gamma_o$ somewhere in the middle, which again is extremal by the usual arguments.

The surface $\gamma_o$ is HRT-like in that it is a local minimum of the area against spatial deformations and a local maximum against temporal ones. By focusing, $\gamma_o$ lies in the causal shadow, i.e.\ it is out of causal contact with both $b_1$ and $b_2$. From the orthodox viewpoint espoused above, in which this spacetime represents a microstate of the AdS black hole, $\gamma_o$ would be the outermost extremal surface, whose area gives a coarse-grained entropy of the AdS black hole microstate \cite{Engelhardt:2018kcs}.

We can try to go further, and appeal to the python's lunch conjecture (PLC) \cite{Brown:2019rox}, according to which $\gamma_o$ would be interpreted as a constriction, and the dS region as a python. But there is a fly in the ointment: there may not exist a bulge surface, a necessary ingredient in the PLC. A bulge surface is an extremal surface in the given homology class with exactly one spatial negative mode and no temporal positive modes. In the time reflection-symmetric case, there does exist an extremal surface $\gamma_{\rm bnc}$ with one spatial negative mode on the symmetric slice in the dS region. However, at least in the simplest solutions (such as the glued solution of Fig.\ \ref{fig:dSAdSgluings} (A1), where $\gamma_{\rm bnc}$ is the dS cosmological bifurcation surface), $\gamma_{\rm bnc}$ also has a temporal positive mode; in the classification of extremal surfaces of \cite{Engelhardt:2023bpv}, this makes it a so-called \emph{bounce} surface.\footnote{It is an interesting question whether there \emph{always} exists a bounce surface in a spacetime with these asymptotics.} Thus, the status of the PLC in such spacetimes is unclear.

A different interpretation of this spacetime (which we could call ``heterodox'' to distinguish it from the above orthodox one) is that there is a Hilbert space $\HH^{b_2}$ associated with the dS boundaries, presumably equal to the Hilbert space obtained by quantizing the quantum gravity theory with just dS boundary conditions, and the spacetime represents an entangled state in $\HH^{b_1}\otimes\HH^{b_2}$ with entanglement entropy equal to $|\gamma_o|/4\GN$. Since $\gamma_o$ lies in the causal shadow, one can define an entanglement wedge associated with each boundary. (See Fig.\ \ref{fig:dSAdSwormhole}, right side.) Note that there is just one surface associated with both the future and past dS boundaries, so presumably one Hilbert space $\HH^{b_2}$ for both of them.

Yet another interpretation of this spacetime, in the spirit of \cite{Jafferis:2020ora,Balasubramanian:2023xyd}, is that the black hole inside the dS region can be thought of as modeling an observer whose Hilbert space has dimension roughly $e^{|\gamma_o|/4\GN}$, with the AdS Hilbert space $\HH^{b_1}$ being an auxiliary Hilbert space that purifies the observer.

We have presented three different plausible interpretations of this spacetime and of the surface $\gamma_o$. We emphasize that these interpretations are not necessarily mutually exclusive; they may simply be answering different questions or be valid in different approximations.

Although we have been considering a dS region that is asymptotically dS in both the past and future, the same considerations go through if the spacetime is asymptotically dS in just the past, with a singularity in the future, or vice versa.

\subsection{De Sitter networks}
\label{sec:dSnetworks}

\begin{figure}
    \centering
    \includegraphics[width=\textwidth]{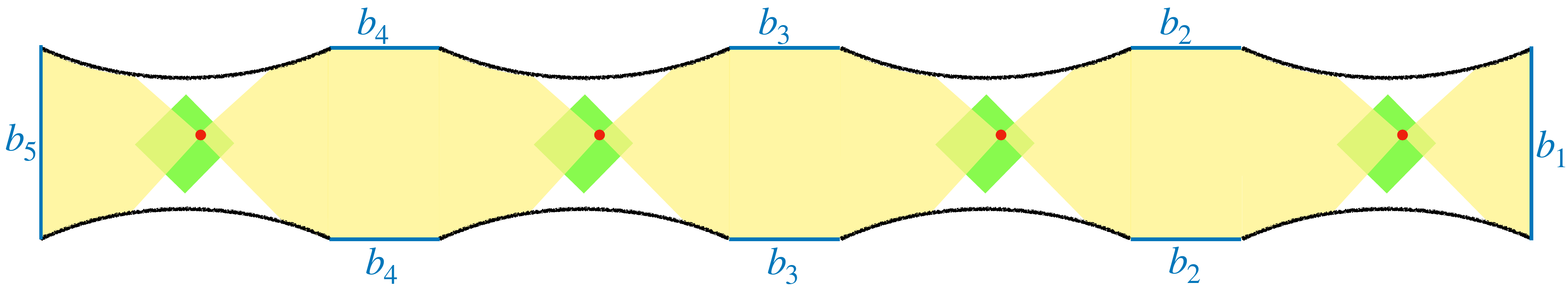}
    \caption{
Chain of dS regions connected by wormholes and capped at either end by an AdS region. The green regions are the the causal shadow and the red dots are HRT-like extremal surface. The yellow regions are entanglement wedges, according to the interpretation of the spacetime as an entangled state in the tensor product $\HH^{b_5}\otimes\HH^{b_4}\otimes\HH^{b_3}\otimes\HH^{b_2}\otimes\HH^{b_1}$.
  }  
    \label{fig:dSchain}
\end{figure}

We can also consider more complicated networks of universes, including dS regions, connected by wormholes. An example is shown in Fig.\ \ref{fig:dSchain}, involving a chain of dS regions capped at either end by an AdS region. Each wormhole has a causal shadow; by the same argument as in the previous subsection, each causal shadow hosts an HRT-like extremal surface. The orthodox interpretation of HRT would say that this spacetime represents an entangled state in the Hilbert spaces of the two AdS boundaries $\HH^{b_5}\otimes\HH^{b_1}$, with entanglement entropy given by the area of the smallest HRT-like surface. On the other hand, the heterodox interpretation would say that there is a Hilbert space $\HH^{b_i}$ ($i=2,3,4$) associated with each pair of dS asymptotic regions, and this spacetime represents an entangled state in the tensor product of all of the Hilbert spaces $\HH^{b_5}\otimes\HH^{b_4}\otimes\HH^{b_3}\otimes\HH^{b_2}\otimes\HH^{b_1}$, with an entanglement wedge associated with each factor. As in the state represented by the Brill-Lindquist chain shown in Fig.\ \ref{fig:BLchain}, this state lacks long-range entanglement and is a Markov state.

Of course, much more complicated networks of universes can be imagined, involving arbitrary combinations of AdS, dS, and Minkowski asymptotic regions. For any such spacetime, there are two classes of HRT-like surfaces: those found by imposing the homology constraint relative to the dS boundaries (treating them the same as singularities and end-of-the-world branes), corresponding to the orthodox interpretation in which there are \emph{not} Hilbert spaces associated with those boundaries; and those found by considering the dS boundaries to be non-trivial in homology (treating them the same as AdS or Minkowski boundaries), corresponding to the heterodox interpretation in which there is a Hilbert space associated with each dS boundary (or to each future/past pair of boundaries). The first class of surfaces obey all of the properties of HRT surfaces in asymptotically AdS and Minkowski spacetimes enumerated in section \ref{sec:properties}, as the proofs of those properties go through in the presence of dS boundaries. Whether or not this is also true for the second class of surfaces is not obvious (for example, the projection step in many of the proofs makes use of Cauchy slices, whereas the maximin algorithm that finds those surfaces does not involve Cauchy slices). This question constitutes a non-trivial test of the heterodox interpretation, which we leave to future work.

\section{Discussion}
\label{sec:discussion}

In this final section, we consider a few last issues and loose ends, namely tensor networks representing asymptotically flat spacetimes, how to create entangled universes, the S-matrix interpretation, and the formation of wormholes by collapse. Finally, we close with a set of open questions and future directions.

\subsection{Tensor networks}
\label{sec:TNs}

As discussed in subsection \ref{sec:extensions}, a coarse TN description should exist for asymptotically Minkowski spacetimes. Since there is no CFT dual, this does not necessarily reproduce UV physics on the boundary, but following \cite{Bao:2018pvs} there should exist a decomposition of the Cauchy slice $\Sigma$ along RT surfaces $\gamma_i$ of the (entire) boundary regions $A_i$.

\begin{figure}[ht]
    \centering
    \begin{subfigure}[b]{0.48\textwidth}
        \centering
        \includegraphics[width=\textwidth]{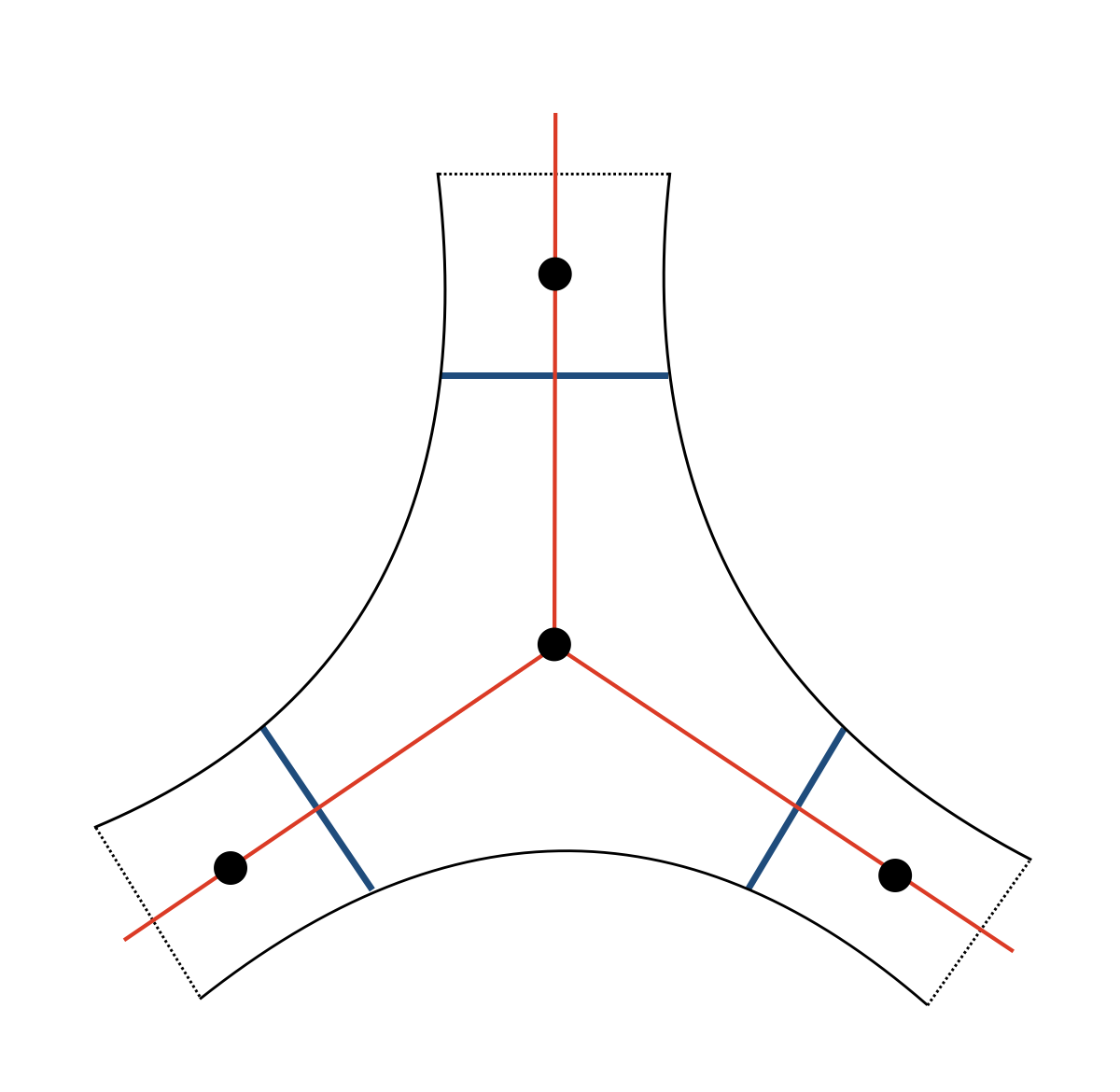}
    \end{subfigure}
    \begin{subfigure}[b]{0.48\textwidth}
        \centering
        \includegraphics[width=\textwidth]{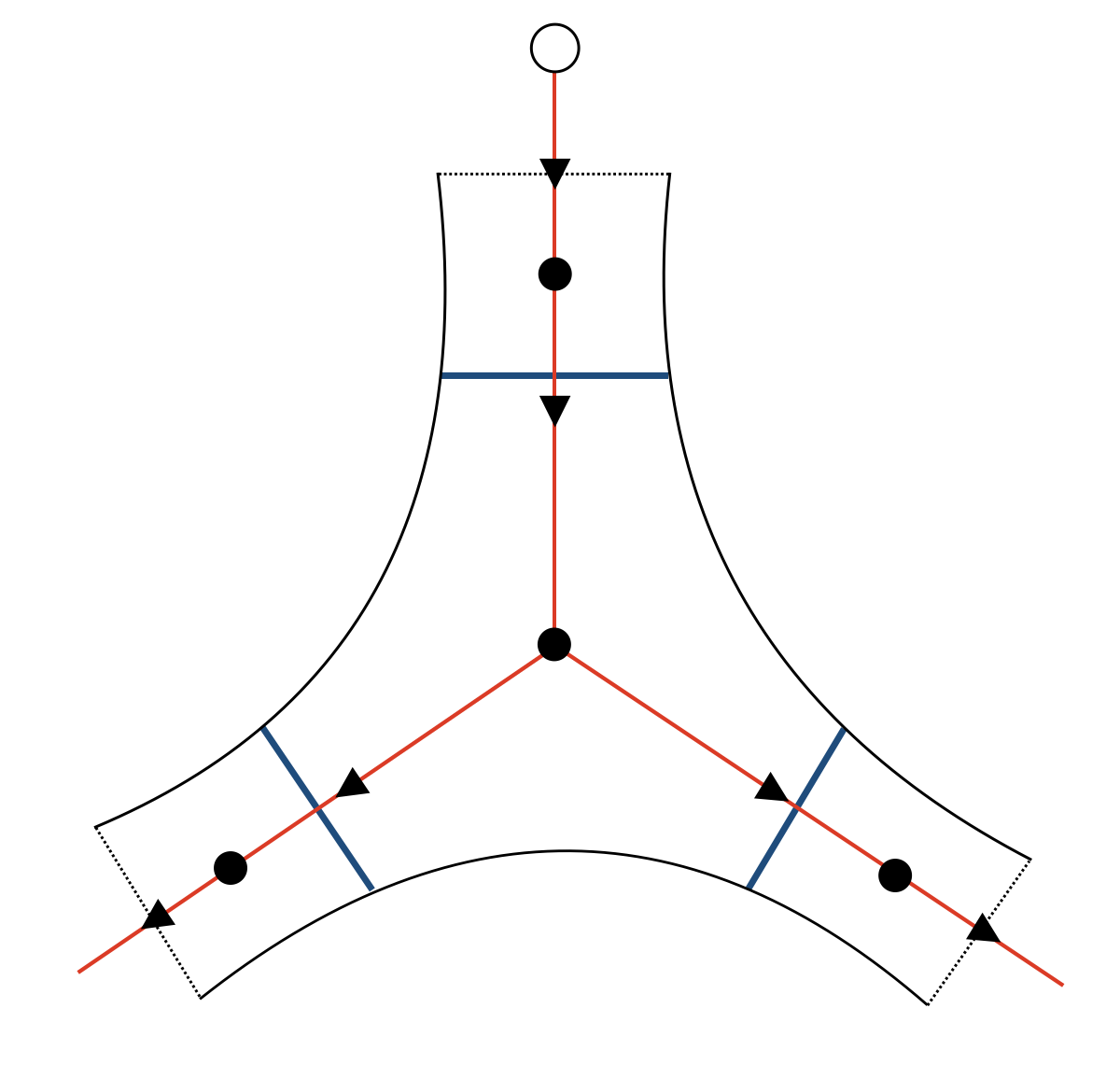}
    \end{subfigure}
    \caption{(Left) Target discretization of the $n=3$ Brill-Lindquist wormhole, given a root-leaf orientation (right) by picking a root node (denoted by white circle)}
    \label{fig:TN target}
\end{figure}

Following the procedure outlined in \cite{Bao:2018pvs}, we can construct a tree tensor network for our ``target'' geometrization of the $n=3$ BL metric by RT (minimal) surfaces. Here we consider a parameter regime where each of $\gamma_1,\gamma_2,\gamma_3$ are present and minimal, homologous to the entire boundaries $A_1,A_2,A_3$. As shown in Fig.\ \ref{fig:TN target}, we then have a target discretization of the wormhole along the (non-intersecting) RT surfaces, which we must provide with a root-leaf orientation to ensure isometry properties of the tensor network. One may consider this ``state'' to correspond to a dual CFT living on the boundary by gluing AdS patches sufficiently far away from the minimal surfaces; however, this is not necessary for the discussion. Regardless of the interpretation, we consider the tensor network to represent a state $\psi^{A_1 A_2 A_3}$ in the tensor product Hilbert space $\mathcal{H} = \mathcal{H}_{A_1} \otimes \mathcal{H}_{A_2} \otimes \mathcal{H}_{A_3}$. The state is constructed by contracting bulk tensors over the bond Hilbert spaces $\mathcal{H}_{\gamma_i}$, where the bond dimensions are $|\gamma_i|/4\GN$. On the tensor network diagrams, upper (lower) indices are indicated by outward (inward) pointing arrows.

Then we follow the inductive procedure outlined in \cite{Bao:2018pvs}: add RT surfaces from the leaves up to the root, adding all ``child'' surfaces before the ``parent'' surface. We also invoke the entanglement distillation procedure, wherein each bond can be distilled into EPR pairs $\phi_i$ that are then contracted over (instead of contracting tensors with each other) to form a PEP (projection of entangled pairs) network. The distillation results in the state being decomposed into $|\gamma_i|/4\GN$ Bell pairs and a leftover state $\sigma_i$, which exists in a Hilbert space of subleading dimension.

\begin{figure}[ht]
    \centering
    \begin{subfigure}[b]{0.49\textwidth}
        \centering
        \includegraphics[width=\textwidth]{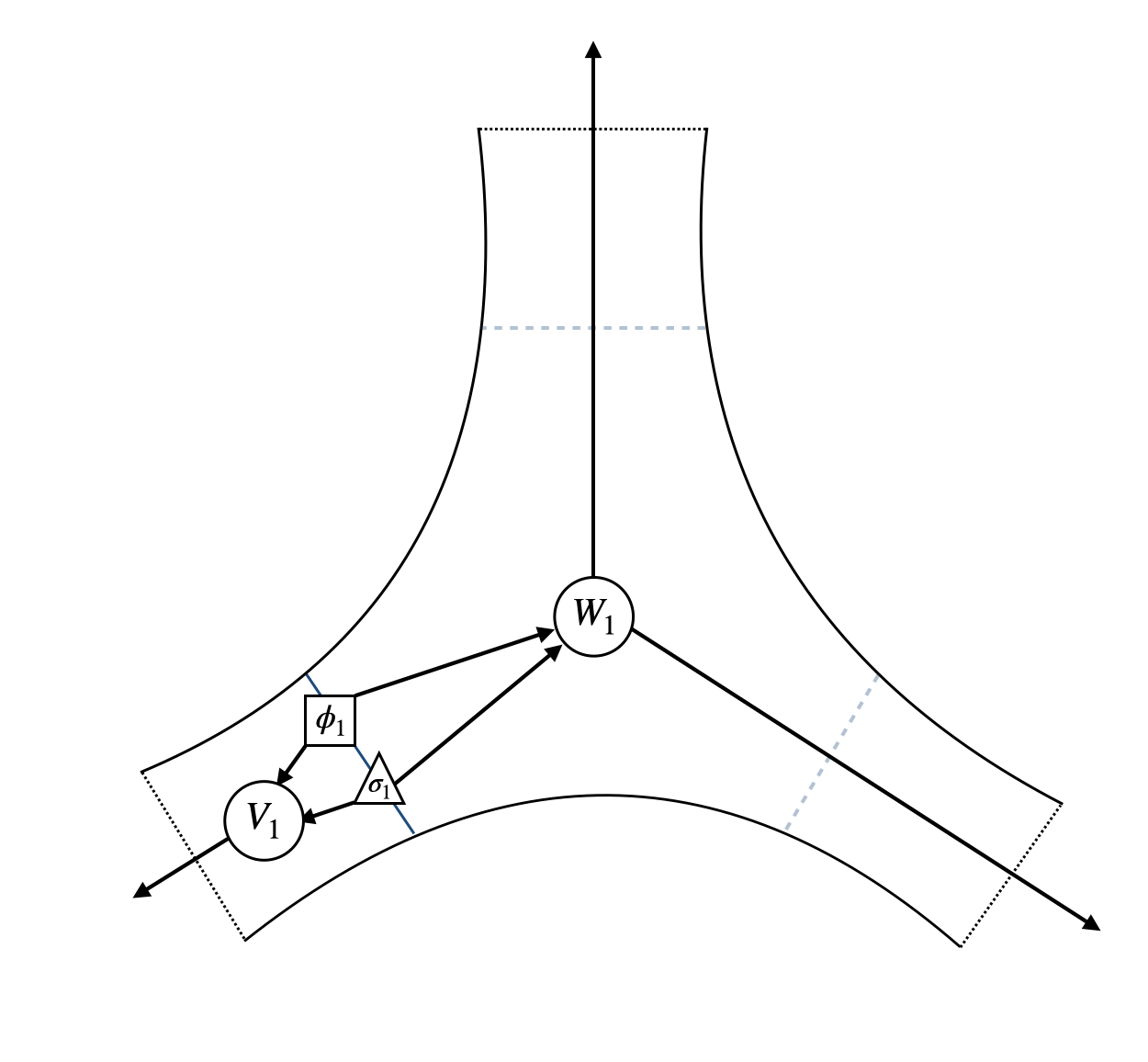}
        \caption{}
    \end{subfigure}
    \begin{subfigure}[b]{0.49\textwidth}
        \centering
        \includegraphics[width=\textwidth]{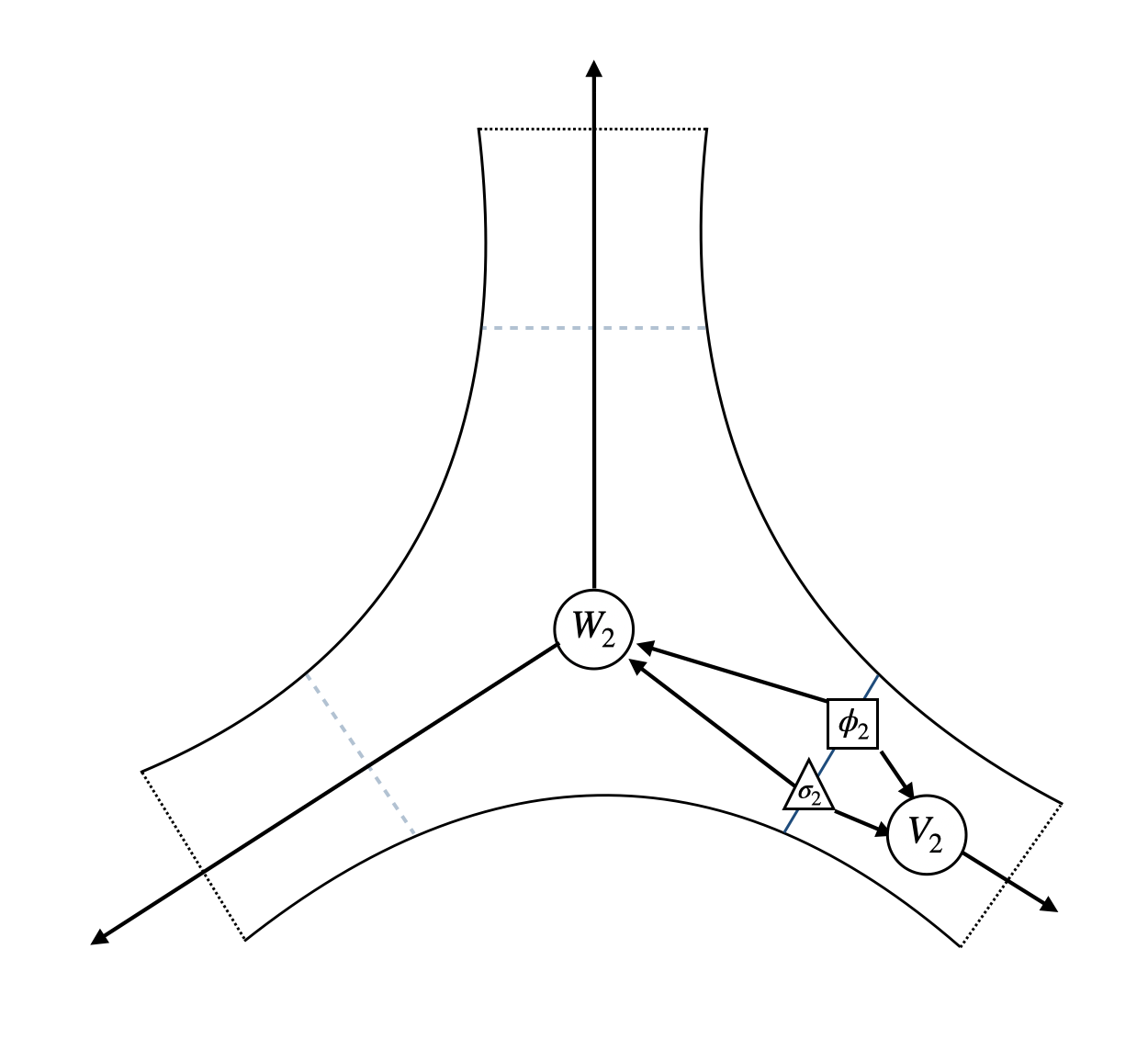}
        \caption{}
    \end{subfigure}\\
    \begin{subfigure}[b]{0.49\textwidth}
        \centering
        \includegraphics[width=\textwidth]{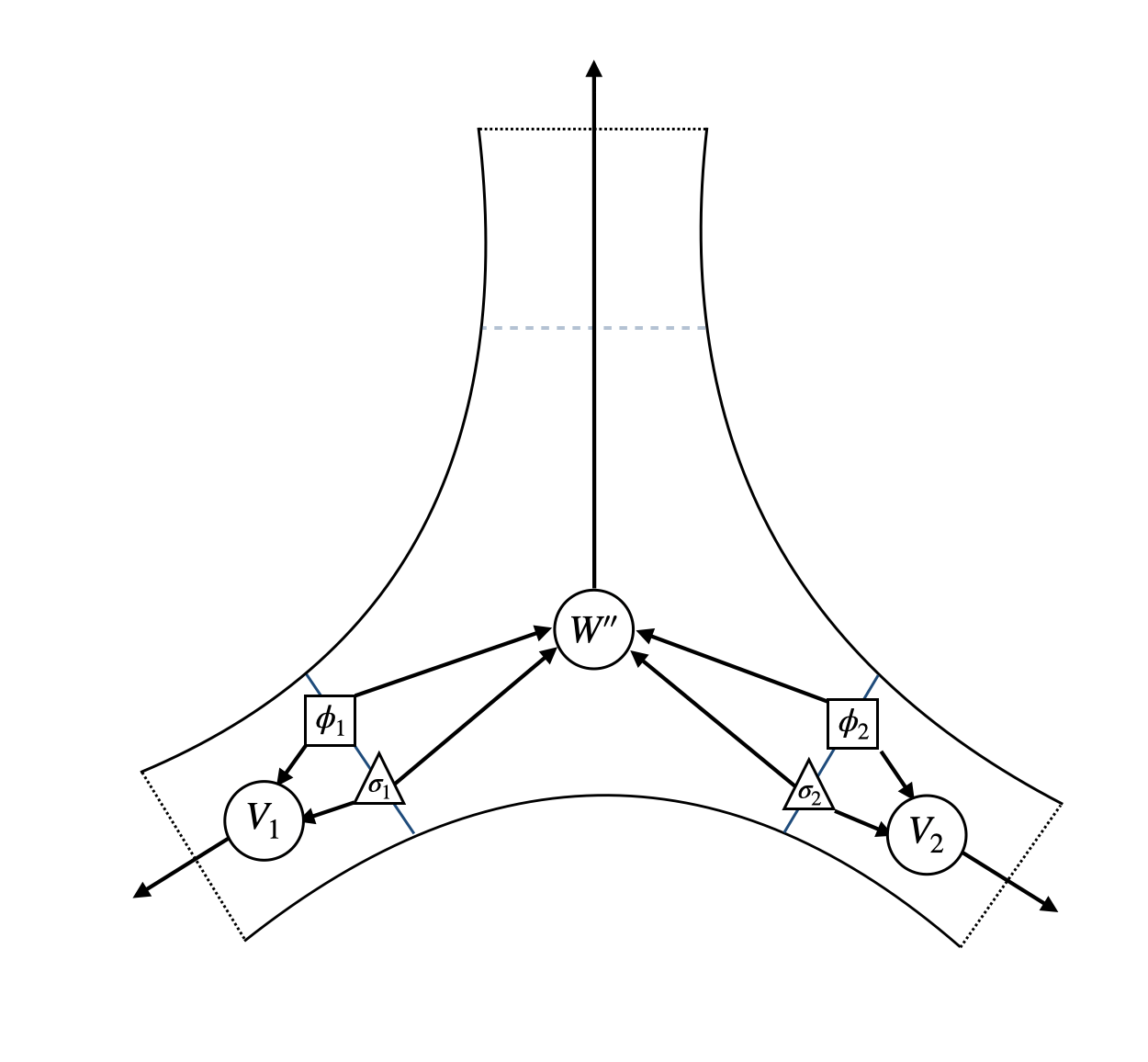}
        \caption{}
    \end{subfigure}
    \begin{subfigure}[b]{0.49\textwidth}
        \centering
        \includegraphics[width=\textwidth]{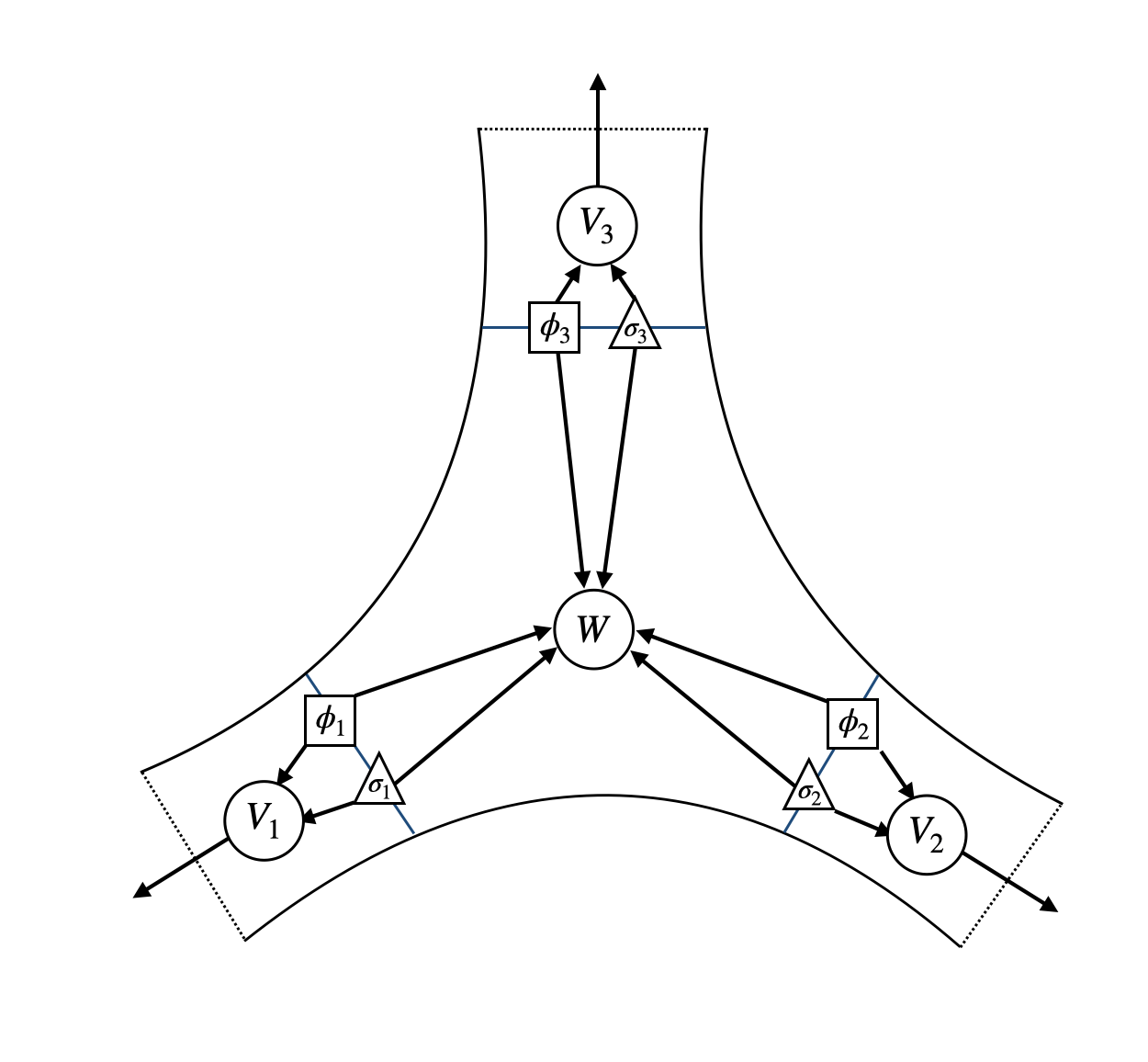}
        \caption{}
    \end{subfigure}
    \caption{(a) We start by adding one of the `child' surfaces, $\gamma_1$, which separates the bulk into the boundary tensors $V_1$ and $W_1$ contracted along $\phi_1,\sigma_1$. (b) The same procedure can be done for the second `child' surface, $\gamma_2$, adding the tensors $V_1$, $W_2$. (c) To preserve the structure of the states $\phi_1,\sigma_1$ on the already constructed surface $\gamma_1$, we can replace $W_2$ by $W''$ given by \eqref{eq:W2W''} to accomodate both $\gamma_1,\gamma_2$ in the TN. (d) A similar procedure is done to add $\gamma_3$, where we first have our added boundary tensor $V_3$ contracted to some bulk tensor $W'$, which is then replaced by $W$ given by \eqref{eq:W'W} to form the full network}
    \label{fig:BL n=3 TN}
\end{figure}

As outlined in Fig.\ \ref{fig:BL n=3 TN}, we add the RT surfaces inductively, forming a bipartite network at each stage and modifying the tensors to respect the existing structure. For example, when adding $\gamma_2$ after adding $\gamma_1$, to preserve the structure of the states $\phi_1,\sigma_1$, we must replace our bulk tensor $W_2$ by
\begin{equation}
    \label{eq:W2W''}
    W'' = \sigma_1^{-1} \phi_1^{-1} W_2 
\end{equation}
Similarly, when adding the final surface $\gamma_3$, we replace the bulk tensor $W'$ by
\begin{equation}
    \label{eq:W'W}
    W = \left(\sigma_1^{-1} \phi_1^{-1}\right)\left(\sigma_2^{-1} \phi_2^{-1}\right)W'
\end{equation}
A similar procedure can be followed for $n=4$ BL wormholes when there exist no intersecting minimal surfaces.

An interesting consideration here is what happens when we glue BL wormholes together. As outlined in subsection \ref{sec:gluingBL}, BL geometries can be glued together along minimal surfaces to form valid initial data. A natural question then is how are tensor networks preserved (or not preserved) under such gluings? More precisely, can the TN of the glued geometry be formed by some kind of contraction of the 2 component TNs? One could conceive of such a procedure for the $n=3$ BL geometry. Consider our TN $\psi^{A_1 A_2 A_3}$ and a copy TN $\tilde{\psi}^{\tilde{A}_1 \tilde{A}_2 \tilde{A}_3}$. Suppose we now consider identifying $A_1$ with $\tilde{A}_1$ and tracing over that index. It is unclear if the resulting state indeed has a well-defined geometric dual; one natural possibility would be the spacetime obtained by gluing together the above spacetimes along $\gamma_1$.

\subsection{Creation of entangled universes}

\begin{figure}
    \centering
    \includegraphics[width=.6\textwidth]{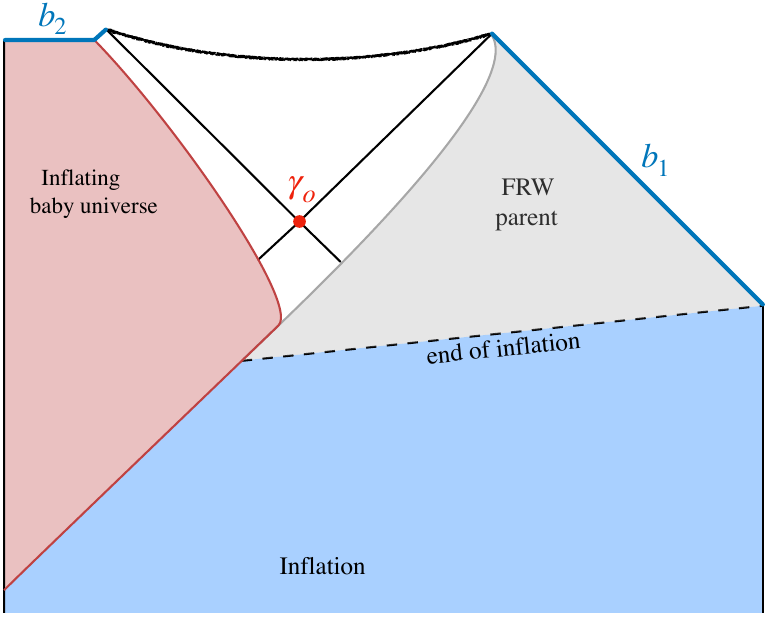}
    \caption{Primordial ER bridge from a false vacuum decay during inflation. The dashed line is the slice on which inflation terminates for the parent universe. If the bubble is large enough, it will inflate behind the horizon of a black hole as viewed from the parent universe. The boundary $b_1$ is future null infinity of the FRW universe with flat spatial slices, while $b_2$ includes both the future conformal boundary of the baby universe, which is a piece of de Sitter space, and the future null infinity of part of an asymptotically flat Schwarzschild spacetime (white region).
    }  
    \label{fig:creationwormhole}
\end{figure}

A natural question in this context is whether entangled universes could form dynamically from a single parent universe. It is interesting to note that this can happen via the Coleman-De Luccia decay of an inflating false vacuum \cite{PhysRevD.21.3305}. The first possibility is that inflation eventually terminates for the parent universe by some other mechanism. If the true vacuum also inflates, one can end up with large numbers of ``primordial ER bridges'' from a (e.g.\ asymptotically flat) parent FRW universe, each connected to an inflating baby universe. An example of a single nucleation event is shown in Fig.\ \ref{fig:creationwormhole}. For the details of this process, including the expected distribution of such primordial ER bridges,  we refer the reader to \cite{Garriga:2015fdk}. In this case, the orthodox application of the HRT formula would give us zero entropy, in accord with the interpretation of a microstate in the hologram of the parent universe; however, the heterodox application of the HRT formula would give us the entropy of each baby universe. 

Another possibility is to consider an eternally inflating meta-stable false vacuum. In this case, bubbles also nucleate, and asymptotically flat or AdS spatial boundaries form where the wall of the inflating bubble hits the future de Sitter boundary. This way, multiple universes can form. However, the nucleation of ER bridges connecting them is suppressed with respect to the nucleation of disconnected universes \cite{Sato:1981bf}. But even in the latter case, the state of the quantum fields can still get considerable entanglement between the different universes. An indication of this is that entanglement islands can form in related situations \cite{Hartman:2020khs}.

\subsection{S-matrix and collapse}

For asymptotically flat universes, the ER bridges that our conjectures apply to are unstable but long-lived configurations. The black holes will eventually evaporate, and the state of the Hawking radiation in each universe will remain entangled. These spacetimes can also be interpreted as S-matrix resonances of entangled in/out states of radiation between the universes. A basic observation is that the S-matrix factorizes between the different universes; and thus these spacetimes can be alternatively described as linear superpositions of products of single-universe S-matrix elements. This is in accord with the standard observation that spatial connectivity is a non-linear property in quantum gravity. In this case, this shows that the microscopic description of ER bridges cannot be intrinsic to the S-matrix formulation.

\begin{figure}[htbp]
    \centering
    \begin{minipage}{0.48\textwidth}
        \centering
        \includegraphics[width=\linewidth]{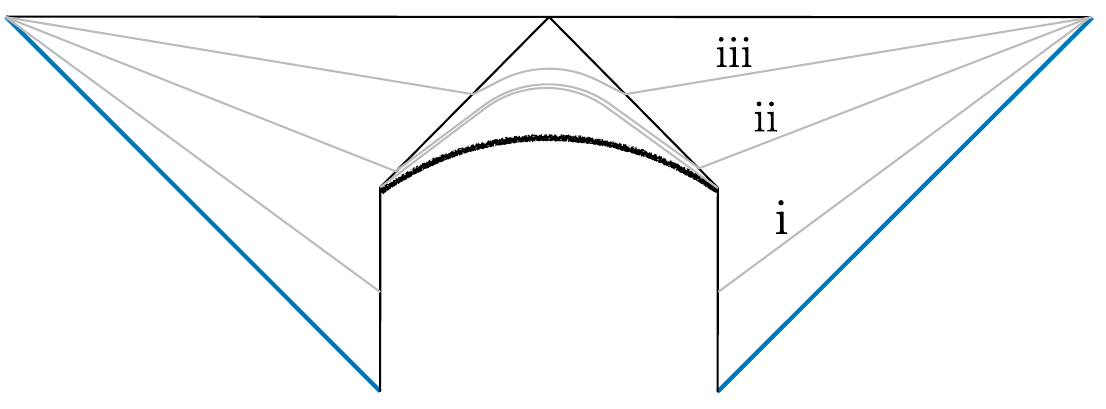}
    \end{minipage}
    \hfill
    \begin{minipage}{0.48\textwidth}
        \centering
        \includegraphics[width=\linewidth]{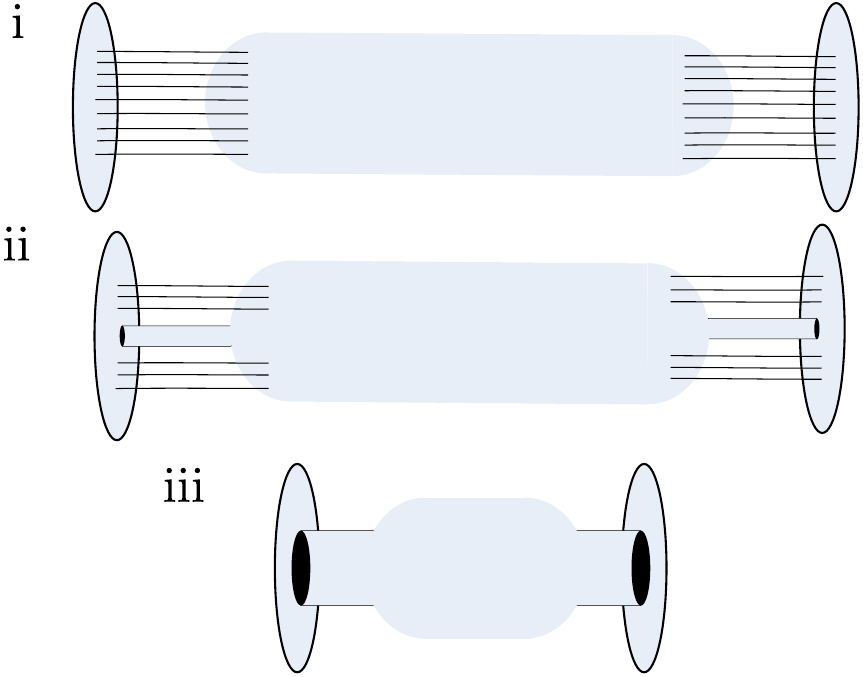}
    \end{minipage}
    \caption{On the left, half of the Penrose diagram of the time-reversal of the Hawking  evaporation of a two-sided Schwarzschild black hole. On the right, cartoon of the geometry and entanglement of the slices. This process can be interpreted as the semiclassical unitary needed to decode the baby universe from the radiation, as the latter becomes a causally connected white hole.}
    \label{fig:whevaprev}
\end{figure}

This also leads to a foundational question about ER=EPR: if we are given some entangled in-state of radiation in a collection of disconnected universes, and collapse this radiation to form black holes, do we create ER bridges? From the time-reversal of the Hawking evaporation in a two-sided Schwarzschild black hole, it seems that this is possible, as illustrated in Fig.\ \ref{fig:whevaprev}.  However, this corresponds to an extremely fine-tuned initial state, and the process violates the generalized second law of thermodynamics and the null energy condition. One interesting fact about this process, though, is that the black hole interior seems to already reside in the in-state of the radiation as a closed universe  $({\rm i})$.\footnote{Simpler situations where baby universes appear in the state of the radiation can be constructed in conventional AdS/CFT \cite{Antonini:2023hdh}.} Collapsing the very late Hawking quanta into a microscopic (perhaps Planck-scale) volume creates a geometric connection to this wormhole $({\rm ii})$. Furthermore, feeding this wormhole in a way which is fine-tuned with the state of the wormhole shortens the wormhole $({\rm iii})$. This process ``decodes'' the black hole interior, granting causal access to it as the white hole region. Failure to implement it correctly generates a singular stress-energy at the horizon.

What about more generic situations? A strategy to study this could be to first classify generic entangled states of multiple black holes which, in the vein of \cite{Magan:2025hce}, are expected to contain inhomogeneous matter distributions supporting very long and fully connected wormholes. The unclear question is whether collapsing a generic, i.e., maximally complex, state of the radiation lands close to a generic entangled state of the black holes, given the irreversibility of the gravitational collapse.

\subsection{Future directions}

The generalizations of the HRT formula conjectured in this paper lead to many interesting questions, which we leave to future work.

A technical question raised in subsection \ref{sec:dSnetworks} concerns the ``heterodox'' HRT-like surfaces that appear in throats of asymptotically dS spacetimes, namely: Do they obey all of the properties of ``orthodox'' HRT surfaces, discussed in section \ref{sec:properties}? This is an interesting general relativity question in itself, and is important for understanding whether the areas of these surfaces can sensibly be interpreted as entropies.

One can also ask about spacetimes with asymptotics other than AdS, dS, and Minkowski, such as pp-waves. One set of questions is whether we would still expect HRT-like surfaces in such spacetimes, and whether they would obey the properties that HRT surfaces do. Another puzzling class of spacetimes are closed universes that bang and crunch, and therefore have no asymptotic regions at all, either in time or space. A wormhole can certainly connect two such universes, and we would expect it to host an HRT-like surface. Does the area of that surface represent an entanglement entropy, and if so, of what state?

An obvious question is whether conjecture \ref{mainconjecture} can be supported by a path integral calculation of R\'enyi entropies, as was done for the RT formula by Lewkowycz-Maldacena \cite{Lewkowycz:2013nqa}. Here there are a couple of issues to note. First, even in the time-reflection symmetric case, where the spacetime has a real Euclidean continuation, it is not clear that asymptotically flat wormholes are ever dominant saddles. Therefore, a more complicated path integral (perhaps along the lines of the fixed-area states \cite{Dong:2018seb} or the approach of \cite{Colafranceschi:2023urj}) would presumably be required. Second, it's not obvious how to set up boundary conditions to calculate R\'enyi entropies for asymptotically Minkowski boundaries.

Finally, it would be interesting to test the conjectures in microscopic models of holography in flat space. Beyond the general arguments presented in section \ref{sec:gluings}, there is direct evidence in this regard coming from the validity of the RT formula in AdS/CFT, given that the large-$N$ four-dimensional $\mathcal{N}=4$ SYM theory describes a gravitational sector of ten-dimensional flat space in string theory. This sector includes small ten-dimensional Schwarzschild black holes, which are microcanonically stable states of the CFT at large energies. However, ideally, we would like to consider full-fledged flat space holograms, a candidate example being the D0-brane quantum mechanics conjectured to describe M-theory \cite{Banks:1996vh}.

\acknowledgments

We would like to thank Ning Bao, Sumit Das, Roberto Emparan, Netta Engelhardt, Guglielmo Grimaldi, Robie Hennigar, Gary Horowitz, Veronika Hubeny, Albion Lawrence, Hong Liu, Pratik Rath, and Brian Swingle for useful discussions. 
This work was performed in part at the Aspen Center for Physics, which is supported by the National Science Foundation grant PHY-2210452. M.H. and M.S. acknowledge support from the U.S. Department of Energy through DE-SC0009986 and QuantISED DE-SC0020360. 

\appendix

\section{ Junction conditions for patching spacetimes}
\label{sec:patching}

Here we provide a discussion of the procedure for patching the spacetimes together and how this is used to obtain the constraints on the geometries described in section \ref{sec:gluings}. The procedure follows \cite{Freivogel:2005qh}.

We make the simplifying assumption that the metrics we are patching are spherically symmetric. We also consider each junction one at a time; performing multiple gluings simply repeats the process at each junction.

It's also helpful to establish a system of Gaussian normal coordinates near the junction. The direction of the normal vector in these coordinates is then taken to be the `outward' normal direction, thus providing a notion of ``inner'' and ``outer'' metrics.
\begin{align}
    ds_i^2 = -f_i(r) dt_i^2 + \frac{dr^2}{f_i (r)} + r^2 d\Omega^2 \\
    ds_o^2 = -f_o(r) dt_o^2 + \frac{dr^2}{f_o (r)} + r^2 d\Omega^2 
\end{align}
The subscript $i$ refers to the inner metric while $o$ refers to the outer metric. Note that the time is different across the metrics ($t_i \neq t_o$ in general) but $r$ is a physically meaningful coordinate, and thus must be the same for both metrics. Our convention fixes the right to be the outward normal direction.

The induced 3-metric on the boundary,
\begin{equation}
    ds_{\text{brane}}^2 = -d\tau^2 + R^2(\tau) d\Omega^2
\end{equation}
represents the metric of the matter brane connecting the spacetimes, $\tau$ being the proper time. The brane is a matter sheet of positive tension, with its stress-energy tensor given by a delta function, and position within the spacetime given by $r = R(\tau)$. The tension $\sigma$ is related to the parameter $\kappa = 4\pi G_N \sigma$. We now apply the Israel junction conditions (see \cite{guth} for details) to relate the matter discontinuity (the brane) to the difference in extrinsic curvature across the metrics to obtain the equation of motion for the brane,
\begin{equation}
    \dot{R}^2 + V_{\text{eff}}(R) = 0 
\end{equation}
which is the equation of motion for a particle with 0 energy, moving in a potential $V_{\text{eff}}$. The potential is determined by the factors $f_i(r)$ and $f_o (r)$. For spherically symmetric configurations, the general form for the factors is $f(r) = 1 - \lambda r^2 - \frac{\mu}{r}$, where $\lambda$ is the cosmological constant and $\mu$ is the mass of a black hole. In terms of these parameters, \cite{Freivogel:2005qh} obtain an expression for $V_{\text{eff}}$:
\begin{multline}
\label{eq:veff}
    V_{\text{eff}}(r) =\\ 1-\left(\lambda_o + \frac{\left(\lambda_o - \lambda_i - \kappa^2\right)^2}{4\kappa^2}\right)r^2 - \left(\mu_o + \frac{\left(\lambda_o - \lambda_i - \kappa^2\right)\left(\mu_o - \mu_i\right)}{2\kappa^2}\right) \frac{1}{r} - \frac{\left(\mu_o - \mu_i\right)^2}{4\kappa^2} \frac{1}{r^4}
\end{multline}
This function has certain important properties; as $r \rightarrow 0$, $V_{\text{eff}} \rightarrow -\infty$. Furthermore, to have a configuration where the brane reaches the boundary we must have $V_{\text{eff}} \rightarrow -\infty$ for $r \rightarrow \infty$. There then exists a maximum $V_{\text{max}}$, and its sign determines the brane's trajectory.

\begin{figure}[ht]
    \centering
    \includegraphics[scale = 0.6]{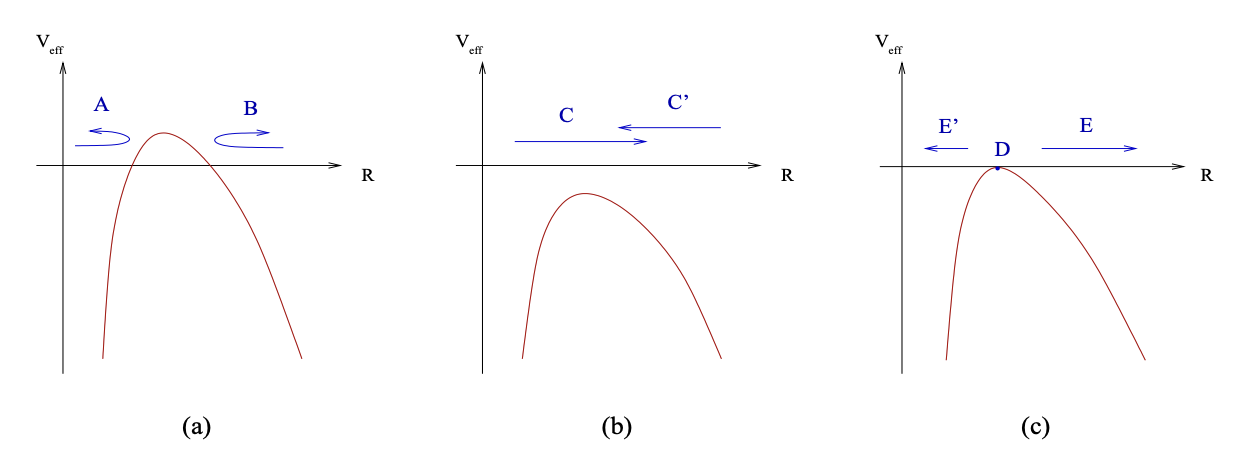}
    \caption{Shell trajectories for $V_{\text{eff}} \rightarrow -\infty$ as $r \rightarrow \infty$. \textbf{{(a):}} $V_{\text{max}} > 0$, shell can expand from 0 and contract back (A) or contract from $\infty$ and expand back (B), \textbf{(b):} $V_{\text{max}} < 0$, shell can expand/contract (C/C') without bound, \textbf{(c):} $V_{\text{max}} = 0$, the shell can remain stationary (D) or contract/expand (E'/E) from finite size. \textit{Reproduced from \cite{Freivogel:2005qh}}.}
    \label{fig:Shell trajectoieries}
\end{figure}

The time symmetric trajectories from Fig. \ref{fig:Shell trajectoieries} are A, B and D, but in D the brane is stationary and so we only consider trajectories of type A and B, for which we require $V_{\text{max}} > 0$.

This is not sufficient to perform the gluing however, as in obtaining \eqref{eq:veff} we had to square the constraint on the extrinsic curvature and so we lost the sign information of the extrinsic curvature. Thus, we must also consider our expressions for the extrinsic curvature $\beta$, given for our ansatz by:
\begin{align}
    \label{eq:extrinsic curvature in}
    \beta_i(r) &= \left(\frac{\lambda_o - \lambda_i + \kappa^2}{2\kappa}\right)r + \left(\frac{\mu_o - \mu_i}{2\kappa}\right)r^{-2} \\
    \label{eq:extrinsic curvature out}
    \beta_o(r) &= \left(\frac{\lambda_o - \lambda_i - \kappa^2}{2\kappa}\right)r + \left(\frac{\mu_o - \mu_i}{2\kappa}\right)r^{-2}
\end{align}
This is used to constrain the allowed worldlines for the brane. For branes attached at $r = 0$, we have to check that the corresponding $\beta \rightarrow \infty$ as $r \rightarrow 0$ if the outward normal near the attachment points in the direction of increasing $r$ (and vice versa, i.e. $\beta \rightarrow -\infty$ if the outward normal points to decreasing $r$).  Similar conditions must be checked in the limit $r\rightarrow\infty$ for branes attached at $r = \infty$.

\begin{figure}[ht]
    \centering
    \includegraphics[scale = 0.16]{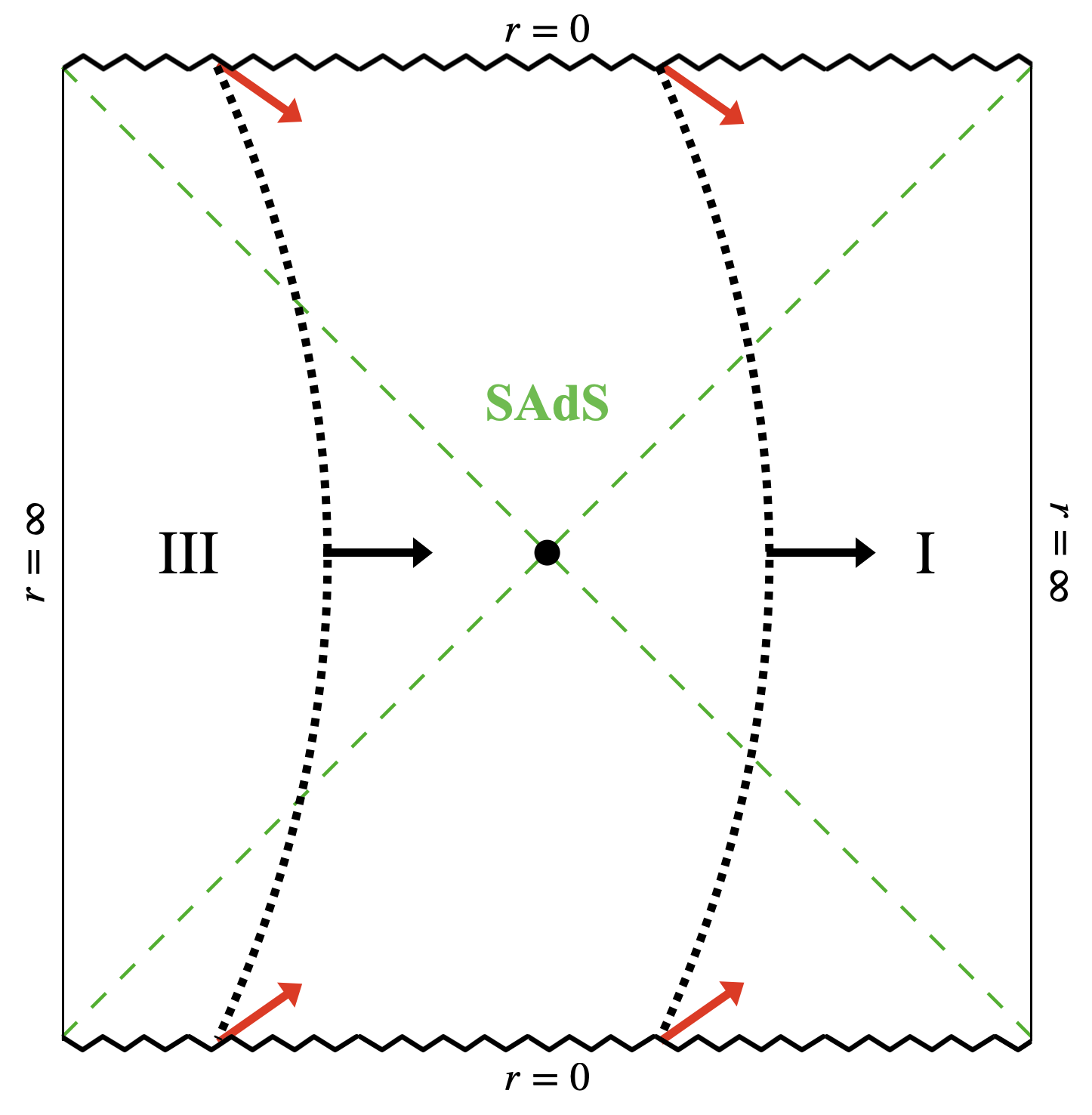}
    \caption{Two possible worldlines through SAdS for a brane attached at $r = 0$. Red arrows indicate direction of outward normal at point of attatchment; both point to increasing $r$ and are consistent with $\beta \rightarrow \infty$ at $r\rightarrow0$. Black arrows indicate direction of outward normal on the Cauchy slice (turning point for the brane)}
    \label{fig:brane region}
\end{figure}

Now suppose we have the configuration above where we take SAdS to be the `outer' metric. As shown above, both worldlines for the brane are indistinguishable by the previous considerations on the asymptotic behavior of the extrinsic curvature; however, one worldline passes through exterior region I and the other through region III. To determine which worldline is valid, we must look at the sign of $\beta_o$ at the point the brane is reflected back, i.e. along the Cauchy slice $\sigma$.

\begin{figure}[ht]
    \centering
    \includegraphics[scale = 0.22]{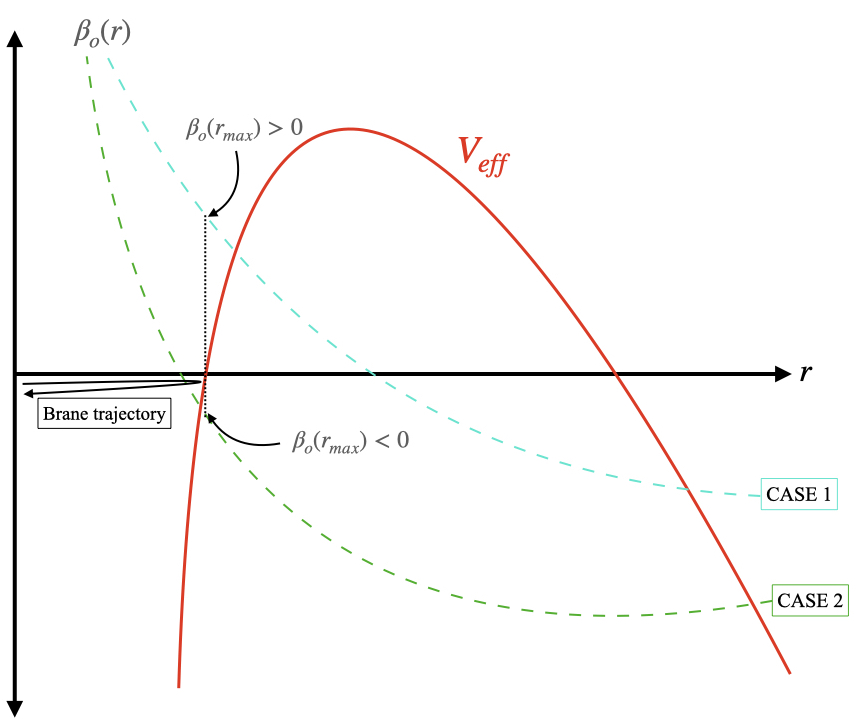}
    \caption{Two possibilities for extrinsic curvature with different signs at $r_{\rm max}$ ( turning point for the brane)}
    \label{fig:curvature graph}
\end{figure}

As we can see from Fig. \ref{fig:curvature graph}, for case 1 $\beta_o$ is positive at the turning point and thus consistent with the brane through region I (where the outward normal points in the direction of increasing $r$), while in case 2, $\beta_o$ is negative at the turning point and is thus consistent with the brane through region III.

\begin{figure}[ht]
    \centering
    \includegraphics[scale = 0.2]{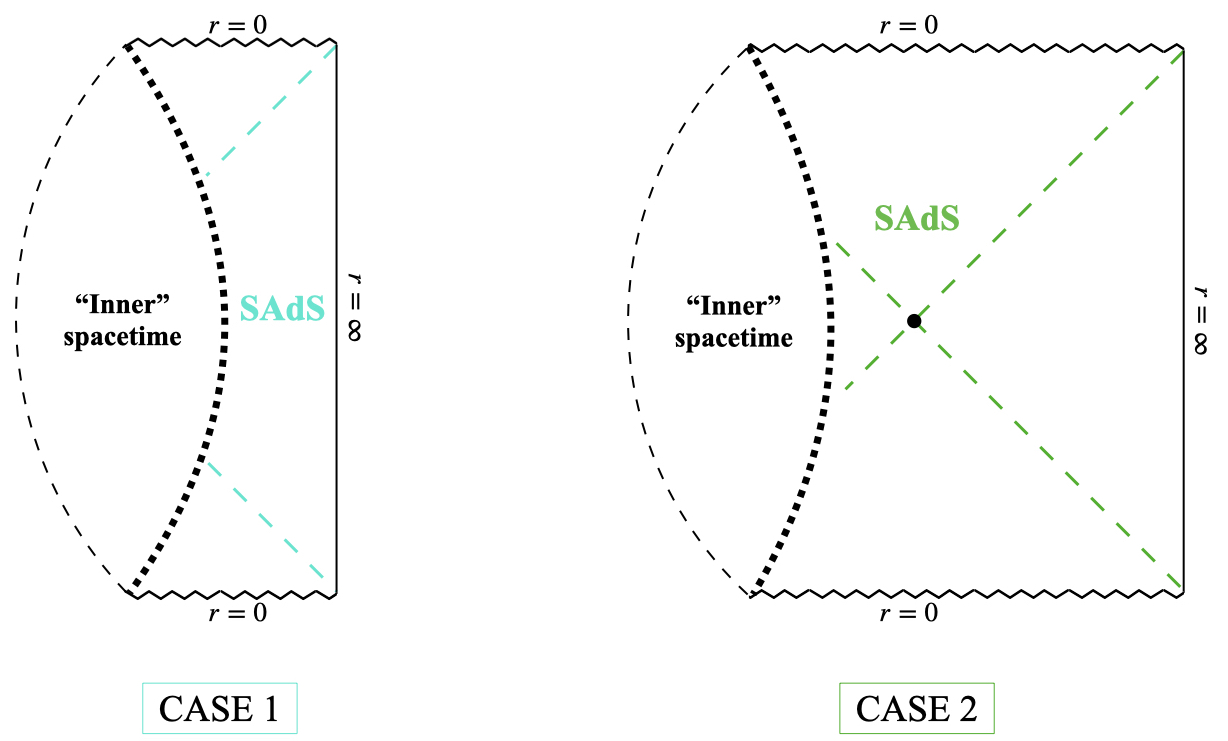}
    \caption{Spacetimes corresponding to extrinsic curvature values in Fig. \ref{fig:curvature graph}}
    \label{fig:curvature spacetime}
\end{figure}

\subsection{Schwarzschild metric gluings}
\label{sec:s gluings}

We have 2 gluings to perform, on a left and right junction, which are both at $r = 0$ when the spacetime in between is the flat space Schwarzschild metric. At the left junction, the left SAdS metric is ``inner" while the Schwarzschild metric is ``outer", and vice versa at the right junction. Assume the left/right SAdS metrics and branes are identical. Let the SAdS black hole mass be $\mu_A$ and the Schwarzschild black hole mass be $\mu_S$. $\lambda_o = \lambda_S = 0$ and set $\lambda_i = \lambda_{SAdS} = -\lambda < 0$. Now we obtain $V_{\text{eff}}$ and the extrinsic curvatures:
\begin{align}
    V_{\text{eff}} (r) &= 1 - \frac{\left(\lambda - \kappa^2\right)^2}{4\kappa^2} r^2 - \left(\mu_S + \frac{(\lambda-\kappa^2)(\mu_S - \mu_A)}{2\kappa^2}\right)\frac{1}{r} - \frac{(\mu_S - \mu_A)^2}{4\kappa^2}\frac{1}{r^4} \\
    \label{eq:bi}
    \beta_i(r) &= \left(\frac{\lambda + \kappa^2}{2\kappa}\right)r + \left(\frac{\mu_S - \mu_A}{2\kappa}\right)\frac{1}{r^2} \\
    \label{eq:bo}
    \beta_o(r) &= \left(\frac{\lambda-\kappa^2}{2\kappa}\right)r + \left(\frac{\mu_S - \mu_A}{2\kappa}\right)\frac{1}{r^2}
\end{align}
Note that as $r \rightarrow \pm\infty$, $V_{\text{eff}} \rightarrow -\infty$ if $\kappa \neq 1$. Furthermore, $\beta_i > \beta_o \forall r \in [0,\infty)$.

Now consider branes at $r = 0$:

\begin{figure}[ht]
    \centering
    \includegraphics[width =0.9\textwidth]{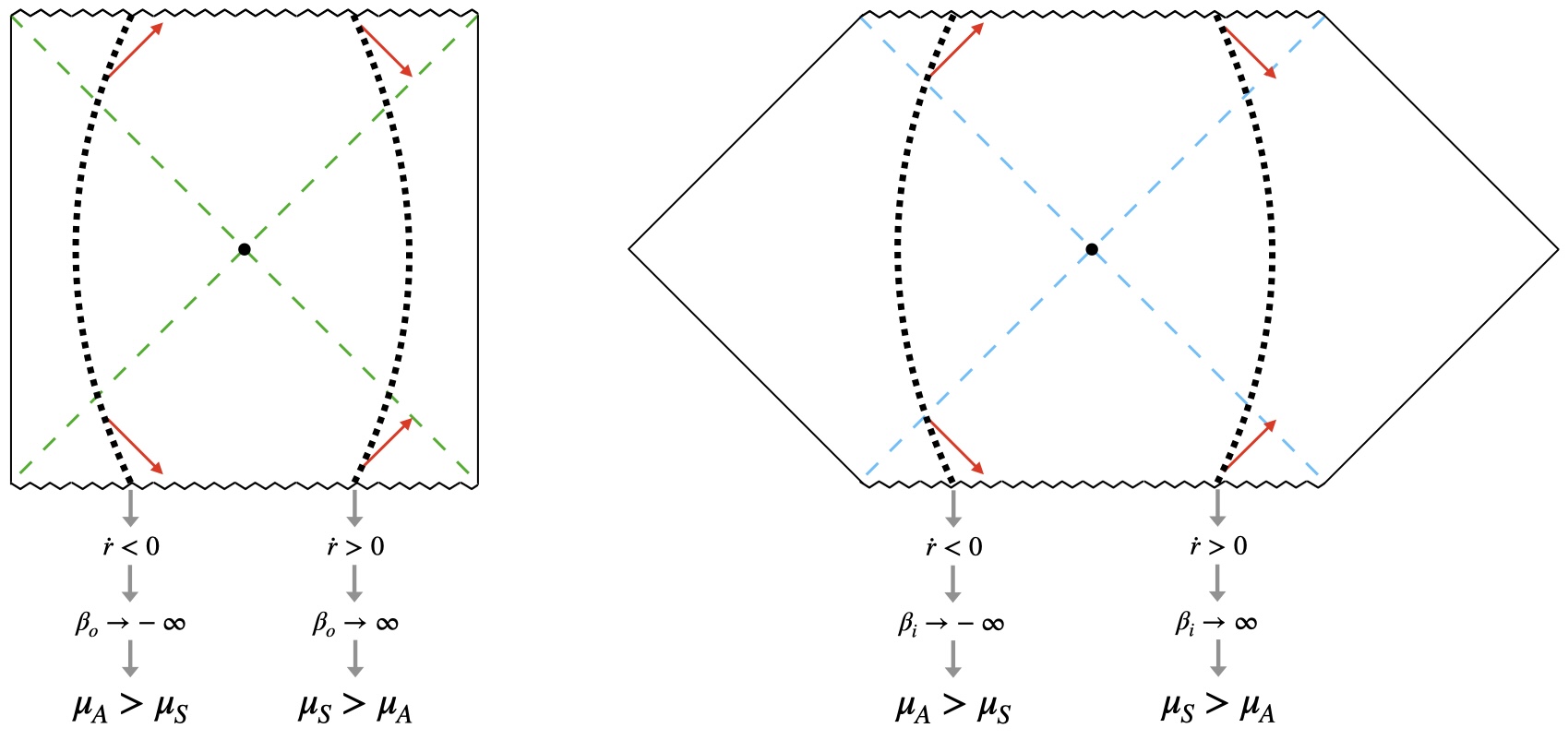}
    \caption{Conditions on branes attached at $r = 0$ with SAdS (green) ``inside" and Schwarzschild (blue) ``outside"}
    \label{fig:branes 0 left SAdS}
\end{figure}

Here we consider 2 possible time symmetric worldlines for the branes attached at $r = 0$ in both metrics. The red arrows show the direction of the outward normal near the brane boundaries ($r = 0$). These indicate the behavior of $r$ near the boundary, i.e. whether it is increasing or decreasing, which determines the sign $\beta_{i/o} \rightarrow \pm\infty$ as $r \rightarrow 0$. The behavior of both $\beta_{i}$ and $\beta_o$ in this limit is governed by the 2nd term in \eqref{eq:bi}-\eqref{eq:bo} i.e. $\beta_{i/o} \rightarrow \infty$ if $\mu_S > \mu_A$ and vice versa. Note that throughout our analyses we take $\mu_A \neq \mu_S$.

\begin{figure}[ht]
    \centering
    \begin{subfigure}[b]{0.48\textwidth}
        \centering
        \includegraphics[width=\textwidth]{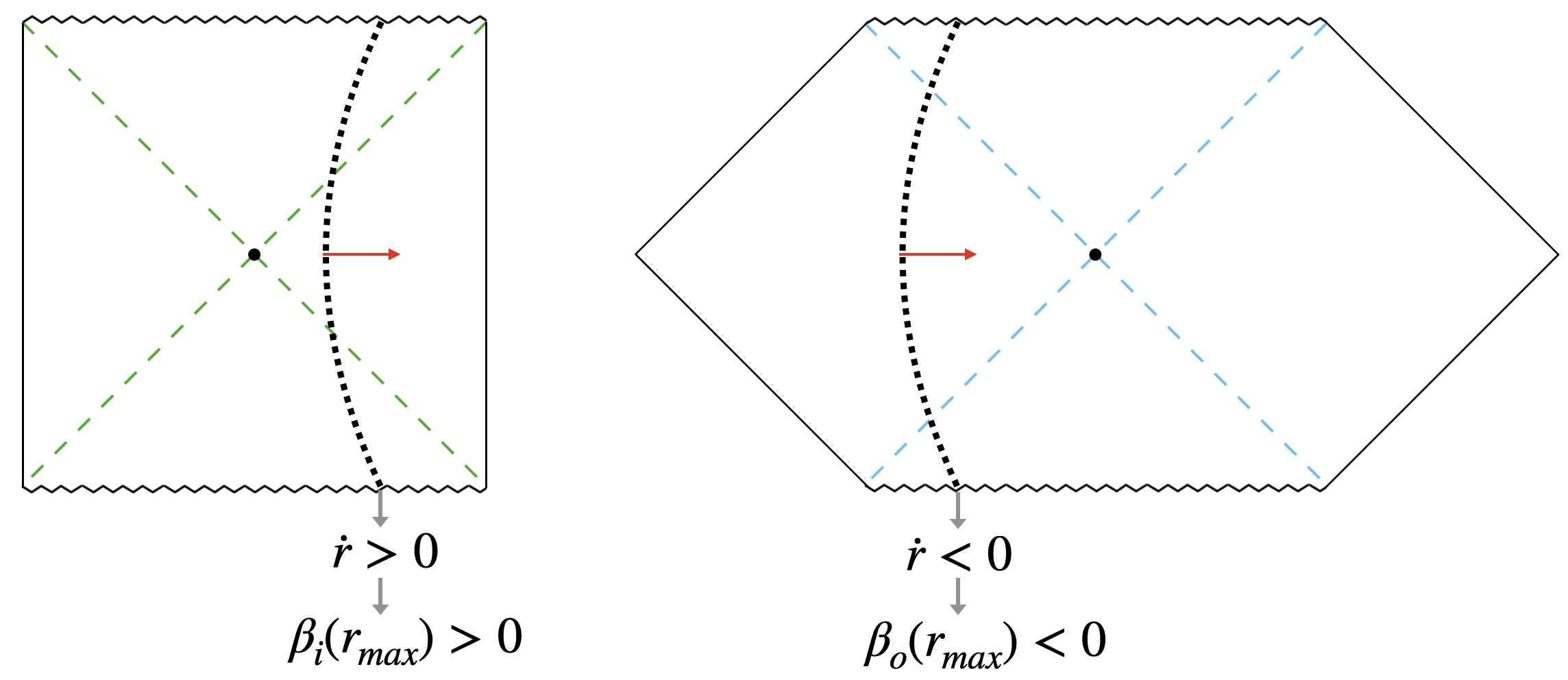}
       \caption*{(A1)}
        \label{fig:r0 left SAdS A}
    \end{subfigure}\hspace{0.02\textwidth}
    \begin{subfigure}[b]{0.48\textwidth}
        \centering
        \includegraphics[width=\textwidth]{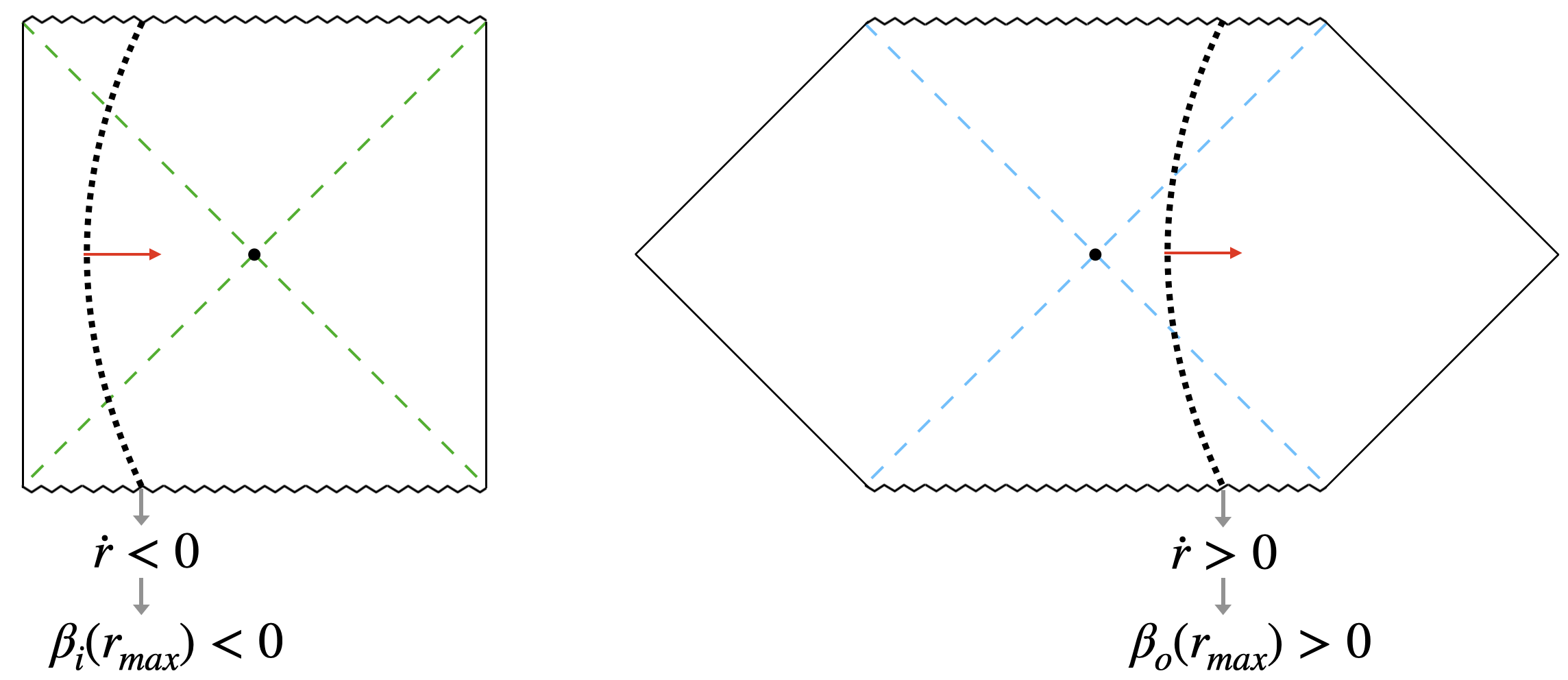}
        \caption*{(B1)}
        \label{fig:r0 left SAdS B}
    \end{subfigure}\vspace{0.5 cm}
    \begin{subfigure}[b]{0.48\textwidth}
        \centering
        \includegraphics[width=\textwidth]{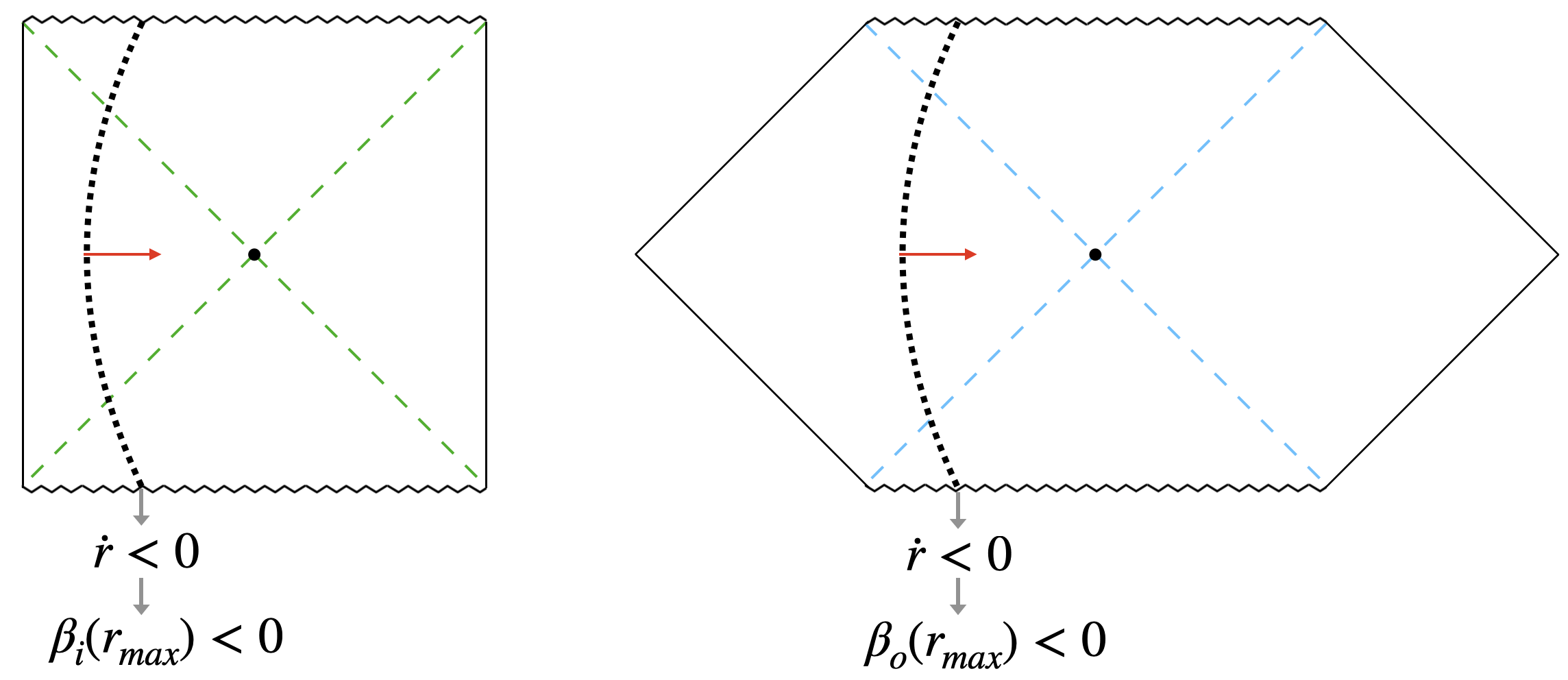}
       \caption*{(C1)}
        \label{fig:r0 left SAdS C}
    \end{subfigure}\hspace{0.02\textwidth}
    \begin{subfigure}[b]{0.48\textwidth}
        \centering
        \includegraphics[width=\textwidth]{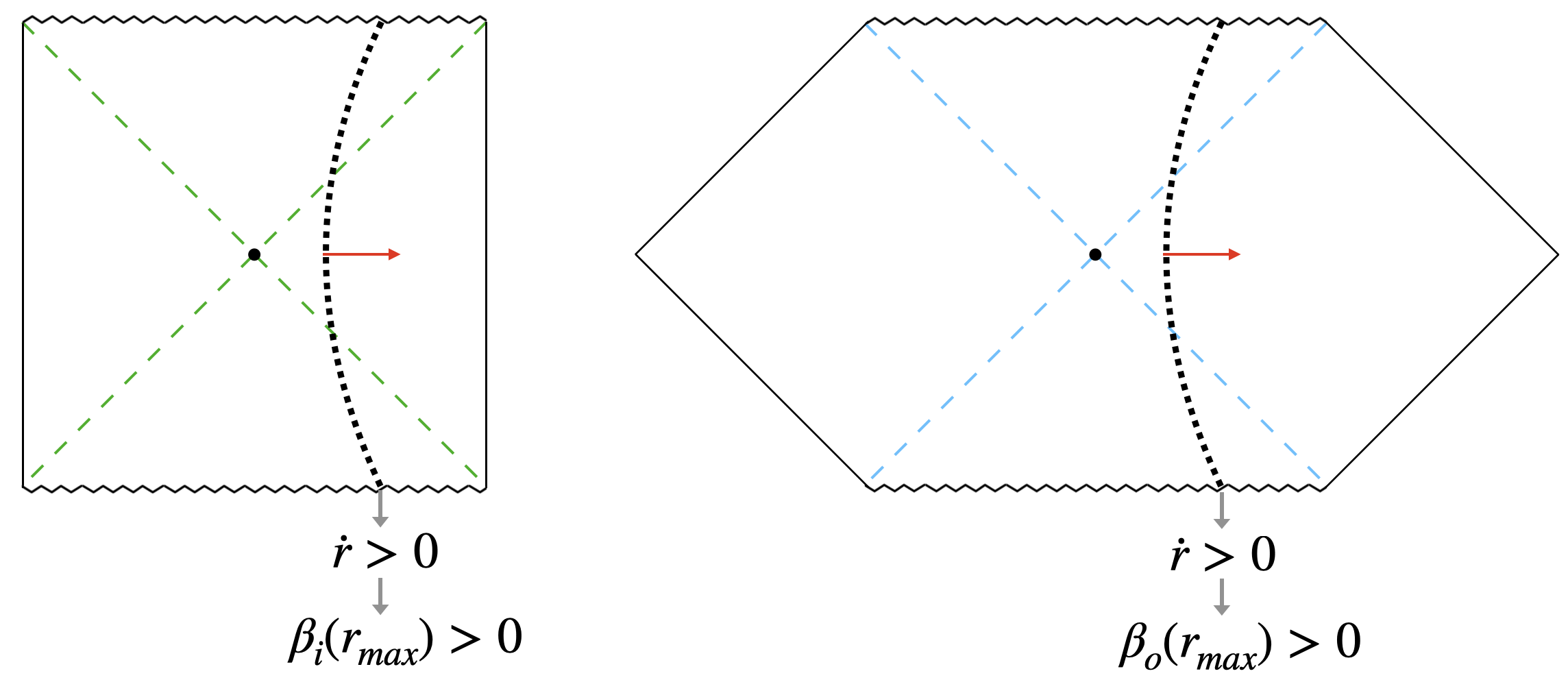}
        \caption*{(D1)}
        \label{fig:r0 left SAdS D}
    \end{subfigure}
    \caption{Configurations for $\mu_A > \mu_S$}
    \label{fig:S left junction A>S}
\end{figure}

Now we consider what exterior regions the worldlines can pass through for $\mu_A > \mu_S$. Fig. \ref{fig:S left junction A>S} shows the outward normal at the moment of time reflection symmetry. The direction of this normal must match the sign of the extrinsic curvature at this turning point; i.e. if the normal points to increasing (decreasing) $r$ in the relevant metric, then $\beta_{i/o}(r_{\rm max}) > 0$ ($\beta_{i/o}(r_{\rm max}) < 0$), where $r_{\rm max}$ is the radial coordinate at the moment of time reflection symmetry. Since $\beta_i > \beta_r$ $\forall r$; (B1) is disallowed. From \eqref{eq:bi} we see the sign of $\beta_i$ has no determinate behavior for $\mu_A > \mu_S$; however $\beta_o < 0$ $\forall r$ if $\kappa^2 > \lambda$; thus (D1) is only possible with $\kappa^2 < \lambda$.

\begin{figure}[ht]
    \centering
    \begin{subfigure}[b]{0.48\textwidth}
        \centering
        \includegraphics[width=\textwidth]{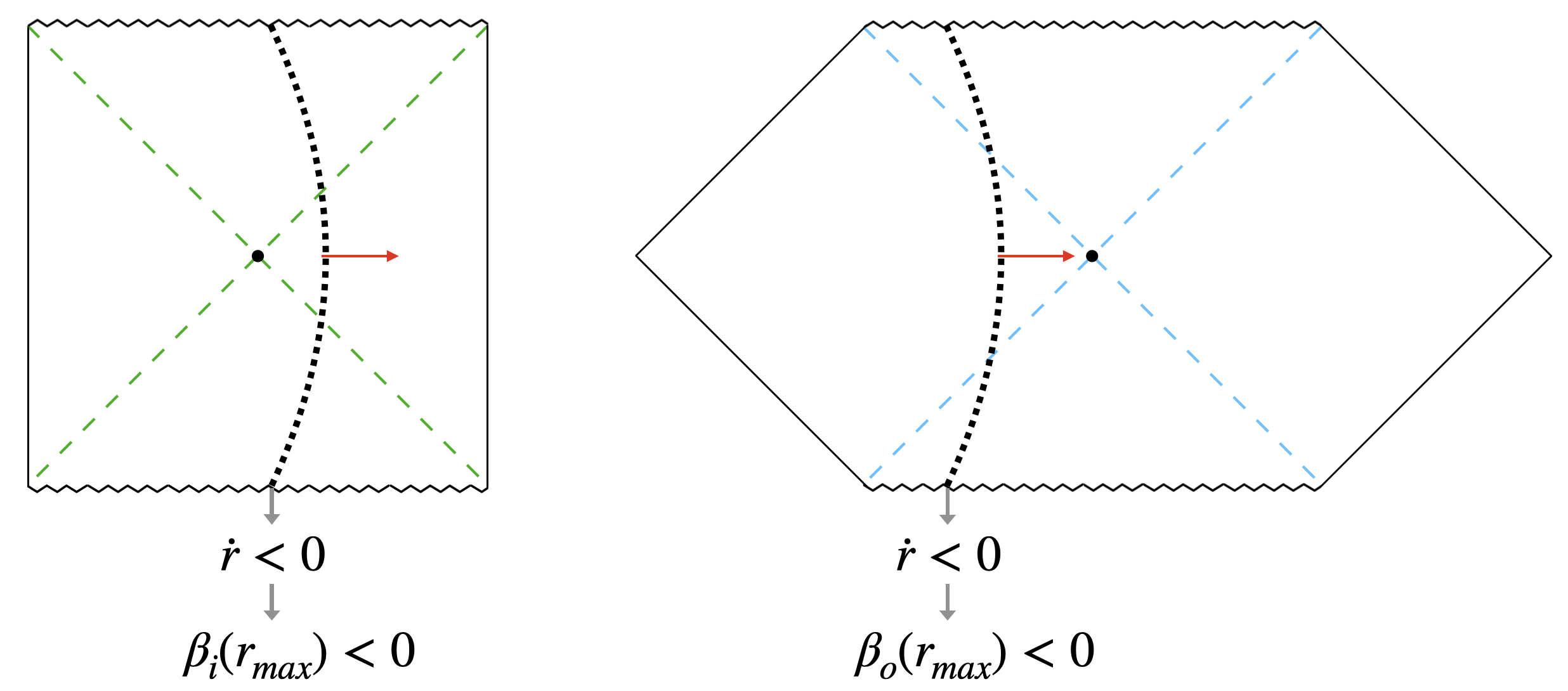}
       \caption*{(A2)}
        \label{fig:r0 left SAdS A2}
    \end{subfigure}\hspace{0.02\textwidth}
    \begin{subfigure}[b]{0.48\textwidth}
        \centering
        \includegraphics[width=\textwidth]{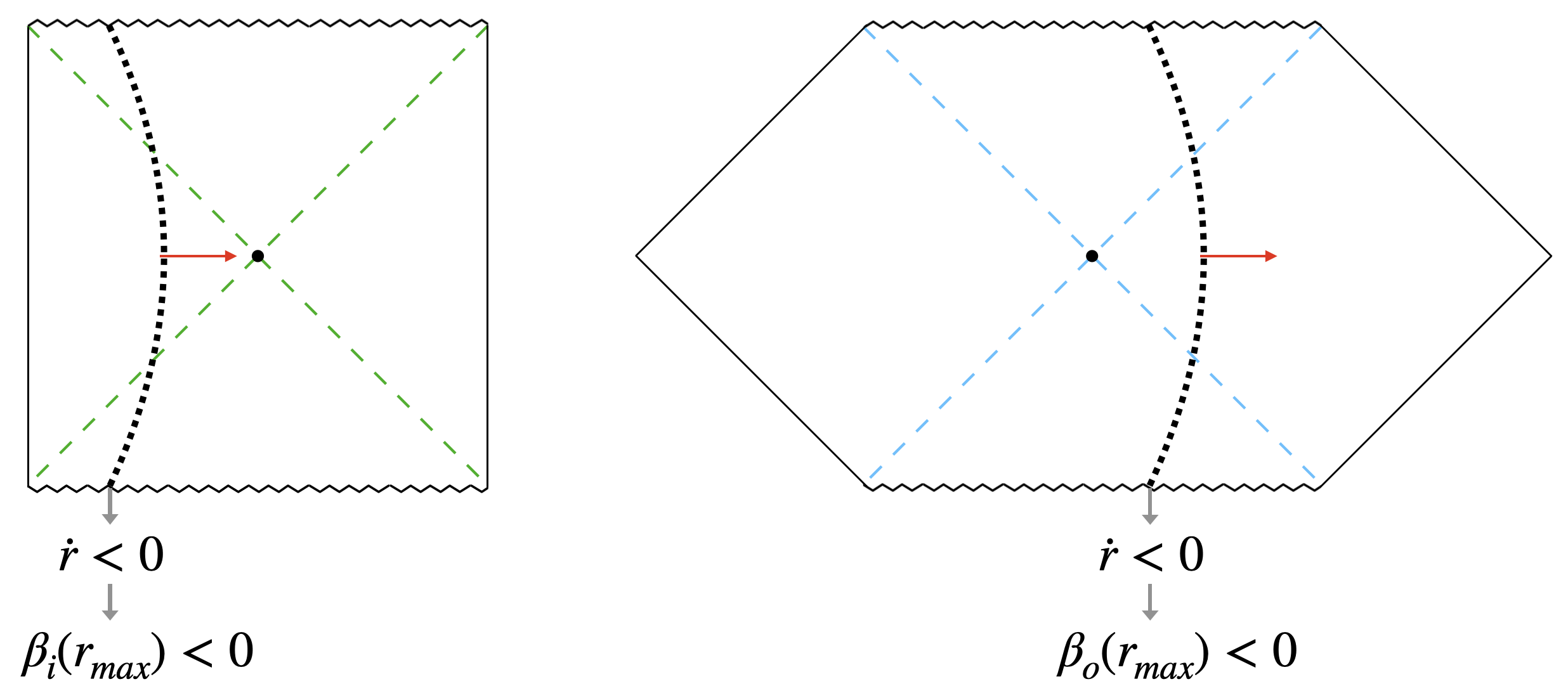}
        \caption*{(B2)}
        \label{fig:r0 left SAdS B2}
    \end{subfigure}\vspace{0.5 cm}
    \begin{subfigure}[b]{0.48\textwidth}
        \centering
        \includegraphics[width=\textwidth]{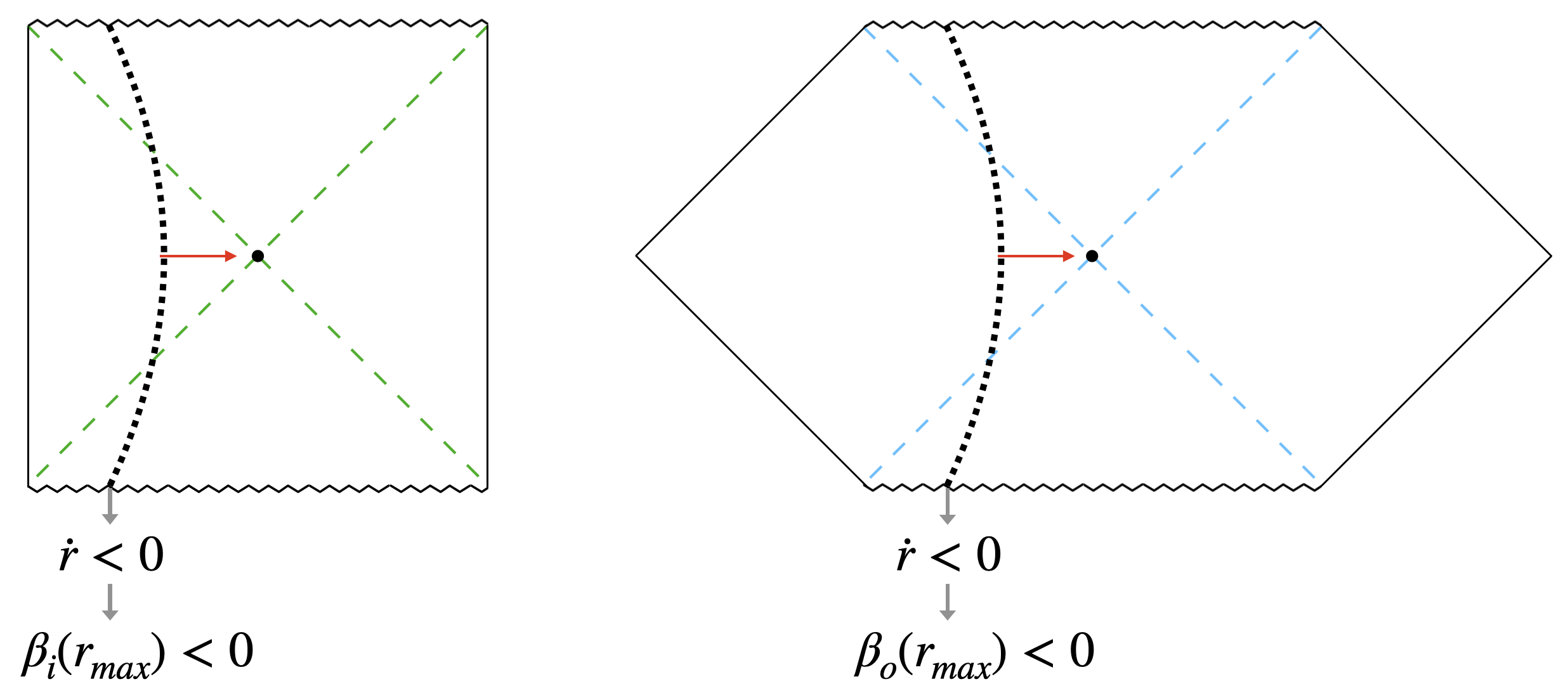}
       \caption*{(C2)}
        \label{fig:r0 left SAdS C2}
    \end{subfigure}\hspace{0.02\textwidth}
    \begin{subfigure}[b]{0.48\textwidth}
        \centering
        \includegraphics[width=\textwidth]{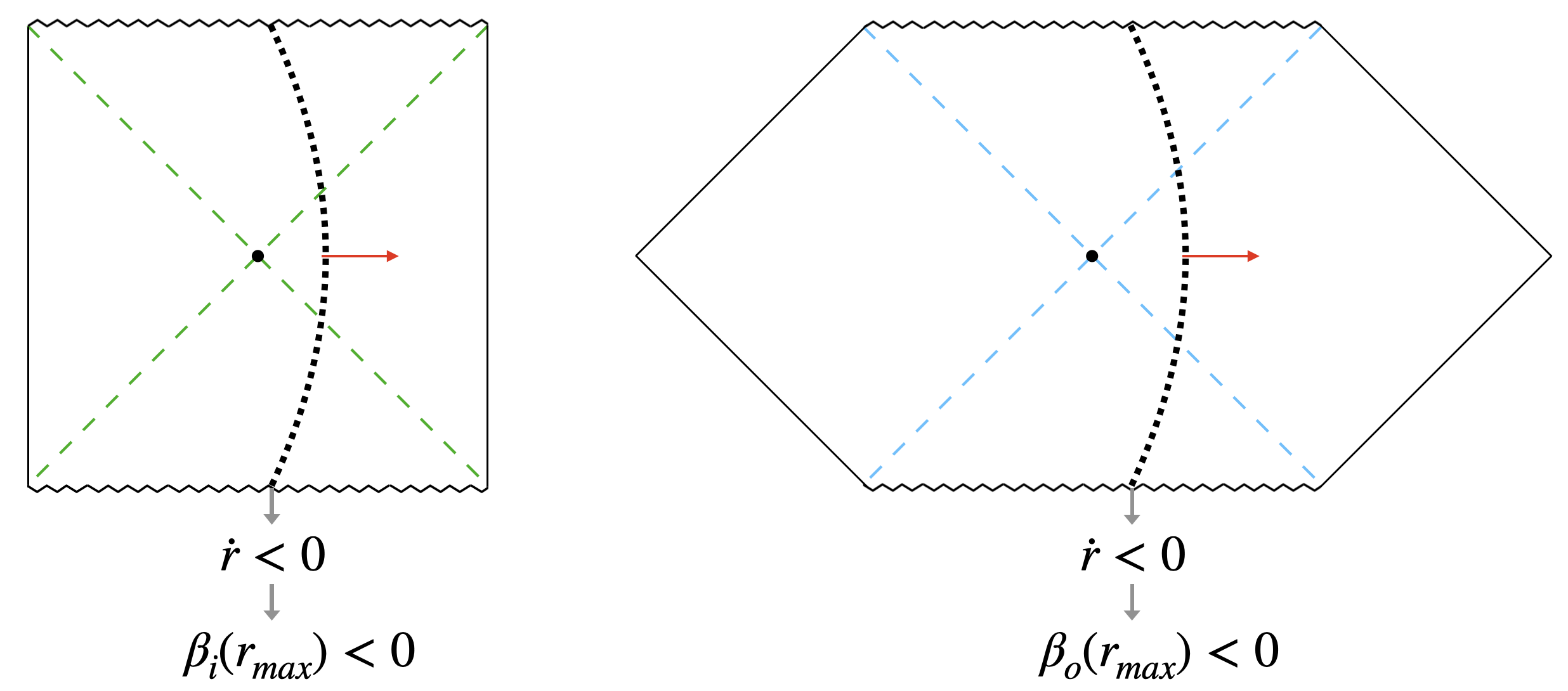}
        \caption*{(D2)}
        \label{fig:r0 left SAdS D2}
    \end{subfigure}
    \caption{Configurations for $\mu_S > \mu_A$}
    \label{fig:S left junction S>A}
\end{figure}

Similarly consider the configurations for $\mu_S > \mu_A$ shown in Fig.\ \ref{fig:S left junction S>A}. By the same considerations as before, (B2) is disallowed. However, now there is a sign constraint on $\beta_i$; for $
\mu_S > \mu_A$, $\beta_i > 0$ $\forall r$. Thus (C2) is also disallowed. Furthermore, $\beta_o > 0$ for all $r$ if $\kappa^2 < \lambda$; therefore (D2) is constrained to $\kappa^2 > \lambda$.
\\\\
All these configurations are summarised in Fig. \ref{fig:SAdS configs}.

\begin{figure}[ht]
    \centering
    \begin{subfigure}[b]{0.31\textwidth}
        \centering
        \includegraphics[scale=0.11]{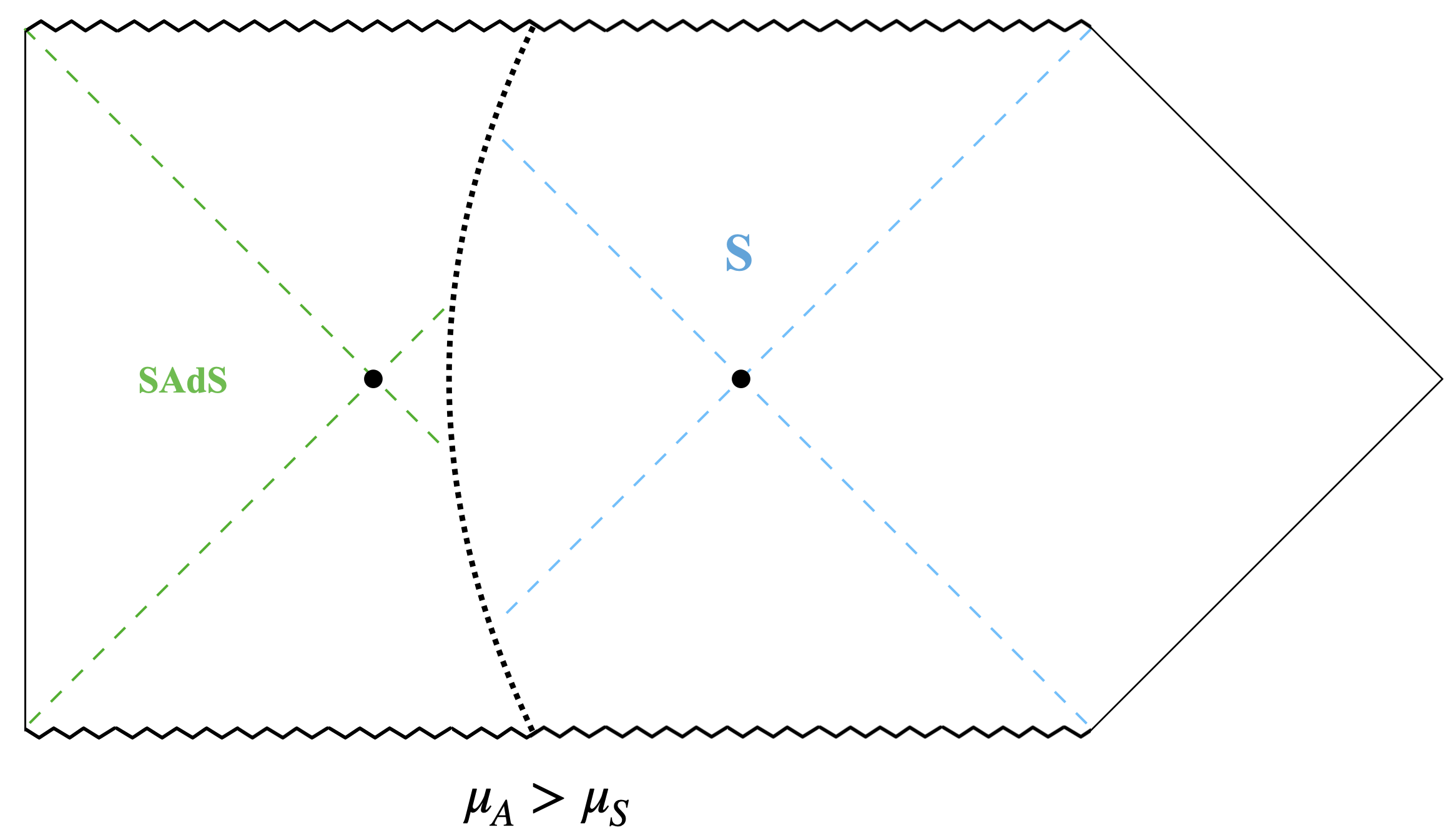}
       \caption*{(A1)}
        \label{fig:r0 A1}
    \end{subfigure}\hspace{0.01\textwidth}
    \begin{subfigure}[b]{0.31\textwidth}
        \centering
        \includegraphics[width=\textwidth]{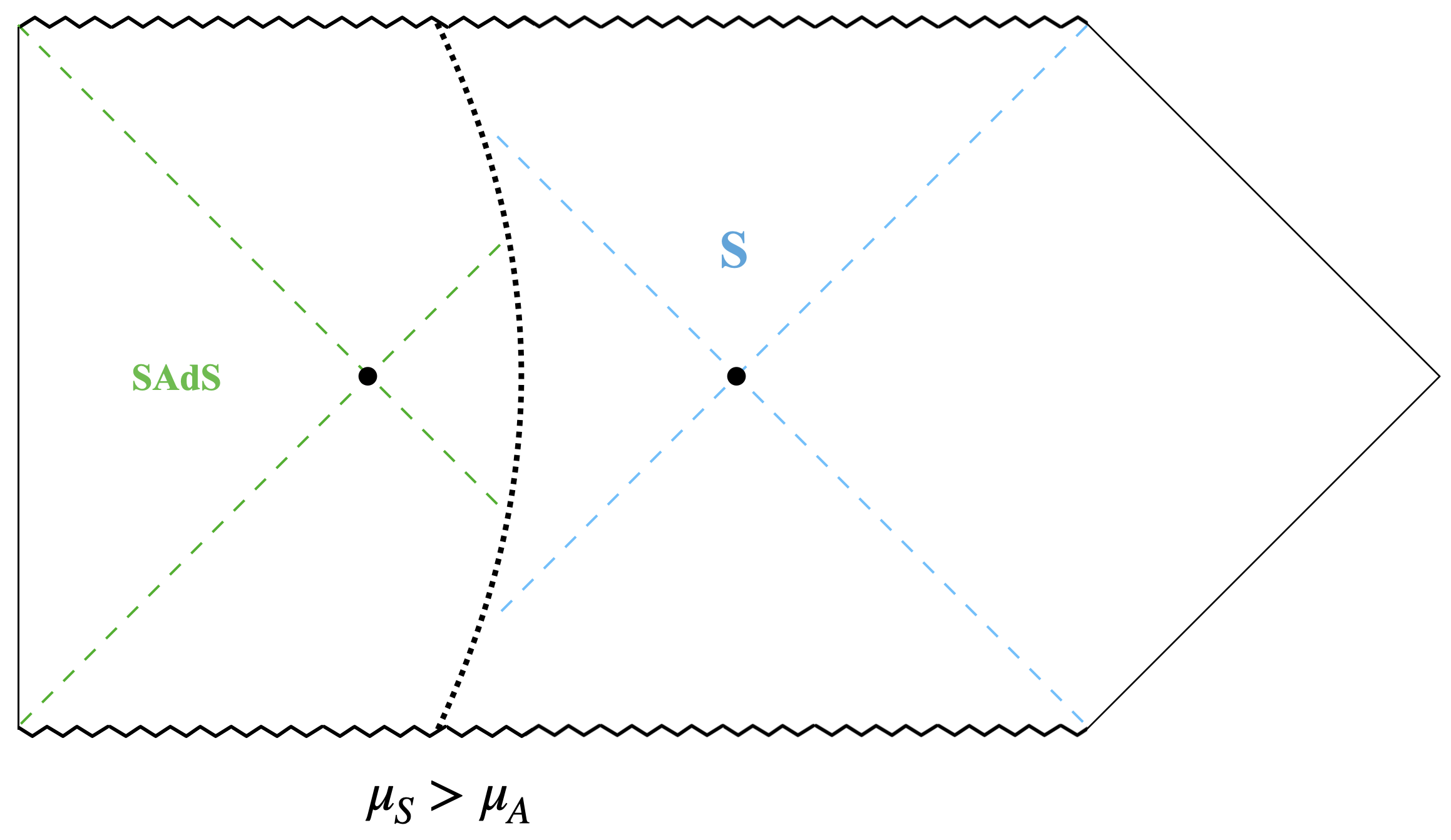}
        \caption*{(A2)}
        \label{fig:r0 A2}
    \end{subfigure}\hspace{0.01\textwidth}
    \begin{subfigure}[b]{0.31\textwidth}
        \centering
        \includegraphics[width=\textwidth]{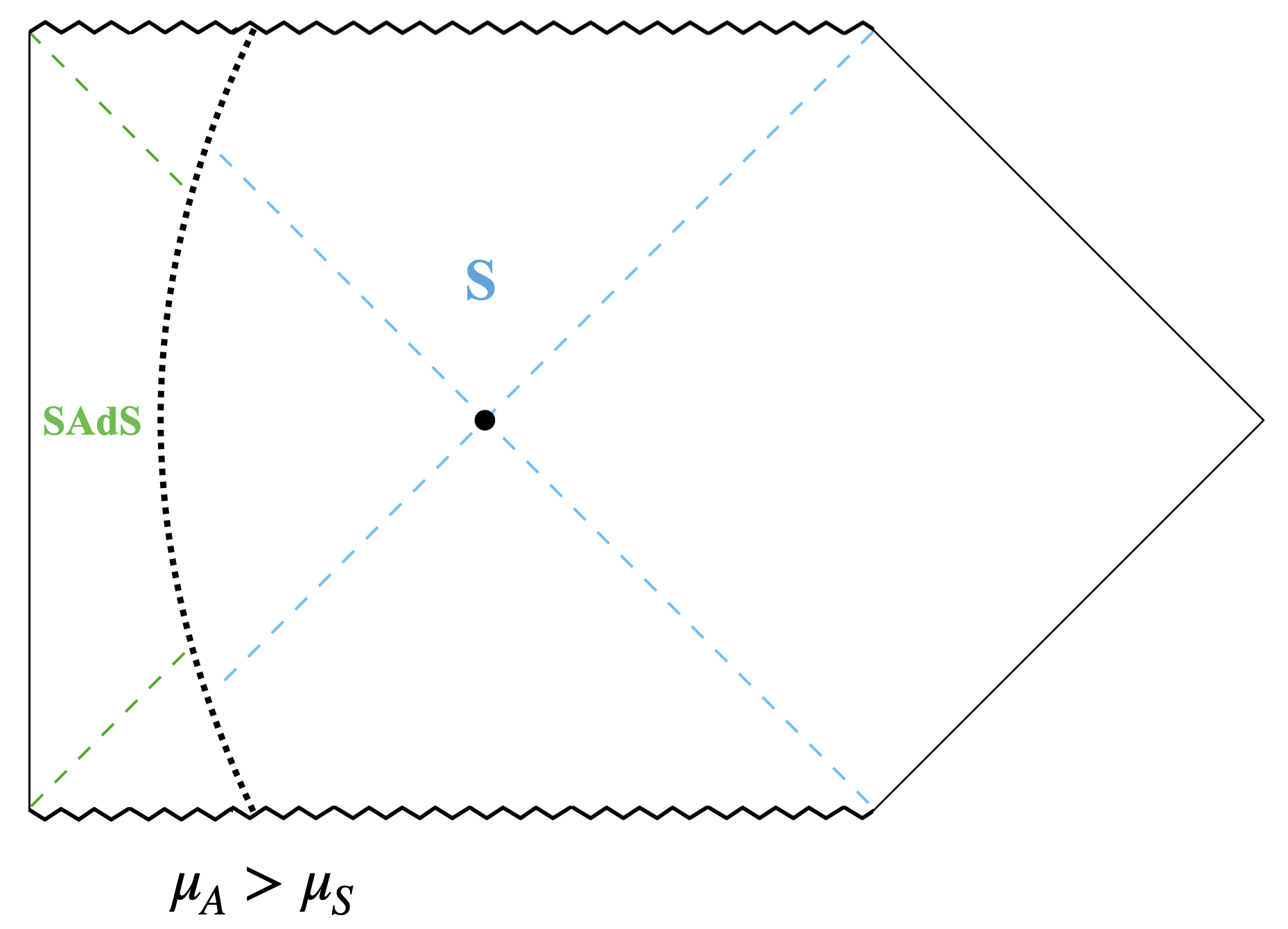}
       \caption*{(C1)}
        \label{fig:r0 C1}
    \end{subfigure}\vspace{0.5 cm}
    \begin{subfigure}[b]{0.31\textwidth}
        \centering
        \includegraphics[scale=0.11]{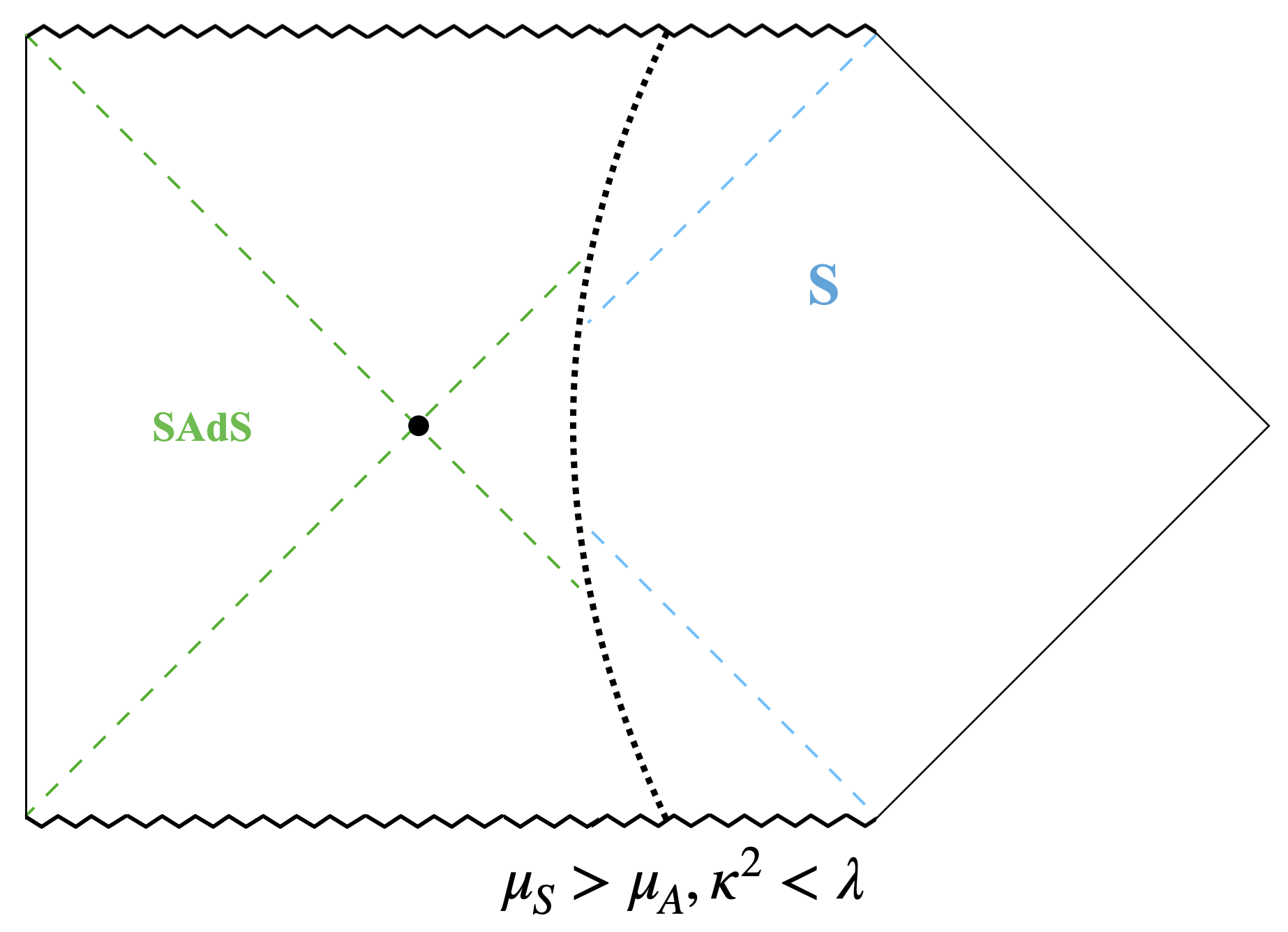}
        \caption*{(D1)}
        \label{fig:r0 D1}
    \end{subfigure}\hspace{0.03\textwidth}
    \begin{subfigure}[b]{0.31\textwidth}
        \centering
        \includegraphics[scale=0.11]{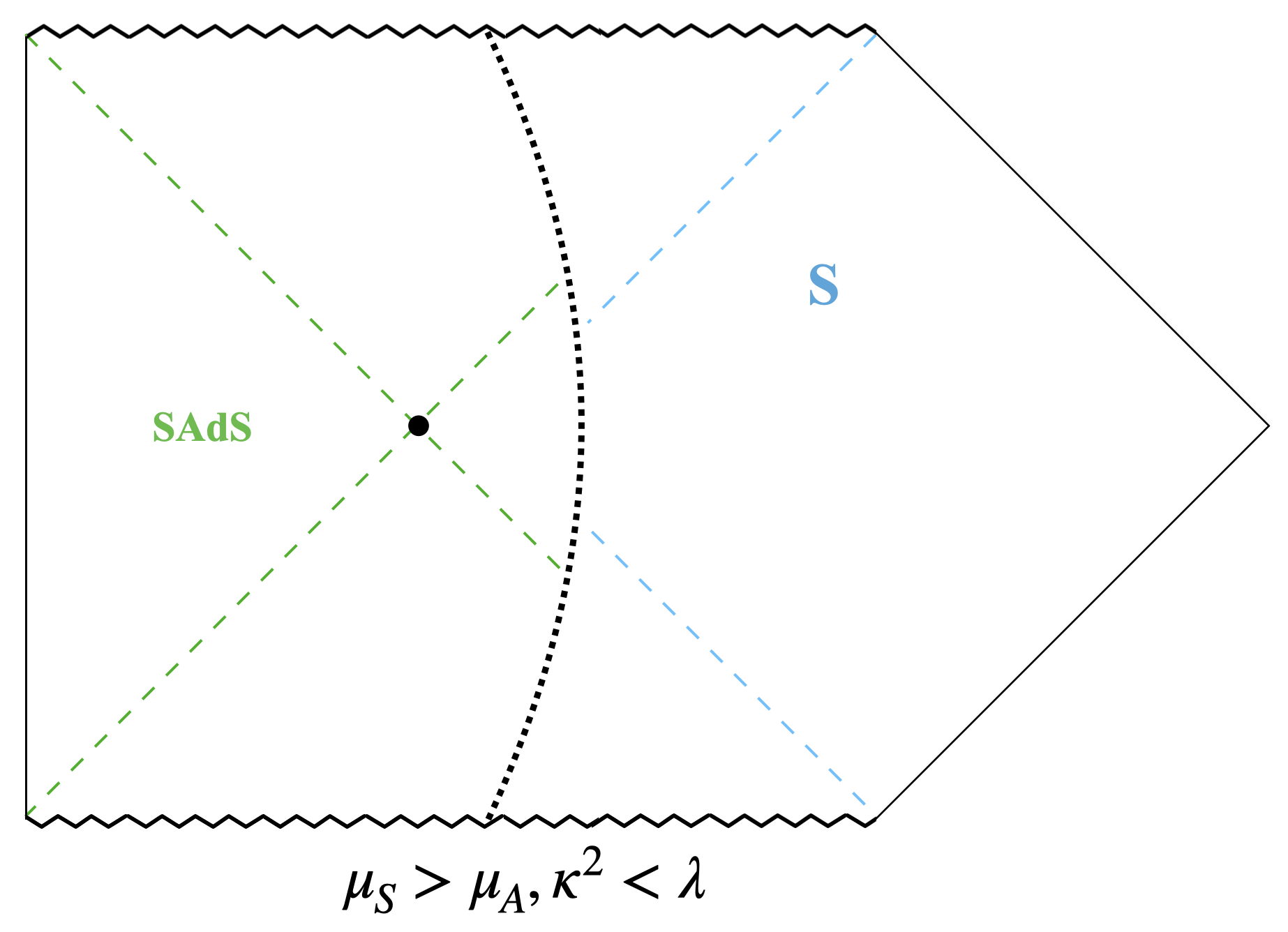}
        \caption*{(D2)}
        \label{fig:r0 D2}
    \end{subfigure}
    \caption{Configurations for branes at $r = 0$}
    \label{fig:SAdS configs}
\end{figure}

The process is analogous for the right junction, where SAdS is now the outer metric and Schwarzschild is the inner metric. However we can skip the analysis by noting: 
\begin{align}
    \label{eq:veff id}
    V_{\text{eff}}^{(L)}(r) &= V_{\text{eff}}^{(R)}(r) \\
    \label{eq:bi id}
    \beta_i^{(L)}(r) &= -\beta_o^{(R)}(r) \\
    \label{eq:bo id}
    \beta_o^{(L)}(r) &= -\beta_i^{(R)}(r)
\end{align}
where we denote the functions corresponding to the left and right junctions by (L) and (R). Note this is only true for symmetric gluings, i.e. when the inner (outer) metric and brane parameters on the left are the same as the outer (inner) metric and brane parameters on the right. A consequence of this is that only configurations that are symmetric about the central bifurcation surface (which is the Schwarzschild horizon) are allowed. Thus the Schwarzschild bifurcation surface cannot be `cut out', and the only possibilities for the gluings are the symmetric gluings (A1), (A1) and (C1), which correspond to case 1, 2, and 3 respectively from section \ref{sec:gluings}.
\\\\ 
Our goal was to consider fixed $\mu_S, \lambda, \kappa$ and show a gluing of case 1 or 3 can be performed for some tunable $\mu_A$. We also want the gluings to be arbitrarily far (i.e. the brane turning points $r_{\rm turn} \rightarrow \infty$), as in section \ref{sec:BLwormholes} we argue that the arguments from section \ref{sec:gluings} can be used to glue SAdS patches to BL wormholes arbitrarily far away. However, the constraint we have imposed on cases 1 and 3 (i.e. $\mu_A > \mu_S$) is only a necessary condition for the gluings to be performed. To find both necessary and sufficient conditions for the gluings is very difficult, as even solving for the condition that $V_{\rm eff} (r_0) > 0$ for some $r_0$ involves solving a quintic.
\\\\
Instead, we consider the small tension regime $\lambda > \kappa^2$. We also take $\mu_A > \mu_S$, as this is a necessary (but not sufficient) condition for the Schwarzschild bifurcation surface to be smaller than the SAdS bifurcation surface and thus to be the RT (minimal) surface. With these conditions, one finds $r_0 > 0$ such that
\begin{equation}
    V_{\rm eff}(r_0) = f_o(r_0) = 1 - \frac{\mu_S}{r_0}
\end{equation}
where we are considering the left junction (the outer metric is Schwarzschild) and $r_0 = \left(\frac{\mu_A - \mu_S}{\lambda - \kappa^2}\right)^{1/3}$. We see that taking $\mu_A \rightarrow \infty$ sends $r_0 \rightarrow \infty$, which implies $V_{\rm eff} (r_0) \rightarrow 1 > 0$, i.e. the potential has a positive maximum as required. Note that $r_{\rm turn} < r_0$, but the effect of $\mu_A \rightarrow \infty$ is to shift $V_{\rm eff}$ to the right, also implying $r_{\rm turn} \rightarrow \infty$. Our gluings are thus arbitrarily far as required. Furthermore, $r_A > r_S$ in the limit $\mu_A \rightarrow \infty$, where $r_A$ and $r_S$ are the SAdS and Schwarzschild black hole radius, respectively. Thus we also ensure the Schwarzschild bifurcation surface is smaller than the SAdS surfaces, and is thus the RT surface. The condition $\mu_A > \mu_S$ ensures $\beta_{i/o}$ have the correct asymptotic behavior for $r \rightarrow 0$; what we also require is that $\beta_o (r_{\rm turn}) < 0$, as we must have the branes pass through the exterior regions of the Schwarzschild metric. For large $r$, $\beta_o$ is an increasing function and so $\beta_o(r_{\rm turn}) < \beta_o (r_0) = 0$ as needed.
\\\\
Thus, in the small tension limit ($\lambda > \kappa^2$) and for $\mu_A > \mu_S$, taking $\mu_A \rightarrow \infty$ guarantees an arbitrarily far gluing of case 1 or 3. Determining which of case 1 or 3 we have comes down to the sign of $\beta_i (r_{\rm turn})$, which we cannot determine as $\forall r, \beta_i > \beta_o$. Numerical manipulation of the parameters suggests that in the limit $\mu_A \rightarrow \infty$ we obtain a case 1 gluing (including all 3 bifurcation surfaces, with Schwarzschild being minimal).

\subsection{Schwarzschild-de Sitter metric gluings}
\label{sec:dS gluings}

For SdS (Schwarzschild-de Sitter), we use the same method as for the Schwarzschild metric, and start with the $r=0$ gluings. At the left junction, the left SAdS metric is ``inner'' while the SdS metric is ``outer'', and vice versa at the right junction. Assume the left/right SAdS metrics and branes are identical. Let the SAdS black hole mass be $\mu_A$, the SdS black hole mass be $\mu_S$, $\lambda_o = \lambda_{dS} > 0$ and $\lambda_i = \lambda_{SAdS} = -\lambda > 0$. Now we obtain $V_{\text{eff}}$ and the extrinsic curvatures:
\begin{align}
    \label{eq:ds veff}
    V_{\text{eff}} (r) &= 1 - \left(\lambda_{dS} + \frac{\left(\lambda_{dS} + \lambda - \kappa^2\right)^2}{4\kappa^2}\right)r^2 - \left(\mu_S + \frac{(\lambda_{dS} + \lambda -\kappa^2)(\mu_S - \mu_A)}{2\kappa^2}\right)\frac{1}{r} - \frac{(\mu_S - \mu_A)^2}{4\kappa^2}\frac{1}{r^4} \\
    \label{eq:ds bi}
    \beta_i(r) &= \left(\frac{\lambda_{dS} + \lambda + \kappa^2}{2\kappa}\right)r + \left(\frac{\mu_S - \mu_A}{2\kappa}\right)\frac{1}{r^2} \\
    \label{eq:ds bo}
    \beta_o(r) &= \left(\frac{\lambda_{dS}  + \lambda -\kappa^2}{2\kappa}\right)r + \left(\frac{\mu_S - \mu_A}{2\kappa}\right)\frac{1}{r^2}
\end{align}
Note that as $r \rightarrow \pm\infty$, $V_{\text{eff}} \rightarrow -\infty$. Furthermore, $\beta_i > \beta_o \forall r \in [0,\infty)$.

Now consider the worldlines for branes at $r=0$ as shown in Fig. \ref{fig:branes 0 left SdS}.

\begin{figure}[ht]
    \centering
    \includegraphics[width =0.9\textwidth]{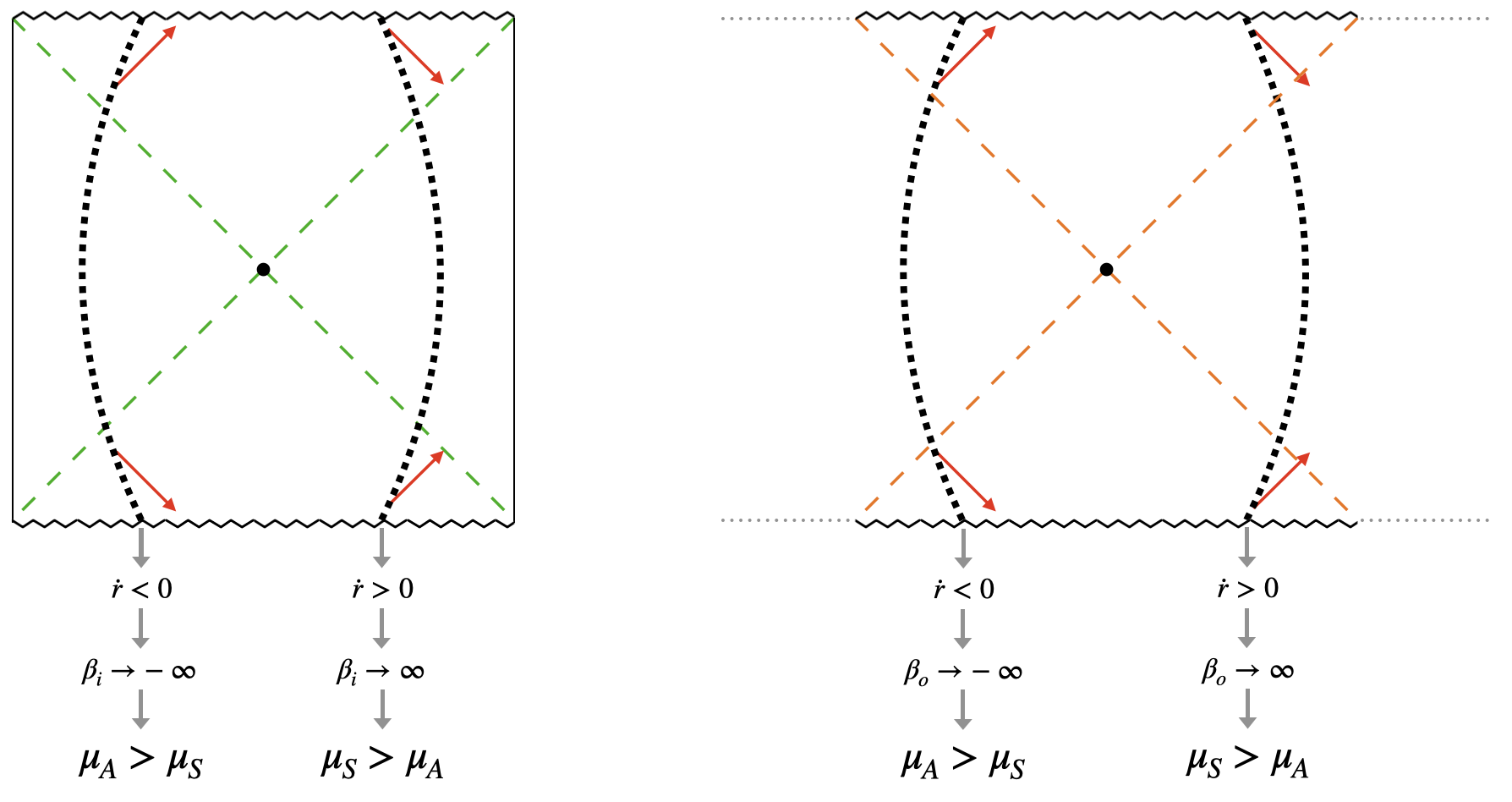}
    \caption{Conditions on branes attached at $r = 0$ with SAdS (green) ``inside" and SdS (orange) ``outside"}
    \label{fig:branes 0 left SdS}
\end{figure}

Once again, the red arrows show the direction of the outward normal near the brane boundaries ($r = 0$). The behavior of both $\beta_{i}$ and $\beta_o$ in the limit $r \rightarrow 0$ is governed by the 2nd term in \eqref{eq:ds bi}-\eqref{eq:ds bo} i.e. $\beta_{i/o} \rightarrow \infty$ if $\mu_S > \mu_A$ and vice versa.
\\\\
Now consider the possible worldlines for $\mu_A > \mu_S$ as shown in Fig. \ref{fig:dS a > s}. By the relationships of $V_{\rm eff}, \beta_i,\beta_o$ between the left and right junctions, we know that only a symmetric configuration may be allowed, and so the central SdS bifurcation surface may not be cut out.

\begin{figure}[ht]
    \centering
    \begin{subfigure}[b]{0.45\textwidth}
        \centering
        \includegraphics[width=\textwidth]{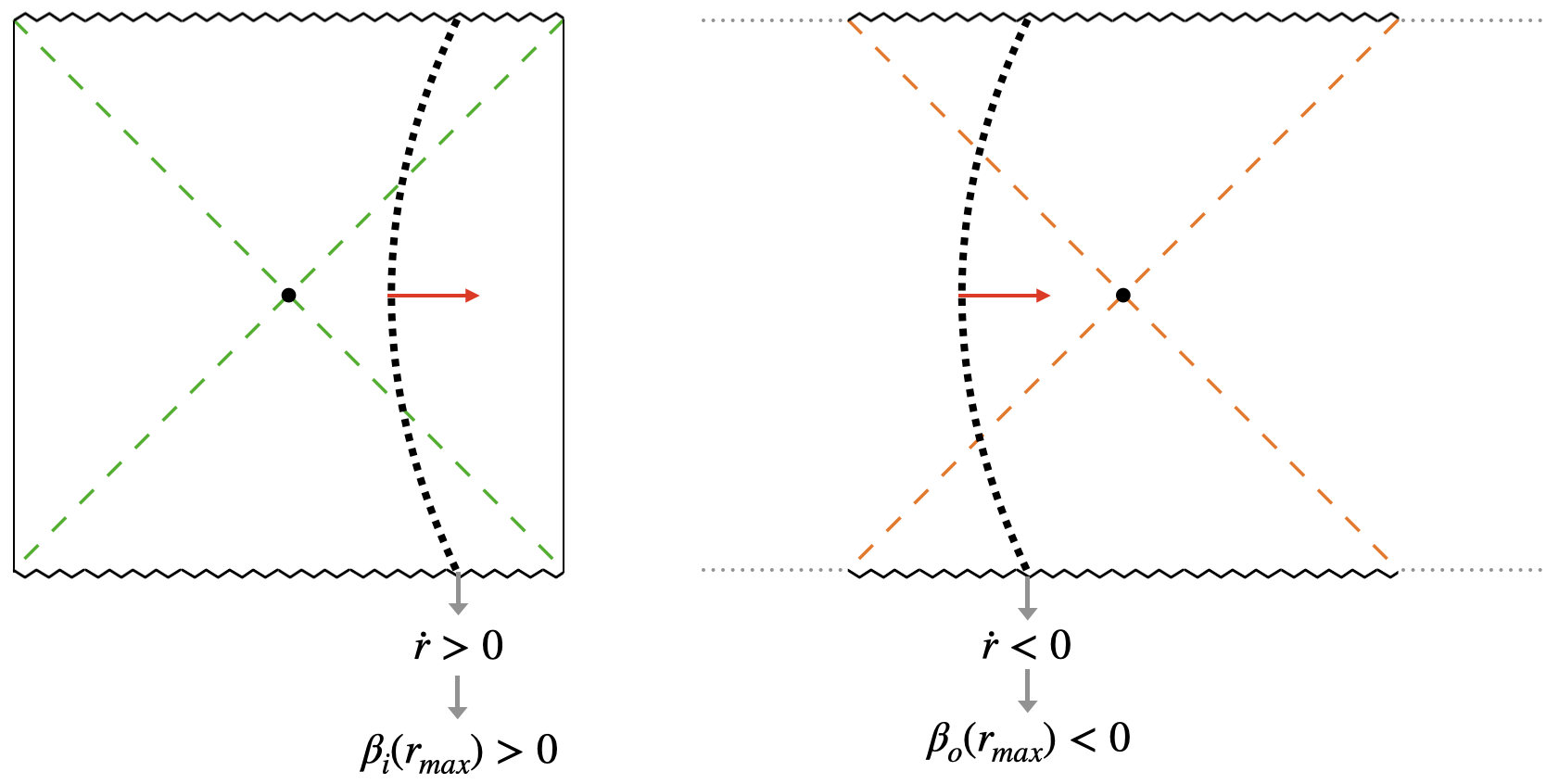}
       \caption*{(A1)}
        \label{fig:r0_left_SdS_A1}
    \end{subfigure}\hspace{0.03\textwidth}
    \begin{subfigure}[b]{0.45\textwidth}
        \centering
        \includegraphics[width=\textwidth]{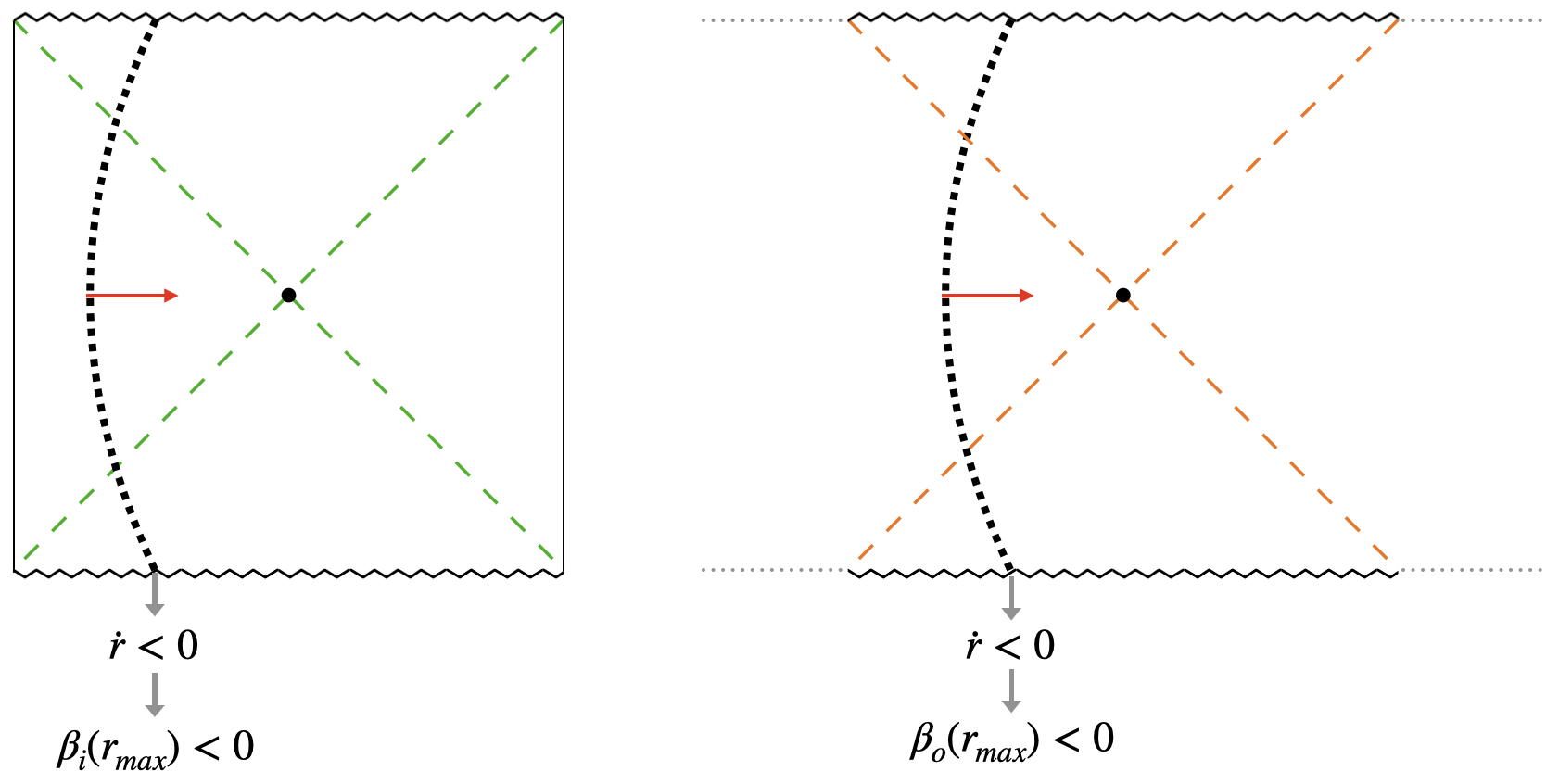}
        \caption*{(B1)}
        \label{fig:r0 SdS B1}
    \end{subfigure}
    \caption{Configurations for $\mu_A > \mu_S$}
    \label{fig:dS a > s}
\end{figure}

Similarly consider the configurations for $\mu_S > \mu_A$ shown in Fig. \ref{fig:dS s > a}. Again we ignore geometries where the brane passes to the right of the SdS bifurcation surface as these would not yield symmetric gluings.

\begin{figure}[ht]
    \centering
    \begin{subfigure}[b]{0.45\textwidth}
        \centering
        \includegraphics[width=\textwidth]{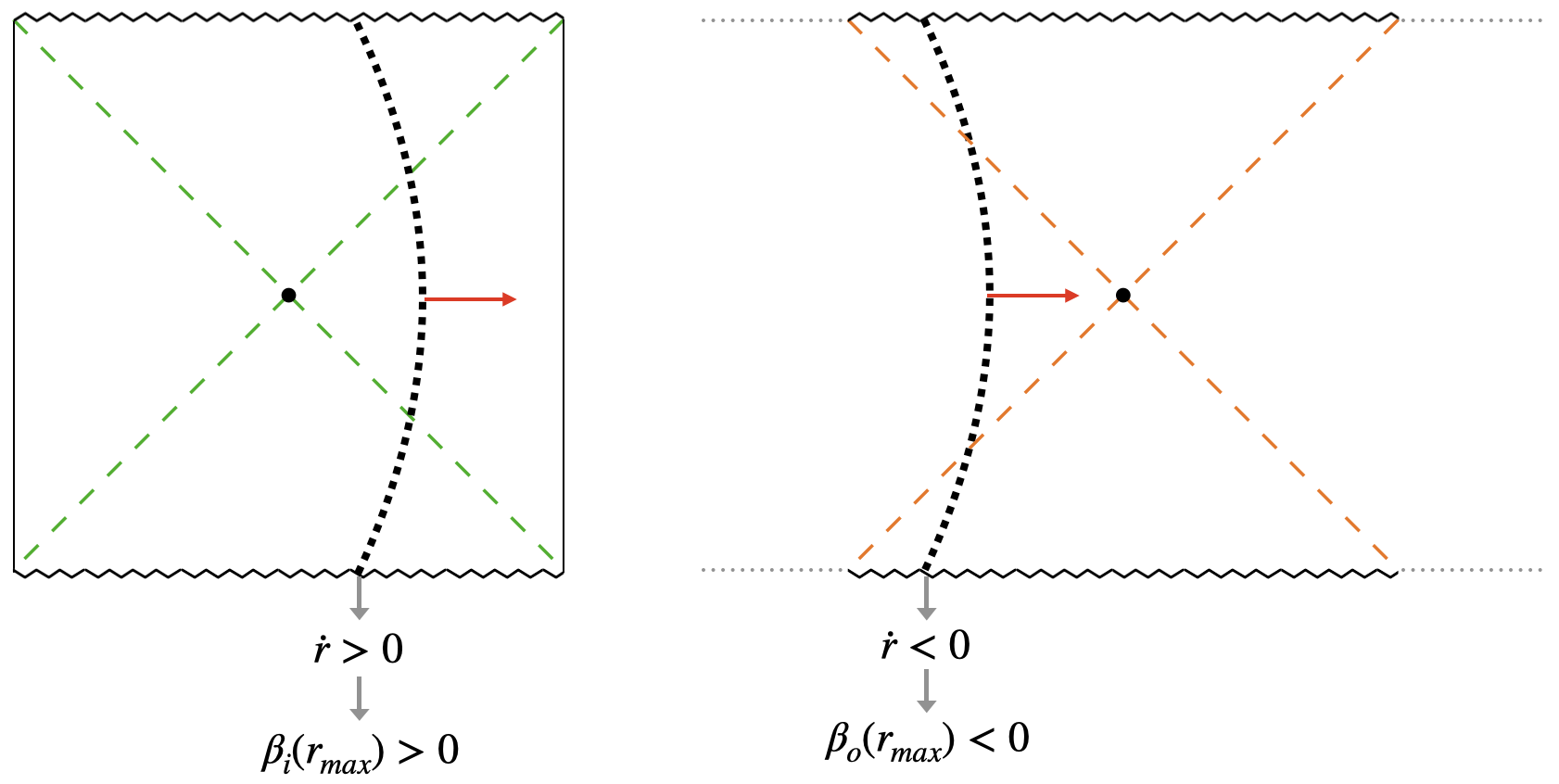}
       \caption*{(A2)}
        \label{fig:dS r0 left SAdS A2}
    \end{subfigure}\hspace{0.03\textwidth}
    \begin{subfigure}[b]{0.45\textwidth}
        \centering
        \includegraphics[width=\textwidth]{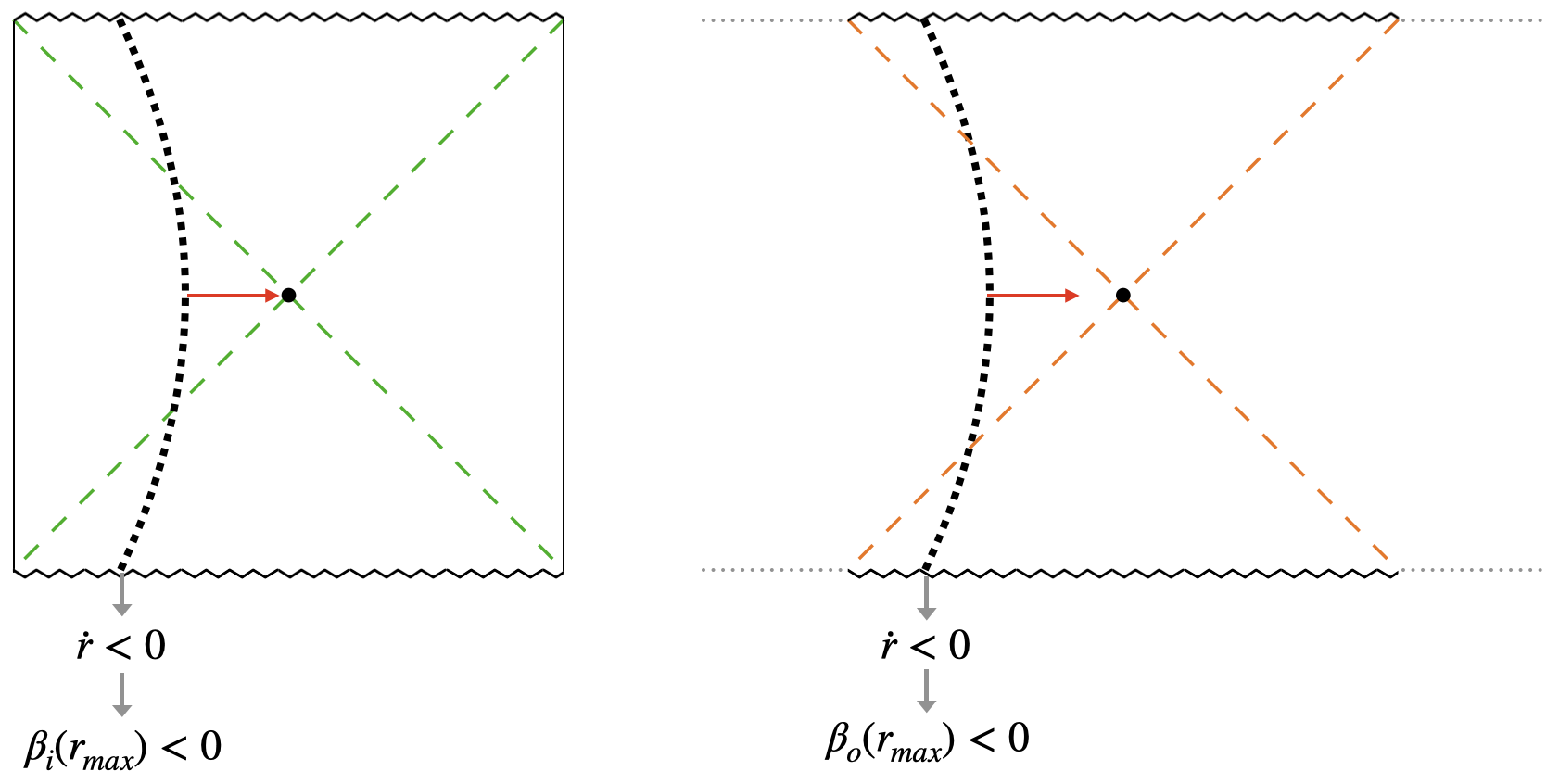}
       \caption*{(B2)}
        \label{fig:dS r0 left SAdS B2}
    \end{subfigure}\hspace{0.03\textwidth}
    \caption{Configurations for $\mu_S > \mu_A$}
    \label{fig:dS s > a}
\end{figure}

These are similar to the configurations found for $\mu_A > \mu_S$; however for $\mu_S > \mu_A$, $\beta_i > 0  \forall r$, so B2 is disallowed. Furthermore, $\beta_o < 0$ requires $\kappa > \sqrt{\lambda_{dS} + \lambda}$ for A2.

Thus we can summarize the (one-sided) gluings for $r = 0$ in Fig. \ref{fig:dSAdSgluings}, which correspond to cases 1, 2, and 3 in Fig. \ref{fig:SdS r=0 configs}.
\begin{figure}[ht]
    \centering
    \begin{subfigure}[b]{0.35\textwidth}
        \centering
        \includegraphics[width = \textwidth]{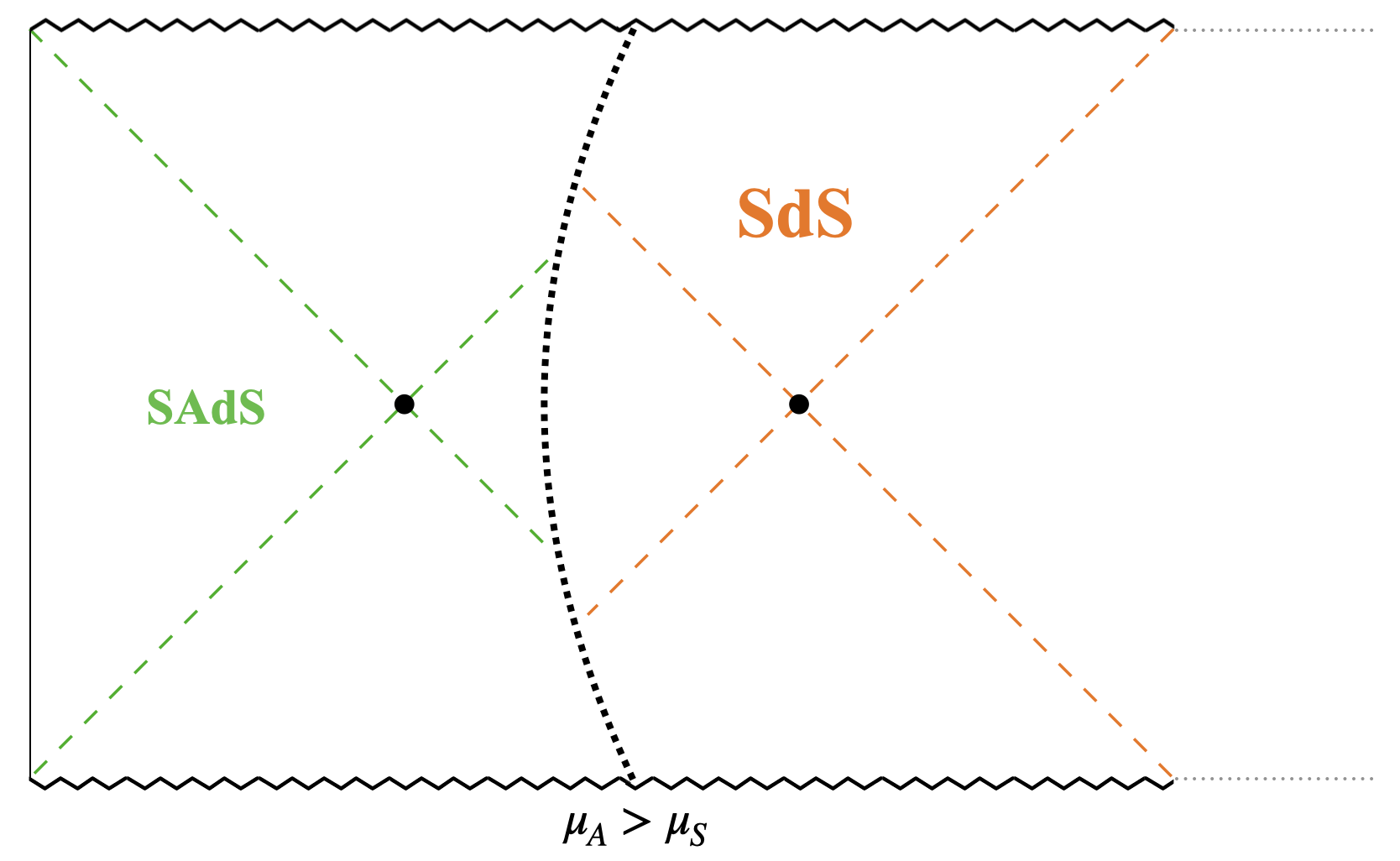}
       \caption*{(A1)}
        \label{fig:r0 left SdS A}
    \end{subfigure}\hfill
    \begin{subfigure}[b]{0.35\textwidth}
        \centering
        \includegraphics[width=\textwidth]{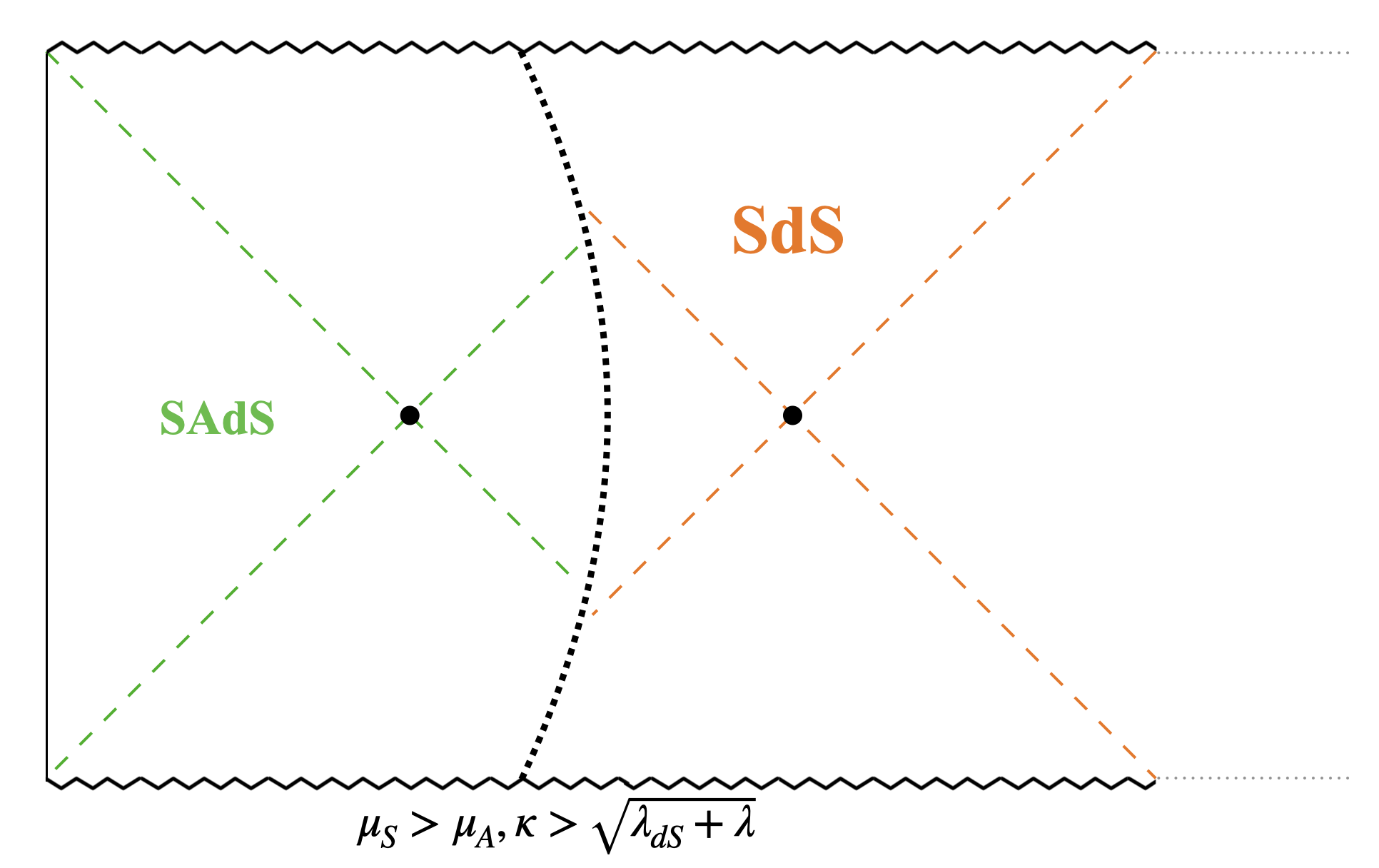}
        \caption*{(A2)}
        \label{fig:r0 left SdS A2}
    \end{subfigure}\hfill
    \begin{subfigure}[b]{0.27\textwidth}
        \centering
        \includegraphics[width=\textwidth]{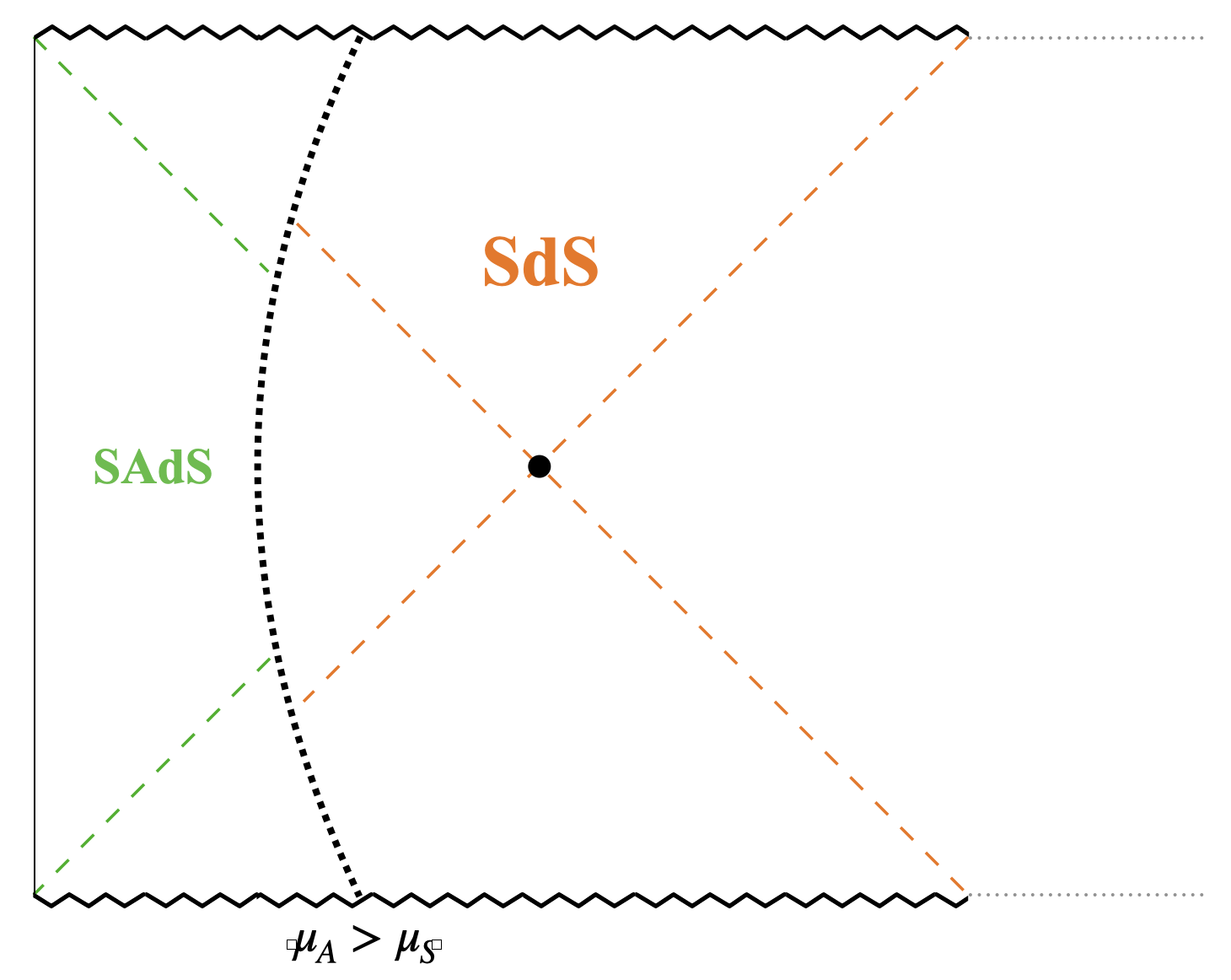}
        \caption*{(B1)}
        \label{fig:r0 left SdS B1}
    \end{subfigure}\hfill
    \caption{Configurations for branes at $r = 0$ with central metric SdS}\label{fig:dSAdSgluings}
\end{figure}

Now we consider gluings at $r = \infty$. Here $\beta_{i/o}$ are dominated by the 1st term ($\mathcal{O}(r)$) in \eqref{eq:ds bi}-\eqref{eq:ds bo} which leads to the brane trajectories shown in Fig. \ref{fig:branes inf left SdS}.

\begin{figure}[ht]
    \centering
    \includegraphics[width =0.9\textwidth]{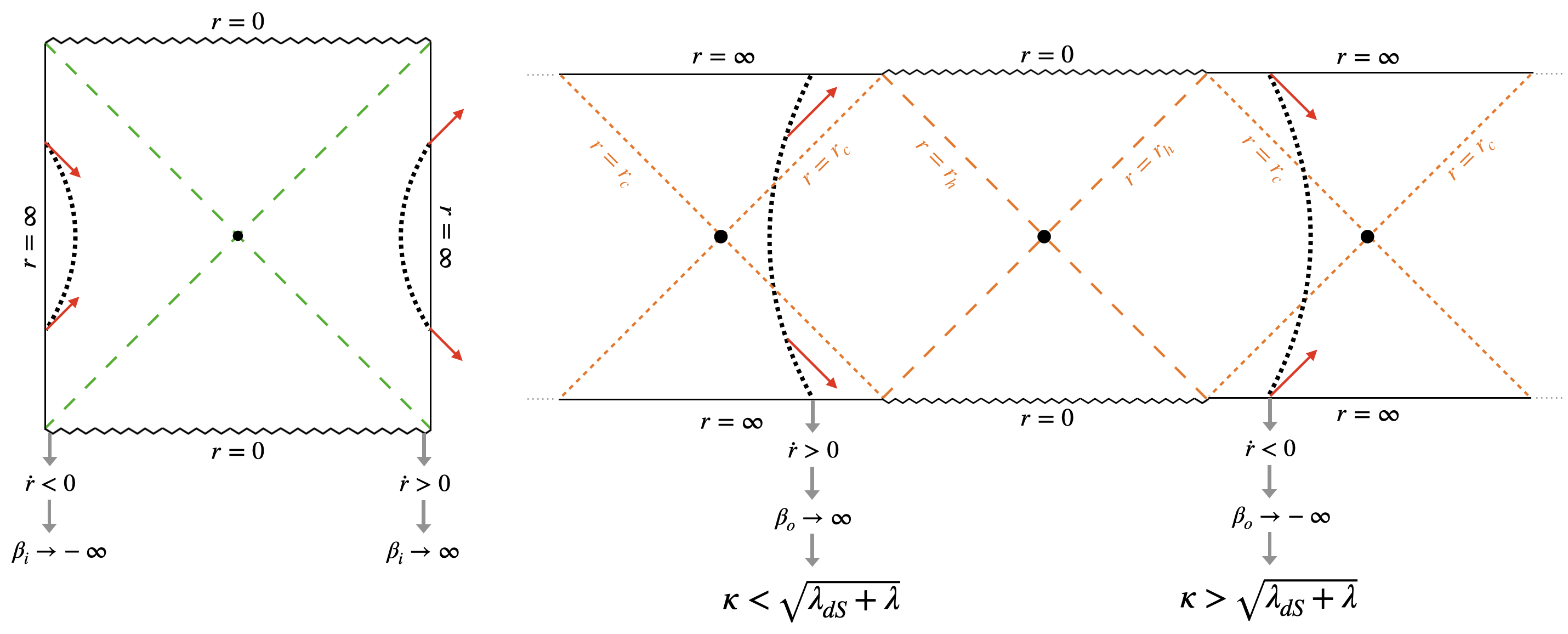}
    \caption{Conditions on branes attached at $r = \infty$. Note the limits are now for $r \rightarrow \infty$}
    \label{fig:branes inf left SdS}
\end{figure}

Since $\beta_i \rightarrow \infty$ when $r \rightarrow \infty$ for all $\lambda, \lambda_{dS}, \kappa$, the brane glued to the 'left' boundary of SAdS is disallowed; the brane can only be glued to the 'right' conformal boundary. Thus, we must only consider what bulk region the brane passes through in the SdS metric. There are 2 choices: to the left or to the right of the cosmological bifurcation surface, which have constraints as shown in Fig. \ref{fig:rInf constraint}.

\begin{figure}[ht]
    \centering
    \begin{subfigure}[b]{0.45\textwidth}
        \centering
        \includegraphics[width = \textwidth]{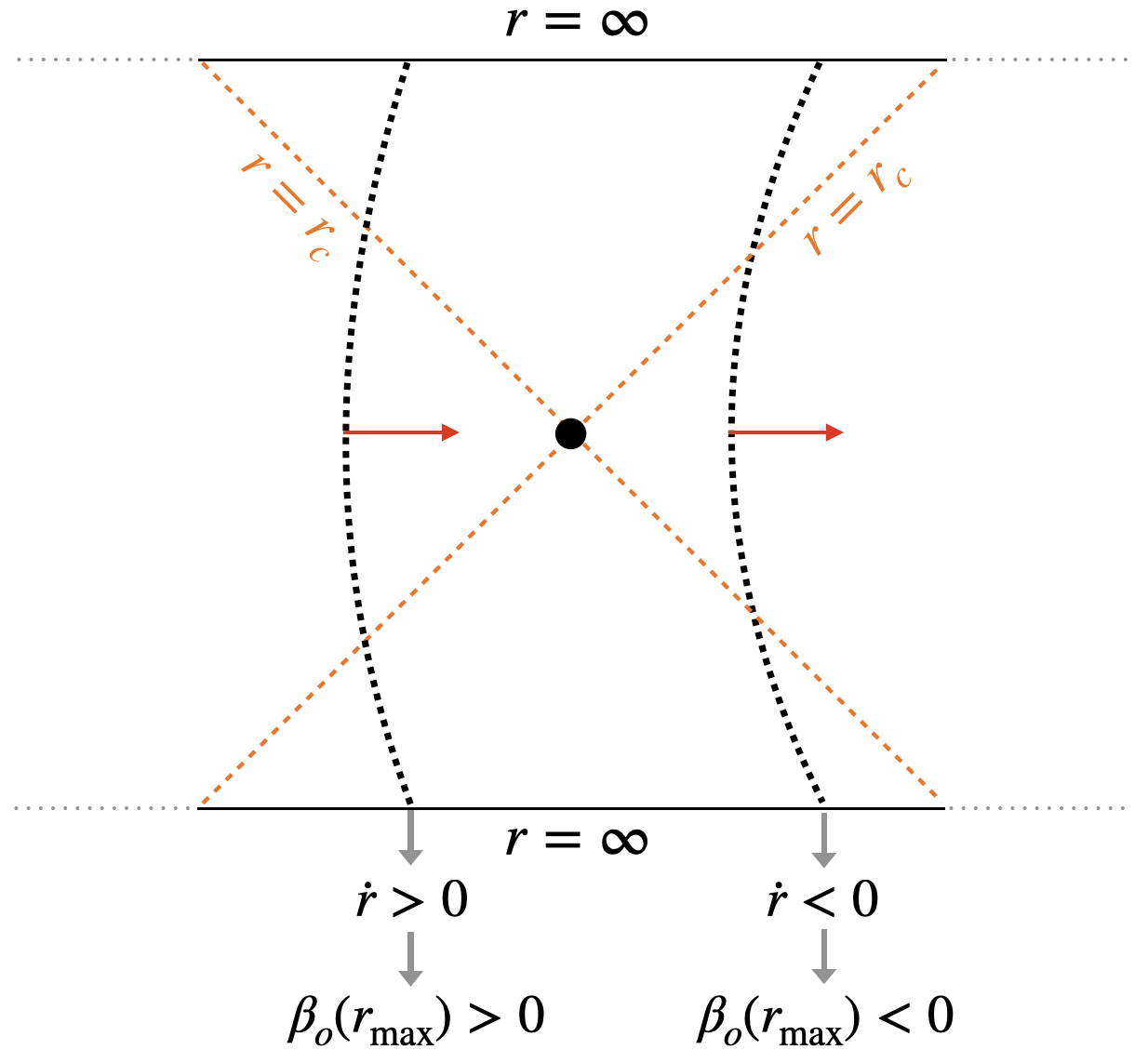}
       \caption{$\kappa < \sqrt{\lambda_{dS} + \lambda}$}
        \label{fig:rInf small k}
    \end{subfigure}\hspace{0.01\textwidth}
    \begin{subfigure}[b]{0.465\textwidth}
        \centering
        \includegraphics[width=\textwidth]{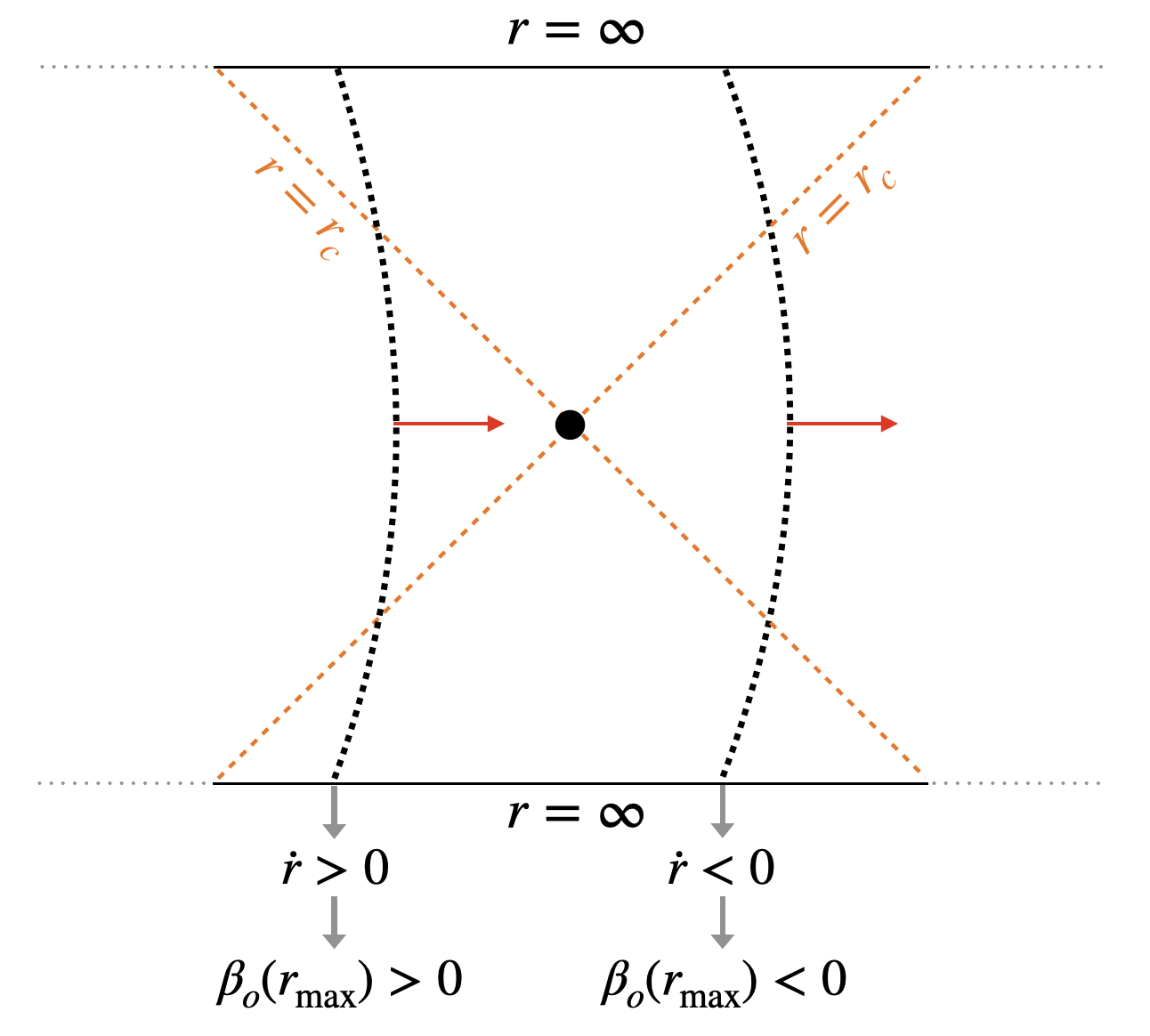}
        \caption{$\kappa > \sqrt{\lambda_{dS} + \lambda}$}
        \label{fig:rInf large k}
    \end{subfigure}
    \caption{Conditions for $r = \infty$ gluings. Note that the cosmological bifurcation surface is (locally) maximal, not minimal like the black hole bifurcation surface}
    \label{fig:rInf constraint}
\end{figure}

Starting with $\kappa < \sqrt{\lambda_{dS} + \lambda}$, this constrains $\beta_o$ to always be positive unless $\mu_A > \mu_S$. Thus for the brane on the right (i.e. "cutting out" the cosmological horizon) we have the additional constraint $\mu_A > \mu_S$, while there is no additional constraint for the brane on the left (i.e. including the cosmological horizon).

Similarly for $\kappa > \sqrt{\lambda_{dS} + \lambda}$, $\beta_o$ is always negative unless $\mu_S > \mu_A$. Thus for the brane on the left we have the additional constraint $\mu_S > \mu_A$ while there is no additional constraint for the brane on the right.

\begin{figure}[ht]
    \centering
    \begin{subfigure}[b]{0.49\textwidth}
        \centering
        \includegraphics[width = 0.7\textwidth]{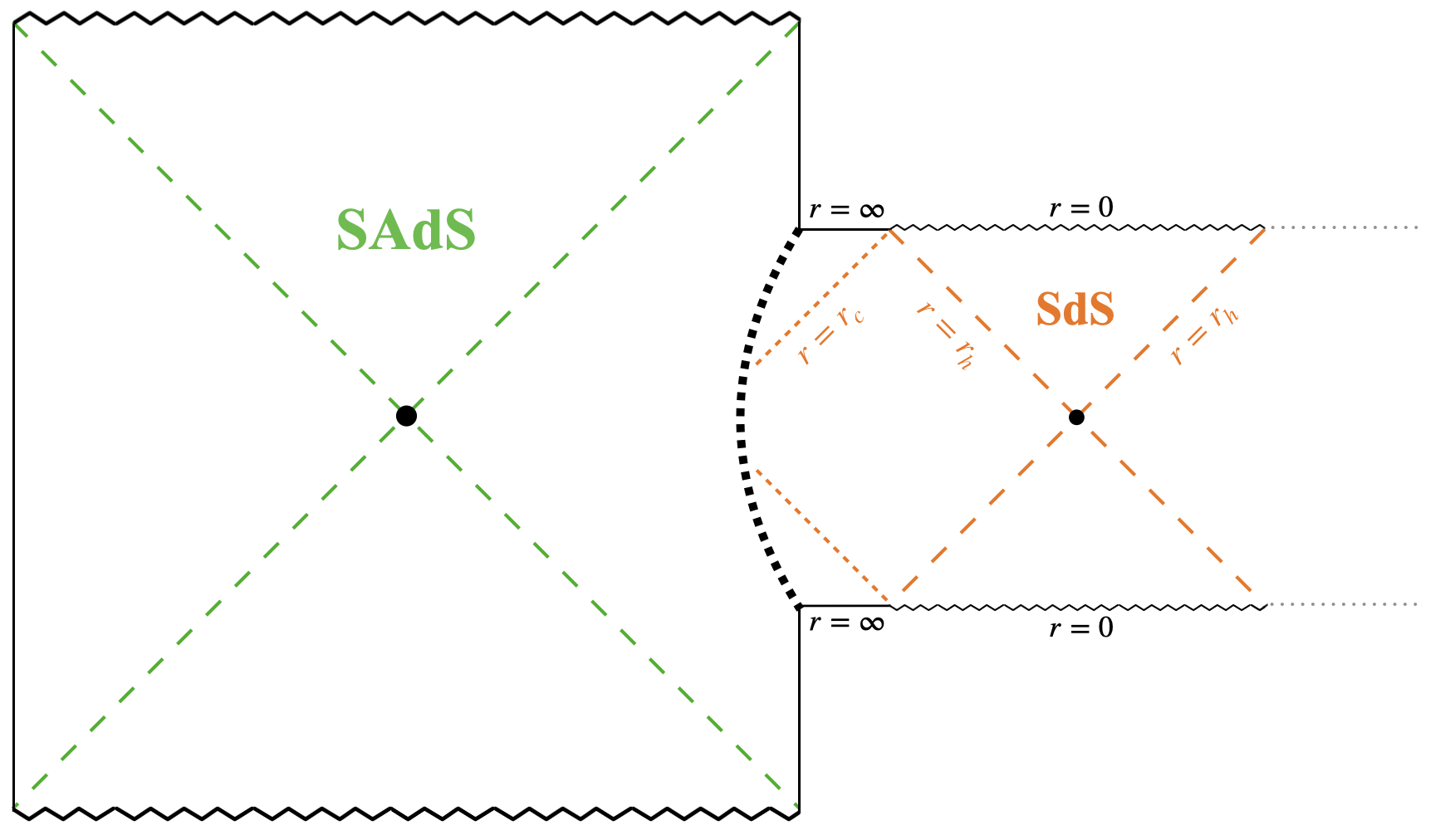}
       \caption*{(C) For $\kappa < \sqrt{\lambda_{dS} + \lambda}$, additional constraint $\mu_A > \mu_S$ needed. No additional constraint for $\kappa > \sqrt{\lambda_{dS} + \lambda}$}
        \label{fig:rInf cut rc}
    \end{subfigure}\hfill
    \begin{subfigure}[b]{0.49\textwidth}
        \centering
        \includegraphics[width=\textwidth]{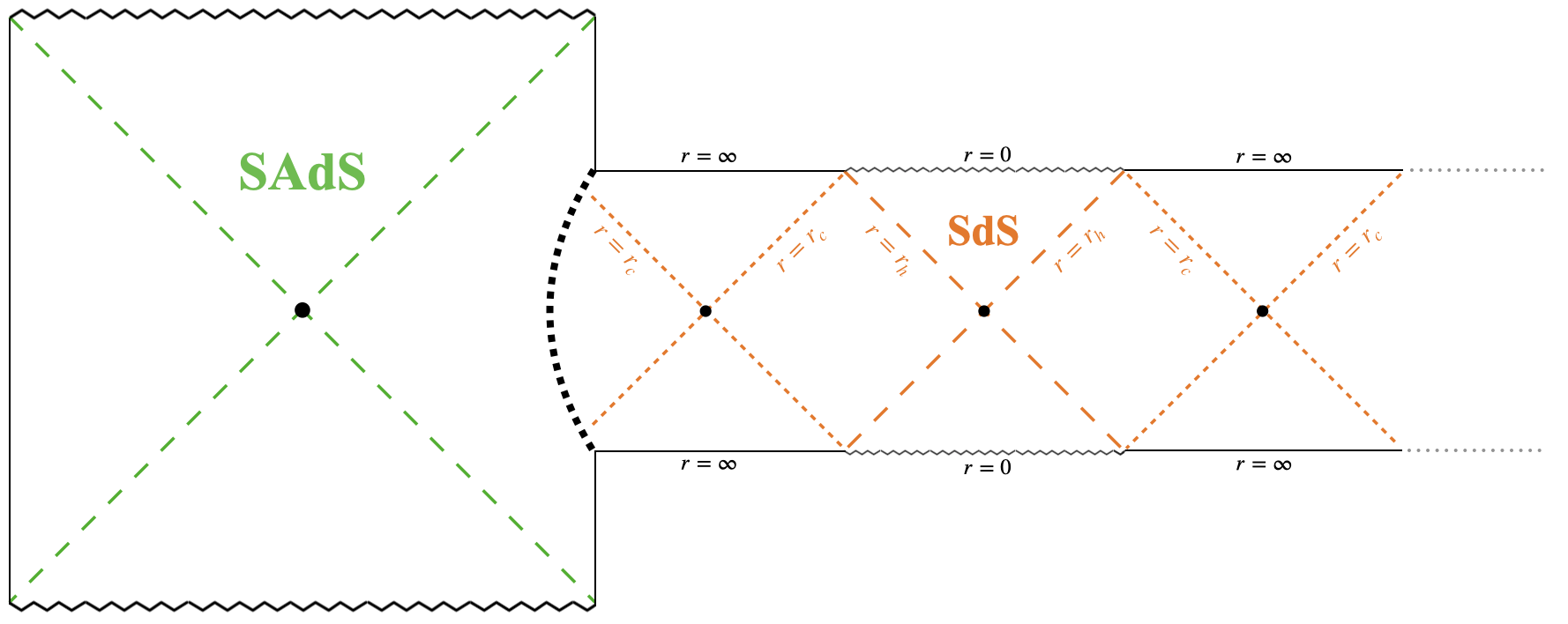}
        \caption*{(D) For $\kappa > \sqrt{\lambda_{dS} + \lambda}$, additional constraint $\mu_S > \mu_A$ needed. No additional constraint for $\kappa < \sqrt{\lambda_{dS} + \lambda}$}
        \label{fig:rInf include rc}
    \end{subfigure}
    \caption{Conditions for $r = \infty$ gluings. Note that the SdS cosmological horizon is the maximal surface, not minimal like the black hole horizon}
    \label{rInf}
\end{figure}

With this we have our allowed gluings for $r = \infty$ as shown in Fig. \ref{rInf}, the symmetric 2-sided gluings of which coincide with the case 4 and 5 geometries shown in Fig. \ref{fig:SdS r=inf case 4} and \ref{fig:SdS r=inf case 5}. Furthermore, there is no analytic method to ensure all these cases can be formulated with parameters within the Nairai limit; this can be seen numerically where parameters obeying the Nairai limit can still be found to form all 5 types of gluings.

\section{Further details about Brill-Lindquist}
\subsection{Symmetric configurations}
\label{sec:coords}

The symmetries of the Brill-Lindquist (BL) metric for certain parameter choices are more clearly manifest when going to inverted coordinates, where we have the ansatz,

\be
ds^2 = \phi(\vs)^4d\vs\,^2\,,\qquad
\phi(\vs) = \sum_{i=1}^n\frac{\mu_i}{\|\vs-\vs_i\|}\,,
\ee

Taking $\vs_n = 0$, we can perform the inversion $\vs \rightarrow\mu_n^2 \vr /r^2$ to relate the inverted coordinates to the original BL coordinates.\footnote{Recall that the point $s = \infty$ must be appended to the inverted coordinates, thus making the inverted coordinates isomorphic to a 3-sphere (via steregraphic projection)} To obtain a cylindrically symmetric metric, we require the $n-1$ punctures to be collinear (in BL co-ordinates) and only the separations between masses are physically meaningful, not the actual positions. Thus the positions $\{\vr_i\}$ constitute $n-2$ degrees of freedom, as we may fix say $\vr_1 = 0$. Additionally, the BL parameters $\{\alpha_i\}$ constitute an additional $n-1$ degrees of freedom, giving a total of $2n - 3$ degrees of freedom. However, in inverted coordinates we have the $n-2$ positions $\{s_i\}$ ($\vs_n = 0$ and $\vs_1$ is fixed by $\vr_1$) and $n$ parameters $\{\mu_i\}$, yielding $2n - 2$ degrees of freedom. This suggests a redundant parameter in inverted coordinates, and we make the choice to additionally fix $\vs_2$. The question now remains on what is a convenient choice for $\vs_2$. 

Motivated by the symmetric configurations of subsections \ref{sec:n3BL} and \ref{sec:n4BL}, we choose to set up the punctures in inverted coordinates along the circumference of a unit circle centered in the $z'-\rho'$ plane at $\rho' = -1$ (numerical analysis uses cylindrical coordinates as detailed in Appendix \ref{sec:shooting}). Now we take $\mu_n$ placed at the origin and $\mu_1,\mu_2$ placed on the vertices of a regular $n-$gon inscribed in the circle and the remaining punctures arbitrarily along the circle, as shown in Fig. \ref{fig:inverted chart}. There are 2 reasons for doing so, the first being that performing the inversion $\vs \rightarrow \mu_n^2 \vr/r^2$ maps the circle to the line $\rho = -0.5$, thus ensuring the punctures are collinear in BL coordinates. The 2nd reason is that by placing $\mu_1, \mu_2$ (and $\mu_n$) on the vertices of a regular $n-$gon, to obtain a $D_n$ symmetric configuration one must simply place the remaining $n-3$ punctures on the remaining vertices of the $n-$gon.

\begin{figure}[ht]
    \centering
    \includegraphics[width = 0.9\textwidth]{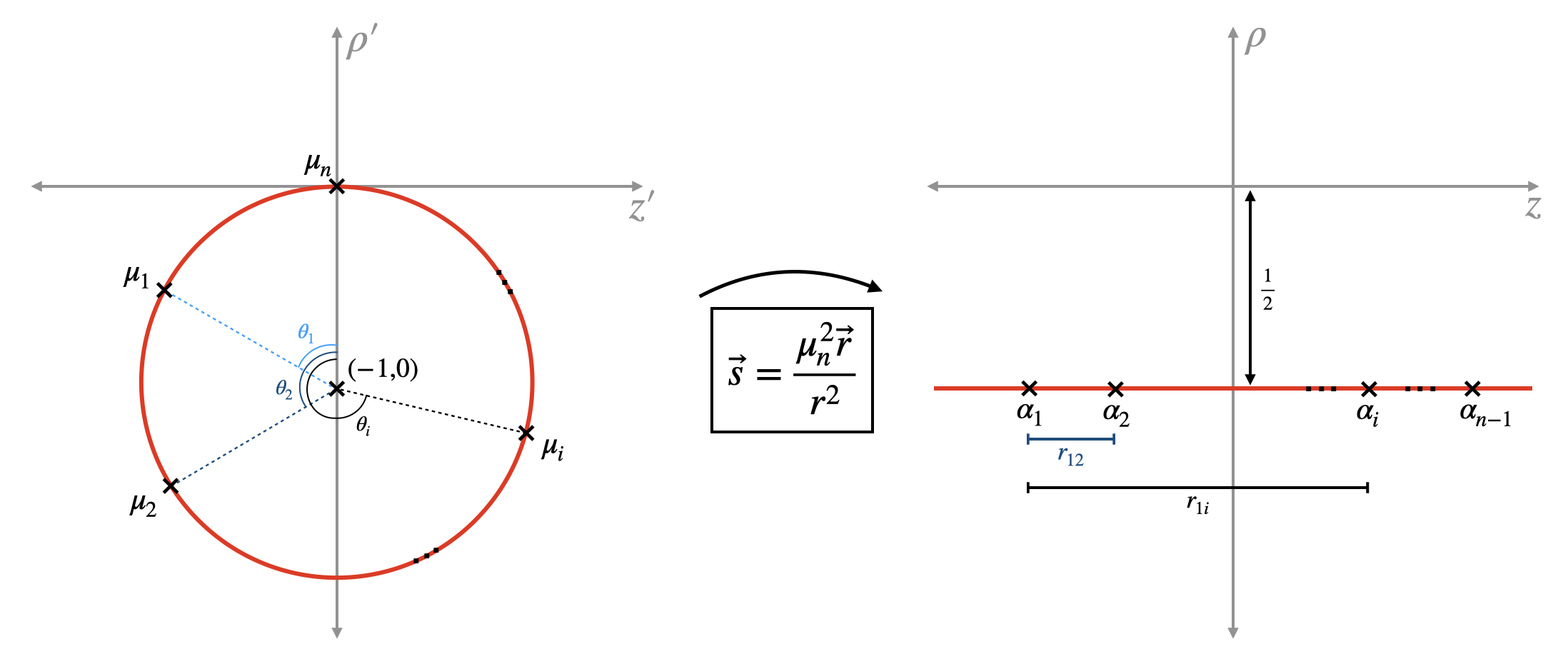}
    \caption{Setup to ensure the masses remain collinear. On the left are the masses placed on a circle in inverted co-ordinates (represented by the primed cylindrical co-ordinates) which are transformed by $\vec{s} = \mu_n^2 \vec{r}/r^2$ to the Brill-Lindquist co-ordinates on the right (unprimed cylindrical coordinates). Note that $\rho, \rho'$ must be non-negative; here we are using an abuse of notation where $\rho, \rho' \geq 0$ corresponds to $\phi, \phi' = 0$ and $\rho, \rho' < 0$ corresponds to $\phi, \phi' = \pi$, with $\phi, \phi'$ being the respective cylindrical angles}
    \label{fig:inverted chart}
\end{figure}

More explicitly, we can parameterise the positions of the punctures in inverted coordinates using the polar angle along the circle (with $\theta_n = 0$ being the origin), i.e. $\vs_i = (z_i',\rho_i') = (-\sin{\theta_i}, \cos{\theta_i}-1)$. Then we fix the angles $\theta_1 = \frac{2\pi}{n}$ and $\theta_2 = \frac{4\pi}{n}$ . Now the other masses can be placed anywhere within $\theta_n \in (\theta_2, 2\pi)$ with arbitrary masses $\mu_1,\dots,\mu_{n}$ and $\theta_n > \theta_m$ for $n>m$ to yield arbitrary configurations of $n-1$ collinear masses in the original co-ordinates. For our $D_n$ symmetric configurations, we simply set $\mu_1 = \mu_1 = \dots = \mu_{n}$ and $\theta_i = \frac{2\pi i}{n}$. This yields the relations,

\begin{equation}
    \alpha_i = \frac{\mu_i\mu_n}{\sqrt{2(1 - \cos{\theta_i})}},
    r_{1i} = \frac{\mu_n^2}{2} \left(\frac{\sin{\theta_1}}{1 - \cos{\theta_1}} - \frac{\sin{\theta_i}}{1 - \cos{\theta_i}}\right)
\end{equation}

where $r_{1i}$ is the separation from puncture 1 to $i$, assuming 1 is at the origin in BL coordinates.

\subsection{Finding minimal surfaces}
\label{sec:shooting}

Standard techniques to find minimal surfaces fail to provide a general, analytic answer. As a result, numerical methods are used to locate Brill-Lindquist minimal surfaces. The standard approach is to write down an integral for the area and find the surface that minimizes this. We also impose cylindrical symmetry by setting all point masses to be collinear. It is then best to use cylindrical coordinates $(\rho,\theta,z)$ to exploit this symmetry, as the problem of finding the minimal surface is now simplified to finding a curve in the upper half $\rho-z$ plane that can be rotated about the z-axis to obtain the complete surface.

Parameterising the curve by $(\rho(\lambda),z(\lambda))$ we can write down the area functional:
\begin{align}
    A &= 2\pi \int_{\lambda = 0}^{\lambda = \lambda_1} \rho \psi^2 \left( \psi^4 dz^2 + \psi^4 d\rho^2 \right)^{1/2} \\
    \label{eq:area integral}
    &= 2\pi \int_{\lambda = 0}^{\lambda = \lambda_1} \left[Q^2 \dot{z}^2 + Q^2 \dot{\rho}^2\right] ^{1/2}d\lambda 
\end{align}
where $Q = \rho \psi^4$ and $\cdot = d/d\lambda$ (unless specified, from now on a dot represents a derivative w.r.t $\lambda$).

Minimising yields an equation each for $\rho$ and $z$:
\begin{align}
    \label{eq:rho eq}
    \frac{d\left(Q \dot{\rho}(\dot{z}^2 + \dot{\rho}^2)\right)}{d\lambda} = Q_{,\rho} \left(\dot{z}^2 + \dot{\rho}^2\right)^{1/2}\\
    \label{eq:z eq}
    \frac{d\left(Q \dot{z}(\dot{z}^2 + \dot{\rho}^2)\right)}{d\lambda} = Q_{,z} \left(\dot{z}^2 + \dot{\rho}^2\right)^{1/2}
\end{align}
To solve these equations, there are several gauge choices we can make. We focused on the gauges proposed by Cadez \cite{cadez} and Bishop \cite{bishop}.

Bishop proposed the gauge $\dot{z}^2 + \dot{\rho}^2 = Q^{-2}$. The equations then reduce to:
\begin{align}
    \label{eq:bishop z}
    \ddot{\rho} &= \frac{Q_{,\rho}}{Q^3} - \frac{2\dot{\rho}\dot{Q}}{Q}\\
    \label{eq:bishop rho}
    \ddot{z} &= \frac{Q_{,z}}{Q^3} - \frac{2\dot{z}\dot{Q}}{Q}
\end{align}
The advantage of this gauge is that substituting into \ref{eq:area integral} reduces the expression for the area to $A = 2\pi\lambda_1$.

Cadez proposed the alternative gauge $\dot{z}^2 + \dot{\rho}^2 = \psi^8$ to aid with numerical computations, which yields the equations:
\begin{align}
    \label{eq:cadez rho}
    \ddot{\rho} &= \frac{\dot{z}^2}{\rho} + \frac{1}{2} \left(\psi^8\right)_{,\rho}\\
    \label{eq:cadez z}
    \ddot{z} &= -\frac{\dot{z}\dot{\rho}}{\rho} + \frac{1}{2} \left(\psi^8\right)_{,z}
\end{align}
Both gauges were used in our analysis, with the Cadez gauge providing greater numerical accuracy and the Bishop gauge providing an easy computation for the area.
\\\\
Given the differential equations, we can now write down the boundary value problem (BVP) that corresponds to finding the minimal surfaces:
\begin{align}
    \label{eq:rho bvp}
    \rho(0) = \rho(\lambda_1) = 0 \\
    \label{eq:z dot bvp}
    \dot{z}(0) = \dot{z}(\lambda_1) = 0
\end{align}
where \ref{eq:rho bvp} ensures that we find a closed surface (upon rotating the curve about the $z$-axis), \ref{eq:z dot bvp} ensures the surface is continuous at $\rho = 0$ and $\lambda_1$ is the end point of the integration.

\begin{figure}[ht]
    \centering
    \begin{subfigure}[b]{0.49\textwidth}
        \centering
        \includegraphics[width=\textwidth]{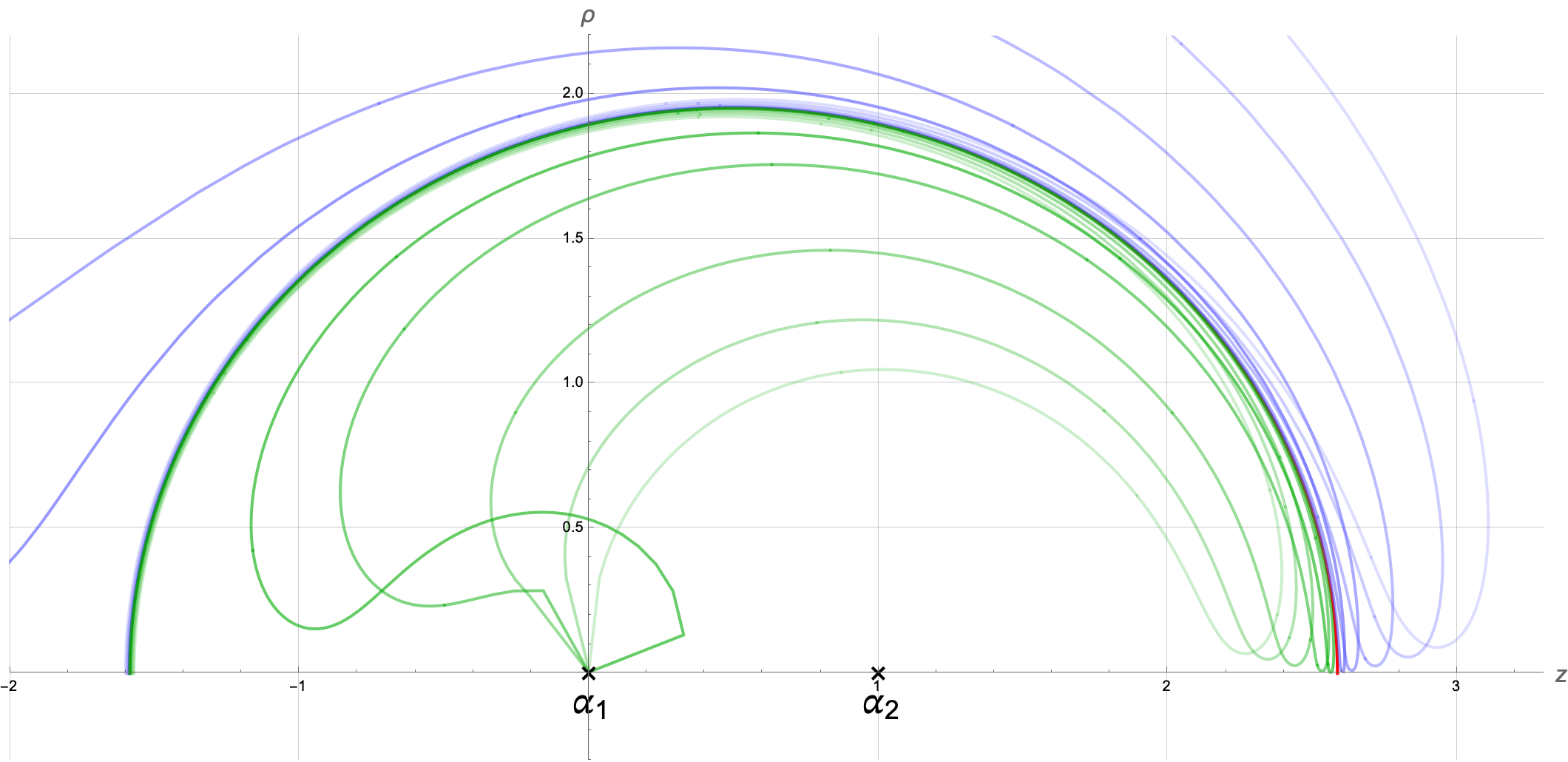}
        \label{fig:shooting out}
    \end{subfigure}\hfill
    \begin{subfigure}[b]{0.48\textwidth}
        \centering
        \includegraphics[width=\textwidth]{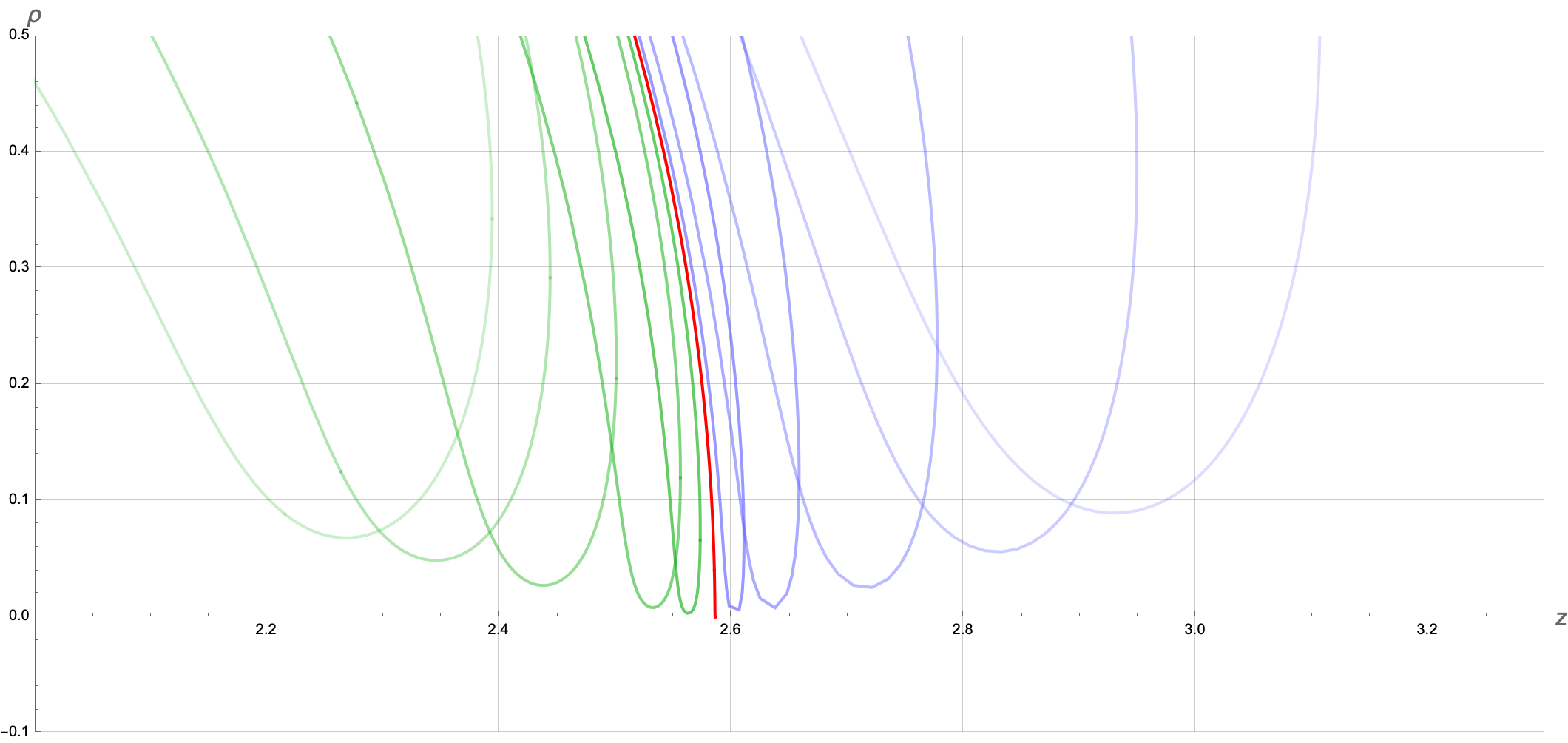}
        \label{fig:shooting in}
    \end{subfigure}
    \caption{Visualization of the shooting method with a close up (right) of the integration between $z = -2$ and $z = 3$ (left). The red curve is the minimal surface $\gamma_1$, with blue curves originating to the left and green to the right. The blue curves are divergent while green curves converge to $\alpha_1$; this behaviour is interpolated to find $\gamma_1$}
\end{figure}

In lieu of an analytical solution, numerical solutions of BVPs can be given via shooting methods, where the BVP is turned into an IVP (initial value problem). Then a guess-and-check method is implemented, where various guesses for the initial conditions are checked until the endpoint conditions are satisfied for some value of $\lambda_1$. The initial value for $\dot{\rho}$ is constrained by the choice of gauge; for example, for the Cadez gauge, we have $\dot{\rho}(0) = \sqrt{(\psi(0))^8 - (\dot{z}(0))^2} = (\psi(0))^4$. Thus it is the initial $z$-value, $z(0) \equiv z_i$ that must be guessed and checked. Cadez \cite{cadez} noted that the behavior of the integrated curve is also different immediately before and after $z_i$. For instance, assuming $z < z_i$ but close to $z_i$, the paths of integration converge to one of the masses. For $z > z_i$ but close to it, the paths of integration are divergent and vice versa if the behavior is reversed. Therefore, $z_i$ can be found via interpolation with a sufficiently fine grid. Similarly, the sign of $\dot{z}(\lambda_1)$ depends on whether $z_i \lessgtr z_0$, which provides an additional check on the value of $z_i$ \cite{bishop}. Finally, we have been making the assumption that we can start and end our integration at $\rho = 0$; however \eqref{eq:rho eq}, \eqref{eq:z eq} are singular for $\rho = 0$, irrespective of gauge. Thus, instead of $\rho = 0$, the integration is started/ended at a tolerance $\tau>0$ such that $\rho(0) = \rho(\lambda_1) = \tau$ for some $\lambda_1$.

\bibliographystyle{jhep}

\providecommand{\href}[2]{#2}\begingroup\raggedright\endgroup

\end{document}